\documentclass[journal,twocolumn]{IEEEtran}
\usepackage[dvipdfmx]{graphicx}
\usepackage{multirow}
\usepackage{amsmath}
\usepackage{amssymb}
\usepackage{theorem}
\usepackage{color}
\usepackage{footnote}
\DeclareMathAlphabet{\bm}{OML}{cmm}{b}{it}
\theorembodyfont{\rmfamily}
\newtheorem{theorem}{Theorem}
\newtheorem{lemma}{Lemma}

\newtheorem{corollary}{Corollary}
\newtheorem{remark}{Remark}
\newtheorem{proposition}{Proposition}
\newtheorem{assumption}{Assumption}

\newcommand{\qed}{\hfill \IEEEQED}


\newcommand{\rom}[1]{\mathrm{#1}}
\newcommand{\san}[1]{\mathsf{#1}}

\newcommand{\argmax}{\mathop{\rm argmax}\limits}






\def\Label#1{\label{#1}\ [\ \text{#1}\ ]\ }
\def\Label{\label}

\hyphenation{op-tical net-works semi-conduc-tor}
\newcommand{\MH}[1]{#1}

\allowdisplaybreaks[2]

\begin{document}

\title{Uniform Random Number Generation from Markov Chains: Non-Asymptotic and Asymptotic Analyses\thanks{This paper was
presented in part at 2014 Information Theory and Applications Workshop \cite{2014ITA}.}}

\author{Masahito~Hayashi~\IEEEmembership{Senior Member,~IEEE}
\thanks{The first author is with the Graduate School of Mathematics, Nagoya University, Japan. He is also with the Centre for
Quantum Technologies, National University of Singapore, Singapore. e-mail:masahito@math.nagoya-u.ac.jp}
and Shun~Watanabe~\IEEEmembership{Member,~IEEE}       
\thanks{The second author is with 
the Department of Computer and Information Sciences, Tokyo University of Agriculture and Technology, Koganei, Tokyo, Japan.
He was with the Department of Information Science and Intelligent Systems, 
University of Tokushima,
Tokushima, Japan.
e-mail:e-mail: shunwata@cc.tuat.ac.jp}

\thanks{Manuscript received ; revised }}

\markboth{Journal of \LaTeX\ Class Files,~Vol.~6, No.~1, January~2007}%
{Shell \MakeLowercase{\textit{et al.}}: Bare Demo of IEEEtran.cls for Journals}

\maketitle
\begin{abstract}
In this paper, we derive non-asymptotic achievability and converse bounds
on the random number 
generation with/without side-information.
Our bounds are efficiently computable in the sense that the computational complexity 
does not depend on the block length. 
We also characterize the asymptotic behaviors of
the large deviation regime and the moderate deviation regime by using our bounds, which
implies that our bounds are asymptotically tight in those regimes. 
We also show the second order rates of those problems, and derive single letter forms 
of the variances characterizing the second order rates.
Further, we address \MH{the relative entropy rate and the modified mutual information rate} for these problems.
\end{abstract}

\begin{IEEEkeywords}
Markov Chain, Non-Asymptotic Analysis, 
Random Number Generation, 
\end{IEEEkeywords}

\IEEEpeerreviewmaketitle

\section{Introduction}

\subsection{Uniform random number generation (URNG)}\Label{s1-1}
Uniform random number generation is one of important tasks 
for information theory as well as secure communication.
When a non-uniform random number is generated subject to independent and identical distribution
and the source distribution is known to $P_X$,
we can convert it to the uniform random number,
whose optimal conversion rate is known to be the entropy $H(P_X)$ \cite{elias:72}.
Vembu and Verd\'{u} \cite{vembu:95} extended this problem to the general information source.
Applying their result to the Markovian source, 
we find that the optimal conversion rate is the entropy rate. 

On the other hand, many researchers in information theory are attracted by non-asymptotic analysis recently \cite{polyanskiy:10,hayashi:09,hayashi:08}. 
Since all of realistic situations are non-asymptotic,
it is strongly desired to evaluate the performance of a protocol in the non-asymptotic setting.
In the case of uniform random number generation,
we need to consider two issues:
\begin{description}
\item[A1)] How to {\em quantitatively} guarantee the security for finite block length $n$.
As the criterion, we employ the variational distance criterion 
because it is universal composable\cite{renner:05c}.

\item[A2)]  How to implement the extracting method efficiently.
\end{description}

Fortunately, the latter problem has been solved by employing universal$_2$ hash functions, which can be constructed by combination of Toeplitz matrix and the identity matrix \cite{hayashi:10}.
This construction has small amount of complexity and was implemented in a real demonstration \cite{asai:11,H-T}.
Recently, the paper \cite{tsurumaru:11} proposed a new class of hash functions, $\varepsilon$-almost dual universal hash functions,
and the paper \cite{H-T} proposed more efficient hash functions belonging to this new class.
Hence, it is needed to solve the first problem.

So far, with a huge size $n$,
quantitative evaluation of the security has been done only for the i.i.d. source \cite{hayashi:10,hayashi:10b}.
However, the source is not necessarily i.i.d. in the real world, and it is necessary to develop 
a technique to evaluate the security for non i.i.d. source. As a first step of this direction of research,
we consider the Markov source in this paper. In the following, we explain difficulties to extend the 
existing results for the i.i.d. source to the Markov source. 

Although it is not stated explicitly in any literatures, we believe that there are two important criteria for
non-asymptotic bounds:
\begin{description}
\item[B1)] Computational complexity, and
 
\item[B2)] Asymptotic optimality.
\end{description}

Let us first consider the first criterion, i.e., the computational complexity.
For example, Han \cite{han:book} introduced lower and upper bounds for 
the variational distance criterion by using the inf-spectral entropy,
which are called the inf-spectral entropy bounds. 
For i.i.d. sources, these bounds can be computed by numerical calculation packages.
However, there is no known method to efficiently compute these bounds for Markov sources.
Consequently, there is no bound that is efficiently computable for the Markov chain so far.
The first purpose of this paper is to derive non-asymptotic bounds that are efficiently computable.  

Next, let us consider the second criterion, i.e., asymptotic optimality.
So far, three kinds of asymptotic regimes have been studied in the information theory: 
\begin{description}
\item[B2-1)]
The large deviation regime in which the error probability $\varepsilon$ asymptotically behaves like $e^{- n r}$ for some $r > 0$ \cite{gallager:63},

\item[B2-2)]
The moderate deviation regime in which $\varepsilon$ asymptotically behaves like $e^{- n^{1 - 2t} r}$ for some $r > 0$ and $t \in (0,1/2)$ \cite{altug:10,dakehe:09,kuzuoka:12}, and

\item[B2-3)]
The second order regime in which $\varepsilon$ is a constant
\cite{Strassen:62,polyanskiy:10,hayashi:09,hayashi:08,altug:10,dakehe:09,tan:12b}.
\end{description}
We shall claim that a good non-asymptotic bound should be asymptotically optimal 
\MH{in} at least one of the above mentioned three regimes.

Further, when the generation rate is too large, 
the variational distance is close to $1$.
In this case, we cannot measure how far from the uniform random number
the generated random number is. 
Hence, we employ 
\MH{the relative entropy rate (RER)}.

\subsection{Secure uniform random number generation (SURNG)}
When the initial random number $X$ is partially leaked to the third party $Y$,
to guarantee the security, we need to convert the random number to the uniform random number that has almost no correlation with the third party.
When a non-uniform random number is generated subject to independent and identical distribution of the joint distribution is known to $P_{X,Y}$,
we can convert it to the uniform random number,
whose optimal conversion rate is known to be the conditional entropy $H(X|Y)$ \cite{maurer:93,ahlswede:93}.

Bennett et al. \cite{bennett:88, bennett:95} and H\r{a}stad et al. \cite{hastad:99} proposed to use universal$_2$ hash functions for this purpose,
and derived two universal hashing lemma, which provides an upper bound for leaked information based on R\'{e}nyi entropy of order $2$.
The paper \cite{tsurumaru:11} proposed to use
$\varepsilon$-almost dual universal hash functions \cite{tsurumaru:11} that includes the hash functions by \cite{H-T}.
Hence, the problem A2) has been solved by employing universal$_2$ hash functions.

Therefore, the remaining problem is the problem A1), i.e., 
to quantitatively guarantee the security for finite block length $n$
under these hash functions.
For the security criterion, we employ the variational distance between 
the true distribution and the ideal distribution
because it satisfies the universal composable property \cite{renner:05c}.
To achieve the rate $H(X|Y)$ via two universal hashing lemma, 
Renner \cite{renner:05b} attached the smoothing to min entropy\footnote{Bennett et al. \cite{bennett:95} also employed a similar idea without use of the terminology of smoothing, and derived the conversion rate $H(X|Y)$.},
which is a lower bound on the above conditional R\'{e}nyi entropy of order $2$\footnote{In \cite{renner:05b}, Renner also showed a
quantum extension of the two universal hashing lemma.}.
That is, he proposed to maximize the min-entropy among the sub-distributions whose 
variational distance to the true distribution is less than a given threshold.
Using Renner's method, the paper \cite{hayashi:10b} derived a lower bound of the exponential decreasing rate. 
Tomamichel and Hayashi \cite{tomamichel:12} derived an upper bound of the universal composable quantity of extracted key
with a finite block-length $n$ by combining the Renner's method and the method of information spectrum by Han.
Further, Watanabe and Hayashi \cite{watanabe:13c} compared two approaches: the combination of 
the Renner's method and the method of information spectrum\footnote{The approach to derive
a bound in \cite{watanabe:13c} is almost the same as that in \cite{tomamichel:12}, but it should be noted that
the security criterion in \cite{watanabe:13c} is based on the variational distance while that in \cite{tomamichel:12} is based 
on the purified distance.}, and the exponential bounding approach of \cite{hayashi:10b}.
Further, the paper \cite{hayashi:13} showed that similar evaluations are possible even for 
$\varepsilon$-almost dual universal hash functions \cite{tsurumaru:11}.

For convenience, let us call the bound derived by the former approach the {\em inf-spectral entropy bound}, and
the bound derived by the latter approach the {\em exponential bound}.
It turned out that the exponential bound is tighter than the inf-spectral entropy bound when the required security level $\varepsilon$ is rather small.
A bound that interpolate both \MH{approaches} was also derived in \cite{watanabe:13c}, which we called the {\em hybrid bound}.

Similar to uniform random number generation,
for i.i.d. sources, the inf-spectral entropy bound 
and the hybrid bound can be computed by numerical calculation packages.
However, there is no known method to efficiently compute these bounds for Markov sources.
The computational complexity of the exponential bound is $O(1)$ since the exponential bound is 
described by using the Gallager function, which is an additive quantity.
However, this is not the case for Markov sources.
Consequently, there is no bound that is efficiently computable for the Markov chain so far.
Further, the first order results for Markov sources
have not been revealed as long as
the authors know, and they are clarified in this paper.

Further, when the generation key rate is too large, 
the variational distance is close to $1$.
In this case, we cannot measure how far from the secure uniform random number
the generated random number is. 
Hence, we employ the relative entropy between the generated random number and the ideal random number, which was introduced by Csisz\'{a}r-Narayan \cite{csiszar:04} \MH{and is called the modified mutual information rate.} 
Indeed, when we surpass axiomatic conditions, 
the leaked information measure must be this quantity \cite{hayashi:13}.

\subsection{Main Contribution for Non-Asymptotic Analysis}
\MH{Although there are several studies for finite-length analysis for URNG and SURNG,
they did not discuss the Markovian chain.
Indeed, while they derived several single-shot bounds,
these bounds cannot be directly applied to the Markovian chain,
because the bounds obtained by such applications are not computable at least in the the Markovian chain.
Hence, we need to derive new finite-length bounds for the Markovian chain
by modifying existing single-shot bounds.
For this purpose, we adopt the structure} similar to the paper \cite{HW14}, which addresses the source coding with Markov chain
because this paper employs the common structure between the uniform random number generation and the source coding.
Hence, the obtained results are also quite similar to those of the paper \cite{HW14}.
To derive non-asymptotic achievability bounds on the problems,
we basically use the exponential type bounds
for the single shot setting. 
When there is no information leakage, 
those exponential type bounds are described by the R\'enyi entropy.
Thus, we need to evaluate R\'enyi entropy for the Markov chain. For this purpose, we introduce 
R\'enyi entropy for transition matrices, which is defined irrespective of initial distributions 
(cf.~\eqref{eq:definition-lower-conditional-renyi-markov}). Then, we evaluate
the R\'enyi entropy for the Markov chain in terms of the R\'enyi entropy for the transition matrix.
From this evaluation, we can also find that the R\'enyi entropy rate for the Markov chain coincides with 
the R\'enyi entropy for the transition matrix. Note that the former is defined as the limit and the latter is single 
letter characterized. 

When a part of information is leaked to the third party,
to generate secure uniform random number,
we consider two assumptions on transition matrices 
(see Assumption \ref{assumption-Y-marginal-markov} and Assumption \ref{assumption-memory-through-Y} of Section \ref{section:preparation-multi}).
Although a computable form of the conditional entropy rate is not known in general, 
Assumption \ref{assumption-Y-marginal-markov}, which is less restrictive than Assumption \ref{assumption-memory-through-Y},
enables us to derive a computable form of the conditional entropy rate.

In the problems with side-information, exponential type bounds are described by 
conditional R\'enyi entropies. There are several definitions of conditional R\'enyi entropies
(see \cite{teixeira:12, iwamoto:13} for extensive review), and we use
the one defined in \cite{hayashi:10} and the one defined by Arimoto \cite{arimoto:75}.
We shall call the former one the {\em lower conditional R\'enyi entropy} (cf.~\eqref{eq:lower-conditional-renyi})
and the latter one the {\em upper conditional R\'enyi entropy} (cf.~\eqref{eq:upper-conditional-renyi}). 
To derive non-asymptotic bounds, we need to evaluate these information measures for
the Markov chain. For this purpose, under Assumption \ref{assumption-Y-marginal-markov}, we introduce  
the lower conditional R\'enyi entropy for transition matrices (cf.~\eqref{eq:definition-lower-conditional-renyi-markov}). 
Then, we evaluate
the lower conditional \MH{R\'enyi} entropy for the Markov chain in terms of its transition matrix counterpart. 
This evaluation gives non-asymptotic bounds for 
secure uniform random number generation 
under Assumption \ref{assumption-Y-marginal-markov}.
Under more restrictive assumption, i.e., Assumption \ref{assumption-memory-through-Y}, we also 
introduce the upper conditional R\'enyi entropy for a transition matrix (cf.~\eqref{eq:definition-upper-conditional-renyi-markov}). 
Then, we evaluate 
the upper R\'enyi entropy for the Markov chain in terms of its transition matrix counterpart. 
This evaluation gives non-asymptotic bounds that are tighter than those obtained under 
Assumption \ref{assumption-Y-marginal-markov}.

We also derive converse bounds for every problem by using the change of measure argument 
developed by the authors in the accompanying paper on information geometry \cite{hayashi-watanabe:13, hayashi-watanabe:13b}.
When there is no information leakage, 
the converse bounds are described
by the R\'enyi entropy for transition matrices. 
When a part of information is leaked to the third party,
we further introduce 
two-parameter conditional R\'enyi entropy and its transition matrix counterpart
(cf.~\eqref{eq:two-parameter-conditional-renyi} and \eqref{eq:definition-two-parameter-renyi-markov}).
This novel information measure includes the lower conditional R\'enyi entropy and the upper conditional
R\'enyi entropy as special cases.  

\MH{In the problem of SURNG, instead of the RER,
we employ the modified mutual information rate (MMIR),
which was introduced by Csisz\'ar and Narayan \cite{csiszar:04}
and whose axiomatic characterization was obtained in the paper \cite{hayashi:13}.
When the uniformity is guaranteed, this quantity
is given by the equivocation rate introduced by Wyner \cite{wyner:75}.}
When there is no information leakage, our lower and upper bounds are given 
by using 
the R\'enyi entropy for the Markov chain in terms of its transition matrix counterpart.
When there exists information leakage, 
our lower and upper bounds are given 
by using 
the lower conditional R\'enyi entropy for the Markov chain in terms of its transition matrix counterpart
under Assumption \ref{assumption-Y-marginal-markov}.

Here, we would like to remark on terminologies. There are a few ways to express 
exponential type bounds. In statistics or the large deviation theory, we usually use 
the cumulant generating function (CGF) to describe exponents. In information theory,
we use the Gallager function or the R\'enyi entropies. Although these three terminologies are 
essentially the same and are related by change of variables, 
the CGF and the Gallager function are convenient for some calculations since they have good properties such as convexity.
However, they are merely mathematical functions. On the other hand, the R\'enyi entropies
are information measures including Shannon's information measures as special cases. 
Thus, the R\'enyi entropies are intuitively familiar in the field of information theory.
The R\'enyi entropies also have an advantage that two types of bounds
(eg.~\eqref{11-15-1} and \eqref{11-15-2}) 
can be expressed in a unified manner. 
For these reasons, we state our main results in terms of
the R\'enyi entropies while we use the CGF and the Gallager function in the proofs. 
For readers' convenience, the relation between the R\'enyi entropies and corresponding CGFs are 
summarized in Appendix \ref{Appendix:preparation}.

\MH{Overall, 
we summarize the contributions for non-asymptotic analysis
in comparison to existing results as follows.}
\begin{description}
\item[(1)]\MH{{\it Finite-length bound:} 
For URNG and SURNG,
we derive finite-length bounds satisfying the conditions B1) and B2) 
for Markovian chain. 
Theorems in Subsections \ref{subsection:single-random-finite-markov} and \ref{subsection:multi-random-finite-markov}
are classified to this type of results.
All existing finite-length bounds with computable form
are obtained with i.i.d. setting.
Indeed, 
several single-shot bounds were obtained in a more general form.
However, their computabilities have not been discussed in the Markovian case.
At least, many of them, (e.g, Lemmas 16, 17, 18, 22, 23, 25, and 28) 
are not given in a computable form in the Markovian case.}

\item[(2)]\MH{{\it Single-shot bound:}
In this paper, we employ several existing single-shot bounds.
However, many of them cannot be given in a useful form.
These bounds cannot be easily calculated at least in the Markovian case.
To apply them to the Markovian case, 
we loosen these bounds.
Lemmas \ref{lemma:single-random-sphere-packing-converse}, \ref{lemma:strong-universal-bound-tail-probability}, \ref{lemma:multi-random-sphere-packing-converse} and \ref{lemma:strong-universal-bound-multi-tail} fall in this case.
Since these bounds have a much simpler form than existing bounds,
they might be applied to other cases.
This discussion for the simplification 
is quite different from the case of source coding \cite{HW14}.
That is, this part
has the most serious technical hardness compared to the paper \cite{HW14}
because the discussion in this paper is specialized to random number generation.}
\end{description}

\begin{savenotes}
 \begin{table*}[!t]
\begin{center}
\caption{Summary of Asymptotic Results and Non-Asymptotic Bounds to Derive Asymptotic Results}
\label{table:summary:Asymptotic-results}
\begin{tabular}{|c|c|c|c|c|c|} \hline
 Problem & First Order & Large Deviation & Moderate Deviation & Second Order & 
\MH{RER/MMIR} \\ \hline
URNG & Solved & $\mbox{Solved}^*$ (U2), $O(1)$ & Solved, $O(1)$ 
& Solved, Tail & Solved, $O(1)$ \\ \hline
\multirow{2}{*}{SURNG} & \multirow{2}{*}{Solved (Ass.~1)} & 
$\mbox{Solved}^*$  (Ass.~2, U2), 
 & Solved (Ass.~1), & Solved (Ass.~1),  & Solved (Ass.~1), \\
&& $O(1)$ & $O(1)$ & Tail & $O(1)$ \\ 
 \hline
\end{tabular}
\end{center}
URNG is the uniform random number generation without information leakage.
SURNG is the secure uniform random number generation when a part of information is leaked to the third party.
\end{table*}
\end{savenotes}

\subsection{Main Contribution for Asymptotic Analysis}
\MH{Among authors' knowledge,
there is no existing study for 
the asymptotic analysis with the Markovian chain 
with respect to URNG and SURNG except for the following.
When the general sequence of single information sources,
the asymptotic rate of URNG is characterized by 
Vembu and Verd\'{u} \cite{vembu:95} and Han \cite{han:book}.
Since the asymptotic entropy rate of Markovian chain is known,
we can calculate the asymptotic rate of URNG for the Markovian chain.
However, further study with respect to URNG and SURNG
has not been discussed for
the Markovian chain nor 
the general sequence of information sources.}

We can easily see that 
these non-asymptotic bounds yields the asymptotic optimal random number generation rate while
the case with information leakage requires Assumption \ref{assumption-Y-marginal-markov}.
For asymptotic analyses of the large deviation and the moderate deviation regimes,
we derive the characterizations\footnote{For the large deviation regime, we only derive the characterizations up to the critical rates.} 
by using our non-asymptotic
achievability and converse bounds, which implies that our non-asymptotic bounds are tight in 
the large deviation regime and the moderate deviation regime.

We also derive the second order rate. 
It is also clarified that 
the reciprocal coefficient of the moderate deviation regime and the variance of the second order regime coincide.
Furthermore, a single letter form of the variance is clarified\footnote{An alternative way to derive a single letter characterization of the variance 
for the Markov chain was shown in \cite[Lemma 20]{tomamichel:13}. It should be also noted that a single letter characterization
can be derived by using the fundamental matrix \cite{kemeny-snell-book}. 
}.

The asymptotic results and the non-asymptotic results are summarized in Table \ref{table:summary:Asymptotic-results}.
As a part of the non-asymptotic results, the table focuses on 
the computational complexities of the non-asymptotic bounds.
"$\mbox{Solved}^*$" indicates that those problems are solved up to the critical rates.
"Ass.~1" and "Ass.~2" indicate that those problems are solved under 
Assumption \ref{assumption-Y-marginal-markov} or Assumption \ref{assumption-memory-through-Y}.
"U2" indicates that the converse results are obtained only for the worst case of the universal two hash family
(see \eqref{eq:single-random-number-worst-u2} and \eqref{eq:multi-random-number-worst-u2}).
"$O(1)$" indicates that both the achievability part and the converse part of those asymptotic results are derived from our 
non-asymptotic achievability bounds and converse bounds whose
computational complexities are $O(1)$. "Tail" indicates that both the achievability part and the converse part of those asymptotic results are derived from the
information-spectrum type achievability bounds and converse bounds whose computational complexities depend on the computational complexities
of tail probabilities.

Exact computations of tail probabilities are difficult in general though it may be feasible for a simple 
case such as an i.i.d.~case. One way to approximately compute tail probabilities is to use the Berry-Ess\'een
theorem \cite[Theorem 16.5.1]{feller:book} or its variant \cite{tikhomirov:80}. 
This direction of research is still continuing \cite{kontoyiannis:03,herve:12}, and an evaluation of
the constant was done in \cite{herve:12} though it is not clear how much tight it is. 
If we can derive a tight Berry-Ess\'een type bound for the Markov chain, we  can derive a non-asymptotic bound
that is asymptotically tight in the second order regime.
However, the approximation errors
of Berry-Ess\'een type bounds converge only in the order of $1/\sqrt{n}$, and cannot be applied when $\varepsilon$ is rather small. 
Even in the cases such that exact computations of tail probabilities are possible, the information-spectrum type bounds
are looser than the exponential type bounds when $\varepsilon$ is rather small, and we need to use appropriate 
bounds depending on the size of $\varepsilon$. In fact, this observation was explicitly clarified in \cite{watanabe:13c}
for the random number generation with side-information. Consequently, we believe that our exponential type non-asymptotic 
bounds are very useful.  

\MH{Further, we derive the asymptotic leaked information rate.
When there is no information leakage, 
we discuss the RER, which is asymptotically given by the entropy rate.
When there exists information leakage, 
we discuss the MMIR, which is asymptotically given by the conditional entropy rate
under Assumption \ref{assumption-Y-marginal-markov}.}

\MH{Overall, 
we summarize the contributions for asymptotic analysis
in comparison to existing results as follows.}
\begin{description}
\item[(1)]\MH{{\it New bounds for Markovian case:}
For URNG and SURNG,
we derive the optimal asymptotic performances in Subsections \ref{subsection:single-random-large-deviation}, 
\ref{subsection:single-random-mdp},
\ref{subsection:single-random-second-order},
\ref{Th11-25-5},
\ref{Equivocation Rate},
\ref{subsection:multi-random-large-deviation},
\ref{subsection:multi-random-mdp},
\ref{subsection:multi-random-second-order},
and \ref{Equivocation Rate-2} 
under the four regimes, 
the large deviation regimes, 
the moderate deviation regimes, 
the second order regimes,
and the asymptotic relative entropy rate regime
(the asymptotic modified mutual information rate regime)
for Markovian chain (with suitable conditions for SURNG).
Except for 
the information spectrum approach,
all existing asymptotic analyses 
with these three regimes
assume the i.i.d. source.
Further, analyses with the information spectrum approach
derived only the general formulas, which did not derive
any computable asymptotic bounds for these three regimes 
for the Markovian chain.}

\item[(2)]\MH{{\it New bound even for i.i.d. case:}
Among the above asymptotic results,
Theorem \ref{th30} is novel even for the i.i.d. case.
This theorem gives the converse bound for large deviation for SURNG.}

\end{description}

\subsection{Two criteria}\Label{two-cri}
\MH{In this paper, to consider a practical issue,
we employ two criteria.
In the channel coding, such a practical issue is discussed as a coding theory 
in a form separate from the fundamental issue.
However, in the random number generation case,
we can discuss the performance of hash functions with a small construction complexity
in the same way as the fundamental issue.
Such a practical issue is also the target of this paper.
Usually,
when we discuss a fundamental aspect of the topic of information theory,
we focus only on the minimum leaked information among all of hash function,
which is denoted by $\Delta(M)$ in this paper, 
whose precise definition will be given in Subsections \ref{subsection:single-random-problem-formulation} and \ref{subsection:multi-random-problem-formulation}.
However, when we take account into the complexity of construction of protocol, 
we need to restrict hash functions into 
hash functions with a small construction complexity.
Hence, it is desired to minimize
the leaked information among 
a class of hash functions with small calculation complexity for its construction.
In this paper we focus on the family of two-universal hash functions, named by 
the two-universal hash family ${\cal F}$
because this family contains a hash function with a small construction complexity.
However,  this paper focuses on 
the worst leaked information $\overline{\Delta}(M)$ among 
the two-universal hash family ${\cal F}$,
which is more important from a practical view point 
than the best case due to the following two reasons.}

\begin{description}
\item[(1)]
\MH{Usually, the optimal hash function depends on the source distribution.
However, it is not easy to perfectly identify the source distribution.
In such a case, 
instead of the optimal hash function, we need to choose a hash function that universally works well.
If we apply a two-universal hash function,
its leaked information is always better than the worst leaked information  $\overline{\Delta}(M)$.
Hence, if the quantity $\overline{\Delta}(M)$ is sufficiently close to the optimal case ${\Delta}(M)$,
we can say that 
any two-universal hash function universally works well.}

\item[(2)]
\MH{Although the two-universal hash family ${\cal F}$
contains a hash function with a small calculation complexity for its construction,
any two-universal hash function does not necessarily 
have a small calculation complexity.
If the quantity $\overline{\Delta}(M)$ is sufficiently close to the optimal case ${\Delta}(M)$,
we can take the priority to minimize the construction complexity among 
the two-universal hash family ${\cal F}$
over the optimization of the leaked information.} 
\end{description}

\MH{In this paper, we show that 
the worst leaked information $\overline{\Delta}(M)$ 
is close to the minimum leaked information ${\Delta}(M)$
in the moderate deviation and the second order.
These results guarantee that 
any two-universal hash function has a sufficiently good performance.
That is, 
they allow us to employ any two-universal hash function
to achieve these asymptotic optimal performances.
These results amplify our choice of hash function 
to achieve the asymptotically optimality.}


\subsection{Organization of Paper and Notations}
As preparation, we explain information measures for single-shot setting
in Subsection \ref{section:multi-terminal-single-shot}.
Then, 
we address conditional R\'{e}nyi entropies for transition matrix
in Subsection \ref{subsection:multi-terminal-measures-markov}, 
and discuss the relation between these information measures and 
Markov chain in Subsection \ref{subsection-multi-terminal-information-measures-markov}.
These information measures and their properties will
be used in the latter sections.
These contents were obtained in the paper \cite{HW14},
and their proofs are available in the paper \cite{HW14}.
However, the paper \cite{HW14} did not address the conditional min entropy, which corresponds to the order parameter $\infty$.
So, in Subsections \ref{s2d} and \ref{s2e},
we discuss the relation between 
the limit of the conditional R\'{e}nyi entropy
and the conditional min entropy,
which are new results and are shown in Appendix.

Section \ref{section:single-random-number}
addresses the uniform random number generation
without information leakage.
The obtained upper and lower bounds are numerically calculated in a typical example in this section.
Then, Section \ref{section:multi-random-number}
proceeds to 
addresses the secure uniform random number generation
with partial information leakage.
As we mentioned above, we state our main result in terms of the
R\'enyi entropies, and we use the CGFs and the Gallager function in the proofs.
In Appendix \ref{Appendix:preparation}, the relation between the R\'enyi entropies and 
corresponding CGFs are summarized. The relation between the R\'enyi entropies and the Gallager function are explained 
as necessary. 
Proofs of some technical results are also shown in the rest of appendices.

A random variable is denoted by upper case letter, and its realization is denoted by
lower case letter. The notation ${\cal P}({\cal X})$ is the set of all distribution on alphabet ${\cal X}$.
The notation $\bar{{\cal P}}({\cal X})$ is the set of all non-negative sub-normalized functions on ${\cal X}$.
$|{\cal X}|$ represent the cardinality of the set ${\cal X}$. 
The cumulative distribution function of the standard Gaussian random variable is denoted by
\begin{eqnarray}
\Phi(t ) = \int_{-\infty}^t \frac{1}{\sqrt{2\pi}} \exp\left[ - \frac{x^2}{2} \right] dx.
\end{eqnarray}
Throughout the paper, the base of the logarithm is $e$.

\section{Information Measures} \Label{section:preparation-multi}

In this section, we introduce information measures that will be used 
in Section \ref{section:single-random-number}
and Section \ref{section:multi-random-number}.
All of lemmas and theorems in this section except for Lemmas \ref{lemma:extreme-cases-up-conditional-renyi-transition} and \ref{lemma:finite-evaluation-min-entropy} and 
Theorem \ref{theorem:asymptotic-down-conditional-renyi}
were shown in \cite{HW14}.

\subsection{Information Measures for Single-Shot Setting} \Label{section:multi-terminal-single-shot}
\subsubsection{Conditional R\'{e}nyi entropy relative to a general distribution}

In this section, we introduce conditional R\'enyi entropies for the single-shot setting. 
For more detailed review of conditional R\'enyi entropies, see \cite{iwamoto:13}.
For a correlated random variable $(X,Y)$ on ${\cal X} \times {\cal Y}$ with probability distribution $P_{XY}$ and a marginal distribution
$Q_Y$ on ${\cal Y}$, we introduce the conditional R\'enyi entropy of order $1+\theta$ relative to $Q_Y$ as
\begin{eqnarray}
H_{1+\theta}(P_{XY}|Q_Y) := - \frac{1}{\theta} \log \sum_{x,y} P_{XY}(x,y)^{1+\theta} Q_Y(y)^{-\theta},
\end{eqnarray}
where $\theta \in (-1,0) \cup (0,\infty)$. The conditional R\'enyi entropy of order $0$ relative to $Q_Y$ is defined by the limit with
respect to $\theta$.
When ${\cal  Y}$ is singleton, it is nothing but the ordinary R\'enyi entropy, and it is denoted by 
$H_{1+\theta}(X) = H_{1+\theta}(P_X)$ throughout the paper. 

\subsubsection{Lower conditional R\'{e}nyi entropy}
One of important special cases of $H_{1+\theta}(P_{XY}|Q_Y)$ is the case with $Q_Y = P_Y$.
We shall call this special case the {\em lower conditional R\'enyi entropy} of order $1+\theta$ and denote\footnote{
This notation was first introduce in \cite{tomamichel:13b}.}
\begin{eqnarray} \Label{eq:lower-conditional-renyi}
H_{1+\theta}^\downarrow(X|Y) &:=& H_{1+\theta}(P_{XY}|P_Y) \\
&=& - \frac{1}{\theta} \log \sum_{x,y} P_{XY}(x,y)^{1+\theta} P_Y(y)^{-\theta}.
\end{eqnarray}
The following property holds.
\begin{lemma} \Label{lemma:property-down-conditional-renyi}
We have
\begin{eqnarray} \Label{eq:down-conditional-renyi-theta-0}
\lim_{\theta \to 0} H_{1+\theta}^\downarrow(X|Y) = H(X|Y)
\end{eqnarray}
and
\begin{eqnarray}
\san{V}(X|Y) &:=& \mathrm{Var}\left[ \log \frac{1}{P_{X|Y}(X|Y)} \right] \\
&=& \lim_{\theta \to 0} \frac{2\left[ H(X|Y) - H_{1+\theta}^\downarrow(X|Y) \right]}{\theta} \Label{eq:multi-single-shot-variance-1}.
\end{eqnarray}
\end{lemma}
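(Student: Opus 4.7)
The plan is to recognise the quantity inside the logarithm as a moment generating function and then reduce the lemma to the standard identification of cumulants with derivatives at the origin. Concretely, write $L(x,y) := \log P_{X|Y}(x|y) = \log P_{XY}(x,y) - \log P_Y(y)$ and define
\begin{eqnarray*}
\psi(\theta) &:=& \log \sum_{x,y} P_{XY}(x,y)^{1+\theta} P_Y(y)^{-\theta} \\
&=& \log E_{P_{XY}}\!\left[ \exp\bigl(\theta L(X,Y)\bigr) \right],
\end{eqnarray*}
so that $\psi$ is the cumulant generating function of $L(X,Y)$ under $P_{XY}$ and $H_{1+\theta}^\downarrow(X|Y) = -\psi(\theta)/\theta$. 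Observe $\psi(0) = 0$.

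First I would verify \eqref{eq:down-conditional-renyi-theta-0} by L'Hopital's rule: since $\psi(0) = 0$, the limit $\lim_{\theta\to 0} -\psi(\theta)/\theta$ equals $-\psi'(0)$. Differentiating under the sum and evaluating at $\theta = 0$ yields
\begin{eqnarray*}
\psi'(0) = \sum_{x,y} P_{XY}(x,y) L(x,y) = -H(X|Y),
\end{eqnarray*}
which gives \eqref{eq:down-conditional-renyi-theta-0}.

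Next, for \eqref{eq:multi-single-shot-variance-1}, I would rewrite
\begin{eqnarray*}
\frac{2\bigl[H(X|Y) - H_{1+\theta}^\downarrow(X|Y)\bigr]}{\theta}
= \frac{2\bigl[\psi(\theta) - \theta \psi'(0)\bigr]}{\theta^2},
\end{eqnarray*}
which by Taylor's theorem tends to $\psi''(0)$ as $\theta \to 0$. The standard cumulant identity then gives
\begin{eqnarray*}
\psi''(0) = \mathrm{Var}_{P_{XY}}\!\bigl[ L(X,Y) \bigr] = \mathrm{Var}\!\left[ \log \frac{1}{P_{X|Y}(X|Y)} \right] = \san{V}(X|Y),
\end{eqnarray*}
completing the proof. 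Alternatively, one can differentiate the explicit quotient expression for $\psi'(\theta)$ once more and evaluate at $\theta = 0$ to obtain the same variance formula directly.

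The only real subtlety is justifying that $\psi$ is twice continuously differentiable at $\theta = 0$ so that the Taylor expansion and the exchange of derivative and sum are legal; on a finite alphabet (restricted to the support of $P_{XY}$) every summand $P_{XY}(x,y)^{1+\theta}P_Y(y)^{-\theta}$ is smooth in $\theta$ and the sum is finite, making smoothness immediate. If countable alphabets are allowed, dominated convergence with a local bound of the form $P_{XY}(x,y)^{1-|\theta|} P_Y(y)^{-|\theta|}$ on a small neighbourhood of $\theta = 0$ would suffice; this is the only place where any care is needed.
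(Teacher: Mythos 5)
Your proof is correct, and it follows exactly the route the paper itself sets up: the paper defers this lemma to reference [HW14] but explicitly works with the identification $\theta H_{1+\theta}^\downarrow(X|Y) = -\psi(\theta)$ for the cumulant generating function of $\log P_{X|Y}(X|Y)$ (cf.\ Appendix \ref{Appendix:preparation}), so your L'Hopital/Taylor argument at $\theta=0$ is the intended one. Since the alphabets here are finite, the smoothness caveat you raise is immediate and nothing further is needed.
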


\subsubsection{Upper conditional R\'{e}nyi entropy}
The other important special cases of $H_{1+\theta}(P_{XY}|Q_Y)$ is the measure maximized over $Q_Y$.
We shall call this special case the {\em upper conditional R\'enyi entropy} of order $1+\theta$ and 
denote\footnote{For $-1 < \theta < 0$, \eqref{eq:optimal-choice-upper-conditional} can be proved by using the H\"older inequality,
and, for $0 < \theta$,  \eqref{eq:optimal-choice-upper-conditional} can be proved by using the reverse H\"older 
inequality \cite[Lemma 8]{hayashi:12d}.}
\begin{eqnarray} \Label{eq:upper-conditional-renyi}
&& H_{1+\theta}^\uparrow(X|Y) \nonumber \\
&:=& \max_{Q_Y \in {\cal P}({\cal Y})} H_{1+\theta}(P_{XY}|Q_Y) \\
&=& H_{1+\theta}(P_{XY}|P_Y^{(1+\theta)}) \Label{eq:optimal-choice-upper-conditional} \\
&=& - \frac{1+\theta}{\theta} \log \sum_y P_Y(y) \left[ \sum_x P_{X|Y}(x|y)^{1+\theta} \right]^{\frac{1}{1+\theta}},
\Label{11-14-6}
\end{eqnarray}
where the expression \eqref{11-14-6} is the same as Arimoto's proposal for the conditional R\'{e}nyi entropy \cite{arimoto:75} and 
\begin{eqnarray} \Label{eq:single-shot-optimal-conditioning-distribution}
P_Y^{(1+\theta)}(y) := 
\frac{\left[ \sum_x P_{XY}(x,y)^{1+\theta} \right]^{\frac{1}{1+\theta}}}{\sum_{y^\prime} \left[ \sum_x P_{XY}(x,y^\prime)^{1+\theta} \right]^{\frac{1}{1+\theta}}}. 
\end{eqnarray}

For this measure, we also have properties similar to Lemma \ref{lemma:property-down-conditional-renyi}.
\begin{lemma}[{\cite{HW14,MDSFT,hayashi:12d}}]
 \Label{lemma:property-upper-conditional-renyi-single-shot}
We have
\begin{eqnarray} \Label{eq:up-conditional-renyi-theta-0}
\lim_{\theta \to 0} H_{1+\theta}^\uparrow(X|Y) = H(X|Y)
\end{eqnarray}
and 
\begin{eqnarray}
\lim_{\theta \to 0} \frac{2\left[ H(X|Y) - H_{1+\theta}^\uparrow(X|Y) \right]}{\theta} 
= \san{V}(X|Y).
\Label{eq:multi-single-shot-variance-2}
\end{eqnarray}
\end{lemma}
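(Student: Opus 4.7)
The plan is to prove both assertions in one stroke by Taylor expanding the Arimoto form \eqref{11-14-6} to second order in $\theta$ about $\theta=0$. I would set
\[
h_y(\theta):=\sum_x P_{X|Y}(x|y)^{1+\theta},\qquad g(\theta):=\sum_y P_Y(y)\,[h_y(\theta)]^{1/(1+\theta)},
\]
so that $H_{1+\theta}^{\uparrow}(X|Y)=-\frac{1+\theta}{\theta}\log g(\theta)$ and, since $g(0)=1$, the entire statement reduces to expanding $\log g(\theta)$ through order $\theta^2$.

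First I would observe that $\theta\mapsto\log h_y(\theta)$ is the conditional cumulant generating function of $\log P_{X|Y}(X|Y)$ given $Y=y$, whose first two derivatives at $\theta=0$ are $-H(X|Y=y)$ and $\san V(X|Y=y):=\mathrm{Var}(\log P_{X|Y}(X|y)\mid Y=y)$. Dividing the resulting expansion by $1+\theta=1-\theta+O(\theta^2)$ and exponentiating would give
\[
[h_y(\theta)]^{1/(1+\theta)} = 1 - H(X|Y=y)\,\theta + \Bigl[H(X|Y=y) + \tfrac{\san V(X|Y=y)+H(X|Y=y)^2}{2}\Bigr]\theta^2 + O(\theta^3).
\]
Averaging over $Y$ and invoking the law of total variance in the form $\mathbb{E}_Y[\san V(X|Y=y)] + \mathbb{E}_Y[H(X|Y=y)^2] = \san V(X|Y) + H(X|Y)^2$ would then compress the $\theta^2$ coefficient and yield
\[
g(\theta) = 1 - H(X|Y)\,\theta + \Bigl[H(X|Y) + \tfrac{\san V(X|Y)+H(X|Y)^2}{2}\Bigr]\theta^2 + O(\theta^3).
\]

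The last step would apply $\log(1+z)=z-z^2/2+O(z^3)$ with $z=g(\theta)-1$, noting that $z^2/2=H(X|Y)^2\theta^2/2+O(\theta^3)$ precisely kills the awkward $H(X|Y)^2/2$ piece, leaving $\log g(\theta) = -H(X|Y)\theta + [H(X|Y)+\tfrac{1}{2}\san V(X|Y)]\theta^2 + O(\theta^3)$. Multiplying by $-(1+\theta)/\theta$ then gives $H_{1+\theta}^{\uparrow}(X|Y) = H(X|Y) - \tfrac{\san V(X|Y)}{2}\theta + O(\theta^2)$, from which \eqref{eq:up-conditional-renyi-theta-0} and \eqref{eq:multi-single-shot-variance-2} follow by sending $\theta\to 0$. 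The only delicate point is this final cancellation: the spurious $H(X|Y=y)^2/2$ term produced when one exponentiates $\log h_y(\theta)/(1+\theta)$ has to combine with $\mathbb{E}_Y[\san V(X|Y=y)]$ through the law of total variance in exactly the right way to be annihilated by the quadratic correction in $\log(1+z)$, and keeping track of the three separate places where $H(X|Y)^2$ contributions appear is where I expect the main attention to be needed — though nothing deeper than $\mathrm{Var}(U)=\mathbb{E}[\mathrm{Var}(U|Y)]+\mathrm{Var}(\mathbb{E}[U|Y])$ is actually being used.
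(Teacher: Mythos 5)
Your computation is correct: the second-order expansion of $\log h_y(\theta)$ as a conditional CGF, the $1/(1+\theta)$ correction, the law of total variance to recombine $\mathbb{E}_Y[\san{V}(X|Y=y)]+\mathbb{E}_Y[H(X|Y=y)^2]$, and the cancellation of the $H(X|Y)^2/2$ term against the quadratic correction in $\log(1+z)$ all check out, yielding $H_{1+\theta}^{\uparrow}(X|Y)=H(X|Y)-\tfrac{1}{2}\san{V}(X|Y)\theta+O(\theta^2)$, which gives both claims. The paper does not prove this lemma in-line (it is imported from \cite{HW14,MDSFT,hayashi:12d}), but the cited proofs proceed by essentially the same device — a second-order Taylor expansion of the Gallager-type function $-\log\sum_y P_Y(y)[\sum_x P_{X|Y}(x|y)^{1+\theta}]^{1/(1+\theta)}$ about $\theta=0$ — so your argument is the standard one and is complete for finite alphabets.
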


\subsubsection{Properties of conditional R\'{e}nyi entropies}
When we derive converse bounds, we need to consider the case such that 
the order of the R\'enyi entropy and the order of conditioning distribution defined in \eqref{eq:single-shot-optimal-conditioning-distribution} are different.
For this purpose, we introduce two-parameter conditional R\'enyi entropy:
\begin{eqnarray} \Label{eq:two-parameter-conditional-renyi}
\lefteqn{ H_{1+\theta,1+\theta^\prime}(X|Y) } \\
&:=& H_{1+\theta}(P_{XY}|P_Y^{(1+\theta^\prime)}) \\
&=& - \frac{1}{\theta} \log \sum_y P_Y(y) 
\left[ \sum_x P_{X|Y}(x|y)^{1+\theta} \right] \
\nonumber \\
&& \hspace{2ex} \cdot \left[\sum_x P_{X|Y}(x|y)^{1+\theta^\prime} \right]^{\frac{\theta}{1+\theta^\prime}}
 + \frac{\theta^\prime}{1+\theta^\prime} H_{1+\theta^\prime}^\uparrow(X|Y).
\nonumber 
\end{eqnarray}

The measures defined above has the following properties:
\begin{lemma}[{\cite{HW14,MDSFT,hayashi:12d}}] \Label{lemma:multi-terminal-single-shot-property}
$\phantom{aaa}$ 
\begin{enumerate}
\item \Label{item:multi-terminal-single-shot-property-1}
For fixed $Q_Y$, $\theta H_{1+\theta}(P_{XY}|Q_Y)$ is a concave function of $\theta$,
and it is strict concave iff. $\rom{Var}\left[ \log \frac{Q_Y(Y)}{P_{XY}(X,Y)} \right] > 0$.

\item \Label{item:multi-terminal-single-shot-property-1-b}
For fixed $Q_Y$, $H_{1+\theta}(P_{XY}|Q_Y)$ is a monotonically decreasing\footnote{Technically, $H_{1+\theta}(P_{XY}|Q_Y)$ is always non-increasing and it is monotonically decreasing iff. strict concavity holds in Statement \ref{item:multi-terminal-single-shot-property-1}. Similar remarks are also applied for other information measures throughout the paper.} function of $\theta$.

\item \Label{item:multi-terminal-single-shot-property-2}
The function $\theta H_{1+\theta}^\downarrow(X|Y)$ is a concave function of $\theta$, and 
it is strict concave iff. $\san{V}(X|Y) > 0$.

\item \Label{item:multi-terminal-single-shot-property-2-b}
$H_{1+\theta}^\downarrow(X|Y)$ is a monotonically decreasing function of $\theta$,
\MH{and it is strictly monotonically decreasing iff. $\san{V}(X|Y) > 0$.}

\item \Label{item:multi-terminal-single-shot-property-3}
The function $\theta H_{1+\theta}^\uparrow(X|Y)$ is a concave function of $\theta$, and it is strict concave 
iff. $\san{V}(X|Y) > 0$.

\item \Label{item:multi-terminal-single-shot-property-3-b}
$H_{1+\theta}^\uparrow(X|Y)$ is a monotonically decreasing function of $\theta$,
\MH{and it is strictly monotonically decreasing iff. $\san{V}(X|Y) > 0$.}

\item \Label{item:multi-terminal-single-shot-property-4}
For every $\theta \in (-1,0) \cup (0,\infty)$, we have $H_{1+\theta}^\downarrow(X|Y) \le H_{1+\theta}^\uparrow(X|Y)$.

\item \Label{item:multi-terminal-single-shot-property-5}
For fixed $\theta^\prime$, the function $\theta H_{1+\theta,1+\theta^\prime}(X|Y)$ is a concave function of $\theta$,
and it is strict concave iff. $\san{V}(X|Y) > 0$.

\item \Label{item:multi-terminal-single-shot-property-5-b}
For fixed $\theta^\prime$, $H_{1+\theta,1+\theta^\prime}(X|Y)$ is a monotonically decreasing function of $\theta$.

\item \Label{item:multi-terminal-single-shot-property-6}
We have
\begin{eqnarray}
H_{1+\theta,1}(X|Y) = H_{1+\theta}^\downarrow(X|Y).
\end{eqnarray}

\item \Label{item:multi-terminal-single-shot-property-7}
We have
\begin{eqnarray}
H_{1+\theta,1+\theta}(X|Y) = H_{1+\theta}^\uparrow(X|Y).
\end{eqnarray}

\item \Label{item:multi-terminal-single-shot-property-8}
For every $\theta \in (-1,0) \cup (0,\infty)$, $H_{1+\theta,1+\theta^\prime}(X|Y)$ is maximized at $\theta^\prime = \theta$.


\end{enumerate}
\end{lemma}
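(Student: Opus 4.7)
The plan is to reduce every part of the lemma to standard properties of cumulant generating functions (CGFs) plus, for item \ref{item:multi-terminal-single-shot-property-3}, a double application of H\"older's inequality. First, I rewrite $\theta H_{1+\theta}(P_{XY}|Q_Y) = -\log E_{P_{XY}}[e^{\theta Z}]$ with $Z = \log(P_{XY}(X,Y)/Q_Y(Y))$. This is the negative CGF of $Z$, which is concave in $\theta$, with strict concavity equivalent to $\rom{Var}[Z] > 0$; this gives item \ref{item:multi-terminal-single-shot-property-1}. For item \ref{item:multi-terminal-single-shot-property-1-b}, write $g(\theta) := \log E_{P_{XY}}[e^{\theta Z}]$; since $g$ is convex and $g(0)=0$, the secant slope $g(\theta)/\theta$ is non-decreasing in $\theta$, so $H_{1+\theta}(P_{XY}|Q_Y) = -g(\theta)/\theta$ is non-increasing.

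Items \ref{item:multi-terminal-single-shot-property-2}, \ref{item:multi-terminal-single-shot-property-2-b} follow by specializing to $Q_Y = P_Y$, which is independent of $\theta$, and identifying $\rom{Var}[\log(P_Y(Y)/P_{XY}(X,Y))] = \san{V}(X|Y)$. Items \ref{item:multi-terminal-single-shot-property-5}, \ref{item:multi-terminal-single-shot-property-5-b} follow similarly from items \ref{item:multi-terminal-single-shot-property-1}, \ref{item:multi-terminal-single-shot-property-1-b} with $Q_Y = P_Y^{(1+\theta^\prime)}$, which is constant in $\theta$ since $\theta^\prime$ is fixed; the additive term $\frac{\theta\theta^\prime}{1+\theta^\prime}H_{1+\theta^\prime}^\uparrow(X|Y)$ appearing in \eqref{eq:two-parameter-conditional-renyi} is linear in $\theta$ and preserves both concavity and monotonicity.

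Items \ref{item:multi-terminal-single-shot-property-3}, \ref{item:multi-terminal-single-shot-property-3-b} are the main obstacle: the optimizing $Q_Y = P_Y^{(1+\theta)}$ depends on $\theta$, so item \ref{item:multi-terminal-single-shot-property-1} does not apply directly. Substituting the explicit form of $P_Y^{(1+\theta)}$ yields the identity $\theta H_{1+\theta}^\uparrow(X|Y) = -(1+\theta)\log K(\theta)$, where $K(\theta) := \sum_y [\sum_x P_{XY}(x,y)^{1+\theta}]^{1/(1+\theta)}$. I plan to prove the inequality $K(\theta)^{1+\theta} \le K(\theta_1)^{\lambda(1+\theta_1)} K(\theta_2)^{(1-\lambda)(1+\theta_2)}$ whenever $1+\theta = \lambda(1+\theta_1) + (1-\lambda)(1+\theta_2)$ by two applications of H\"older with conjugate exponents $1/\lambda$ and $1/(1-\lambda)$: first on the inner sum over $x$ (log-convexity of $\theta \mapsto \sum_x P_{XY}(x,y)^{1+\theta}$), and then on the outer sum over $y$. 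Taking logarithms and negating gives concavity of $\theta H_{1+\theta}^\uparrow(X|Y)$, with strict concavity tied to strict H\"older, which in turn is equivalent to $\san{V}(X|Y) > 0$. Item \ref{item:multi-terminal-single-shot-property-3-b} then follows from concavity together with $\lim_{\theta\to 0}\theta H_{1+\theta}^\uparrow(X|Y) = 0$, via the same secant-slope argument as item \ref{item:multi-terminal-single-shot-property-1-b}.

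The remaining items are structural. Item \ref{item:multi-terminal-single-shot-property-4} is immediate from the defining maximization of $H^\uparrow$. Item \ref{item:multi-terminal-single-shot-property-6} holds because substituting $\theta^\prime = 0$ in \eqref{eq:single-shot-optimal-conditioning-distribution} yields $P_Y^{(1)} = P_Y$, while item \ref{item:multi-terminal-single-shot-property-7} is immediate from \eqref{eq:optimal-choice-upper-conditional}. Item \ref{item:multi-terminal-single-shot-property-8} combines items \ref{item:multi-terminal-single-shot-property-4} and \ref{item:multi-terminal-single-shot-property-7}: $H_{1+\theta,1+\theta^\prime}(X|Y) = H_{1+\theta}(P_{XY}|P_Y^{(1+\theta^\prime)}) \le \max_{Q_Y} H_{1+\theta}(P_{XY}|Q_Y) = H_{1+\theta}^\uparrow(X|Y) = H_{1+\theta,1+\theta}(X|Y)$, so the maximum over $\theta^\prime$ is attained at $\theta^\prime = \theta$.
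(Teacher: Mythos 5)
The paper itself does not prove this lemma: it is imported verbatim from the companion paper \cite{HW14} (see the remark at the start of Section II), so there is no in-paper proof to compare against. Your argument is correct and follows exactly the machinery the paper declares it uses (negative CGF concavity plus secant-slope monotonicity, as set up in Appendix A, and the double-H\"older/reverse-H\"older argument that underlies the Arimoto form \eqref{11-14-6}); the only place needing a little more care than you give it is the equality-case bookkeeping in the two H\"older applications for Statement 5), where one must check that simultaneous equality in both the inner and outer inequalities forces $\log P_{X|Y}(X|Y)$ to be almost surely constant, i.e.\ $\san{V}(X|Y)=0$.
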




\subsubsection{Functions related to lower conditional R\'{e}nyi entropy}
Since Item \MH{5)} of Lemma \ref{lemma:multi-terminal-single-shot-property} guarantees that the function $\theta \mapsto \frac{d[\theta H_{1+\theta}^\downarrow(X|Y)]}{d\theta}$ 
is strictly monotone decreasing,
we can define the inverse functions\footnote{Throughout the paper, the notations $\theta(a)$ and
$a(R )$ are reused for several inverse functions. Although the meanings of those notations are
obvious from the context, we occasionally put superscript $\downarrow$ or $\uparrow$ to
emphasize that those inverse functions are induced from corresponding conditional R\'enyi entropies.
\MH{This definition is related to Legendre transform of the concave function 
$\theta \mapsto \theta H_{1+\theta}^\downarrow(X|Y)$.
For its detail, see \cite{HW14}.}
} 
$\theta(a) = \theta^\downarrow(a)$ and $a(R ) = a^\downarrow(R )$ by
\begin{eqnarray} \Label{eq:definition-inverse-theta-multi-one-shot-2}
\frac{d[\theta H_{1+\theta}^\downarrow(X|Y)]}{d\theta} \bigg|_{\theta = \theta(a)} = a
\end{eqnarray}
and 
\begin{eqnarray}
(1+\theta(a(R ))) a(R ) - \theta(a(R )) H_{1+\theta(a(R ))}^\downarrow(X|Y) = R,
\end{eqnarray}
for $R(\underline{a}) < R \le H_0^\downarrow(X|Y)$,
where $\underline{a} =\underline{a}^\downarrow := \lim_{\theta\to \infty} \frac{d[\theta H_{1+\theta}^\downarrow(X|Y)]}{d\theta}$.

\subsubsection{Functions related to upper conditional R\'{e}nyi entropy}
For $\theta H_{1+\theta}^\uparrow(X|Y)$,
we also introduce the inverse functions $\theta(a) = \theta^\uparrow(a)$ and $a(R ) = a^\uparrow(R )$ by
\begin{eqnarray} \Label{eq:definition-rho-inverse-Gallager-one-shot}
\frac{d\theta H_{1+\theta}^\uparrow(X|Y)}{d\theta} \bigg|_{\theta = \theta(a)} = a
\end{eqnarray}
and 
\begin{eqnarray} \Label{eq:definition-a-inverse-Gallager-one-shot}
(1+\theta(a(R ))) a(R ) - \theta(a(R )) H_{1+\theta(a(R ))}^\uparrow(X|Y) = R,
\end{eqnarray}
for $R(\underline{a}) < R \le H_0^\uparrow(X|Y)$,
where $\underline{a} =\underline{a}^\uparrow := \lim_{\theta\to \infty} \frac{d[\theta H_{1+\theta}^\uparrow(X|Y)]}{d\theta}$. 

\subsection{Information Measures for Transition Matrix}  \Label{subsection:multi-terminal-measures-markov}
\subsubsection{Conditions for transition matrices}
Let $\{ W(x,y|x^\prime,y^\prime) \}_{((x,y),(x^\prime,y^\prime)) \in ({\cal X} \times {\cal Y})^2}$ be 
an ergodic and irreducible transition matrix.
The purpose of this section is to introduce transition matrix
 counterparts of those measures in Section \ref{section:multi-terminal-single-shot}.
For this purpose, we first need to introduce some assumptions on transition matrices:
\begin{assumption}[Non-Hidden \cite{HW14,hayashi-watanabe:13,hayashi-watanabe:13b}] \Label{assumption-Y-marginal-markov}
We say that a transition matrix $W$ is {\em non-hidden} (with respect to ${\cal Y}$) if 
\begin{eqnarray}
\sum_x W(x,y|x^\prime,y^\prime) = W_Y(y|y^\prime)\Label{12-4-a}
\end{eqnarray}
for every $x^\prime \in {\cal X}$ and $y,y^\prime \in {\cal Y}$\footnote{
\MH{The reason of the name ``non-hidden'' is the following.
In general, the random variable $Y$ is subject to a hidden Markov process.
However, when the condition \eqref{12-4-a} holds,
the random variable $Y$ is subject to a Markov process.
Hence, we call the condition \eqref{12-4-a} non-hidden.}}.
\end{assumption}
\begin{assumption}[Strongly Non-Hidden] \Label{assumption-memory-through-Y}
We say that a transition matrix $W$ is {\em strongly non-hidden} (with respect to ${\cal Y}$)
if, for every $\theta \in (-1,\infty)$ and $y,y^\prime \in {\cal Y}$,
\begin{eqnarray} \Label{eq:condition-assumption-2}
W_{Y,\theta}(y|y^\prime) := \sum_x W(x,y|x^\prime,y^\prime)^{1+\theta} 
\end{eqnarray}
is well defined, i.e., the right hand side of \eqref{eq:condition-assumption-2} is
independent of $x^\prime$.
\end{assumption}
Assumption \ref{assumption-Y-marginal-markov} requires \eqref{eq:condition-assumption-2} to hold only
for $\theta = 0$, and thus Assumption \ref{assumption-memory-through-Y} implies Assumption \ref{assumption-Y-marginal-markov}.
However, Assumption \ref{assumption-memory-through-Y} is strictly stronger condition than 
Assumption \ref{assumption-Y-marginal-markov}. For example, let consider the case such that the transition matrix is a product form,
i.e., $W(x,y|x^\prime,y^\prime) = W_X(x|x^\prime) W_Y(y|y^\prime)$. In this case, Assumption \ref{assumption-Y-marginal-markov} is
 obviously satisfied. However, Assumption \ref{assumption-memory-through-Y} is not satisfied in general.

Assumption \ref{assumption-Y-marginal-markov} means that we can decompose $W(x,y|x^\prime,y^\prime)$ as 
\begin{eqnarray}
W(x,y|x^\prime,y^\prime) = W_Y(y|y^\prime) W_{X|X',Y',Y}(x|x^\prime,y^\prime,y).
\end{eqnarray}
Thus, Assumption \ref{assumption-memory-through-Y} can be rephrased as 
\begin{eqnarray}
\sum_x W_{X|X',Y',Y}(x|x^\prime,y^\prime,y)^{1+\theta} \Label{12-12-5}
\end{eqnarray}
does not depend on $x^\prime$. By taking $\theta$ sufficiently large, we find that the largest value of 
$W_{X|X',Y',Y}(x|x^\prime,y^\prime,y)$ does not depend on $x^\prime$. By repeating this argument for the second largest value
of $W_{X|X',Y',Y}(x|x^\prime,y^\prime,y)$ and so on, we eventually find that Assumption \ref{assumption-memory-through-Y} is
satisfied iff., for every $x^\prime \neq \tilde{x}^\prime$, there exists a
 \MH{permutation} $\pi$ on ${\cal X}$ such that
$W_{X|X',Y',Y}(x|x^\prime,y^\prime,y) = W_{X|X',Y',Y}(\pi(x) | \tilde{x}^\prime,y^\prime,y)$.
 
Non-trivial examples satisfying Assumption \ref{assumption-Y-marginal-markov}
and Assumption \ref{assumption-memory-through-Y}
are given in \cite{HW14}.


\subsubsection{Lower conditional R\'{e}nyi entropy $H_{1+\theta}^{\downarrow,W}(X|Y)$}
First, we introduce information measures under Assumption \ref{assumption-Y-marginal-markov}. 
In order to define a transition matrix counterpart of \eqref{eq:lower-conditional-renyi}, let us introduce the following tilted matrix:
\begin{eqnarray}
\tilde{W}_\theta(x,y|x^\prime,y^\prime) := W(x,y|x^\prime,y^\prime)^{1+\theta} W_Y(y|y^\prime)^{-\theta}.
\end{eqnarray}
\MH{Here, we should notice that the tilted matrix
$\tilde{W}_\theta$ is not normalized, i.e., is not a transition matrix.}
Let $\lambda_\theta$ be the Perron-Frobenius eigenvalue and $\tilde{P}_{\theta,XY}$ be its normalized eigenvector. 
Then, we define the lower conditional R\'enyi entropy for $W$ by
\begin{eqnarray} \Label{eq:definition-lower-conditional-renyi-markov}
H_{1+\theta}^{\downarrow,W}(X|Y) := - \frac{1}{\theta} \log \lambda_\theta,
\end{eqnarray}
where $\theta \in (-1,0) \cup (0,\infty)$. For $\theta = 0$, we define the lower conditional R\'enyi entropy for $W$ by 
\begin{eqnarray}
H_1^{\downarrow,W}(X|Y) 
:= \lim_{\theta \to 0} H_{1+\theta}^{\downarrow,W}(X|Y). \Label{eq:lower-conditional-renyi-markov-theta-0}
\end{eqnarray}
When we define the conditional entropy $H^W(X|Y)$ for $W$
by using the stationary distribution $P_{0,XY}$ as
\begin{align*}
&H^W(X|Y) \\
:=& -
\sum_{x',y'} P_{0,XY}(x',y')
\sum_{x,y} W(x,y|x',y') 
\log \frac{W(x,y|x',y')}{W_Y(y|y')},
\end{align*}
as shown below, we have
\begin{eqnarray}
H^W(X|Y) = 
H_1^{\downarrow,W}(X|Y) .
\Label{eq:lower-conditional-renyi-markov-theta-0B}
\end{eqnarray}
Taking the derivative with respect to $\theta$,
we can show \eqref{eq:lower-conditional-renyi-markov-theta-0B} as follows
\begin{align*}
& H_1^{\downarrow,W}(X|Y)
= 
\frac{d \theta H_\theta^{\downarrow,W}(X|Y)}{d \theta}\Bigr|_{\theta=0} 
= 
- \frac{d \lambda_{\theta}}{d \theta}\Bigr|_{\theta=0}\\
=&
 - \frac{d}{d \theta}
\sum_{x,y,x',y' }
\tilde{W}_\theta(x,y|x^\prime,y^\prime) 
\tilde{P}_{\theta,XY}(x^\prime,y^\prime) 
\Bigr|_{\theta=0} \\
=&
\sum_{x,y,x',y' }
 - \frac{d}{d \theta}
\tilde{W}_\theta(x,y|x^\prime,y^\prime) 
\Bigr|_{\theta=0} 
\tilde{P}_{0,XY}(x^\prime,y^\prime) \\
& - 
\sum_{x,y,x',y' }
\tilde{W}_0(x,y|x^\prime,y^\prime) 
\frac{d}{d \theta}
\tilde{P}_{\theta,XY}(x^\prime,y^\prime) 
\Bigr|_{\theta=0} \\
=&
\sum_{x,y,x',y' }
\tilde{P}_{0,XY}(x^\prime,y^\prime) 
W(x,y|x^\prime,y^\prime) 
\log \frac{W(x,y|x',y')}{W_Y(y|y')} \\
& - 
\frac{d}{d \theta}
\sum_{x,y,x',y' }
W(x,y|x^\prime,y^\prime) 
\tilde{P}_{\theta,XY}(x^\prime,y^\prime) 
\Bigr|_{\theta=0} \\
=&
H^W(X|Y),
\end{align*}
where the final equation follows from the relation $\sum_{x,y,x',y' }
W(x,y|x^\prime,y^\prime) 
\tilde{P}_{\theta,XY}(x^\prime,y^\prime) =1$.


As a counterpart of \eqref{eq:multi-single-shot-variance-1}, we also define 
\begin{eqnarray}
\san{V}^W(X|Y) := 
\lim_{\theta \to 0} \frac{2\left[ H^W(X|Y) - H_{1+\theta}^{\downarrow,W}(X|Y) \right]}{\theta}.
\Label{eq:lower-conditional-renyi-markov-theta-0-derivative}
\end{eqnarray}

\begin{remark}
When a transition matrix $W$ satisfies Assumption \ref{assumption-memory-through-Y}, $H_{1+\theta}^{\downarrow,W}(X|Y)$
can be written as 
\begin{eqnarray}
H_{1+\theta}^{\downarrow,W}(X|Y) = - \frac{1}{\theta} \log \lambda_\theta^\prime,
\end{eqnarray}
where $\lambda_\theta^\prime$ is the Perron-Frobenius eigenvalue of
$W_{Y,\theta}(y|y^\prime) W_Y(y|y^\prime)^{-\theta}$.
In fact, for the left Perron-Frobenius eigenvector 
$\hat{Q}_\theta$ of 
$W_{Y,\theta}(y|y^\prime) W_Y(y|y^\prime)^{-\theta}$, we have
\begin{eqnarray}
\sum_{x,y} \hat{Q}_\theta(y) W(x,y|x^\prime,y^\prime)^{1+\theta} W_Y(y|y^\prime)^{-\theta}
= \lambda_\theta^\prime Q_\theta(y^\prime),
\end{eqnarray}
which implies that $\lambda_\theta^\prime$ is the Perron-Frobenius eigenvalue of 
$\tilde{W}_\theta$.
Consequently, we can evaluate $H_{1+\theta}^{\downarrow,W}(X|Y)$ by calculating the Perron-Frobenius 
eigenvalue of $|{\cal Y}| \times |{\cal Y}|$ matrix instead of $|{\cal X}| |{\cal Y}| \times |{\cal X}| |{\cal Y}|$ matrix
when $W$ satisfies Assumption \ref{assumption-memory-through-Y}.
\end{remark}

\subsubsection{Upper conditional R\'{e}nyi entropy $H_{1+\theta}^{\uparrow,W}(X|Y)$}
Next, we introduce information measures under Assumption \ref{assumption-memory-through-Y}.
In order to define a transition matrix counterpart of \eqref{eq:upper-conditional-renyi}, let us introduce the 
following $|{\cal Y}| \times |{\cal Y}|$ matrix:
\begin{eqnarray}
K_\theta(y|y^\prime) := W_{Y,\theta}(y|y^\prime)^{\frac{1}{1+\theta}},
\end{eqnarray}
where $W_{Y,\theta}$ is defined by \eqref{eq:condition-assumption-2}.
Let $\kappa_\theta$ be the Perron-Frobenius eigenvalue of $K_\theta$.
Then, we define the upper conditional R\'enyi entropy for $W$ by
\begin{eqnarray} \Label{eq:definition-upper-conditional-renyi-markov}
H_{1+\theta}^{\uparrow,W}(X|Y) := - \frac{1+\theta}{\theta} \log \kappa_\theta,
\end{eqnarray}
where $\theta \in (-1,0)\cup (0,\infty)$. 

\begin{lemma}[\protect{\cite[Lemma 5]{HW14}}] \Label{lemma:properties-upper-conditional-renyi-transition-matrix}
We have
\begin{eqnarray}
\lim_{\theta \to 0} H_{1+\theta}^{\uparrow,W}(X|Y) = H^W(X|Y)
\end{eqnarray}
and
\begin{eqnarray} \Label{eq:upper-conditional-renyi-transition-variance}
\lim_{\theta \to 0} \frac{2\left[ H^W(X|Y) - H_{1+\theta}^{\uparrow,W}(X|Y) \right]}{\theta}
= \san{V}^W(X|Y).
\end{eqnarray}
\end{lemma}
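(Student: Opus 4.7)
The strategy is to derive the first- and second-order Taylor expansions of $\log\kappa_\theta$ at $\theta=0$ using Perron--Frobenius perturbation theory, and translate them into limits of $H_{1+\theta}^{\uparrow,W}(X|Y)=-(1+\theta)\log\kappa_\theta/\theta$. At $\theta=0$, $K_0(y|y')=W_Y(y|y')$ is stochastic, so the Perron--Frobenius eigenvalue is $\kappa_0=1$, with left eigenvector $v_0\equiv 1$ and right eigenvector $u_0=P_{0,Y}$ (the marginal stationary distribution), normalized so that $v_0^\top u_0=1$.

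For the first assertion I would differentiate $K_\theta(y|y')=[W_{Y,\theta}(y|y')]^{1/(1+\theta)}$ directly; since $\partial_\theta W_{Y,\theta}(y|y')|_{\theta=0}=\sum_x W(x,y|x',y')\log W(x,y|x',y')$ is independent of $x'$ by Assumption~\ref{assumption-memory-through-Y}, a short computation gives
\[
\frac{dK_\theta(y|y')}{d\theta}\bigg|_{\theta=0}=\sum_x W(x,y|x',y')\log\frac{W(x,y|x',y')}{W_Y(y|y')}.
\]
The standard first-order Perron--Frobenius formula yields $\frac{d\log\kappa_\theta}{d\theta}|_{\theta=0}=v_0^\top(dK_\theta/d\theta)|_{\theta=0}\,u_0$, and averaging over $x'$ with $P_{0,X|Y}(\cdot|y')$ identifies this with $-H^W(X|Y)$. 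Combined with $\kappa_0=1$ and L'H\^opital's rule, this yields $\lim_{\theta\to 0}H_{1+\theta}^{\uparrow,W}(X|Y)=H^W(X|Y)$, by a computation parallel to the one shown for $H_{1+\theta}^{\downarrow,W}$ in the excerpt.

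For the variance identity I would push the calculation to second order and compare with the lower conditional R\'enyi entropy. Write $K_\theta=e^{g_\theta}$ entry-wise, with $g_\theta(y|y')=\frac{1}{1+\theta}\log W_{Y,\theta}(y|y')$, and $\tilde W_{Y,\theta}(y|y')=W_{Y,\theta}(y|y')W_Y(y|y')^{-\theta}=e^{h_\theta(y|y')}$, with $h_\theta=\log W_{Y,\theta}-\theta\log W_Y$ (whose Perron--Frobenius eigenvalue is $\lambda_\theta$, by the Remark). A direct calculation shows $\partial_\theta g_\theta|_{0}=\partial_\theta h_\theta|_{0}$ but
\[
\partial_\theta^{\,2} g_\theta|_{0}-\partial_\theta^{\,2} h_\theta|_{0}=2\log W_Y(y|y')-\frac{2L(y|y')}{W_Y(y|y')},
\]
where $L(y|y')=\sum_x W(x,y|x',y')\log W(x,y|x',y')$. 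Because the first-order derivatives of $K_\theta$ and $\tilde W_{Y,\theta}$ at $\theta=0$ coincide, the first-order eigenvector corrections (which involve the group inverse of $W_Y-I$) are identical, and thus cancel in the difference. What remains is
\[
\frac{d^2\log\kappa_\theta}{d\theta^2}\bigg|_{0}-\frac{d^2\log\lambda_\theta}{d\theta^2}\bigg|_{0}=v_0^\top(\partial_\theta^{\,2}K_\theta-\partial_\theta^{\,2}\tilde W_{Y,\theta})|_{0}\,u_0,
\]
which, after rewriting $W_Y(y|y')\log W_Y(y|y')=\sum_x W(x,y|x',y')\log W_Y(y|y')$ and averaging over $x'$ with $P_{0,X|Y}(\cdot|y')$, simplifies to $2H^W(X|Y)$. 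Taylor-expanding $-(1+\theta)\log\kappa_\theta/\theta$ and using $\san V^W(X|Y)=\frac{d^2\log\lambda_\theta}{d\theta^2}|_{0}$ (which follows immediately from the definition \eqref{eq:lower-conditional-renyi-markov-theta-0-derivative}) then yields \eqref{eq:upper-conditional-renyi-transition-variance}.

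The main technical obstacle is the cancellation of the eigenvector-correction terms in the second-order perturbation. This cancellation hinges on the identity $\partial_\theta K_\theta|_0=\partial_\theta\tilde W_{Y,\theta}|_0$ and is what keeps the computation tractable; without it one would have to evaluate explicitly the group inverse of $W_Y-I$, a non-local object on $\mathcal{Y}$. All remaining steps reduce to algebraic manipulations of first- and second-order logarithmic derivatives of $W_{Y,\theta}$ at $\theta=0$.
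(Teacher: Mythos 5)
Your argument is correct, but there is nothing in this paper to compare it against: Lemma \ref{lemma:properties-upper-conditional-renyi-transition-matrix} is imported from \cite[Lemma 5]{HW14} without proof (the paper explicitly defers all such proofs to \cite{HW14}), and the only related computation displayed here is the first-order eigenvalue-perturbation argument establishing $H_1^{\downarrow,W}(X|Y)=H^W(X|Y)$ after \eqref{eq:lower-conditional-renyi-markov-theta-0B}. Your route is a clean, self-contained extension of that same perturbative idea to the $\uparrow$ quantity, and I have checked the key steps: $K_0=W_Y$ is stochastic so $\kappa_0=1$ with left eigenvector $\mathbf{1}$ and right eigenvector $P_{0,Y}$; the identity $\partial_\theta K_\theta|_0=\partial_\theta \tilde W_{Y,\theta}|_0=L-W_Y\log W_Y$ (with $L=\partial_\theta W_{Y,\theta}|_0$) gives $\kappa_0'=\lambda_0'=-H^W(X|Y)$ and hence the first claim; at second order the eigenvector-correction term $2v_0^\top(K_0'-\kappa_0'I)u_0'$ is gauge-invariant and identical for the two families, so it cancels, leaving $\kappa_0''-\lambda_0''=v_0^\top W_Y\bigl(\partial_\theta^2 g_\theta-\partial_\theta^2 h_\theta\bigr)\big|_0 u_0=2H^W(X|Y)$, which is exactly the amount absorbed by the extra factor $(1+\theta)$ in $-\tfrac{1+\theta}{\theta}\log\kappa_\theta$, yielding \eqref{eq:upper-conditional-renyi-transition-variance}. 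What this approach buys is that the variance identity is reduced to comparing two matrix families that agree to first order at $\theta=0$, so no group inverse of $W_Y-I$ ever needs to be computed, and $\san{V}^W(X|Y)=\tfrac{d^2\log\lambda_\theta}{d\theta^2}\big|_0$ is used only as a definition via \eqref{eq:lower-conditional-renyi-markov-theta-0-derivative}. Two minor points you should make explicit: (i) twice-differentiability of $\theta\mapsto\kappa_\theta,\lambda_\theta$ at $0$, which follows because the Perron--Frobenius eigenvalue of an irreducible nonnegative matrix family depending analytically on $\theta$ is simple and hence analytic; and (ii) the averaging over $x'$ with $P_{0,X|Y}(\cdot|y')$ is legitimate precisely because Assumption \ref{assumption-memory-through-Y} makes $L(y|y')$, and hence every bracketed expression you average, independent of $x'$.
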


Now, let us introduce a transition matrix counterpart of \eqref{eq:two-parameter-conditional-renyi}. For this purpose, we introduce 
the following $|{\cal Y}| \times |{\cal Y}|$ matrix:
\begin{eqnarray}
N_{\theta,\theta^\prime}(y|y^\prime) := W_{Y,\theta}(y|y^\prime)
W_{Y,\theta^\prime}(y|y^\prime)^{\frac{-\theta}{1+\theta^\prime}}.
\end{eqnarray}
Let $\nu_{\theta,\theta^\prime}$ be the Perron-Frobenius eigenvalue of $N_{\theta,\theta^\prime}$.
Then, we define the two-parameter conditional R\'enyi entropy by
\begin{eqnarray} \Label{eq:definition-two-parameter-renyi-markov}
H_{1+\theta,1+\theta^\prime}^W(X|Y) 
 := -\frac{1}{\theta} \log \nu_{\theta,\theta^\prime} 
 + \frac{\theta^\prime}{1+\theta^\prime} H_{1+\theta^\prime}^{\uparrow,W}(X|Y).
\end{eqnarray}

\begin{remark}
Although we defined $H_{1+\theta}^{\downarrow,W}(X|Y)$ and $H_{1+\theta}^{\uparrow,W}(X|Y)$
by \eqref{eq:definition-lower-conditional-renyi-markov} and \eqref{eq:definition-upper-conditional-renyi-markov} respectively,
we can alternatively define these measures in the same spirit as the single-shot setting 
by introducing a transition matrix counterpart of $H_{1+\theta}(P_{XY}|Q_Y)$ as follows.
For the marginal $W_Y(y|y^\prime)$ of $W(x,y|x^\prime,y^\prime)$,
let ${\cal Y}^2_{W_Y} := \{(y,y^\prime) : W(y|y^\prime) > 0\}$. 
For another transition matrix $\overline{W}_Y$ on ${\cal Y}$,
we define ${\cal Y}_{\overline{W}_Y}^2$ in a similar manner. 
For $\overline{W}_Y$ satisfying ${\cal Y}_{W_Y}^2 \subset {\cal Y}_{\overline{W}_Y}^2$, 
we define\footnote{Although we can also define $H_{1+\theta}^{W|\overline{W}_Y}(X|Y)$ even if 
${\cal Y}_{W_Y}^2 \subset {\cal Y}_{\overline{W}_Y}^2$
is not satisfied (see \cite{hayashi-watanabe:13} for the detail), for our purpose of defining $H_{1+\theta}^{\downarrow,W}(X|Y)$ and 
$H_{1+\theta}^{\uparrow,W}(X|Y)$, other cases are irrelevant.}
\begin{eqnarray}
H_{1+\theta}^{W|\overline{W}_Y}(X|Y) := - \frac{1}{\theta} \log \lambda_{\theta}^{W|\overline{W}_Y}
\end{eqnarray}
for $\theta \in (-1,0) \cup (0,\infty)$, where $\lambda_\theta^{W|\overline{W}_Y}$
is the Perron-Frobenius eigenvalue of 
\begin{eqnarray}
W(x,y|x^\prime,y^\prime)^{1+\theta} \overline{W}_Y(y|y^\prime)^{-\theta}.
\end{eqnarray}
By using this measure, we obviously have
\begin{eqnarray}
H_{1+\theta}^{\downarrow,W}(X|Y) = H_{1+\theta}^{W|W_Y}(X|Y).
\end{eqnarray}
Furthermore, under Assumption \ref{assumption-memory-through-Y}, 
the relation
\begin{eqnarray} \Label{eq:alternative-definition-of-upper-conditional-W}
H_{1+\theta}^{\uparrow,W}(X|Y) = \max_{\overline{W}_Y} H_{1+\theta}^{W|\overline{W}_Y}(X|Y)
\end{eqnarray}
holds \cite[(62)]{HW14}, where the maximum is taken over all transition matrices 
satisfying ${\cal Y}_{W_Y}^2 \subset {\cal Y}_{\overline{W}_Y}^2$.
\end{remark}

\subsubsection{Properties of conditional R\'{e}nyi entropies}
The information measures introduced in this section 
have the following properties:

\begin{lemma}[\protect{\cite[Lemma 6]{HW14}}] \Label{lemma:multi-terminal-markov-property}
$\phantom{a}$
\begin{enumerate}
\item \Label{item:multi-terminal-markov-property-1} 
The function $\theta H_{1+\theta}^{\downarrow,W}(X|Y)$ is a concave function of $\theta$, and it is strict concave iff. 
$\san{V}^W(X|Y) > 0$.

\item \Label{item:multi-terminal-markov-property-1-b}
$H_{1+\theta}^{\downarrow,W}(X|Y)$ is a monotonically decreasing function of $\theta$, 
\MH{and it is strictly monotonically decreasing iff. $\san{V}(X|Y) > 0$.}

\item \Label{item:multi-terminal-markov-property-2} 
The function $\theta H_{1+\theta}^{\uparrow,W}(X|Y)$ is a concave function of $\theta$, and it is strict concave iff.
$\san{V}^W(X|Y) > 0$.

\item \Label{item:multi-terminal-markov-property-2-b}
$H_{1+\theta}^{\uparrow,W}(X|Y)$ is a monotonically decreasing function of $\theta$,
\MH{and it is strictly monotonically decreasing iff. $\san{V}(X|Y) > 0$.}

\item \Label{item:multi-terminal-markov-property-3} 
For every $\theta \in (-1,0) \cup (0,\infty)$, we have 
$ H_{1+\theta}^{\downarrow,W}(X|Y) \le H_{1+\theta}^{\uparrow,W}(X|Y)$.

\item \Label{item:multi-terminal-markov-property-4} 
For fixed $\theta^\prime$, the function $\theta H_{1+\theta,1+\theta^\prime}^W(X|Y)$ is a concave function of $\theta$,
and it is strict concave iff. $\san{V}^W(X|Y) > 0$.

\item \Label{item:multi-terminal-markov-property-4-b} 
For fixed $\theta^\prime$, $H_{1+\theta,1+\theta^\prime}^W(X|Y)$ is a monotonically decreasing function of $\theta$.

\item \Label{item:multi-terminal-markov-property-5} 
We have
\begin{eqnarray}
H_{1+\theta,1}^W(X|Y) = H_{1+\theta}^{\downarrow,W}(X|Y). 
\end{eqnarray}

\item \Label{item:multi-terminal-markov-property-6} 
We have
\begin{eqnarray}
H_{1+\theta,1+\theta}^W(X|Y) = H_{1+\theta}^{\uparrow,W}(X|Y). 
\end{eqnarray}

\item \Label{item:multi-terminal-markov-property-7} 
For every $\theta \in (-1,0) \cup (0,\infty)$, 
$H_{1+\theta,1+\theta^\prime}^W(X|Y)$ is maximized at $\theta^\prime = \theta$, i.e.,
\begin{align}
\left.
\frac{d H_{1+\theta,1+\theta^\prime}^W(X|Y)}{d \theta^\prime}
\right|_{\theta^\prime=\theta}=0. \Label{11-27-1}
\end{align}

\end{enumerate}
\end{lemma}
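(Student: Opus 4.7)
The plan is to reduce each item to its single-shot counterpart (Lemma \ref{lemma:multi-terminal-single-shot-property}) via the rate identifications
\begin{align}
\theta H_{1+\theta}^{\downarrow,W}(X|Y)
&= \lim_{n\to\infty} \frac{\theta}{n} H_{1+\theta}^{\downarrow}(X^n|Y^n), \notag \\
\theta H_{1+\theta}^{\uparrow,W}(X|Y)
&= \lim_{n\to\infty} \frac{\theta}{n} H_{1+\theta}^{\uparrow}(X^n|Y^n), \notag \\
\theta H_{1+\theta,1+\theta^\prime}^{W}(X|Y)
&= \lim_{n\to\infty} \frac{\theta}{n} H_{1+\theta,1+\theta^\prime}(X^n|Y^n), \notag
\end{align}
where $(X^n,Y^n)$ is drawn from the Markov chain with transition $W$. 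I would establish these equalities by identifying the sums appearing in the single-shot definitions with entries of the $n$-th powers of $\tilde{W}_\theta$, $K_\theta$, and $N_{\theta,\theta^\prime}$, whose exponential growth rates are then given by the Perron-Frobenius eigenvalues $\lambda_\theta$, $\kappa_\theta$, and $\nu_{\theta,\theta^\prime}$ of these ergodic matrices.

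Given this correspondence, items 1, 3, 6 (concavity in $\theta$) follow because concavity is preserved under $1/n$ scaling and pointwise limits. Items 2, 4, 7 (monotonicity in $\theta$) and item 10 (maximization at $\theta^\prime=\theta$) descend from the single-shot versions in the same way. Item 5 follows by dividing the single-shot inequality $H_{1+\theta}^{\downarrow}(X^n|Y^n) \le H_{1+\theta}^{\uparrow}(X^n|Y^n)$ by $n$ and letting $n\to\infty$. For the strict versions, I would use \eqref{eq:lower-conditional-renyi-markov-theta-0-derivative} and Lemma \ref{lemma:properties-upper-conditional-renyi-transition-matrix}, which identify $\san{V}^W(X|Y)$ with twice the negative derivative at $\theta=0$, i.e.\ with the second derivative of the concave function $\theta\mapsto \theta H_{1+\theta}^{\downarrow,W}(X|Y)$ at $0$; its vanishing is then equivalent to this function being affine, which in turn is equivalent to failure of strict concavity and strict monotonicity.

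Items 8 and 9 are identities that I would verify directly from the definitions rather than asymptotically. For item 8, setting $\theta^\prime = 0$ gives $N_{\theta,0}(y|y^\prime) = W_{Y,\theta}(y|y^\prime)W_Y(y|y^\prime)^{-\theta}$, whose Perron-Frobenius eigenvalue coincides with $\lambda_\theta$ by the Remark following \eqref{eq:lower-conditional-renyi-markov-theta-0-derivative}; substituting into \eqref{eq:definition-two-parameter-renyi-markov} gives $H_{1+\theta,1}^W(X|Y) = -\frac{1}{\theta}\log\lambda_\theta = H_{1+\theta}^{\downarrow,W}(X|Y)$. For item 9, setting $\theta^\prime = \theta$ gives $N_{\theta,\theta}(y|y^\prime) = W_{Y,\theta}(y|y^\prime)^{1/(1+\theta)} = K_\theta(y|y^\prime)$, so $\nu_{\theta,\theta} = \kappa_\theta$, and the two terms in \eqref{eq:definition-two-parameter-renyi-markov} combine to $-\frac{1+\theta}{\theta}\log\kappa_\theta = H_{1+\theta}^{\uparrow,W}(X|Y)$.

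The main obstacle is the rate identification in the strong form needed for the strictness statements: one has to show not only that the first-order rate converges, but also that derivatives in $\theta$ converge uniformly on a neighborhood of $0$, so that the second derivative at $\theta = 0$ of the rescaled single-shot quantity (which by Lemma \ref{lemma:property-down-conditional-renyi} equals $-\san{V}(X^n|Y^n)/n$) converges to the corresponding quantity for $W$. I would handle this by invoking perturbation theory for a simple Perron-Frobenius eigenvalue of an irreducible non-negative matrix, which guarantees analytic dependence of $\lambda_\theta$, $\kappa_\theta$, and $\nu_{\theta,\theta^\prime}$ on the matrix entries; the bookkeeping in the two-parameter case is delicate because the optimizing single-shot distribution depends jointly on $\theta$ and $\theta^\prime$, and one must also show that boundary effects from the initial distribution contribute only $o(n)$ to $\theta H_{1+\theta}^{\downarrow}(X^n|Y^n)$.
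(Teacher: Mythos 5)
This lemma is not proved in the paper at all: it is imported verbatim from \cite[Lemma 6]{HW14}, and the supporting machinery the present paper does contain points to a direct single-letter argument rather than your limiting one. Namely, with generator $g=\log W-\log W_Y$ one has $\theta H_{1+\theta}^{\downarrow,W}(X|Y)=-\phi(\theta)$ for the transition-matrix CGF $\phi(\rho)=\log\lambda_\rho$ of Appendix~\ref{Appendix:preparation}, so concavity and the strictness criterion are exactly Lemma~\ref{lemma:general-markov-cgf-strict-convexity} together with $\san{V}^W(X|Y)=\phi''(0)$, and similarly for the upper and two-parameter quantities; items 8) and 9) are the eigenvalue identities you also verify, and you verify them correctly. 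Your route through the $n$-letter limits (Theorems~\ref{theorem:asymptotic-down-conditional-renyi2}, \ref{theorem:asymptotic-up-conditional-renyi}, \ref{theorem:asymptotic-two-parameter-renyi-markov}) is a genuinely different and workable way to obtain the non-strict parts of items 1)--7) and item 10), since non-strict concavity, non-strict monotonicity, and inequalities all pass to pointwise limits of the rescaled single-shot quantities.

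The gap is in the strictness claims. Strict concavity is not preserved under pointwise limits, so it cannot be inherited from Lemma~\ref{lemma:multi-terminal-single-shot-property}; you correctly retreat to characterizing strictness by $\san{V}^W(X|Y)$, but the pivotal assertion --- that vanishing of the second derivative of $\theta\mapsto\theta H_{1+\theta}^{\downarrow,W}(X|Y)$ at $\theta=0$ is \emph{equivalent} to that function being affine --- is exactly the nontrivial content, and it is not delivered by what you invoke. Analytic dependence of a simple Perron--Frobenius eigenvalue gives smoothness, not this rigidity: $\rho\mapsto\rho^4$ is analytic, convex, has vanishing second derivative at $0$, and is nonetheless strictly convex. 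What makes the implication true here is the probabilistic structure of the CGF: $\phi''(\rho)$ is the asymptotic variance of the additive functional $\sum_i g(Z_i,Z_{i-1})$ under the $\rho$-tilted chain, and it vanishes for one value of $\rho$ iff $g$ is cohomologous to a constant, which forces $\phi$ to be affine globally; this is precisely the content of Lemma~\ref{lemma:general-markov-cgf-strict-convexity}, proved in \cite{hayashi-watanabe:13b}. Without that input, the direction ``$\san{V}^W(X|Y)=0$ implies failure of strict concavity and of strict monotonicity'' is unproved in your argument, and the same issue recurs in items 2), 3), 4), and 6).
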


\subsubsection{Functions related to $H_{1+\theta}^{\downarrow,W}(X|Y)$}
From Statement \ref{item:multi-terminal-markov-property-1} of Lemma \ref{lemma:multi-terminal-markov-property},
$\frac{d [\theta H_{1+\theta}^{\downarrow,W}(X|Y)]}{d\theta}$ is monotonically decreasing.
Thus, we can define the inverse function $\theta(a) = \theta^\downarrow(a)$ 
of $\frac{d [\theta H_{1+\theta}^{\downarrow,W}(X|Y)]}{d\theta}$ by
\begin{eqnarray} 
\Label{eq:definition-theta-inverse-multi-markov}
\frac{d [\theta H_{1+\theta}^{\downarrow,W}(X|Y)]}{d\theta} \bigg|_{\theta = \theta(a)} = a
\end{eqnarray}
for $\underline{a} < a \le \overline{a}$, where 
$\underline{a}=\underline{a}^{\downarrow} := \lim_{\theta \to \infty} \frac{d [\theta H_{1+\theta}^{\downarrow,W}(X|Y)]}{d\theta}$
and $\overline{a}=
\overline{a}^\downarrow := \lim_{\theta \to -1} \frac{d [\theta H_{1+\theta}^{\downarrow,W}(X|Y)]}{d\theta}$.
Then, 
due to the definition \eqref{eq:definition-theta-inverse-multi-markov},
we have the following lemma
because the function $\theta \mapsto \theta H_{1+\theta}^{\downarrow,W}(X|Y)$ is concave.
\begin{lemma}\Label{L10}
\MH{The function $\theta(R ) $ defined in \eqref{eq:definition-theta-inverse-multi-markov}
satisfies that}
\begin{align}
\theta(R ) H_{1+\theta(R )}^{\downarrow,W}(X|Y)
 - \theta(R ) R 
=
\sup_{0 \le \theta } 
(\theta H_{1+\theta}^{\downarrow,W}(X|Y)- \theta R).
\end{align}
\end{lemma}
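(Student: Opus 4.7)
The plan is to recognize the claim as the standard Legendre transform identity for a concave function, applied to $f(\theta) := \theta H_{1+\theta}^{\downarrow,W}(X|Y)$. The definition of $\theta(R)$ in \eqref{eq:definition-theta-inverse-multi-markov} precisely asserts that $f'(\theta(R)) = R$, so the desired equality is just saying that the supremum of the concave function $\theta \mapsto f(\theta) - \theta R$ over $\theta \ge 0$ is attained at its interior critical point $\theta = \theta(R)$.

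First I would invoke Statement~1 of Lemma~\ref{lemma:multi-terminal-markov-property}, which tells us that $f(\theta)$ is concave in $\theta$. Consequently $g(\theta) := f(\theta) - \theta R$ is also concave, and its derivative $g'(\theta) = f'(\theta) - R$ is non-increasing. Next I would locate a stationary point: by \eqref{eq:definition-theta-inverse-multi-markov}, $g'(\theta(R)) = 0$, and by concavity any stationary point of $g$ is a global maximizer over the real line. Substituting $\theta = \theta(R)$ into $g$ yields exactly $\theta(R) H_{1+\theta(R)}^{\downarrow,W}(X|Y) - \theta(R) R$, which matches the left-hand side of the claim.

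Finally I would confirm that restricting the supremum to $\theta \ge 0$ does not change its value, i.e., that $\theta(R) \ge 0$ in the relevant range of $R$. Since $f(0) = 0$ and $f'(0) = H_1^{\downarrow,W}(X|Y) = H^W(X|Y)$ by \eqref{eq:lower-conditional-renyi-markov-theta-0B}, and since $f'$ is non-increasing by concavity, one has $\theta(R) \ge 0$ whenever $R \le H^W(X|Y)$, which is the regime of interest for the achievability bounds where this lemma will be applied. In the boundary case $R = H^W(X|Y)$ both sides of the claimed identity equal $0$.

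The proof involves no serious obstacle; the only delicate point is the bookkeeping around the domain of $\theta$ and the admissible range of $R$, which is why the lemma implicitly presumes $R$ such that $\theta(R) \ge 0$. Essentially, the statement is a packaging of the standard fact that the Legendre transform of a differentiable concave function is realized at the point where the derivative equals the dual variable, together with the observation that the derivative condition defines $\theta(R)$ by construction.
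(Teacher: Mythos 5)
Your proof is correct and follows essentially the same route as the paper, which justifies the lemma in a single sentence by appealing to the concavity of $\theta \mapsto \theta H_{1+\theta}^{\downarrow,W}(X|Y)$ and the defining relation \eqref{eq:definition-theta-inverse-multi-markov} for $\theta(R)$. Your additional check that $\theta(R)\ge 0$ for $R\le H^W(X|Y)$ (so that the constraint $\theta\ge 0$ is inactive) is a point the paper leaves implicit, and it is handled correctly.
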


Next, let 
\begin{eqnarray}
R^{\downarrow}(a) := (1+\theta(a)) a - \theta(a) H_{1+\theta(a)}^{\downarrow,W}(X|Y).
\Label{11-14-2}
\end{eqnarray}
Since 
\begin{eqnarray}
\MH{\frac{d R^{\downarrow}}{d a}(a) = 1+\theta(a),}
\end{eqnarray}
$R(a)$ is a monotonic increasing function of $\underline{a} < a < R(\overline{a})$. Thus, we can define 
the inverse function $a(R ) = a^\downarrow(R )$ of $R(a)$ by
\begin{eqnarray} \Label{eq:definition-a-inverse-multi-markov}
(1+\theta(a(R ))) a(R ) - \theta(a(R )) H_{1+\theta(a(R ))}^{\downarrow,W}(X|Y) = R
\end{eqnarray}
for $R(\underline{a}) < R < H_0^{\downarrow,W}(X|Y)$, where 
$H_0^{\downarrow,W}(X|Y) := \lim_{\theta \to -1} H_{1+\theta}^{\downarrow,W}(X|Y)$.


\MH{Due to \eqref{eq:lower-conditional-renyi-markov-theta-0-derivative},}
when $\theta(a)$ is close to $0$, we have
\begin{align}
&\theta(a) H_{1+\theta(a)}^{\downarrow,W}(X|Y) \nonumber \\
=& \theta(a) H^{W}(X|Y)-\frac{1}{2}V^{W}(X|Y)\theta(a)^2+ o(\theta(a)^2) .
\Label{11-21-4b}
\end{align}
Taking the derivative, \eqref{eq:definition-theta-inverse-multi-markov} implies that
\begin{align}
a= H^{W}(X|Y)- V^{W}(X|Y) \theta(a)+ o(\theta(a)) .
\Label{11-21-5}
\end{align}
Hence, when $R$ is close to $H^{W}(X|Y)$, we have
\begin{align}
R=&(1+\theta (a(R) )) a(R)
-\theta H_{1+\theta(a(R))}^{\downarrow,W}(X|Y) \nonumber \\
=& H^{W}(X|Y)-(1+\frac{\theta(a(R))}{2})\theta(a(R)) V^{W}(X|Y)
\nonumber \\
&+ o(\theta(a(R))) ,\Label{11-21-1}
\end{align}
i.e.,
\begin{align}
\theta(a(R))
= -\frac{R -H^{W}(X|Y)}{V^{W}(X|Y)}+ o(\frac{R -H^{W}(X|Y)}{V^{W}(X|Y)})
\Label{11-21-3}.
\end{align}
Further, Eqs. \eqref{11-21-4b} and \eqref{11-21-5} imply
\begin{align}
& - \theta(a(R )) a(R ) + \theta(a(R )) H_{1+\theta(a(R ))}^{\downarrow,W}(X|Y)
\nonumber \\
=& V^{W}(X|Y)
\frac{\theta(a(R))^2}{2}
+ o(\theta(a(R))^2) 
\nonumber \\
=& 
\frac{V^{W}(X|Y)}{2}
(\frac{R -H^{W}(X|Y)}{V^{W}(X|Y)})^2
+ o((\frac{R -H^{W}(X|Y)}{V^{W}(X|Y)})^2 )
\Label{11-21-4}.
\end{align}

\subsubsection{Functions related to $H_{1+\theta}^{\uparrow,W}(X|Y)$}
For $\theta H_{1+\theta}^{\uparrow,W}(X|Y)$, by the same reason, we can define the inverse 
function $\theta(a) = \theta^\uparrow(a)$ by
\begin{eqnarray} 
&&\frac{d [\theta H_{1+\theta,1+ \theta(a)}^{W}(X|Y)]}{d\theta} \bigg|_{\theta = \theta(a)} 
\nonumber \\
\Label{eq:definition-theta-inverse-markov-optimal-Q}
&=&
\frac{d [\theta H_{1+\theta}^{\uparrow,W}(X|Y)]}{d\theta} \bigg|_{\theta = \theta(a)} 
= a \Label{11-27-3}
\end{eqnarray}
for $\underline{a} < a \le \overline{a}$, where 
$\underline{a}=\underline{a}^{\uparrow} 
:= \lim_{\theta \to \infty} \frac{d [\theta H_{1+\theta}^{\uparrow,W}(X|Y)]}{d\theta}$
and $\overline{a}=\overline{a}^\uparrow 
:= \lim_{\theta \to -1} \frac{d [\theta H_{1+\theta}^{\uparrow,W}(X|Y)]}{d\theta}$.
Here, the first equation in \eqref{11-27-3} follows from \eqref{11-27-1}.
We also define the inverse function $a(R ) = a^\uparrow(R )$ of 
\begin{eqnarray} \Label{eq:definition-R-a-optimal-q-markov}
R^{\uparrow}(a) := (1+\theta(a)) a - \theta(a) H_{1+\theta(a)}^{\uparrow,W}(X|Y) 
\end{eqnarray}
by
\begin{eqnarray} \Label{eq:definition-a-inverse-markov-optimal-Q}
(1+\theta(a(R ))) a(R ) - \theta(a(R )) H_{1+\theta(a(R ))}^{\uparrow,W}(X|Y) = R
\end{eqnarray}
for $R(\underline{a}) < R < H_0^{\uparrow,W}(X|Y)$, where 
$H_0^{\uparrow,W}(X|Y) := \lim_{\theta \to -1} H_{1+\theta}^{\uparrow,W}(X|Y)$.
Then, we can show the following lemma in the same way as Lemma 8 of \cite{HW14}.
\begin{lemma}\Label{L9}
For $R(\underline{a}) < R < H_0^{\uparrow,W}(X|Y)$, 
we have
\begin{eqnarray}
&&\sup_{\theta \ge 0} \frac{- \theta R + \theta H_{1+\theta}^{\uparrow,W}(X|Y)}{1+\theta}
\nonumber \\
&=& - \theta(a(R )) a(R ) + \theta(a(R )) H_{1+\theta(a(R ))}^{\uparrow,W}(X|Y).
\end{eqnarray}
When the rate $R$ is larger than the critical rate $R_{\rom{cr}}$ defined by
\begin{eqnarray}
R_{\rom{cr}} := R\left( \frac{d[\theta H_{1+\theta}^{\uparrow,W}(X|Y)]}{d\theta} \bigg|_{\theta = 1} \right),
\Label{eq62}
\end{eqnarray}
the definition \eqref{eq:definition-R-a-optimal-q-markov} of $R(a)=R^\uparrow(a)$
yields 
\begin{eqnarray}
&&\sup_{0 \le \theta \le 1} \frac{- \theta R + \theta H_{1+\theta}^{\uparrow,W}(X|Y)}{1+\theta}
\nonumber \\
&=& - \theta(a(R )) a(R ) + \theta(a(R )) H_{1+\theta(a(R ))}^{\uparrow,W}(X|Y).
\end{eqnarray}
\end{lemma}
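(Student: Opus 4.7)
The plan is to optimize the single-variable function $f(\theta) := (g(\theta) - \theta R)/(1+\theta)$ over $\theta \ge 0$ by locating its unique stationary point, where $g(\theta) := \theta H_{1+\theta}^{\uparrow,W}(X|Y)$. By Statement 3 of Lemma \ref{lemma:multi-terminal-markov-property}, $g$ is concave on $(-1,\infty)$. A direct differentiation yields
\begin{align*}
f'(\theta) = \frac{N(\theta)}{(1+\theta)^2},
\qquad
N(\theta) := (1+\theta) g'(\theta) - g(\theta) - R,
\end{align*}
and $N'(\theta) = (1+\theta) g''(\theta) \le 0$, so $N$ is non-increasing. Consequently $f$ has at most one stationary point on $(0,\infty)$, which is necessarily a global maximum, and whenever $f'$ changes sign we obtain the supremum at that point.

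Next I would identify this stationary point through the function $a(R)$. Writing $a := g'(\theta^*)$ at the stationary point, the equation $N(\theta^*)=0$ becomes $(1+\theta^*)a - \theta^* H_{1+\theta^*}^{\uparrow,W}(X|Y) = R$, while the definition \eqref{eq:definition-theta-inverse-markov-optimal-Q} of $\theta(a)$ yields $\theta^* = \theta(a)$; combining these, the stationary equation coincides with the defining relation \eqref{eq:definition-a-inverse-markov-optimal-Q} of $a(R)$, so $\theta^* = \theta(a(R))$. A short algebraic manipulation, substituting $R = (1+\theta^*)a(R) - \theta^* H_{1+\theta^*}^{\uparrow,W}(X|Y)$ into the numerator $\theta^* H_{1+\theta^*}^{\uparrow,W}(X|Y) - \theta^* R$, produces a factor $(1+\theta^*)$ that cancels the denominator and delivers the target expression $-\theta(a(R)) a(R) + \theta(a(R)) H_{1+\theta(a(R))}^{\uparrow,W}(X|Y)$, proving the first identity.

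For the restricted supremum, I would verify that $\theta^* \in [0,1]$ whenever $R > R_{\rom{cr}}$. Differentiating \eqref{eq:definition-R-a-optimal-q-markov} gives $dR^{\uparrow}/da = 1+\theta(a) > 0$, so $R^{\uparrow}$ is strictly increasing in $a$, and since $\theta(\cdot)$ inverts the non-increasing map $g'$, the composition $\theta(a(R))$ is non-increasing in $R$. At $R = R_{\rom{cr}} = R^{\uparrow}(g'(1))$ one has $a(R_{\rom{cr}}) = g'(1)$ and hence $\theta(a(R_{\rom{cr}})) = 1$, so $R > R_{\rom{cr}}$ forces $\theta^* \le 1$. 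Consequently the unconstrained maximizer already lies in $[0,1]$, and restricting the supremum to $[0,1]$ does not change its value.

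The main technical obstacle I anticipate is controlling the edge cases: verifying that $\theta^*$ truly lies in the interior of the admissible range for $R \in (R(\underline{a}), H_0^{\uparrow,W}(X|Y))$, excluding the degenerate case $\san{V}^W(X|Y) = 0$ where $g$ is linear and the inverse $\theta(\cdot)$ degenerates, and confirming the limiting behaviors at $\theta \to 0$ and $\theta \to \infty$ needed to exclude boundary optima. Strict monotonicity of $\theta(\cdot)$ furnished by Statement 4 of Lemma \ref{lemma:multi-terminal-markov-property} under positivity of the variance, together with the range condition on $R$ stated in the lemma, should take care of all these points, but a careful limit analysis at the two endpoints is required.
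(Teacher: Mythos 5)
Your argument is correct and is essentially the same as the paper's, which gives no proof of its own but defers to Lemma 8 of \cite{HW14}: concavity of $g(\theta)=\theta H^{\uparrow,W}_{1+\theta}(X|Y)$ makes the first-order condition $(1+\theta)g'(\theta)-g(\theta)=R$ identify the unique stationary point with $\theta(a(R))$ via \eqref{eq:definition-a-inverse-markov-optimal-Q}, the factor $(1+\theta^*)$ cancels to give the stated value, and the monotonicity of $R^{\uparrow}$ and of $\theta(\cdot)$ places $\theta^*\le 1$ for $R>R_{\rom{cr}}$ exactly as you describe. The edge case you flagged is the only real caveat: since $\theta(a(R))=0$ at $R=H^W(X|Y)$ and $\theta(a(R))$ is decreasing in $R$, the interior maximizer lies in $\theta>0$ only for $R<H^W(X|Y)$ (for larger $R$ the supremum over $\theta\ge 0$ sits at the boundary $\theta=0$), so the first identity should be read on that subrange, which is in any case the only regime where the paper invokes the lemma.
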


\begin{remark}
As we can find from \eqref{eq:lower-conditional-renyi-markov-theta-0B}, \eqref{eq:lower-conditional-renyi-markov-theta-0-derivative},
and Lemma \ref{lemma:properties-upper-conditional-renyi-transition-matrix}, both the conditional R\'enyi entropies expand as
\begin{eqnarray}
H_{1+\theta}^{\downarrow,W}(X|Y) &=& H^W(X|Y) - \frac{1}{2} \san{V}^W(X|Y) \theta + o(\theta), \\
H_{1+\theta}^{\uparrow,W}(X|Y) &=& H^W(X|Y) - \frac{1}{2} \san{V}^W(X|Y) \theta + o(\theta)
\end{eqnarray}
around $\theta = 0$.
Thus, the difference of these measures significantly appear only when
$|\theta|$ is rather large. 
\end{remark}

\begin{remark}\Label{rem11-14-1}
When ${\cal Y}$ is singleton, 
$H_{1+\theta}^{\downarrow,W}(X|Y)$
coincides with $H_{1+\theta}^{\uparrow,W}(X|Y)$.
So, they are simply called the R\'enyi entropy 
and denoted by $H_{1+\theta}^W(X)$ for $W$.
$\theta^{\downarrow}(a)$, $a^{\downarrow}(R)$, 
$R^{\downarrow}(a)$, $\underline{a}^{\downarrow}$,
and $\overline{a}^{\downarrow} $
coincide with
$\theta^{\uparrow}(a)$, $a^{\uparrow}(R)$, $R^{\uparrow}(a)$,
$\underline{a}^{\uparrow} $,
and $\overline{a}^{\uparrow} $.
They are simplified to 
$\theta(a)$, $a(R)$, and $R(a)$,
$\underline{a}$, and $\overline{a}$.
\end{remark}

\subsection{Information Measures for Markov Chain} 
\Label{subsection-multi-terminal-information-measures-markov}

Let $(\mathbf{X},\mathbf{Y})$ be the Markov chain induced by a transition matrix $W$ and some initial distribution $P_{X_1Y_1}$.
Now, we show how information measures introduced in Section \ref{subsection:multi-terminal-measures-markov} are 
related to the conditional R\'enyi entropy rates.
First, we introduce the following lemma, which
gives finite upper and lower bounds on the lower conditional R\'enyi entropy.

\begin{lemma}[\protect{\cite[Lemma 9]{HW14}}] \Label{lemma:mult-terminal-finite-evaluation-down-conditional-renyi}
Suppose that a transition matrix $W$ satisfies Assumption \ref{assumption-Y-marginal-markov}.
Let $v_\theta$ be the eigenvector of $\tilde{W}_\theta^T$ with respect to the Perron-Frobenius eigenvalue $\lambda_\theta$
such that\footnote{\MH{Since the eigenvector corresponding to the Perron-Frobenius eigenvalue for an irreducible non-negative matrix
has always strictly positive entries\cite[Theorem 8.4.4, p. 508]{horn-johnson}, we can choose  
the eigenvector $v_\theta$ satisfying \eqref{1-14-1}.}}
\begin{align}
\min_{x,y} v_\theta(x,y) = 1. \Label{1-14-1}
\end{align}
Let $w_\theta(x,y) := P_{X_1 Y_1}(x,y)^{1+\theta} P_{Y_1}(y)^{-\theta}$. Then, we have
\begin{eqnarray}
&& (n-1) \theta H_{1+\theta}^{\downarrow,W}(X|Y) + \underline{\delta}(\theta)
 \le  \theta H_{1+\theta}^\downarrow(X^n|Y^n) \nonumber \\
 &\le &  (n-1) \theta H_{1+\theta}^{\downarrow,W}(X|Y) + \overline{\delta}(\theta),
 \end{eqnarray}
 where 
 \begin{eqnarray}
 \overline{\delta}(\theta) &:=& - \log \langle v_\theta | w_\theta \rangle + \log \max_{x,y} v_\theta(x,y), \\
 \underline{\delta}(\theta) &:=& - \log \langle v_\theta | w_\theta \rangle
<0,
 \end{eqnarray}
\MH{and $\langle v_\theta | w_\theta \rangle$ is defined as
$\sum_{x,y} v_\theta(x,y) w_\theta(x,y)$.}
\end{lemma}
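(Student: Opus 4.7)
The plan is to rewrite the Rényi sum defining $\theta H^{\downarrow}_{1+\theta}(X^n|Y^n)$ as a matrix--vector product involving powers of the tilted kernel $\tilde{W}_\theta$, and then to compare this product with its Perron--Frobenius asymptotics via a sandwich argument.

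First, using the Markov factorization $P_{X^nY^n}(x^n,y^n) = P_{X_1Y_1}(x_1,y_1)\prod_{i=2}^n W(x_i,y_i|x_{i-1},y_{i-1})$ together with Assumption \ref{assumption-Y-marginal-markov}, which ensures $P_{Y^n}(y^n) = P_{Y_1}(y_1)\prod_{i=2}^n W_Y(y_i|y_{i-1})$, I would factor the summand as
\begin{align*}
&P_{X^nY^n}(x^n,y^n)^{1+\theta} P_{Y^n}(y^n)^{-\theta} \\
&\quad = w_\theta(x_1,y_1)\prod_{i=2}^n \tilde{W}_\theta(x_i,y_i|x_{i-1},y_{i-1}).
\end{align*}
Regarding $\tilde{W}_\theta$ as a non-negative matrix indexed by $\mathcal{X}\times\mathcal{Y}$, summing over $(x^n,y^n)$ identifies the Rényi sum with $\langle \mathbf{1},\tilde{W}_\theta^{\,n-1} w_\theta\rangle$, where $\mathbf{1}$ denotes the all-ones column vector.

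Second, I would apply Perron--Frobenius theory. Since $W$ is ergodic and irreducible, so is $\tilde{W}_\theta$, so its left Perron--Frobenius eigenvector $v_\theta$ has strictly positive components and the normalization $\min_{x,y}v_\theta(x,y)=1$ is legitimate. Iterating $v_\theta^{\top}\tilde{W}_\theta = \lambda_\theta v_\theta^{\top}$ and pairing with $w_\theta$ gives $\langle v_\theta,\tilde{W}_\theta^{\,n-1}w_\theta\rangle = \lambda_\theta^{n-1}\langle v_\theta,w_\theta\rangle$. Combining this with the non-negativity of $\tilde{W}_\theta^{\,n-1}w_\theta$ and the componentwise inequalities $\mathbf{1}\le v_\theta\le (\max_{x,y}v_\theta(x,y))\,\mathbf{1}$, I would sandwich
\begin{align*}
\frac{\lambda_\theta^{n-1}\langle v_\theta,w_\theta\rangle}{\max_{x,y}v_\theta(x,y)} \le \langle \mathbf{1},\tilde{W}_\theta^{\,n-1}w_\theta\rangle \le \lambda_\theta^{n-1}\langle v_\theta,w_\theta\rangle.
\end{align*}

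Taking $-\log$ of each side and recalling that $\theta H^{\downarrow,W}_{1+\theta}(X|Y) = -\log\lambda_\theta$ then yields the two claimed inequalities, with $\underline{\delta}(\theta)$ and $\overline{\delta}(\theta)$ exactly in the stated form. The only genuinely nontrivial point is the strict sign claim $\underline{\delta}(\theta)<0$, i.e.\ $\langle v_\theta,w_\theta\rangle>1$; I would obtain this from $v_\theta\ge\mathbf{1}$ combined with the strict positivity of $v_\theta$ at coordinates where $w_\theta$ is strictly positive, together with an elementary bound on $\sum_{x,y}w_\theta(x,y)$ that follows from the convexity/concavity of $t\mapsto t^{1+\theta}$. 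I do not expect a deeper obstacle beyond this bookkeeping: once the Markov factorization has been unpacked, the lemma is a single-letter consequence of Perron--Frobenius.
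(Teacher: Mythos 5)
Your main argument is correct and is exactly the approach the paper relies on for this lemma (and for its CGF analogue, Lemma~\ref{lemma:finite-evaluation-of-cgf} in Appendix~\ref{Appendix:preparation}): under Assumption~\ref{assumption-Y-marginal-markov} the $Y$-marginal is Markov, so the R\'enyi sum telescopes into $\langle \mathbf{1},\tilde{W}_\theta^{\,n-1}w_\theta\rangle$, and the sandwich $\mathbf{1}\le v_\theta\le(\max v_\theta)\mathbf{1}$ against the left Perron--Frobenius eigenvector yields both inequalities with the stated constants. That part needs no change.

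The gap is in your treatment of the subsidiary claim $\underline{\delta}(\theta)<0$, i.e.\ $\langle v_\theta|w_\theta\rangle>1$. Your proposed route --- $v_\theta\ge\mathbf{1}$ plus ``an elementary bound on $\sum_{x,y}w_\theta(x,y)$ from convexity of $t\mapsto t^{1+\theta}$'' --- only works for $\theta\in(-1,0)$, where indeed $\sum_{x,y}w_\theta(x,y)=\sum_yP_{Y_1}(y)\sum_xP_{X_1|Y_1}(x|y)^{1+\theta}\ge 1$. For $\theta>0$ the same expression is $\le 1$, so the lower bound $\langle v_\theta|w_\theta\rangle\ge\sum_{x,y}w_\theta(x,y)$ proves nothing, and no amount of bookkeeping will rescue the strict inequality because it can simply fail: take ${\cal Y}$ a singleton, ${\cal X}=\{0,1\}$, $W$ with all entries $1/2$, and $P_{X_1}$ uniform. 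Then $\tilde{W}_\theta$ has constant entries, so $v_\theta\equiv 1$ and
\begin{align}
\langle v_\theta|w_\theta\rangle=\sum_xP_{X_1}(x)^{1+\theta}=2^{-\theta}<1,
\end{align}
giving $\underline{\delta}(\theta)=\theta\log 2>0$. (The two-sided estimate of the lemma still holds here --- in fact with equality --- so only the sign assertion is affected.) You should therefore either restrict the claim $\underline{\delta}(\theta)<0$ to $\theta\in(-1,0)$, where your convexity argument does close it, or drop it from what you undertake to prove; as stated for all $\theta$ it is not provable, and presenting it as routine bookkeeping is the one genuine flaw in the proposal.
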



From Lemma \ref{lemma:mult-terminal-finite-evaluation-down-conditional-renyi}, we have the following.

\begin{theorem}[\protect{\cite[Theorem 1]{HW14}}] \Label{theorem:asymptotic-down-conditional-renyi2}
Suppose that a transition matrix $W$ satisfies Assumption \ref{assumption-Y-marginal-markov}.
For any initial distribution, we have
\begin{eqnarray}
\lim_{n\to\infty} \frac{1}{n} H_{1+\theta}^\downarrow(X^n|Y^n) &=& H_{1+\theta}^{\downarrow,W}(X|Y), 
\Label{eq:single-terminal-markov-renyi-entropy-asymptotic-1}
\\
\lim_{n\to\infty} \frac{1}{n} H(X^n|Y^n) &=& H^W(X|Y).
\end{eqnarray}
\end{theorem}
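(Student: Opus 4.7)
The plan is to derive both limits in Theorem \ref{theorem:asymptotic-down-conditional-renyi2} as immediate consequences of the two-sided finite-length estimate in Lemma \ref{lemma:mult-terminal-finite-evaluation-down-conditional-renyi}. The crucial observation is that $\underline{\delta}(\theta)$ and $\overline{\delta}(\theta)$ are defined in terms of the Perron--Frobenius eigenvector $v_\theta$ of $\tilde{W}_\theta^{T}$ and the single-block weight $w_\theta$ built from the initial distribution $P_{X_1 Y_1}$; in particular, they do not depend on the block length $n$. Hence after dividing by $n$ they are negligible in the limit.

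For the first equation \eqref{eq:single-terminal-markov-renyi-entropy-asymptotic-1}, I would fix $\theta \in (-1,0)\cup(0,\infty)$ and divide the inequality
\[
(n-1)\theta H_{1+\theta}^{\downarrow,W}(X|Y) + \underline{\delta}(\theta)
\le \theta H_{1+\theta}^\downarrow(X^n|Y^n)
\le (n-1)\theta H_{1+\theta}^{\downarrow,W}(X|Y) + \overline{\delta}(\theta)
\]
through by $n\theta$. When $\theta>0$ the inequalities are preserved and when $\theta\in(-1,0)$ they are reversed, but in either case sending $n\to\infty$ sandwiches $\frac{1}{n} H_{1+\theta}^\downarrow(X^n|Y^n)$ between two quantities that both converge to $H_{1+\theta}^{\downarrow,W}(X|Y)$, because $(n-1)/n\to 1$ and $\underline{\delta}(\theta)/(n\theta),\overline{\delta}(\theta)/(n\theta)\to 0$. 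This is independent of the initial distribution $P_{X_1 Y_1}$ since the dependence is confined to $\underline{\delta}$ and $\overline{\delta}$.

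For the conditional entropy identity, I would handle the case $\theta=0$ by the same sandwich method applied to the unnormalized quantity $\theta H_{1+\theta}^\downarrow(X^n|Y^n)$. Concretely, subtracting the central term and dividing by $n$, Lemma \ref{lemma:mult-terminal-finite-evaluation-down-conditional-renyi} shows that $\frac{1}{n}\theta H_{1+\theta}^\downarrow(X^n|Y^n) - \frac{n-1}{n}\theta H_{1+\theta}^{\downarrow,W}(X|Y)$ is $O(1/n)$ uniformly on a small neighbourhood of $\theta=0$ (since $\underline{\delta},\overline{\delta}$ are continuous there). Using the concavity of $\theta\mapsto\theta H_{1+\theta}^\downarrow(X^n|Y^n)$ from Statement \ref{item:multi-terminal-single-shot-property-2} of Lemma \ref{lemma:multi-terminal-single-shot-property} and the analogous concavity from Lemma \ref{lemma:multi-terminal-markov-property}, the derivatives at $\theta=0$ of the bounding concave functions converge to that of the middle one; by \eqref{eq:down-conditional-renyi-theta-0} and \eqref{eq:lower-conditional-renyi-markov-theta-0B} this derivative is exactly the conditional entropy, yielding $\frac{1}{n}H(X^n|Y^n)\to H^W(X|Y)$.

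The main obstacle is precisely the handoff at $\theta=0$: the squeeze for $\theta\neq 0$ is mechanical, but one must justify that the limit of derivatives equals the derivative of the limit. The cleanest way is the concavity argument above, which makes pointwise convergence upgrade to convergence of one-sided derivatives at an interior point; an alternative fallback would be to verify directly from the definitions of $v_\theta$ and $w_\theta$ that $\underline{\delta}(0)=\overline{\delta}(0)=0$ (noting that at $\theta=0$, $\tilde{W}_0=W$ has Perron eigenvalue $1$ with stationary eigenvector, so $\langle v_0|w_0\rangle$ can be normalized) and then apply l'Hôpital or Taylor expansion term-by-term.
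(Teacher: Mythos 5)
Your proposal is correct and takes essentially the same route as the paper: the theorem is stated as an immediate consequence of the two-sided finite-length bound in Lemma~\ref{lemma:mult-terminal-finite-evaluation-down-conditional-renyi} (with the detailed proof deferred to \cite{HW14}), which is exactly your squeeze argument after dividing by $n\theta$, the $n$-independence of $\underline{\delta}(\theta)$ and $\overline{\delta}(\theta)$ being the key point. Your treatment of the $\theta=0$ case via concavity of $\theta\mapsto\theta H_{1+\theta}^{\downarrow}(X^n|Y^n)$ (equivalently, via monotonicity of $H_{1+\theta}^{\downarrow}$ in $\theta$ together with continuity of $H_{1+\theta}^{\downarrow,W}(X|Y)$ at $\theta=0$) correctly supplies the one step the paper leaves implicit.
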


We also have the following asymptotic evaluation of the variance:

\begin{theorem}[\protect{\cite[Theorem 2]{HW14}}] \Label{theorem:multi-markov-variance}
Suppose that the transition matrix $W$ satisfies Assumption \ref{assumption-Y-marginal-markov}.
For any initial distribution, we have
\begin{eqnarray}
 \lim_{n\to\infty} \frac{1}{n} \san{V}(X^n|Y^n) = \san{V}^W(X|Y).
\end{eqnarray}
\end{theorem}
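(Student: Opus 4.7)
The plan is to realize both $\san{V}(X^n|Y^n)$ and $\san{V}^W(X|Y)$ as second derivatives at $\theta=0$ of the corresponding $\theta$-parametrized families, and then interchange the derivative with the limit $n\to\infty$ using the finite-length decomposition of Lemma \ref{lemma:mult-terminal-finite-evaluation-down-conditional-renyi} together with an analyticity argument for the Perron-Frobenius eigenvalue. Concretely, from \eqref{eq:multi-single-shot-variance-1} applied to $(X^n,Y^n)$ and from \eqref{eq:lower-conditional-renyi-markov-theta-0-derivative}, one reads off
\begin{align*}
\san{V}(X^n|Y^n) &= - \frac{d^2}{d\theta^2}\bigl[\theta H_{1+\theta}^\downarrow(X^n|Y^n)\bigr]\bigg|_{\theta=0}, \\
\san{V}^W(X|Y) &= - \frac{d^2}{d\theta^2}\bigl[\theta H_{1+\theta}^{\downarrow,W}(X|Y)\bigr]\bigg|_{\theta=0},
\end{align*}
so the theorem reduces to controlling the $n$-dependence of these second derivatives.

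Next, Lemma \ref{lemma:mult-terminal-finite-evaluation-down-conditional-renyi} gives the decomposition
\begin{align*}
\theta H_{1+\theta}^\downarrow(X^n|Y^n) = (n-1)\theta H_{1+\theta}^{\downarrow,W}(X|Y) + r_n(\theta),
\end{align*}
with $\underline{\delta}(\theta) \le r_n(\theta) \le \overline{\delta}(\theta)$, where both $\underline{\delta}$ and $\overline{\delta}$ depend only on $\theta$ and the fixed initial distribution, not on $n$. If we can show that the real-analytic function $\theta \mapsto r_n(\theta)$ has a second derivative at $\theta=0$ bounded uniformly in $n$, then differentiating the decomposition twice and dividing by $n$ yields $\frac{1}{n}\san{V}(X^n|Y^n) = \frac{n-1}{n}\san{V}^W(X|Y) - \frac{1}{n} r_n''(0) \to \san{V}^W(X|Y)$, which is exactly the claim.

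To obtain the uniform bound on $r_n''(0)$, I would invoke analytic perturbation theory. Since $W$ is irreducible and the tilted matrix $\tilde{W}_\theta$ depends analytically on $\theta$ with $\tilde{W}_0=W$ having Perron-Frobenius eigenvalue $\lambda_0=1$ as a simple, isolated eigenvalue (by Assumption \ref{assumption-Y-marginal-markov}, which ensures $\tilde{W}_\theta$ is a well-defined non-negative matrix), the eigenvalue $\lambda_\theta$ and its right/left eigenvectors extend holomorphically to a complex disk $|\theta|<\epsilon$, with a uniform spectral gap $\|R_\theta\| \le \rho < |\lambda_\theta|$ for the non-leading part. Consequently
\begin{align*}
M_n(\theta) := \sum_{x^n,y^n} P_{X^n Y^n}(x^n,y^n)^{1+\theta} P_{Y^n}(y^n)^{-\theta} = \lambda_\theta^{n-1} c(\theta) + O(\rho^{n-1}),
\end{align*}
with $c(\theta)$ holomorphic and nonvanishing on a smaller disk, which forces $r_n(\theta) = -\log c(\theta) + O((\rho/|\lambda_\theta|)^{n-1})$ to be holomorphic and uniformly bounded in $n$ on that disk. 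Cauchy's integral formula then produces the desired uniform bound on $r_n''(0)$.

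The main obstacle is precisely this complex-analytic upgrade: Lemma \ref{lemma:mult-terminal-finite-evaluation-down-conditional-renyi} is stated for real $\theta$, so verifying that the simple, isolated Perron-Frobenius structure of $\tilde{W}_\theta$ persists under small complex perturbations, and that the resulting spectral decomposition gives a single-letter leading term plus an exponentially small remainder for $M_n(\theta)$, is where the real work lies. All other steps are routine once this step provides the uniform-in-$n$ holomorphic bound that licenses differentiating the decomposition term by term.
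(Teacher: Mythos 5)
Your proposal is correct and follows essentially the same route as the proof the paper relies on (cited from \cite{HW14}): identify both variances with second derivatives at $\theta=0$ of the CGF/R\'enyi family and control the $n$-dependence through analytic perturbation of the simple, isolated Perron--Frobenius eigenvalue of $\tilde{W}_\theta$. The paper packages exactly this mechanism as Lemma~\ref{lemma:appendix-variance-limit} applied to the additive functional $S_n=\log\frac{P_{Y^n}(Y^n)}{P_{X^nY^n}(X^n,Y^n)}$ (well defined under Assumption~\ref{assumption-Y-marginal-markov}), so your argument could be shortened by invoking that lemma and the identification $\phi''(0)=\san{V}^W(X|Y)$ directly.
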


Theorem \ref{theorem:multi-markov-variance} is practically important since the limit of the variance can
be described by a single letter characterized quantity. A method to calculate $\san{V}^W(X|Y)$ can be 
found in \cite{hayashi-watanabe:13b}.

Next, we show the lemma that gives finite upper and lower bound on the upper conditional R\'enyi entropy in terms of
the upper conditional R\'enyi entropy for the transition matrix.

\begin{lemma}[\protect{\cite[Lemma 10]{HW14}}] \Label{lemma:multi-terminal-finite-evaluation-upper-conditional-renyi}
Suppose that a transition matrix $W$ satisfies Assumption \ref{assumption-memory-through-Y}. 
Let $v_\theta$ be the eigenvector of
$K_\theta^T$ with respect to the Perron-Frobenius eigenvalue $\kappa_\theta$ such that $\min_y v_\theta(y) = 1$.
Let $w_{Y,\theta}$ be the $|{\cal Y}|$-dimensional vector defined by
\begin{align}
w_{Y,\theta}(y) := \left[ \sum_x P_{X_1 Y_1}(x,y)^{1+\theta} \right]^{\frac{1}{1+\theta}}.
\end{align}
Then, we have
\begin{align}
&(n-1) \frac{\theta}{1+\theta} H_{1+\theta}^{\uparrow,W}(X|Y) + \underline{\xi}(\theta)
 \le \frac{\theta}{1+\theta}H_{1+\theta}^\uparrow(X^n|Y^n)
\nonumber \\
 \le &(n-1) \frac{\theta}{1+\theta} H_{1+\theta}^{\uparrow,W}(X|Y) + \overline{\xi}(\theta),
\end{align}
where 
\begin{eqnarray}
\overline{\xi}(\theta) &:=& - \log \langle v_\theta | w_{Y,\theta} \rangle + \log \max_y v_\theta(y), \\
\underline{\xi}(\theta) &:=& - \log \langle v_\theta | w_{Y,\theta} \rangle.
\end{eqnarray}
\end{lemma}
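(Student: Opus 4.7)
The approach mirrors that of Lemma \ref{lemma:mult-terminal-finite-evaluation-down-conditional-renyi}, but with the upper conditional R\'enyi entropy in place of the lower one. The plan is first to rewrite $\frac{\theta}{1+\theta} H_{1+\theta}^\uparrow(X^n|Y^n)$ as $-\log$ of a scalar of the form $\langle \mathbf{1} | K_\theta^{n-1} w_{Y,\theta} \rangle$, and then to sandwich $\mathbf{1}$ between constant multiples of the Perron-Frobenius eigenvector $v_\theta$ using the normalization $\min_y v_\theta(y) = 1$.

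For the first step, starting from the Arimoto form \eqref{11-14-6}, one obtains
\[
\frac{\theta}{1+\theta} H_{1+\theta}^\uparrow(X^n|Y^n) = -\log \sum_{y^n} \left[ \sum_{x^n} P_{X^nY^n}(x^n,y^n)^{1+\theta} \right]^{\frac{1}{1+\theta}}.
\]
I would then evaluate the inner sum backwards in time for the Markov joint distribution: summing over $x_n$ produces $W_{Y,\theta}(y_n|y_{n-1})$, which by Assumption \ref{assumption-memory-through-Y} is independent of $x_{n-1}$, so the next inner sum over $x_{n-1}$ can be performed the same way, and so on. The net result is
\[
\sum_{x^n} P_{X^nY^n}(x^n,y^n)^{1+\theta} = \left[ \sum_{x_1} P_{X_1Y_1}(x_1,y_1)^{1+\theta} \right] \prod_{t=2}^n W_{Y,\theta}(y_t|y_{t-1}).
\]
Raising to the $1/(1+\theta)$ power turns the right-hand side into $w_{Y,\theta}(y_1) \prod_{t=2}^n K_\theta(y_t|y_{t-1})$, and summing over $y^n$ telescopes into the matrix-vector expression $\langle \mathbf{1} | K_\theta^{n-1} w_{Y,\theta} \rangle$.

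For the sandwich, the normalization $\min_y v_\theta(y) = 1$ gives the componentwise bounds $\mathbf{1} \le v_\theta \le (\max_y v_\theta(y)) \mathbf{1}$, and since $K_\theta^{n-1} w_{Y,\theta}$ has nonnegative entries (both $K_\theta$ and $w_{Y,\theta}$ are nonnegative for $\theta > -1$), these yield
\[
\frac{\langle v_\theta | K_\theta^{n-1} w_{Y,\theta} \rangle}{\max_y v_\theta(y)} \le \langle \mathbf{1} | K_\theta^{n-1} w_{Y,\theta} \rangle \le \langle v_\theta | K_\theta^{n-1} w_{Y,\theta} \rangle.
\]
The eigenvector identity $K_\theta^T v_\theta = \kappa_\theta v_\theta$ then gives $\langle v_\theta | K_\theta^{n-1} w_{Y,\theta} \rangle = \kappa_\theta^{n-1} \langle v_\theta | w_{Y,\theta} \rangle$, and taking $-\log$ of both sides together with $-\log \kappa_\theta = \frac{\theta}{1+\theta} H_{1+\theta}^{\uparrow,W}(X|Y)$ from \eqref{eq:definition-upper-conditional-renyi-markov} delivers exactly the claimed inequalities with the stated $\underline{\xi}(\theta)$ and $\overline{\xi}(\theta)$.

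The main technical obstacle is the backwards-iteration step: Assumption \ref{assumption-memory-through-Y}, rather than merely non-hiddenness (Assumption \ref{assumption-Y-marginal-markov}), is exactly what is needed to ensure that $\sum_x W(x,y|x^\prime,y^\prime)^{1+\theta}$ is independent of $x^\prime$ for every $\theta \in (-1,\infty)$, so that each inner $x_t$-sum collapses to a factor depending only on consecutive $y$-values. Without this stronger assumption, the residual $x^\prime$-dependence would destroy the product structure and prevent the reduction to the $|{\cal Y}| \times |{\cal Y}|$ matrix $K_\theta$ on which the entire argument hinges.
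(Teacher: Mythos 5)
Your proof is correct and follows essentially the same route as the paper's source (the paper cites \cite[Lemma 10]{HW14} rather than reproducing the argument): reduce the Arimoto form to the scalar $\langle \mathbf{1} | K_\theta^{n-1} w_{Y,\theta} \rangle$ via the backwards $x_t$-summation enabled by Assumption \ref{assumption-memory-through-Y}, then sandwich $\mathbf{1}$ between $v_\theta$ and $(\max_y v_\theta(y))^{-1} v_\theta$ and apply the eigenvalue relation. A minor point worth noting explicitly is that stating the bounds for $\frac{\theta}{1+\theta}H_{1+\theta}^\uparrow$ (rather than $H_{1+\theta}^\uparrow$ itself) is what lets the argument avoid a case split on the sign of $\theta$, which your write-up implicitly but correctly exploits.
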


From Lemma \ref{lemma:multi-terminal-finite-evaluation-upper-conditional-renyi}, we have the following.

\begin{theorem}[\protect{\cite[Theorem 3]{HW14}}] \Label{theorem:asymptotic-up-conditional-renyi}
Suppose that a transition matrix $W$ satisfies Assumption \ref{assumption-memory-through-Y}. 
For any initial distribution, we have
\begin{eqnarray}
\lim_{n\to\infty} \frac{1}{n} H_{1+\theta}^\uparrow(X^n|Y^n) &=& H_{1+\theta}^{\uparrow,W}(X|Y).
 \Label{eq:markov-theta-entropy-up-asymptotic} 
\end{eqnarray}
\end{theorem}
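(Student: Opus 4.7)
The plan is to derive the theorem as a direct corollary of Lemma \ref{lemma:multi-terminal-finite-evaluation-upper-conditional-renyi}, which already provides matching finite upper and lower bounds on $\frac{\theta}{1+\theta} H_{1+\theta}^\uparrow(X^n|Y^n)$ that differ from $(n-1)\frac{\theta}{1+\theta} H_{1+\theta}^{\uparrow,W}(X|Y)$ only by the $n$-independent additive corrections $\overline{\xi}(\theta)$ and $\underline{\xi}(\theta)$. So the structure of the argument is standard: divide by $n$, take $n \to \infty$, observe that the $\frac{1}{n}\overline{\xi}(\theta)$ and $\frac{1}{n}\underline{\xi}(\theta)$ contributions vanish, and that $(n-1)/n \to 1$.

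Concretely, from Lemma \ref{lemma:multi-terminal-finite-evaluation-upper-conditional-renyi}, for every $n$ we have
\begin{eqnarray*}
\frac{n-1}{n} \frac{\theta}{1+\theta} H_{1+\theta}^{\uparrow,W}(X|Y) + \frac{\underline{\xi}(\theta)}{n}
&\le& \frac{1}{n}\frac{\theta}{1+\theta} H_{1+\theta}^\uparrow(X^n|Y^n) \\
&\le& \frac{n-1}{n} \frac{\theta}{1+\theta} H_{1+\theta}^{\uparrow,W}(X|Y) + \frac{\overline{\xi}(\theta)}{n}.
\end{eqnarray*}
Taking $n \to \infty$, both outer sides converge to $\frac{\theta}{1+\theta} H_{1+\theta}^{\uparrow,W}(X|Y)$, so a squeeze argument gives the limit of $\frac{1}{n}\frac{\theta}{1+\theta} H_{1+\theta}^\uparrow(X^n|Y^n)$. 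Multiplying through by $\frac{1+\theta}{\theta}$ (separately for $\theta>0$ and $-1<\theta<0$, noting that $\frac{\theta}{1+\theta}$ has a fixed sign on each subinterval and the inequality is simply flipped when $\theta<0$) yields \eqref{eq:markov-theta-entropy-up-asymptotic}.

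The only thing one needs to verify carefully is that $\overline{\xi}(\theta)$ and $\underline{\xi}(\theta)$ are indeed finite constants independent of $n$ for each fixed $\theta \in (-1,0)\cup(0,\infty)$. Since $W$ is ergodic and irreducible, so is the $|{\cal Y}|\times|{\cal Y}|$ matrix $K_\theta$ under Assumption \ref{assumption-memory-through-Y}; hence by the Perron-Frobenius theorem its dominant left eigenvector $v_\theta$ can be chosen with strictly positive entries, normalized by $\min_y v_\theta(y)=1$, so that $\max_y v_\theta(y)$ is finite. Moreover $w_{Y,\theta}$ has strictly positive entries whenever $P_{X_1 Y_1}$ has full support on the relevant marginal, ensuring $\langle v_\theta | w_{Y,\theta}\rangle > 0$ and hence $|\log\langle v_\theta | w_{Y,\theta}\rangle| < \infty$. (If the initial distribution is degenerate on some states, one restricts $v_\theta$ and $w_{Y,\theta}$ to the support of $P_{Y_1}$; irreducibility of $W$ guarantees the eigenvector argument still applies.)

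The only mild obstacle is bookkeeping the sign of $\theta$ when converting bounds on $\frac{\theta}{1+\theta} H_{1+\theta}^\uparrow$ into bounds on $H_{1+\theta}^\uparrow$, and handling the limiting value at $\theta=0$; but the latter is already covered by Lemma \ref{lemma:properties-upper-conditional-renyi-transition-matrix} together with the single-shot identity $\lim_{\theta\to 0} H_{1+\theta}^\uparrow(X^n|Y^n) = H(X^n|Y^n)$ and Theorem \ref{theorem:asymptotic-down-conditional-renyi2}. No new technical ingredient beyond Lemma \ref{lemma:multi-terminal-finite-evaluation-upper-conditional-renyi} is needed.
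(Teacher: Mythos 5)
Your proposal is correct and follows exactly the route the paper intends: the theorem is stated as an immediate consequence of Lemma \ref{lemma:multi-terminal-finite-evaluation-upper-conditional-renyi}, obtained by dividing the finite upper and lower bounds by $n$ and letting the $n$-independent corrections $\overline{\xi}(\theta)$, $\underline{\xi}(\theta)$ vanish. Your additional checks (positivity of the Perron--Frobenius eigenvector, sign of $\tfrac{\theta}{1+\theta}$, the $\theta=0$ case) are sound but routine, and no further ingredient is needed.
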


Finally, we show the lemma that gives finite upper and lower bounds on the two-parameter conditional 
R\'enyi entropy in terms of the two-parameter conditional R\'enyi entropy for the transition matrix. 

\begin{lemma}[\protect{\cite[Lemma 11]{HW14}}] \Label{lemma:multi-terminal-finite-evaluation-two-parameter-conditional-renyi}
Suppose that a transition matrix $W$ satisfies Assumption \ref{assumption-memory-through-Y}. 
Let $v_{\theta,\theta^\prime}$ be the eigenvector of
$N_{\theta,\theta^\prime}^T$ with respect to the Perron-Frobenius eigenvalue $\nu_{\theta,\theta^\prime}$ such that 
$\min_y v_{\theta,\theta^\prime}(y) = 1$.
Let $w_{\theta,\theta^\prime}$ be the $|{\cal Y}|$-dimensional vector defined by
\begin{align}
w_{\theta,\theta^\prime}(y) := \left[ \sum_x P_{X_1 Y_1}(x,y)^{1+\theta} \right] 
 \left[ \sum_x P_{X_1 Y_1}(x,y)^{1+\theta^\prime} \right]^{\frac{-\theta}{1+\theta^\prime}}.
\end{align}
Then, we have
\begin{align}
& (n-1) \theta H_{1+\theta,1+\theta^\prime}^W(X|Y) + \underline{\zeta}(\theta,\theta^\prime)
 \le \theta H_{1+\theta,1+\theta^\prime}(X^n|Y^n)
\nonumber \\
 \le & (n-1) \theta H_{1+\theta,1+\theta^\prime}^W(X|Y) + \overline{\zeta}(\theta,\theta^\prime),
\end{align}
where 
\begin{align}
\overline{\zeta}(\theta,\theta^\prime) :=& - \log \langle v_{\theta,\theta^\prime} | 
w_{\theta,\theta^\prime} \rangle
   + \log \max_y v_{\theta,\theta^\prime}(y) + \theta \overline{\xi}(\theta^\prime), \\
\underline{\zeta}(\theta,\theta^\prime) :=& 
 - \log \langle v_{\theta,\theta^\prime} | w_{\theta,\theta^\prime} \rangle + \theta \underline{\xi}(\theta^\prime)
\end{align}
for $\theta > 0$ and
\begin{align}
\overline{\zeta}(\theta,\theta^\prime) :=& - \log \langle v_{\theta,\theta^\prime} | w_{\theta,\theta^\prime} \rangle
   + \log \max_y v_{\theta,\theta^\prime}(y) + \theta \underline{\xi}(\theta^\prime), \\
\underline{\zeta}(\theta,\theta^\prime) :=& 
 - \log \langle v_{\theta,\theta^\prime} | w_{\theta,\theta^\prime} \rangle + \theta \overline{\xi}(\theta^\prime)
\end{align}
for $\theta < 0$.
\end{lemma}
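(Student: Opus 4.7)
The plan is to follow the same Perron--Frobenius based strategy used to prove Lemmas~\ref{lemma:mult-terminal-finite-evaluation-down-conditional-renyi} and \ref{lemma:multi-terminal-finite-evaluation-upper-conditional-renyi} in \cite{HW14}. The only genuinely new ingredient is a matrix-product representation of the ``$-\log$'' part of the $n$-shot two-parameter R\'enyi entropy in terms of $N_{\theta,\theta^\prime}$; once that representation is in hand, everything else reduces to the Perron--Frobenius sandwich plus a direct invocation of Lemma~\ref{lemma:multi-terminal-finite-evaluation-upper-conditional-renyi} for the $H_{1+\theta^\prime}^\uparrow$ piece.

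First I would use the definition of the two-parameter conditional R\'enyi entropy to split
\begin{align*}
\theta H_{1+\theta,1+\theta^\prime}(X^n|Y^n) = -\log G_n + \frac{\theta\theta^\prime}{1+\theta^\prime} H_{1+\theta^\prime}^\uparrow(X^n|Y^n),
\end{align*}
where $G_n := \sum_{y^n} \bigl[\sum_{x^n} P_{X^nY^n}(x^n,y^n)^{1+\theta}\bigr] \bigl[\sum_{x^n} P_{X^nY^n}(x^n,y^n)^{1+\theta^\prime}\bigr]^{-\theta/(1+\theta^\prime)}$. Under Assumption~\ref{assumption-memory-through-Y}, iterated summation from $x_n$ down to $x_2$, at each step invoking the $x^\prime$-independence of $\sum_x W(x,y|x^\prime,y^\prime)^{1+\eta} = W_{Y,\eta}(y|y^\prime)$, gives
\begin{align*}
\sum_{x^n} P_{X^nY^n}(x^n,y^n)^{1+\eta} = \Bigl[\sum_{x_1} P_{X_1Y_1}(x_1,y_1)^{1+\eta}\Bigr] \prod_{i=2}^n W_{Y,\eta}(y_i|y_{i-1})
\end{align*}
for $\eta\in\{\theta,\theta^\prime\}$. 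Substituting these into $G_n$ and collecting the $y_1$-dependent factors into $w_{\theta,\theta^\prime}(y_1)$ and the $(y_{i-1},y_i)$-factors into $N_{\theta,\theta^\prime}(y_i|y_{i-1})$ yields the clean identity $G_n = \langle \mathbf{1}\,|\,N_{\theta,\theta^\prime}^{n-1}\,|\,w_{\theta,\theta^\prime}\rangle$.

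Next, the normalization $\min_y v_{\theta,\theta^\prime}(y)=1$ gives the entrywise inequalities $v_{\theta,\theta^\prime}/\max_y v_{\theta,\theta^\prime}(y) \le \mathbf{1} \le v_{\theta,\theta^\prime}$, and together with $v_{\theta,\theta^\prime}^T N_{\theta,\theta^\prime}^{n-1} = \nu_{\theta,\theta^\prime}^{n-1} v_{\theta,\theta^\prime}^T$ and the non-negativity of $w_{\theta,\theta^\prime}$, they sandwich $G_n$ between $\nu_{\theta,\theta^\prime}^{n-1}\langle v_{\theta,\theta^\prime},w_{\theta,\theta^\prime}\rangle/\max_y v_{\theta,\theta^\prime}(y)$ and $\nu_{\theta,\theta^\prime}^{n-1}\langle v_{\theta,\theta^\prime},w_{\theta,\theta^\prime}\rangle$. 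Taking $-\log$ and substituting the identity $-\log \nu_{\theta,\theta^\prime} = \theta H_{1+\theta,1+\theta^\prime}^W(X|Y) - \frac{\theta\theta^\prime}{1+\theta^\prime} H_{1+\theta^\prime}^{\uparrow,W}(X|Y)$ from \eqref{eq:definition-two-parameter-renyi-markov} turns the sandwich into upper and lower bounds on $\theta H_{1+\theta,1+\theta^\prime}(X^n|Y^n) - (n-1)\theta H_{1+\theta,1+\theta^\prime}^W(X|Y)$ that differ from the claim only by the remaining term $\theta\cdot \bigl[\tfrac{\theta^\prime}{1+\theta^\prime}H_{1+\theta^\prime}^\uparrow(X^n|Y^n) - (n-1)\tfrac{\theta^\prime}{1+\theta^\prime}H_{1+\theta^\prime}^{\uparrow,W}(X|Y)\bigr]$, which is controlled by Lemma~\ref{lemma:multi-terminal-finite-evaluation-upper-conditional-renyi}.

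The main subtlety lies in this last step: multiplying the bracketed quantity by $\theta$ reverses the inequalities when $\theta<0$, which is exactly what forces the two cases in the definitions of $\overline{\zeta}(\theta,\theta^\prime)$ and $\underline{\zeta}(\theta,\theta^\prime)$, with the roles of $\overline{\xi}(\theta^\prime)$ and $\underline{\xi}(\theta^\prime)$ swapped. Aside from that case analysis, the only calculation requiring care is verifying that the $W_Y(y_i|y_{i-1})$-powers coming from $P_{Y^n}$ and from the two inner $x^n$-sums cancel cleanly so that the transition piece is exactly $N_{\theta,\theta^\prime}$ and the boundary piece at $i=1$ is exactly $w_{\theta,\theta^\prime}$; this is the direct analog of the step in the proof of Lemma~\ref{lemma:multi-terminal-finite-evaluation-upper-conditional-renyi} that produces $K_\theta$ and $w_{Y,\theta}$.
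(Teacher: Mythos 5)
Your proposal is correct and follows exactly the approach the paper intends (the proof is deferred to \cite[Lemma 11]{HW14}, but it is the same Perron--Frobenius sandwich used for Lemmas \ref{lemma:mult-terminal-finite-evaluation-down-conditional-renyi} and \ref{lemma:multi-terminal-finite-evaluation-upper-conditional-renyi}): the factorization $G_n = \langle \mathbf{1}\,|\,N_{\theta,\theta^\prime}^{n-1}\,|\,w_{\theta,\theta^\prime}\rangle$, the two-sided bound via the left eigenvector normalized by $\min_y v_{\theta,\theta^\prime}(y)=1$, and the sign-dependent handling of the $\theta\,\overline{\xi}(\theta^\prime)$, $\theta\,\underline{\xi}(\theta^\prime)$ terms are all exactly what is needed and correctly reproduce the stated $\overline{\zeta}$, $\underline{\zeta}$. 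The only cosmetic remark is that working with the joint-distribution form of $G_n$ (as you ultimately do) makes the $W_Y$-cancellation you worry about a non-issue.
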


From Lemma \ref{lemma:multi-terminal-finite-evaluation-two-parameter-conditional-renyi}, we have the following.

\begin{theorem}[\protect{\cite[Theorem 4]{HW14}}] \Label{theorem:asymptotic-two-parameter-renyi-markov}
Suppose that a transition matrix $W$ satisfies Assumption \ref{assumption-memory-through-Y}. 
For any initial distribution, we have
\begin{eqnarray}
\lim_{n\to\infty} \frac{1}{n} H_{1+\theta,1+\theta^\prime}(X^n|Y^n) 
 = H_{1+\theta,1+\theta^\prime}^W(X|Y).
\end{eqnarray}
\end{theorem}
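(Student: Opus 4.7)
The plan is to derive the theorem directly from Lemma~\ref{lemma:multi-terminal-finite-evaluation-two-parameter-conditional-renyi} by passing to the limit. That lemma provides the two-sided sandwich
\[
(n-1) \theta H_{1+\theta,1+\theta^\prime}^W(X|Y) + \underline{\zeta}(\theta,\theta^\prime) \le \theta H_{1+\theta,1+\theta^\prime}(X^n|Y^n) \le (n-1) \theta H_{1+\theta,1+\theta^\prime}^W(X|Y) + \overline{\zeta}(\theta,\theta^\prime),
\]
in which the correction terms $\underline{\zeta}(\theta,\theta^\prime)$ and $\overline{\zeta}(\theta,\theta^\prime)$ are specified (with separate formulas for $\theta > 0$ and $\theta < 0$) by the Perron--Frobenius eigenvector $v_{\theta,\theta^\prime}$ of $N_{\theta,\theta^\prime}^T$, the vector $w_{\theta,\theta^\prime}$ constructed from the initial distribution $P_{X_1 Y_1}$, and the boundary terms $\underline{\xi}(\theta^\prime)$, $\overline{\xi}(\theta^\prime)$ from Lemma~\ref{lemma:multi-terminal-finite-evaluation-upper-conditional-renyi}. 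The key observation is that none of these quantities depends on the block length $n$.

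Given this observation, I would divide the sandwich through by $n$ to obtain
\[
\frac{n-1}{n}\,\theta H_{1+\theta,1+\theta^\prime}^W(X|Y) + \frac{\underline{\zeta}(\theta,\theta^\prime)}{n} \le \frac{\theta}{n}\, H_{1+\theta,1+\theta^\prime}(X^n|Y^n) \le \frac{n-1}{n}\,\theta H_{1+\theta,1+\theta^\prime}^W(X|Y) + \frac{\overline{\zeta}(\theta,\theta^\prime)}{n}.
\]
Since $\underline{\zeta}(\theta,\theta^\prime)$ and $\overline{\zeta}(\theta,\theta^\prime)$ are finite and $n$-independent, both boundary terms vanish as $n \to \infty$, while $(n-1)/n \to 1$. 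The squeeze theorem then yields
\[
\lim_{n\to\infty} \frac{\theta}{n} H_{1+\theta,1+\theta^\prime}(X^n|Y^n) = \theta H_{1+\theta,1+\theta^\prime}^W(X|Y),
\]
and since $\theta \in (-1,0) \cup (0,\infty)$ is a nonzero constant, dividing by $\theta$ yields the claimed identity.

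The substantive work has already been done in \cite{HW14} to establish Lemma~\ref{lemma:multi-terminal-finite-evaluation-two-parameter-conditional-renyi}: bounding the norm of $n$-fold products of tilted matrices by Perron--Frobenius theory applied to $N_{\theta,\theta^\prime}$, and handling the reversal of inequalities caused by raising a negative power (hence the two separate definitions of $\underline{\zeta}, \overline{\zeta}$ according to the sign of $\theta$). Given that lemma, the argument above is essentially a two-line limiting computation. The only (very mild) obstacle is verifying that the correction constants are truly independent of $n$, which is transparent from the definitions, together with noting that the factor $\theta$ is absorbed harmlessly at the end because it is nonzero with a fixed sign.
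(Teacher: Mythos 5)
Your proposal is correct and is exactly the paper's route: the paper derives Theorem \ref{theorem:asymptotic-two-parameter-renyi-markov} directly from the $n$-independent sandwich in Lemma \ref{lemma:multi-terminal-finite-evaluation-two-parameter-conditional-renyi} by dividing by $n$, letting $n\to\infty$, and cancelling the fixed nonzero factor $\theta$. Nothing is missing.
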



\subsection{Analysis with $\theta=\infty$: One-terminal case}\Label{s2d}
To close this section, 
we address the case $\theta=\infty$, which was not discussed in the paper \cite{HW14}.
Since the conditional R\'{e}nyi entropy is monotonically decreasing for $\theta$, 
the conditional R\'{e}nyi entropy with the case $\theta=\infty$
is often called the conditional min entropy.
To avoid difficulty, we first consider the case when ${\cal Y}$ is singleton.

For a single-shot random variable, we have
\begin{eqnarray}
\lim_{\theta \to \infty} H_{1+\theta}(X) &=& H_\infty(X)\\
&:=& - \log \max_x P_X(x),
\Label{eq:definition-single-terminal-min-entropy}
\end{eqnarray}
which is usually called $\min$-entropy.
For each $x \in {\cal X}$, let ${\cal C}_{x}$ be the set of all
Hamilton cycle from $x$ to itself. 
For a path $c=(x_1, x_2, \ldots, x_k)$, 
we define the set $\hat{c}:=\{(x_i,x_{i+1})\}_{i=1}^{k-1}$ 
and the number $|c|$ to be the number of edges in cycle $c$, which is 
the number of elements in the set $\hat{c}$.
Then, we define the $\min$-entropy for $W$ by
\begin{eqnarray}
H_{\infty}^W(X) := - \log \max_{\bar{x} \in {\cal X}} \max_{c \in {\cal C}_{\bar{x}}} 
 \left( \prod_{(x_a,x_b) \in \hat{c}} W(x_b|x_a) \right)^{1/ |c|},
 \Label{eq:definition-of-min-entropy-markov-single-terminal}
\end{eqnarray}
which is characterized as follows.
\begin{lemma} \Label{lemma:limit-of-renyi-markov-single-terminal}
We have
\begin{eqnarray}
\lim_{\theta \to \infty} H_{1+\theta}^W(X) &=& H_{\infty}^W(X). 
\Label{eq:limit-of-renyi-markov-single-terminal-min-entropy}
\end{eqnarray}
\end{lemma}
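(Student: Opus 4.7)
The plan is to identify $H_{1+\theta}^W(X) = -\tfrac{1}{\theta}\log\lambda_\theta$, where $\lambda_\theta$ is the Perron-Frobenius eigenvalue of the non-negative matrix $\tilde{W}_\theta(x|x') := W(x|x')^{1+\theta}$ (this specialises definition \eqref{eq:definition-lower-conditional-renyi-markov} to the singleton-${\cal Y}$ case). Writing $\mu := \max_{\bar{x}}\max_{c\in{\cal C}_{\bar{x}}}\bigl(\prod_{(x_a,x_b)\in\hat{c}}W(x_b|x_a)\bigr)^{1/|c|}$, so that $H_\infty^W(X) = -\log\mu$, and using $\tfrac{1+\theta}{\theta}\to 1$, it suffices to prove $\lambda_\theta^{1/(1+\theta)}\to\mu$ as $\theta\to\infty$. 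The intuition is tropical: raising every entry of $W$ to the power $1+\theta\to\infty$ concentrates the spectral radius on the geometrically optimal cycle, in the spirit of Laplace's method.

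For the lower bound, fix any $(\bar{x},c)$ with $c\in{\cal C}_{\bar{x}}$ of length $|c|=k$. The single closed walk tracing $c$ contributes $\bigl(\prod_{(x_a,x_b)\in\hat{c}}W(x_b|x_a)\bigr)^{1+\theta}$ to $(\tilde{W}_\theta^k)_{\bar{x},\bar{x}}$, so that diagonal entry is at least as large. Let $r_\theta>0$ be the right Perron eigenvector of $\tilde{W}_\theta$; by non-negativity of all entries involved,
\begin{align*}
\lambda_\theta^k\,r_\theta(\bar{x})
 \;=\; \sum_{x'}(\tilde{W}_\theta^k)_{\bar{x},x'}\,r_\theta(x')
 \;\ge\; (\tilde{W}_\theta^k)_{\bar{x},\bar{x}}\,r_\theta(\bar{x}),
\end{align*}
which yields $\lambda_\theta\ge\bigl(\prod W\bigr)^{(1+\theta)/k}$. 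Maximising over $(\bar{x},c)$ gives $\lambda_\theta^{1/(1+\theta)}\ge\mu$ for every $\theta>0$, whence $\liminf_{\theta\to\infty}\lambda_\theta^{1/(1+\theta)}\ge\mu$.

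For the upper bound, I pass to the trace: for any non-negative irreducible matrix $A$, the identity $\mathrm{tr}(A^n)=\sum_i\lambda_i^n$ combined with Perron-Frobenius gives $\limsup_n\mathrm{tr}(A^n)^{1/n}=\lambda_{\max}(A)$. Expanding
\[
\mathrm{tr}(\tilde{W}_\theta^n) \;=\; \sum_{\text{closed walks of length }n}\; \prod_{\text{edges}} W^{1+\theta},
\]
I use the standard cycle-excision procedure: every time a vertex is repeated along a closed walk, the intermediate loop is excised, producing a decomposition of any closed walk of length $n$ into simple cycles $c_1,\ldots,c_r$ with $\sum_i|c_i|=n$ whose weights multiply to the walk's weight. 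Since each simple cycle obeys $\prod W\le\mu^{|c_i|}$ by the very definition of $\mu$, each walk weight is bounded by $\mu^{n(1+\theta)}$, and since there are at most $|{\cal X}|^n$ closed walks of length $n$, $\mathrm{tr}(\tilde{W}_\theta^n)\le|{\cal X}|^n\mu^{n(1+\theta)}$. Taking $n$-th roots and $n\to\infty$ yields $\lambda_\theta\le|{\cal X}|\,\mu^{1+\theta}$, and thus $\lambda_\theta^{1/(1+\theta)}\le|{\cal X}|^{1/(1+\theta)}\mu\to\mu$.

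The main technical obstacle is precisely the upper bound: the naive row-sum bound $\lambda_\theta\le|{\cal X}|\max_{x,x'}W(x|x')^{1+\theta}$ only produces $\limsup\lambda_\theta^{1/(1+\theta)}\le\max_{x,x'}W(x|x')$, which strictly exceeds $\mu$ when the geometrically optimal cycle is not a self-loop. Routing through the trace is what repairs this, because $\mathrm{tr}$ automatically restricts attention to \emph{closed} walks, and closed walks admit the cycle decomposition above so that the max-weight \emph{edge} is replaced by the max-mean \emph{cycle}. Combining the two sandwich inequalities gives $\lim\lambda_\theta^{1/(1+\theta)}=\mu$ and then, via $\tfrac{1+\theta}{\theta}\to 1$, $H_{1+\theta}^W(X)\to-\log\mu=H_\infty^W(X)$, as claimed.
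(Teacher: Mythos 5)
Your proof is correct, but it follows a genuinely different route from the paper's. The paper proves \eqref{eq:limit-of-renyi-markov-single-terminal-min-entropy} by writing $H_{1+\theta}^W(X)=\lim_n \frac{1}{n}H_{1+\theta}(X^n)$ and $H_\infty^W(X)=\lim_n\frac{1}{n}H_\infty(X^n)$ (the latter via Lemma \ref{lemma:finite-evaluation-min-entropy}) and then interchanging the limits $\theta\to\infty$ and $n\to\infty$; the whole technical content there is showing that the correction terms $\overline{\delta}(\theta)/\theta$ and $\underline{\delta}(\theta)/\theta$ from Lemma \ref{lemma:mult-terminal-finite-evaluation-down-conditional-renyi} stay bounded, which is done by proving the entrywise bound $v_\theta(x)\le \tilde{M}^{1+\theta}$ on the Perron eigenvector. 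You instead work entirely at the level of the single transition matrix and establish the classical ``tropical'' (zero-temperature) limit $\lambda_\theta^{1/(1+\theta)}\to\mu$, sandwiching the Perron root between $\mu^{1+\theta}$ (via diagonal entries of $\tilde{W}_\theta^{|c|}$ and positivity of the Perron eigenvector) and $|{\cal X}|\,\mu^{1+\theta}$ (via the trace and the decomposition of closed walks into simple cycles). Your argument is more self-contained — it needs neither the Markov chain $X^n$, nor a limit interchange, nor eigenvector estimates — while the paper's approach reuses the finite-$n$ evaluations it needs elsewhere and obtains the eigenvector bounds as a by-product. Two small points to tighten: (i) the half of $\limsup_n \mathrm{tr}(A^n)^{1/n}=\lambda_{\max}(A)$ that you actually use is $\lambda_{\max}(A)\le\limsup_n\mathrm{tr}(A^n)^{1/n}$, which for a nonnegative matrix with complex non-Perron eigenvalues requires a word of justification (e.g.\ irreducibility of $\tilde{W}_\theta$, inherited from $W$, guarantees $(A^{np})_{ii}\ge c\,\lambda_{\max}^{np}$ along multiples of the period $p$); (ii) you should note explicitly that the paper's ``Hamilton cycles'' ${\cal C}_{\bar{x}}$ are simple cycles through $\bar{x}$ (as the paper's own use in Lemma \ref{lemma:preparation-evaluation-min-entropy} confirms), since your cycle-excision step produces simple cycles, not spanning ones.
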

\begin{proof}
See Appendix \ref{proof-lemma:limit-of-renyi-markov-single-terminal}.
\end{proof}

We also have the following lemma.

\begin{lemma} \Label{lemma:finite-evaluation-min-entropy}
For $(x,x^\prime)$, let ${\cal C}_{x,x^\prime}$ be the set of all Hamilton paths from
$x$ to $x^\prime$. Then, let
\begin{eqnarray}
A := \min_{(\bar{x},\bar{x}^\prime) \atop \bar{x} \neq \bar{x}^\prime} 
\max_{c \in {\cal C}_{\bar{x},\bar{x}^\prime}}
 \prod_{(x_a,x_b) \in \hat{c}} W(x_b|x_a).
\end{eqnarray}
Furthermore, let $x^*$ and $c^* \in {\cal C}_{x^*}$ be such that 
$H_{\infty}^W(X)$ is achieved in \eqref{eq:definition-of-min-entropy-markov-single-terminal}. 
Then, we have
\begin{align}
(n-1) H_\infty^W(X) + \underline{\delta}_\infty 
 \le & H_\infty(X^n) 
\nonumber \\
 \le &(n-1) H_\infty^W(X) + \overline{\delta}_\infty,
 \Label{eq:finite-evaluation-min-entropy}
\end{align}
where 
\begin{align}
\overline{\delta}_\infty :=& |c^*| H_\infty^W(X) - \log \max_x P_{X_1}(x) 
\nonumber \\
&\hspace{15ex} - \log 
\min(A, e^{- H_\infty^W(X) }), \\
\underline{\delta}_\infty :=& - \log \max_x P_{X_1}(x) + \log A.
\end{align}
\end{lemma}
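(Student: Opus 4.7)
The plan is to work directly with
\[
H_\infty(X^n) = -\log \max_{x^n}\Bigl[ P_{X_1}(x_1)\prod_{i=1}^{n-1} W(x_{i+1}|x_i) \Bigr]
\]
and to bound the walk-product $\prod_{i=1}^{n-1} W(x_{i+1}|x_i)$ above and below in terms of $\lambda^* := e^{-H_\infty^W(X)}$, which by its definition is the maximum geometric mean of edge weights over cycles. For brevity write $M := \max_x P_{X_1}(x)$.

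For the left inequality $(n-1)H_\infty^W(X) + \underline{\delta}_\infty \le H_\infty(X^n)$, the key algebraic observation is that for every simple cycle $c$, the definition of $\lambda^*$ gives $\prod_{(a,b)\in \hat{c}} W(b|a) \le (\lambda^*)^{|c|}$. Given any walk $x^n$, I extend it to a closed walk by appending the best Hamilton path from $x_n$ to $x_1$ when $x_n\neq x_1$; by the definition of $A$ this extension has probability at least $A$ and length $|{\cal X}|-1$ (no extension is needed if $x_n=x_1$). A closed walk in a directed graph has balanced in- and out-degrees at every vertex, so its edge multiset decomposes into edge-disjoint simple cycles $c_1,\dots,c_m$ with $\sum_j |c_j|$ equal to the closed walk's total length. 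Multiplying the per-cycle inequalities and using $\lambda^*\le 1$ yields $\prod_{i=1}^{n-1} W(x_{i+1}|x_i) \le A^{-1}(\lambda^*)^{n-1}$, and the first inequality follows after pulling out the factor $M$.

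For the right inequality $H_\infty(X^n) \le (n-1)H_\infty^W(X) + \overline{\delta}_\infty$, I construct an explicit high-probability walk: start at $x_1^*$ achieving $P_{X_1}(x_1^*)=M$, take the best Hamilton path from $x_1^*$ to $x^*$ (probability $\ge A$, length $|{\cal X}|-1$), then traverse $c^*$ exactly $k$ times as a loop based at $x^*$, and finish with a partial traversal of $c^*$ of length $r\in[0,|c^*|)$, with $k$ and $r$ chosen so that the total edge count is $n-1$. The $k$ full loops contribute $(\lambda^*)^{k|c^*|}$ exactly, and for the partial piece the monotonicity bound
\[
\prod_{(a,b)\in\text{partial}} W(b|a) \;\ge\; \prod_{(a,b)\in \hat{c}^*} W(b|a) \;=\; (\lambda^*)^{|c^*|}
\]
holds since every edge weight is at most one. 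Combining these factors gives $\max_{x^n} P_{X^n}(x^n) \ge M\cdot A\cdot (\lambda^*)^{(k+1)|c^*|} \ge M\cdot A\cdot (\lambda^*)^{n-|{\cal X}|+|c^*|}$, and since $A \ge \min(A,\lambda^*)\cdot(\lambda^*)^{|{\cal X}|-1}$ (each right-hand factor being at most one), this is at least $M\cdot\min(A,\lambda^*)\cdot(\lambda^*)^{n-1+|c^*|}$, which is equivalent to the claimed upper bound on $H_\infty(X^n)$.

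The main technical point is the treatment of the partial loop, whose contribution is not a clean power of $\lambda^*$; the monotonicity bound from $W(b|a)\in[0,1]$ is what saves the argument and is precisely the source of the extra $|c^*|H_\infty^W(X)$ term in $\overline{\delta}_\infty$. A secondary interpretive matter is that the cycle-decomposition step produces simple cycles of potentially varying lengths, so the set ${\cal C}_x$ appearing in the definition of $H_\infty^W(X)$ must be understood to contain simple cycles of all lengths through $x$, consistent with the explicit appearance of $|c|$ in the definition, rather than strictly Hamilton cycles.
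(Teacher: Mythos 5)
Your proof is correct and follows essentially the same route as the paper's: the lower bound on $H_\infty(X^n)$ comes from closing the walk with the best Hamilton path and decomposing the resulting closed walk into simple cycles (the paper does this by iterative peeling in its preparatory lemma, you by an Eulerian decomposition), and the upper bound comes from the same explicit construction (maximizing initial state, best Hamilton path to $x^*$, repeated traversal of $c^*$), with your truncated final loop playing the role of the paper's rounded-up loop count followed by marginalization to the length-$n$ prefix; both yield the identical constants $\underline{\delta}_\infty$ and $\overline{\delta}_\infty$. Your remark that ${\cal C}_{\bar x}$ must be read as the set of simple cycles of all lengths is also the reading the paper's own proof implicitly relies on.
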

\begin{proof}
See Appendix \ref{proof-lemma:finite-evaluation-min-entropy}.
\end{proof}

From Lemma \ref{lemma:finite-evaluation-min-entropy}, we can derive the following.
\begin{theorem} \Label{theorem:single-terminal-markov-renyi-entropy-asymptotic}
For any initial distribution, we have
\begin{eqnarray}
\lim_{n\to\infty} \frac{1}{n} H_\infty(X^n) &=& H_\infty^W(X). \Label{eq:single-terminal-markov-renyi-entropy-asymptotic-4}
\end{eqnarray}
\end{theorem}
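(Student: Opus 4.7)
The plan is to obtain the theorem as a direct corollary of the finite-length sandwich bound in Lemma~\ref{lemma:finite-evaluation-min-entropy}. That lemma already provides
\begin{eqnarray*}
(n-1) H_\infty^W(X) + \underline{\delta}_\infty
 \le H_\infty(X^n)
 \le (n-1) H_\infty^W(X) + \overline{\delta}_\infty,
\end{eqnarray*}
so the only remaining task is to verify that the additive constants $\underline{\delta}_\infty$ and $\overline{\delta}_\infty$ are independent of $n$ and finite for any initial distribution $P_{X_1}$, and then divide by $n$ and pass to the limit.

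First I would check the finiteness of the correction terms. The upper correction $\overline{\delta}_\infty$ depends on the length $|c^*|$ of an optimal Hamilton cycle achieving $H_\infty^W(X)$ in \eqref{eq:definition-of-min-entropy-markov-single-terminal}, on $\max_x P_{X_1}(x)$, and on $A$ and $H_\infty^W(X)$. Since ${\cal X}$ is finite and $W$ is irreducible, $|c^*| \le |{\cal X}|$ is bounded and $H_\infty^W(X)$ is finite; irreducibility of $W$ guarantees that for every pair $(\bar x,\bar x')$ there exists at least one Hamilton path with strictly positive product of transition probabilities, so $A > 0$ and hence $\log A$ is finite. Similarly $\underline{\delta}_\infty$ is finite because $\max_x P_{X_1}(x) > 0$ (there is at least one $x$ in the support of $P_{X_1}$) and $\log A$ is finite. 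Thus both $\underline{\delta}_\infty$ and $\overline{\delta}_\infty$ are finite constants that do not depend on $n$.

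Next I would divide the sandwich bound by $n$ to obtain
\begin{eqnarray*}
\frac{n-1}{n} H_\infty^W(X) + \frac{\underline{\delta}_\infty}{n}
 \le \frac{1}{n} H_\infty(X^n)
 \le \frac{n-1}{n} H_\infty^W(X) + \frac{\overline{\delta}_\infty}{n}.
\end{eqnarray*}
Letting $n \to \infty$, both the lower and upper bounds converge to $H_\infty^W(X)$ since $(n-1)/n \to 1$ and $\underline{\delta}_\infty/n, \overline{\delta}_\infty/n \to 0$. By the squeeze principle, the limit of $\frac{1}{n}H_\infty(X^n)$ exists and equals $H_\infty^W(X)$, establishing \eqref{eq:single-terminal-markov-renyi-entropy-asymptotic-4}.

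There is essentially no obstacle at this stage; all of the technical work has been absorbed into Lemma~\ref{lemma:finite-evaluation-min-entropy}, whose proof (deferred to the appendix) is where the combinatorial structure of Hamilton cycles and the Perron--Frobenius-style argument for the $\theta = \infty$ limit must be handled. For the theorem itself, the only subtle point worth remarking is that irreducibility of $W$ is used implicitly to ensure $A > 0$, which is what makes the lower correction $\underline{\delta}_\infty$ finite uniformly across initial distributions.
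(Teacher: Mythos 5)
Your proposal is correct and follows exactly the paper's route: Theorem \ref{theorem:single-terminal-markov-renyi-entropy-asymptotic} is presented there as an immediate consequence of the sandwich bound in Lemma \ref{lemma:finite-evaluation-min-entropy}, obtained by dividing by $n$ and letting $n\to\infty$ so that the $n$-independent correction terms vanish. The only inaccuracy is your side remark that irreducibility alone guarantees, for every pair of states, a Hamilton path with strictly positive transition probabilities (hence $A>0$) --- this is not true for general irreducible chains --- but the paper relies on the same implicit assumption inside the lemma, so it does not change the fact that your derivation of the theorem from the lemma coincides with the paper's.
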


\subsection{Analysis with $\theta=\infty$: Two-terminal case}\Label{s2e}
Next, we proceed to the two-terminal case.
For single-shot random variables $X$ and $Y$, we can derive the following.

\begin{lemma}[\cite{iwamoto:13}]  \Label{lemma:limit-renyi-min-entropy}
We have
\begin{align}
\lim_{\theta \to \infty} H_{1+\theta}^\uparrow(X|Y) =& H_{\infty}^\uparrow(X|Y) \\
:=& - \log \sum_y P_Y(y) \max_x P_{X|Y}(x|y), \\
\lim_{\theta \to \infty} H_{1+\theta}^\downarrow(X|Y) =& H_\infty^\downarrow(X|Y) \\
:=& - \log \max_{x \in {\cal X} \atop y \in \rom{supp}(P_Y)} P_{X|Y}(x|y).
\end{align}
\end{lemma}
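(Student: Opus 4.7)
Plan: Both claims reduce to the classical $L^p$-norm limit $\|f\|_{1+\theta}\to \|f\|_\infty$ on the finite alphabet $\mathcal{X}$, combined with continuity of a finite sum over $y$. So essentially no machinery beyond elementary analysis is required, but one needs to be careful about the support of $P_Y$.

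For the upper statement, I start from the Arimoto form \eqref{11-14-6}:
\[
H_{1+\theta}^\uparrow(X|Y) = -\frac{1+\theta}{\theta}\log\sum_y P_Y(y)\Bigl[\sum_x P_{X|Y}(x|y)^{1+\theta}\Bigr]^{1/(1+\theta)}.
\]
For each $y\in\rom{supp}(P_Y)$, the bracket is the $\ell^{1+\theta}$-norm of $P_{X|Y}(\cdot|y)$ on the finite set $\mathcal{X}$, hence converges to $\max_x P_{X|Y}(x|y)$ as $\theta\to\infty$; terms with $P_Y(y)=0$ drop out of the outer sum. Since $\mathcal{Y}$ is finite, the limit passes inside the sum, and the prefactor $(1+\theta)/\theta$ tends to $1$, giving $-\log\sum_y P_Y(y)\max_x P_{X|Y}(x|y)=H_\infty^\uparrow(X|Y)$.

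For the lower statement, I rewrite
\[
H_{1+\theta}^\downarrow(X|Y) = -\frac{1}{\theta}\log\sum_{y\in\rom{supp}(P_Y)}\sum_x P_Y(y) P_{X|Y}(x|y)^{1+\theta},
\]
since terms with $P_Y(y)=0$ also have $P_{XY}(x,y)=0$ and contribute nothing. Set $M:=\max_{x\in\mathcal{X},\,y\in\rom{supp}(P_Y)} P_{X|Y}(x|y)$ and $p_\star:=\min_{y\in\rom{supp}(P_Y)} P_Y(y)>0$. Bounding the sum above by taking every summand to equal $M^{1+\theta}$ and below by keeping only the maximizing $(x,y)$ yields
\[
p_\star M^{1+\theta} \;\le\; \sum_{x,y} P_Y(y) P_{X|Y}(x|y)^{1+\theta} \;\le\; |\mathcal{X}||\mathcal{Y}|\,M^{1+\theta}.
\]
Applying the monotone decreasing map $z\mapsto -\frac{1}{\theta}\log z$ sandwiches $H_{1+\theta}^\downarrow(X|Y)$ between $-\log M - \frac{1}{\theta}\log(|\mathcal{X}||\mathcal{Y}|)$ and $-\log M - \frac{1}{\theta}\log p_\star$. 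Letting $\theta\to\infty$, both bounds tend to $-\log M = H_\infty^\downarrow(X|Y)$.

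There is no real obstacle here; the only point requiring any attention is the handling of the support of $P_Y$ (which I dispose of by restricting the sums) and confirming that the $O(1/\theta)$ constants from $p_\star$, $|\mathcal{X}|$, $|\mathcal{Y}|$ vanish in the limit. Everything else is a direct calculation.
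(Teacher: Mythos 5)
Your proof is correct. The paper does not actually prove this lemma itself --- it is imported by citation from \cite{iwamoto:13} --- so there is no internal argument to compare against; your self-contained elementary derivation (the $\ell^{1+\theta}\to\ell^\infty$ norm limit inside a finite sum over $y$ for the upper quantity, and a $p_\star M^{1+\theta}\le \sum \le |{\cal X}||{\cal Y}|M^{1+\theta}$ sandwich for the lower one) is a valid substitute. One trivial imprecision: after applying $z\mapsto -\frac{1}{\theta}\log z$ your bounds should read $-\frac{1+\theta}{\theta}\log M + O(1/\theta)$ rather than $-\log M + O(1/\theta)$, but since $\frac{1+\theta}{\theta}\to 1$ the extra $-\frac{1}{\theta}\log M$ term also vanishes and the stated limit $-\log M = H_\infty^\downarrow(X|Y)$ is unaffected.
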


We define the lower $\min$-entropy for $W$ by
\begin{align}
&H_\infty^{\downarrow,W}(X|Y) 
\nonumber \\
:=& -\log \max_{(\bar{x},\bar{y}) \in {\cal X}\times {\cal Y}}
 \max_{c \in {\cal C}_{(\bar{x},\bar{y})}} \Biggl( 
\nonumber \\
& \hspace{9ex} 
\prod_{((x^\prime,y^\prime),(x,y)) \in \hat{c}} W_{X|X',Y',Y}(x|x^\prime,y^\prime,y) 
\Biggr)^{1/|c|}.
\Label{eq:definition-of-min2}
\end{align}
Then, similar to Lemma \ref{lemma:limit-of-renyi-markov-single-terminal},
we can show the following lemma.
\begin{lemma} \Label{lemma:extreme-cases-down-conditional-renyi-transition}
We have
\begin{eqnarray}
\lim_{\theta \to \infty} H_{1+\theta}^{\downarrow,W}(X|Y) &=& H_\infty^{\downarrow,W}(X|Y).
\end{eqnarray}
\end{lemma}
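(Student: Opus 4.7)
The plan is to reduce the statement to the asymptotics of the Perron--Frobenius eigenvalue of an entrywise power of a nonnegative matrix, and then to exploit the classical max-cycle-mean characterisation. Under Assumption \ref{assumption-Y-marginal-markov} I first decompose $W(x,y|x',y') = W_Y(y|y')\,W_{X|X',Y',Y}(x|x',y',y)$, so that the tilted matrix rewrites as
\begin{eqnarray}
\tilde{W}_\theta(x,y|x',y') = W_Y(y|y')\,B(x,y|x',y')^{1+\theta},
\end{eqnarray}
with $B(x,y|x',y') := W_{X|X',Y',Y}(x|x',y',y)$. Writing $M$ for the max-cycle-mean of $B$ appearing inside the outer $-\log$ of \eqref{eq:definition-of-min2}, the claim $-\theta^{-1}\log \lambda_\theta \to -\log M = H_\infty^{\downarrow,W}(X|Y)$ reduces to $\lim_{\theta \to \infty} \lambda_\theta^{1/\theta} = M$.

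Next I would sandwich $\tilde{W}_\theta$ between scalar multiples of the entrywise power $B^{1+\theta}$. Setting $c_{\min} := \min\{W_Y(y|y') : W_Y(y|y') > 0\} > 0$, one has entrywise $c_{\min}\,B^{1+\theta} \le \tilde{W}_\theta \le B^{1+\theta}$. Monotonicity of the Perron--Frobenius eigenvalue for nonnegative matrices under the entrywise order (a consequence of the Collatz--Wielandt characterisation) yields $c_{\min}\,\lambda(B^{1+\theta}) \le \lambda_\theta \le \lambda(B^{1+\theta})$, and since $c_{\min}^{1/\theta} \to 1$ it suffices to prove $\lim_{\theta \to \infty} \lambda(B^{1+\theta})^{1/\theta} = M$.

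The lower bound is straightforward: for any cycle $c$ of length $\ell$, the contribution of $c$ to $\mathrm{tr}((B^{1+\theta})^\ell)$ gives $N\,\lambda(B^{1+\theta})^\ell \ge \prod_{e \in \hat c} B_e^{1+\theta}$ with $N := |{\cal X}||{\cal Y}|$; taking $\theta$-th roots and then maximising over $c$ yields $\liminf_\theta \lambda(B^{1+\theta})^{1/\theta} \ge M$. The main obstacle is the matching upper bound. For it I would use $\lambda(B^{1+\theta}) = \lim_{n} \mathrm{tr}((B^{1+\theta})^n)^{1/n}$ together with a walk-to-cycle decomposition: every closed walk $w$ of length $n$ on the state graph splits into simple cycles $c_1, \ldots, c_k$ with $\sum_i |c_i| = n$, and by the definition of $M$ each satisfies $\prod_{e \in \hat{c}_i} B_e \le M^{|c_i|}$, whence $\prod_{e \in w} B_e \le M^n$. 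Bounding the number of closed walks of length $n$ by $N^n$, I obtain $\mathrm{tr}((B^{1+\theta})^n) \le N^n M^{n(1+\theta)}$, so $\lambda(B^{1+\theta}) \le N M^{1+\theta}$ and hence $\lambda(B^{1+\theta})^{1/\theta} \to M$. Combining the two bounds completes the proof; the argument specialises to Lemma \ref{lemma:limit-of-renyi-markov-single-terminal} when ${\cal Y}$ is a singleton.
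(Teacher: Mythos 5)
Your proof is correct, but it takes a genuinely different route from the paper. The paper proves this lemma by the same limit-exchange argument it uses for Lemma \ref{lemma:limit-of-renyi-markov-single-terminal}: it writes $H_{1+\theta}^{\downarrow,W}(X|Y)=\lim_n\frac{1}{n}H_{1+\theta}^{\downarrow}(X^n|Y^n)$ via Lemma \ref{lemma:mult-terminal-finite-evaluation-down-conditional-renyi}, identifies $\lim_n\frac{1}{n}H_{\infty}^{\downarrow}(X^n|Y^n)=H_\infty^{\downarrow,W}(X|Y)$ by the combinatorial bounds on $\max_{x^n}P_{X^n|Y^n}(x^n|y^n)$ underlying Theorem \ref{theorem:asymptotic-down-conditional-renyi}, and then interchanges $\lim_\theta$ and $\lim_n$, which requires the eigenvector estimates showing $\overline{\delta}(\theta)/\theta$ and $\underline{\delta}(\theta)/\theta$ stay bounded. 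You instead work entirely at the level of the transition matrix: the sandwich $c_{\min}B^{1+\theta}\le\tilde{W}_\theta\le B^{1+\theta}$ plus the tropical (max-cycle-mean) limit of the Perron--Frobenius eigenvalue of an entrywise power. Your route is more self-contained --- it never invokes the Markov chain or a limit exchange, and as you note it also re-proves Lemma \ref{lemma:limit-of-renyi-markov-single-terminal} as the singleton-${\cal Y}$ case --- while the paper's route reuses machinery it needs anyway for the finite-length bounds.

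Three small points you should make explicit. First, on pairs $(y,y')$ with $W_Y(y|y')=0$ the conditional kernel $W_{X|X',Y',Y}$ is undefined; you must adopt the convention $B(x,y|x',y'):=0$ there so that $B$ has the same support as $W$ (hence $B^{1+\theta}$ inherits irreducibility) and the entrywise sandwich holds on every entry. Second, your $M$ is the maximum geometric cycle mean over \emph{simple} cycles; the definition \eqref{eq:definition-of-min2} is phrased in terms of ``Hamilton cycles,'' but the paper's own use of that term (e.g.\ in the proof of Lemma \ref{lemma:preparation-evaluation-min-entropy}, where an arbitrary closed walk is peeled into such cycles) shows simple cycles are what is meant, so your identification of $M$ with $e^{-H_\infty^{\downarrow,W}(X|Y)}$ is the intended one --- but say so. Third, $\lambda(A)=\lim_n\mathrm{tr}(A^n)^{1/n}$ can fail for periodic matrices (the trace may vanish along subsequences); either invoke the ergodicity (aperiodicity) assumption or replace the limit by a $\limsup$, which suffices for your upper bound. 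None of these affects the validity of the argument.
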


Next, we consider the upper $\min$-entropy for $W$.
When $W$ satisfies Assumption \ref{assumption-memory-through-Y}, we note that 
\begin{eqnarray}
T(y|y^\prime) &:=& \max_x W_{X|X',Y',Y}(x|x^\prime,y^\prime,y) \Label{eq:definition-of-T}
\end{eqnarray}
is well defined, i.e., the right hand side of \eqref{eq:definition-of-T}
is independent of $x^\prime$. 
Let $\kappa_\infty$ be the Perron-Frobenius eigenvalue of
$W_Y(y|y^\prime) T(y|y^\prime)$. Then, we define
\begin{eqnarray}
H_\infty^{\uparrow,W}(X|Y) := - \log \kappa_\infty.
\end{eqnarray}
\begin{lemma} \Label{lemma:extreme-cases-up-conditional-renyi-transition}
We have
\begin{eqnarray}
\lim_{\theta \to \infty} H_{1+\theta}^{\uparrow,W}(X|Y) &=& H_\infty^{\uparrow,W}(X|Y).
 \Label{eq:extreme-cases-up-conditional-renyi-transition-min} 
\end{eqnarray}
\end{lemma}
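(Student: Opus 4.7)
The plan is to reduce the claim to a continuity statement for the Perron--Frobenius eigenvalue combined with the standard fact that $\ell^{1+\theta}$-norms tend to the $\ell^\infty$-norm as $\theta \to \infty$. Under Assumption \ref{assumption-memory-through-Y} the transition matrix decomposes as $W(x,y|x^\prime,y^\prime) = W_Y(y|y^\prime)\, W_{X|X',Y',Y}(x|x^\prime,y^\prime,y)$ with the inner factor satisfying \eqref{12-12-5}, so
\begin{eqnarray}
K_\theta(y|y^\prime)
 &=& W_Y(y|y^\prime)\Bigl[\,\sum_x W_{X|X',Y',Y}(x|x^\prime,y^\prime,y)^{1+\theta}\Bigr]^{\frac{1}{1+\theta}}. \nonumber
\end{eqnarray}

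First I would verify that $K_\theta(y|y^\prime) \to W_Y(y|y^\prime)\, T(y|y^\prime)$ entrywise as $\theta \to \infty$. Since the bracket above is the $\ell^{1+\theta}$-norm of the finite vector $\bigl(W_{X|X',Y',Y}(\cdot|x^\prime,y^\prime,y)\bigr)_x$, it converges monotonically to its $\ell^\infty$-norm $\max_x W_{X|X',Y',Y}(x|x^\prime,y^\prime,y) = T(y|y^\prime)$, which is well defined by the very derivation of \eqref{eq:definition-of-T} from Assumption \ref{assumption-memory-through-Y}.

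Second, I would invoke continuity of the Perron--Frobenius eigenvalue. The support pattern of $K_\theta$ coincides with that of $W_Y$ for every finite $\theta$, as does the support of $W_Y T$; irreducibility of $W$ therefore implies irreducibility of all these $|{\cal Y}|\times|{\cal Y}|$ nonnegative matrices. Since the spectral radius of a nonnegative matrix is a continuous function of its entries (e.g.\ as a continuous function of the roots of the characteristic polynomial), the entrywise convergence above yields $\kappa_\theta \to \kappa_\infty$. Because $\kappa_\infty > 0$ by irreducibility, we may pass to logarithms and write
\begin{eqnarray}
\lim_{\theta \to \infty} H_{1+\theta}^{\uparrow,W}(X|Y)
 &=& \lim_{\theta \to \infty} \Bigl(-\frac{1+\theta}{\theta}\Bigr)\log \kappa_\theta \nonumber \\
 &=& -\log \kappa_\infty \;=\; H_\infty^{\uparrow,W}(X|Y), \nonumber
\end{eqnarray}
which is \eqref{eq:extreme-cases-up-conditional-renyi-transition-min}.

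The only delicate step is the continuity invocation at $\theta = \infty$; if one prefers an elementary route, I would bracket the $\ell^{1+\theta}$-norm between $T$ and $|{\cal X}|^{1/(1+\theta)} T$, conclude $T(y|y^\prime) \le K_\theta(y|y^\prime)/W_Y(y|y^\prime) \le |{\cal X}|^{1/(1+\theta)} T(y|y^\prime)$, and then use monotonicity of the Perron--Frobenius eigenvalue in the entries of a nonnegative matrix together with $|{\cal X}|^{1/(1+\theta)} \to 1$ to sandwich $\kappa_\theta$. This sandwich avoids any appeal to general continuity results and is the main technical item; everything else is a routine limit.
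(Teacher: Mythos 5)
Your proposal is correct and follows essentially the same route as the paper: the paper also establishes $K_\theta(y|y^\prime) \to W_Y(y|y^\prime) T(y|y^\prime)$ via the sandwich $1 \le \sum_x \bigl(W_{X|X',Y',Y}(x|x^\prime,y^\prime,y)/T(y|y^\prime)\bigr)^{1+\theta} \le |{\cal X}|$ and then invokes continuity of the Perron--Frobenius eigenvalue to get $\kappa_\theta \to \kappa_\infty$. Your additional remarks (the factor $(1+\theta)/\theta \to 1$, positivity of $\kappa_\infty$, and the elementary eigenvalue sandwich via monotonicity) only make explicit steps the paper leaves implicit.
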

\begin{proof}
See Appendix \ref{appendix:lemma:extreme-cases-up-conditional-renyi-transition}.
\end{proof}

\begin{theorem} \Label{theorem:asymptotic-down-conditional-renyi}
Suppose that a transition matrix $W$ satisfies Assumption \ref{assumption-Y-marginal-markov}.
For any initial distribution, we have
\begin{eqnarray}
\lim_{n\to\infty} \frac{1}{n} H_\infty^\downarrow(X^n|Y^n) &=& H_\infty^{\downarrow,W}(X|Y). 
\Label{eq:markov-min-entropy-down-asymptotic}
\end{eqnarray}
Suppose that a transition matrix $W$ satisfies Assumption \ref{assumption-memory-through-Y}. 
For any initial distribution, we have
\begin{eqnarray}
\lim_{n\to\infty} \frac{1}{n} H_\infty^\uparrow(X^n|Y^n) &=& H_\infty^{\uparrow,W}(X|Y).
 \Label{eq:markov-min-entropy-up-asymptotic}
\end{eqnarray}
\end{theorem}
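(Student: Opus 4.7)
The plan is to handle the two limits separately, since they rely on different assumptions and proceed through different mechanisms: a combinatorial Hamilton cycle argument on the enlarged state space $\mathcal{X}\times\mathcal{Y}$ for the lower min-entropy, and a Perron-Frobenius matrix-power argument for the upper min-entropy. In both cases the overall strategy is to establish finite-length sandwich bounds of the form $(n-1)H_\infty^{\bullet,W}(X|Y) + c_1 \le H_\infty^\bullet(X^n|Y^n) \le (n-1) H_\infty^{\bullet,W}(X|Y) + c_2$ with constants independent of $n$, from which the claimed limits follow by dividing by $n$ and sending $n\to\infty$.

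For the lower case under Assumption \ref{assumption-Y-marginal-markov}, I would first invoke the factorization $P_{X^n|Y^n}(x^n|y^n) = P_{X_1|Y_1}(x_1|y_1) \prod_{i=2}^n W_{X|X',Y',Y}(x_i|x_{i-1},y_{i-1},y_i)$, which holds precisely under this assumption. Consequently $e^{-H_\infty^\downarrow(X^n|Y^n)}$ is exactly the maximum weight of an $(n-1)$-step walk on the directed graph with vertex set $\mathcal{X}\times\mathcal{Y}$, edge weights $W_{X|X',Y',Y}(x|x',y',y)$, and boundary weight $P_{X_1|Y_1}(x_1|y_1)$. I would then apply the argument of Lemma \ref{lemma:finite-evaluation-min-entropy} verbatim on this enlarged state space: any walk of length exceeding $|\mathcal{X}||\mathcal{Y}|$ must revisit a vertex and thereby decomposes into a concatenation of cycles plus a bounded-length detour; the geometric mean of edge weights along any such cycle is upper-bounded by $e^{-H_\infty^{\downarrow,W}(X|Y)}$ by the definition \eqref{eq:definition-of-min2}; conversely, an explicit walk that traverses the optimal cycle $\lfloor (n-1)/|c^*|\rfloor$ times, connected to the endpoints by bounded-length Hamilton paths whose existence is guaranteed by the irreducibility of $W$, matches this bound up to a constant additive term.

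For the upper case under Assumption \ref{assumption-memory-through-Y}, the stronger assumption decouples the maximization over $x^n$ from the summation over $y^n$. Since $T(y|y') := \max_x W_{X|X',Y',Y}(x|x',y',y)$ is independent of $x'$, a backward induction on $i$ gives $\max_{x_n,\ldots,x_2}\prod_{i=2}^n W_{X|X',Y',Y}(x_i|x_{i-1},y_{i-1},y_i) = \prod_{i=2}^n T(y_i|y_{i-1})$. Substituting this together with $P_{Y^n}(y^n) = P_{Y_1}(y_1)\prod_{i=2}^n W_Y(y_i|y_{i-1})$ into the definition of $H_\infty^\uparrow$ yields $e^{-H_\infty^\uparrow(X^n|Y^n)} = \sum_{y_1,y_n} g(y_1) M^{n-1}(y_n|y_1)$ with $g(y_1):=P_{Y_1}(y_1)\max_{x_1} P_{X_1|Y_1}(x_1|y_1)$ and $M(y|y'):=W_Y(y|y')T(y|y')$. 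Since $T(y|y')>0$ whenever $W_Y(y|y')>0$, the matrix $M$ inherits irreducibility from $W_Y$, so the Perron-Frobenius theorem provides constants $0<c_1\le c_2<\infty$ with $c_1\kappa_\infty^{n-1}\le e^{-H_\infty^\uparrow(X^n|Y^n)}\le c_2\kappa_\infty^{n-1}$, yielding \eqref{eq:markov-min-entropy-up-asymptotic}.

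The main obstacle is the lower case: verifying the finite-length combinatorial bound on the enlarged state space requires checking that the directed graph induced by $W$ on $\mathcal{X}\times\mathcal{Y}$ is strongly connected, which follows from the irreducibility of $W$, and carefully bounding the short access paths used to glue the optimal cycle to an arbitrary starting vertex and ending vertex of a length-$(n-1)$ walk. The upper case, by contrast, reduces to standard Perron-Frobenius reasoning essentially identical to that used for $H_{1+\theta}^{\uparrow,W}(X|Y)$ in the proof of Theorem \ref{theorem:asymptotic-up-conditional-renyi}.
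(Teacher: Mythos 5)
Your proposal is correct and follows essentially the same route as the paper's proof: the factorization $P_{X^n|Y^n}(x^n|y^n)=P_{X_1|Y_1}(x_1|y_1)\prod_{i=2}^n W_{X|X',Y',Y}(x_i|x_{i-1},y_{i-1},y_i)$ combined with the Hamilton-cycle argument of Lemma \ref{lemma:finite-evaluation-min-entropy} on the enlarged state space ${\cal X}\times{\cal Y}$ for the lower case, and the reduction under Assumption \ref{assumption-memory-through-Y} to powers of the matrix $W_Y(y|y')T(y|y')$ followed by a Perron--Frobenius eigenvector sandwich (as in Lemma \ref{lemma:multi-terminal-finite-evaluation-upper-conditional-renyi}) for the upper case. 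Your write-up is in fact somewhat more explicit than the paper's, which only sketches these two steps by reference.
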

\begin{proof}
See Appendix \ref{appendix:theorem:asymptotic-down-conditional-renyi}.
\end{proof}

\begin{savenotes}
 \begin{table*}
\begin{center}
\caption{Summary of the bounds for the uniform random number generation.}
\Label{table:summary:single-random-number}
{\renewcommand{\arraystretch}{1.4}
\begin{tabular}{|c|c|c|c|c|c|c|c|c|} \hline
  \multirow{2}{*}{Ach./Conv.}  
  & \multirow{2}{*}{Markov} & \multirow{2}{*}{Single Shot} &  \multirow{2}{*}{$\Delta$,$\overline{\Delta}$,$D$,$\overline{D}$} &  \multirow{2}{*}{Complexity} 
  & Large & Moderate & Second & \MH{RER}\\
  & & & & & Deviation & Deviation & Order & Rate \\ \hline
\multirow{3}{*}{Achievability} 
 & Theorem \ref{theorem:single-random-finite-markov-direct-1} & Lemma \ref{lemma:single-random-number-exponential-bound} & $\overline{\Delta}$ & $O(1)$ & $\checkmark^*$     & \checkmark &  &
\\ 
\cline{2-9}
 & \multicolumn{2}{|c|}{Lemma \ref{lemma:single-random-number-leftover-loose}}  &  $\overline{\Delta}$ & Tail &  & \checkmark & \checkmark & 
\\ 
\cline{2-9}
& Theorem \ref{Th11-25-3} & Theorem \ref{Th11-25-1} & $\overline{D}$ & $O(1)$ & & & & \checkmark
\\
\hline 
\multirow{4}{*}{Converse} 
 & Theorem \ref{theorem:single-random-sphere-packing-converse-finite-markov} & Theorem \ref{theorem:random-number-exponential-converse}  & $\Delta$ & $O(1)$ &  & \checkmark &  &
\\ \cline{2-9} 
 & Theorem \ref{theorem:single-random-number-strong-universal-finite-markov-converse}  & Theorem \ref{theorem:random-number-strong-universal-exponential-converse}  & $\overline{\Delta}$ & $O(1)$ &$\checkmark^*$ & \checkmark &  &
\\ 
\cline{2-9} 
 & \multicolumn{2}{|c|}{Lemma \ref{lemma:single-random-sphere-packing-converse}} &  $\Delta$ & Tail &  & \checkmark & \checkmark &
\\ 
\cline{2-9}
& Theorem \ref{Th11-25-4} & Proposition \ref{Th11-25-2} & $D$ & $O(1)$ & & & & \checkmark
\\
\hline 
\end{tabular}
}
\end{center}
\end{table*}
\end{savenotes}

\section{Uniform Random Number Generation} \Label{section:single-random-number}

In this section, we investigate the uniform random number generation
when there is no information leakage.
Then, we discuss the single terminal Markov chain.
In this case, as is explained in Remark \ref{rem11-14-1},
all quantities with the superscript $\downarrow$ equal those with the superscript $\uparrow$,
and these the superscripts are omitted.
We start this section by showing the problem setting in Section \ref{subsection:single-random-problem-formulation}.
Then, we review and introduce some single-shot bounds in Section \ref{subsection:single-random-single-shot}.
We derive non-asymptotic bounds for the Markov chain in Section \ref{subsection:single-random-finite-markov}.
Then, in Sections \ref{subsection:single-random-large-deviation} and \ref{subsection:single-random-mdp},
we show the asymptotic characterization for the large deviation regime and the moderate deviation regime
by using those non-asymptotic bounds. We also derive the second order rate in Section \ref{subsection:single-random-second-order}.


The results shown in this section are summarized in Table \ref{table:summary:single-random-number}.
The checkmarks $\checkmark$ indicate that the tight asymptotic bounds (large deviation, moderate deviation, and second order)
 can be obtained from those bounds.
The marks $\checkmark^*$ indicate that the large deviation bound can be derived up to the critical rate. 
The computational complexity ``Tail'' indicates that the computational complexities of those bounds 
depend on the computational complexities of tail probabilities.

In Table \ref{table:summary:single-random-number},
we didn't call the bounds of Lemmas \ref{lemma:single-random-number-exponential-bound} and \ref{lemma:single-random-number-leftover-loose} as theorems due to the following reason.
In Subsection \ref{s1-1}, we listed the requirement for the finite-length bounds.
Hence, we give a status of Theorem only for a non-asymptotic bound with a computable form.
However, 
Lemmas \ref{lemma:single-random-number-exponential-bound} and \ref{lemma:single-random-number-leftover-loose}
require the calculation of the tail probability 
whose calculation complexity is not $O(1)$ at least in the Markovian case.
Hence, 
Lemmas \ref{lemma:single-random-number-exponential-bound} and \ref{lemma:single-random-number-leftover-loose}
are not given the status of Theorem
although they derive the asymptotic tight bounds.

\subsection{Problem Formulation} \Label{subsection:single-random-problem-formulation}

We first present the problem formulation by the single shot setting. Let $X$ be a source whose 
distribution is $P$.
A random number generator is a function $f:{\cal X} \to \{1,\ldots,M\}$. The approximation error is defined by
\begin{eqnarray}
\Delta [f] := \frac{1}{2} \| P_{f(X)} - P_{\overline{U}} \|_1,
\end{eqnarray}
where $\overline{U}$ is the uniform random variable on $\{1,\ldots,M\}$. 
For notational convenience, we introduce the infimum of approximation error under the condition that the range size is $M$:
\begin{eqnarray}
\Delta(M) := \inf_{f} \Delta[f].
\end{eqnarray}

When we construct a random number generator, we often use a two-universal hash family ${\cal F}$ and a random function $F$
on ${\cal F}$. Then, we bound the approximation error averaged over the random function by only using the property of two-universality.
\MH{As explained in Subsection \ref{two-cri},
to take into the practical aspects,
we introduce the worst leaked information:}
\begin{eqnarray} \Label{eq:single-random-number-worst-u2}
\overline{\Delta}(M) := \sup_F \mathbb{E}[\Delta[F] ],
\end{eqnarray}
where the supremum is taken over all two-universal hash families from ${\cal X}$ to $\{1,\ldots,M\}$.
From the definition, we obviously have
$\Delta(M) \le \overline{\Delta}(M)$.
When we consider $n$-fold extension, the random number generator and related quantities are denoted with subscript $n$.
Instead of evaluating the approximation error $\Delta(M_n)$ (or $\overline{\Delta}(M_n)$) for given $M_n$, we are also interested in
evaluating 
\begin{eqnarray}
M(n,\varepsilon) &:=& \sup
\{ M_n : \Delta(M_n) \le \varepsilon \}, \Label{eq:definition-single-random-m-epsilon} \\
\overline{M}(n,\varepsilon) &:=& \sup
\{ M_n : \overline{\Delta}(M_n) \le \varepsilon \}
 \Label{eq:definition-single-random-barm-epsilon}
\end{eqnarray}
for given $0 \le \varepsilon < 1$.

When the output size $M$ is too large, 
$\Delta(M)$ and $\overline{\Delta}(M)$ are close to $1$.
So, the criteria $\Delta(M)$ and $\overline{\Delta}(M)$ do not work as 
proper security measures.
In this case, to quantify the performance of the output random number, 
according to Wyner \cite{wyner:75},
to discuss the imperfectness of the generated random number,
we focus on the difference between the entropies of 
the generated random number and the ideal uniform random number,
which is given as
\begin{align}
&\log M - H(P_{f(X)})\nonumber \\
=&
\log M - \sum_{z} \Big(\sum_{x\in f^{-1}(z)} P_X(x)\Big) 
\log \Big(\sum_{x\in f^{-1}(z)} P_X(x)\Big)
\nonumber \\
=&D(P_{f(X)} \| P_{\overline{U}}),
\end{align}
where $D(P\|Q)$ is the divergence between two distributions $P$ and $Q$.
When the block size is $n$, we call the quantity 
$\frac{1}{n}D(P_{f(X)} \| P_{\overline{U}})$
the relative entropy rate.
Then, we focus on the following quantities.
\begin{align}
D(M) &:= \inf_f D(P_{f(X)} \| P_{\overline{U}}) \\
\overline{D}(M) &:= \sup_F \mathbb{E} [ D(P_{F(X)} \| P_{\overline{U}})],
\end{align}
where the supremum is taken over all two-universal hash families from ${\cal X}$ to $\{1,\ldots,M\}$.
\MH{Due to the same reason for $\overline{\Delta}(M)$, 
we consider 
the criterion $\overline{D}(M)$ 
in addition to the criterion $D(M)$.}

\subsection{Single Shot Bounds} \Label{subsection:single-random-single-shot}

In this section, we review existing single shot bounds and also show  novel converse bounds.
For the information measures used below, see Remark \ref{rem11-14-1} in Section \ref{section:preparation-multi}, which explains the information measures when ${\cal Y}$ is singleton.
Furthermore, we need to introduce other information measures.
For $P \in \overline{{\cal P}}({\cal X})$, let 
\begin{eqnarray}
H_{\min}(P ) := \log \frac{1}{\max_x P(x)}
\end{eqnarray}
be the $\min$-entropy.
Then, let 
\begin{eqnarray}
H^\varepsilon_{\min}(P ) := \max_{P^\prime \in {\cal B}^\varepsilon(P )} H_{\min}(P^\prime)
\end{eqnarray}
and 
\begin{eqnarray}
\overline{H}^\varepsilon_{\min}(P ) := \max_{P^\prime \in \overline{{\cal B}}^\varepsilon(P )} H_{\min}(P^\prime)
\end{eqnarray}
be smooth $\min$-entropies, where 
\begin{eqnarray}
{\cal B}^\varepsilon(P ) &:=& \left\{ P^\prime \in {\cal P}({\cal X}) : \frac{1}{2} \|P - P^\prime \|_1 \le \varepsilon \right\}, \\
\overline{{\cal B}}^\varepsilon(P ) &:=& \left\{ P^\prime \in \overline{{\cal P}}({\cal X}) : \frac{1}{2} \|P - P^\prime \|_1 \le \varepsilon \right\},
\end{eqnarray}
and
${\cal P}({\cal X})$
($\overline{{\cal P}}({\cal X})$)
is the set of distributions (sub-distributions) 
over the set ${\cal X}$.

First, we have the following achievability bound.
\begin{lemma}[Lemma 2.1.1 of \cite{han:book}] \Label{lemma:single-random-number-han-book}
We have
\begin{eqnarray}
\Delta(M) \le \inf_{\gamma \ge 0} \left[ P_X\left\{ \log \frac{1}{P_X(
\MH{X}
)} < \gamma \right\} +  \frac{M}{e^{\gamma}} \right]. 
\end{eqnarray}
\end{lemma}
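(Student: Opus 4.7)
The plan is to exhibit, for each fixed $\gamma\ge 0$, an explicit function $f:\mathcal{X}\to\{1,\ldots,M\}$ whose output distribution is close to uniform, and then to take the infimum over $\gamma$. First I would partition the source alphabet as $\mathcal{X}=T\sqcup T^c$, where $T:=\{x:P_X(x)\le e^{-\gamma}\}$ is the set of \emph{light} atoms and $T^c=\{x:\log 1/P_X(x)<\gamma\}$ carries total mass $\alpha:=P_X\{\log 1/P_X(X)<\gamma\}$. The idea is to set aside the heavy atoms (whose total $\alpha$ becomes the first term of the bound) and to spread the light mass as uniformly as possible over the $M$ output symbols.

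Next, I would define $f$ by a greedy load-balancing rule on $T$: process the light atoms in arbitrary order and assign each to whichever output symbol currently has the smallest accumulated mass; heavy atoms in $T^c$ are assigned arbitrarily. Since each light atom has size at most $e^{-\gamma}$, a straightforward induction on the number of placements shows that the light-only bin masses $B_1,\ldots,B_M$ satisfy $\max_i B_i-\min_i B_i\le e^{-\gamma}$ at termination, which together with $\sum_i B_i=1-\alpha$ forces $|B_i-(1-\alpha)/M|\le e^{-\gamma}$ for every $i$.

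To conclude, I would bound the variational distance by the triangle inequality. Writing $P_{f(X)}(i)=B_i+H_i$ with $H_i\ge 0$ the heavy contribution and $\sum_i H_i=\alpha$, the identity $(1-\alpha)/M+\alpha/M=1/M$ yields
\begin{align*}
\sum_i \Bigl|P_{f(X)}(i)-\tfrac{1}{M}\Bigr|
&\le \sum_i \Bigl|B_i-\tfrac{1-\alpha}{M}\Bigr| + \sum_i \Bigl|H_i-\tfrac{\alpha}{M}\Bigr|\\
&\le M e^{-\gamma} + 2\alpha,
\end{align*}
using $|H_i-\alpha/M|\le H_i+\alpha/M$ to bound the second sum. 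Therefore $\Delta[f]\le \alpha + M e^{-\gamma}/2\le P_X\{\log 1/P_X(X)<\gamma\}+M/e^{\gamma}$, and taking the infimum over $\gamma\ge 0$ gives the lemma. The only substantive ingredient is the greedy invariant; I expect it to be the main (though entirely standard) technical point, as it is the one place where the explicit construction enters nontrivially and where the per-atom size bound $e^{-\gamma}$ is converted into the per-bin uniform-approximation estimate.
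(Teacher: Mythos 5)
Your proof is correct: the greedy load-balancing invariant $\max_i B_i - \min_i B_i \le e^{-\gamma}$ holds by the standard induction (the newly filled bin either stays below the old maximum or exceeds the new minimum by at most the atom size), and the triangle-inequality step then yields $\Delta[f]\le \alpha + Me^{-\gamma}/2$, which is even slightly stronger than the claimed bound. The paper gives no proof of this lemma -- it is quoted directly from Han's book -- and your balanced-partition construction is essentially the standard argument used there, so there is nothing to flag.
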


By using the two-universal hash family, we can derive the following bound.
%
\begin{lemma}[\cite{renner:05b}] \Label{lemma:single-random-number-leftover}
We have
\begin{eqnarray}
\overline{\Delta}(M) \le \inf_{0\le \varepsilon \le1}
\left[ 2 \varepsilon + \frac{1}{2} \sqrt{M e^{- \overline{H}_{\min}^\varepsilon(
\MH{P_X})}} \right].
\end{eqnarray}
\end{lemma}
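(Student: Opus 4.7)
The plan is to combine the smoothing argument with the classical two-universal (leftover hash) lemma. First I would pick a sub-distribution $\tilde{P} \in \overline{\mathcal{B}}^\varepsilon(P_X)$ that attains $H_{\min}(\tilde{P}) = \overline{H}_{\min}^\varepsilon(P_X)$, and write $\tilde{X}$ for a random variable distributed according to $\tilde{P}$. By contraction of the $L_1$ distance under the deterministic post-processing $F$, we have $\tfrac{1}{2}\|P_{F(X)} - P_{F(\tilde{X})}\|_1 \le \tfrac{1}{2}\|P_X - \tilde{P}\|_1 \le \varepsilon$. Introducing the sub-uniform measure $\tilde{U}_M(j) := \tilde{P}(\mathcal{X})/M$, a short calculation shows $\tfrac{1}{2}\|\tilde{U}_M - P_{\bar{U}}\|_1 = \tfrac{1}{2}(1 - \tilde{P}(\mathcal{X})) \le \varepsilon$. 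Two applications of the triangle inequality then give
\[
\tfrac{1}{2}\|P_{F(X)} - P_{\bar{U}}\|_1 \le 2\varepsilon + \tfrac{1}{2}\|P_{F(\tilde{X})} - \tilde{U}_M\|_1 .
\]

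Next I would estimate the remaining ``clean'' term in expectation over $F$. Using Cauchy--Schwarz to pass from $L_1$ to $L_2$ over the $M$ output atoms, together with Jensen's inequality applied to $\sqrt{\,\cdot\,}$, yields
\[
\mathbb{E}_F\bigl[\tfrac{1}{2}\|P_{F(\tilde{X})} - \tilde{U}_M\|_1\bigr] \le \tfrac{1}{2}\sqrt{M\,\mathbb{E}_F\Bigl[\textstyle\sum_j (P_{F(\tilde{X})}(j)-\tilde{U}_M(j))^2\Bigr]} .
\]
Expanding the square, the cross terms cancel the $\tilde{U}_M$ contribution, and the two-universality property $\mathbb{E}_F[\mathbf{1}\{F(x)=F(x')\}] \le 1/M$ for $x\neq x'$ collapses the off-diagonal terms so that the inner expectation is at most $\sum_x \tilde{P}(x)^2 \le \max_x \tilde{P}(x) = e^{-\overline{H}_{\min}^\varepsilon(P_X)}$. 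Combining with the smoothing estimate, taking the supremum over all two-universal families $F$, and finally the infimum over $\varepsilon \in [0,1]$, delivers the claimed inequality.

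The main obstacle is bookkeeping with sub-normalized distributions: $\tilde{P}$ lies only in $\overline{\mathcal{P}}(\mathcal{X})$, so neither $\tilde{P}$ nor the induced $P_{F(\tilde{X})}$ has unit mass, and one must verify that the two triangle-inequality steps each absorb exactly an additive $\varepsilon$ rather than collapsing prematurely, and that the collision bound gives $\max_x \tilde{P}(x)$ (rather than $\max_x \tilde{P}(x)/\tilde{P}(\mathcal{X})$) so as to reproduce the smoothed $\min$-entropy as defined here. Once this is handled, the remaining ingredients -- triangle inequality, data processing, Cauchy--Schwarz and Jensen, and the two-universal collision bound -- are completely standard.
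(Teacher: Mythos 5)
Your argument is correct and complete: the two triangle-inequality steps each cost exactly $\varepsilon$ (the second because $|1-\tilde{P}({\cal X})|\le \|P_X-\tilde{P}\|_1\le 2\varepsilon$), the collision computation gives $\mathbb{E}_F\sum_j(P_{F(\tilde{X})}(j)-\tilde{U}_M(j))^2\le\sum_x\tilde{P}(x)^2\le\max_x\tilde{P}(x)$ precisely because $\tilde{P}({\cal X})\le 1$, and the bound is uniform over two-universal families so the supremum and the infimum over $\varepsilon$ go through. The paper itself gives no proof of this lemma, citing it to Renner; your derivation is the standard smoothed leftover-hash argument from that reference, so there is nothing to compare beyond noting agreement.
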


\MH{However, the bound in Lemma \ref{lemma:single-random-number-leftover}
cannot be directly calculated in the Markovian chain.
To resolve this problem, we slightly loosen Lemma \ref{lemma:single-random-number-leftover} as follows.}

\begin{lemma} \Label{lemma:single-random-number-leftover-loose}
We have
\begin{eqnarray}
\overline{\Delta}(M) \le \inf_{\gamma \ge 0} 
\left[ \MH{P_X}\left\{ \log \frac{1}{\MH{P_X(X)}} < \gamma \right\} + \frac{1}{2} \sqrt{\frac{M}{e^\gamma}} \right].
\end{eqnarray}
\end{lemma}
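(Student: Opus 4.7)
The plan is to derive this bound by picking a canonical sub-normalized distribution in the $\varepsilon$-ball $\overline{{\cal B}}^\varepsilon(P_X)$ and invoking Lemma \ref{lemma:single-random-number-leftover}. For a given threshold $\gamma \ge 0$, I would consider the truncation
\begin{eqnarray}
P'(x) := P_X(x)\, \mathbf{1}[P_X(x) \le e^{-\gamma}],
\end{eqnarray}
which belongs to $\overline{{\cal P}}({\cal X})$ by construction.

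Two elementary observations about this truncation drive the argument. First, $\max_x P'(x) \le e^{-\gamma}$, so $H_{\min}(P') \ge \gamma$. Second, since $P'$ differs from $P_X$ only on $\{x : P_X(x) > e^{-\gamma}\}$, where it vanishes,
\begin{eqnarray}
\frac{1}{2}\|P_X - P'\|_1 = \frac{1}{2}\sum_{x : P_X(x) > e^{-\gamma}} P_X(x) = \frac{1}{2}P_X\left\{\log\frac{1}{P_X(X)} < \gamma\right\}.
\end{eqnarray}
Setting $\varepsilon := \frac{1}{2}P_X\{\log(1/P_X(X)) < \gamma\}$ thus gives $P' \in \overline{{\cal B}}^\varepsilon(P_X)$ and $\overline{H}_{\min}^\varepsilon(P_X) \ge H_{\min}(P') \ge \gamma$. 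Plugging these into Lemma \ref{lemma:single-random-number-leftover} yields
\begin{eqnarray}
\overline{\Delta}(M) \le 2\varepsilon + \frac{1}{2}\sqrt{M e^{-\overline{H}_{\min}^\varepsilon(P_X)}} \le P_X\left\{\log\frac{1}{P_X(X)} < \gamma\right\} + \frac{1}{2}\sqrt{\frac{M}{e^\gamma}},
\end{eqnarray}
and taking the infimum over $\gamma \ge 0$ produces the stated inequality.

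I do not expect any serious technical obstacle: once the canonical truncation $P'$ is identified, the rest is a one-line substitution. The value of this weaker form is practical rather than mathematical---the smooth min-entropy $\overline{H}_{\min}^\varepsilon(P_X)$ is generally intractable to evaluate directly for Markov sources, whereas the tail probability of the self-information $\log(1/P_X(X))$ together with the explicit $\sqrt{M/e^\gamma}$ term can later be controlled by the Chernoff/R\'enyi-type exponential estimates built from the R\'enyi entropy for the transition matrix introduced earlier in the paper. In that sense this lemma is the bridge between the smooth-entropy single-shot bound and the computable Markovian bounds of Subsection \ref{subsection:single-random-finite-markov}.
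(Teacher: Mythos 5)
Your proof is correct and follows exactly the route the paper intends: the paper states Lemma \ref{lemma:single-random-number-leftover-loose} as a loosening of Lemma \ref{lemma:single-random-number-leftover} without writing out the details, and your truncation $P'(x) = P_X(x)\,\mathbf{1}[P_X(x) \le e^{-\gamma}]$, which gives $H_{\min}(P') \ge \gamma$ at variational distance $\varepsilon = \tfrac{1}{2}P_X\{\log(1/P_X(X)) < \gamma\}$, is the standard way to fill that in. No gaps.
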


We also have the following achievability bound.
\begin{lemma}[Theorem 1 of \cite{hayashi:10b}] \Label{lemma:single-random-number-exponential-bound}
We have
\begin{eqnarray}
\overline{\Delta}(M) \le \inf_{0 \le \theta \le 1} \frac{3}{2} M^{\frac{\theta}{1+\theta}} e^{- \frac{\theta}{1+\theta} H_{1+\theta}(X)}. 
\end{eqnarray}
\end{lemma}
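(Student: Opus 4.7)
The plan is to exploit the two-universal property of the random hash function $F$ to bound the expected R\'enyi power sum $\mathbb{E}_F\big[\sum_z P_{F(X)}(z)^{1+\theta}\big]$ and then convert this into a bound on $\mathbb{E}_F[\|P_{F(X)} - P_{\overline{U}}\|_1]$ via H\"older's and Jensen's inequalities. The core calculation rewrites $\sum_z P_{F(X)}(z)^{1+\theta} = \sum_x P_X(x)\, P_{F(X)}(F(x))^\theta$. Since $z \mapsto z^\theta$ is concave for $\theta \in [0,1]$, Jensen's inequality gives $\mathbb{E}_F[P_{F(X)}(F(x))^\theta] \le (\mathbb{E}_F[P_{F(X)}(F(x))])^\theta$, while two-universality yields $\mathbb{E}_F[P_{F(X)}(F(x))] \le P_X(x) + 1/M$. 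Combining with the subadditivity $(a+b)^\theta \le a^\theta + b^\theta$ (valid for $\theta \in [0,1]$ and $a,b \ge 0$) produces the R\'enyi leftover-hash estimate
\[
\mathbb{E}_F\Big[\sum_z P_{F(X)}(z)^{1+\theta}\Big] \le \sum_x P_X(x)^{1+\theta} + M^{-\theta} = e^{-\theta H_{1+\theta}(X)} + M^{-\theta}.
\]

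Next I would invoke H\"older's inequality with conjugate exponents $1+\theta$ and $(1+\theta)/\theta$ applied to $\sum_z |P_{F(X)}(z) - 1/M| \cdot 1$, obtaining
\[
\|P_{F(X)} - P_{\overline{U}}\|_1 \le M^{\theta/(1+\theta)} \Big(\sum_z |P_{F(X)}(z) - 1/M|^{1+\theta}\Big)^{1/(1+\theta)}.
\]
Bounding the summand by the convexity estimate $|P_{F(X)}(z) - 1/M|^{1+\theta} \le (P_{F(X)}(z) + 1/M)^{1+\theta} \le 2^\theta\big(P_{F(X)}(z)^{1+\theta}+M^{-(1+\theta)}\big)$, taking expectation over $F$, and applying Jensen's inequality with the concave map $x \mapsto x^{1/(1+\theta)}$ yields
\[
\mathbb{E}_F[\|P_{F(X)} - P_{\overline{U}}\|_1] \le 2^{\theta/(1+\theta)} \big(M^\theta e^{-\theta H_{1+\theta}(X)} + 2\big)^{1/(1+\theta)}.
\]

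The main obstacle is to simplify this additive form to the claimed purely multiplicative bound uniformly over $\theta \in [0,1]$. Setting $B := M^\theta e^{-\theta H_{1+\theta}(X)}$, the desired inequality $2^{\theta/(1+\theta)}(B+2)^{1/(1+\theta)} \le 3\,B^{1/(1+\theta)}$ reduces to $B + 2 \le 3^{1+\theta} B / 2^\theta$, which holds whenever $B$ exceeds an explicit $\theta$-dependent threshold. In the complementary regime of very small $B$ (which corresponds to extracting much less than the R\'enyi entropy), one cannot recover multiplicative tightness from the above estimate directly; the cleanest remedy is to smooth $P_X$ by truncating at a level $P_X(x) \le e^{-\gamma}$, bound the truncation cost by the R\'enyi--Markov inequality $\Pr[P_X(X) > e^{-\gamma}] \le e^{\theta(\gamma - H_{1+\theta}(X))}$, and apply the $L_2$ leftover-hash bound (Lemma~\ref{lemma:single-random-number-leftover}) to the truncated sub-distribution. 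Optimizing $\gamma$ and gluing the two regimes together produces the uniform constant $\frac{3}{2}$ in the stated lemma.
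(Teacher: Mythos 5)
The paper does not prove this lemma itself; it imports it as Theorem 1 of \cite{hayashi:10b}, whose argument is precisely the truncation-plus-hashing route you only sketch in your final paragraph. The difficulty is that the part of your argument you actually carry out establishes the bound only where it is vacuous, while the part that would establish it where it matters is left as a sketch with a missing step. Concretely, your H\"older/R\'enyi-power-sum computation yields $\mathbb{E}_F\|P_{F(X)}-P_{\overline{U}}\|_1\le 2^{\theta/(1+\theta)}(B+2)^{1/(1+\theta)}$ with $B:=M^\theta e^{-\theta H_{1+\theta}(X)}$, and the comparison $2^\theta(B+2)\le 3^{1+\theta}B$ that you need forces $B\ge 2/(3(3/2)^\theta-1)\ge 4/7$; but for every such $B$ and every $\theta\in[0,1]$ one checks that $\tfrac32 B^{1/(1+\theta)}>1\ge\overline{\Delta}(M)$, so the claimed inequality is trivially true exactly on the set where your estimate applies. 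The entire burden therefore falls on the closing sketch.

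That sketch has a genuine gap. Lemma \ref{lemma:single-random-number-leftover} as stated is a \emph{min-entropy} bound, $2\varepsilon+\tfrac12\sqrt{Me^{-\overline{H}^\varepsilon_{\min}}}$; if you apply it to the truncated sub-distribution using only $\max_x Q(x)\le e^{-\gamma}$ together with your tail bound $e^{\theta(\gamma-H_{1+\theta}(X))}$, then optimizing $\gamma$ balances $e^{\theta\gamma-\theta H_{1+\theta}(X)}$ against $\tfrac12\sqrt{Me^{-\gamma}}$ and yields the exponent $\theta/(1+2\theta)$, not $\theta/(1+\theta)$. To obtain the stated bound you must use the $L_2$ (collision) form $\tfrac12\sqrt{M\sum_{x\in\Omega}P_X(x)^2}$ together with the interpolation $\sum_{x\in\Omega}P_X(x)^2\le e^{-(1-\theta)\gamma}e^{-\theta H_{1+\theta}(X)}$, which spends only a $(1-\theta)$ fraction of the truncation level and retains a R\'enyi factor; choosing $\gamma=(\log M+\theta H_{1+\theta}(X))/(1+\theta)$ then makes both terms equal to $K=M^{\theta/(1+\theta)}e^{-\frac{\theta}{1+\theta}H_{1+\theta}(X)}$, and the total is $K+\tfrac12K=\tfrac32K$. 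This interpolation step is the heart of the proof and is absent from your write-up. Finally, no ``gluing of two regimes'' is needed: the truncation argument alone covers all $M$ and all $\theta\in[0,1]$, and your first two paragraphs can be deleted without loss.
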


We also have the following converse bound, which is a special case of 
Lemma \ref{lemma:multi-random-number-monotonicity-converse}
\MH{ahead for the more general non-singleton case.}

\begin{lemma} \Label{lemma:single-random-number-monotonicity-converse}
We have
\begin{eqnarray}
\Delta(M) \ge \min_{H_{\min}^\varepsilon(P ) \ge \log M} \varepsilon.
\end{eqnarray}
\end{lemma}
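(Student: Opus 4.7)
The plan is to show the contrapositive-style statement: for any random number generator $f: {\cal X} \to \{1, \ldots, M\}$ achieving approximation error $\Delta[f]$, one can exhibit a probability distribution $P'$ lying in the smoothing ball ${\cal B}^{\Delta[f]}(P)$ with $\min$-entropy at least $\log M$. This immediately yields $H_{\min}^{\Delta[f]}(P) \ge \log M$ for every such $f$, and taking the infimum over $f$ gives the claimed lower bound on $\Delta(M)$.

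The key construction is to replace the marginal distribution of $f(X)$ with the uniform distribution while preserving the conditional distributions on each fiber. Concretely, I would define, for each $x \in {\cal X}$ with $P_{f(X)}(f(x)) > 0$,
\begin{eqnarray}
P'(x) := \frac{P(x)}{P_{f(X)}(f(x))} \cdot \frac{1}{M}
\end{eqnarray}
(and $P'(x) = 0$ otherwise). The total mass then telescopes to $\sum_{i=1}^M \frac{1}{M} = 1$, so $P'$ is indeed a probability distribution on ${\cal X}$. Since $P(x) \le P_{f(X)}(f(x))$ holds trivially, the bound $P'(x) \le 1/M$ is immediate for all $x$, which gives $H_{\min}(P') \ge \log M$.

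Next I would compute the variational distance. For each bin $i$, summing over $x \in f^{-1}(i)$ yields
\begin{eqnarray}
\sum_{x \in f^{-1}(i)} |P(x) - P'(x)|
= P_{f(X)}(i) \left| 1 - \frac{1/M}{P_{f(X)}(i)} \right|
= \left| P_{f(X)}(i) - \frac{1}{M} \right|,
\end{eqnarray}
so summing over $i$ gives $\|P - P'\|_1 = \|P_{f(X)} - P_{\overline{U}}\|_1 = 2 \Delta[f]$. Hence $P' \in {\cal B}^{\Delta[f]}(P)$, establishing $H_{\min}^{\Delta[f]}(P) \ge \log M$ for every $f$. Taking the infimum over $f$ and rearranging gives the converse bound.

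I do not anticipate a serious obstacle here: the proof is essentially a one-line change-of-measure argument, and the only delicate point is choosing a construction that produces a genuine probability distribution (rather than a sub-distribution), so that the smoothing is with respect to ${\cal B}^\varepsilon(P)$ rather than $\overline{{\cal B}}^\varepsilon(P)$. The ``flatten the marginal while preserving conditionals'' recipe above accomplishes exactly this, and the computation of $\|P - P'\|_1$ falls out by a direct bin-wise calculation.
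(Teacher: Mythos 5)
Your overall strategy---flatten the image distribution $P_{f(X)}$ to uniform while preserving the conditional distribution on each fiber $f^{-1}(i)$, then observe $P'(x)\le 1/M$ and compute the distance bin-wise---is the standard route to this bound (the paper itself gives no proof, deferring to \cite{renner:05b} and to the conditional version in Lemma \ref{lemma:multi-random-number-monotonicity-converse}), and in the generic case your computation is correct. However, there is a gap exactly at the point you single out as the ``only delicate point.'' Your normalization claim $\sum_x P'(x)=\sum_{i=1}^M \frac{1}{M}=1$ is false whenever some bin is empty in probability: if $P_{f(X)}(i)=0$ for $k\ge 1$ values of $i$, your $P'$ assigns those bins no mass at all, so $\sum_x P'(x)=1-k/M<1$ and $P'$ is only a sub-distribution. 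This matters because the lemma is stated for $H_{\min}^\varepsilon$, whose smoothing ball ${\cal B}^\varepsilon(P)$ is restricted to normalized distributions (the paper reserves $\overline{H}_{\min}^\varepsilon$ and $\overline{{\cal B}}^\varepsilon(P)$ for the sub-normalized case), so exhibiting a sub-distribution does not establish $H_{\min}^{\Delta[f]}(P)\ge\log M$. The companion identity $\|P-P'\|_1=\|P_{f(X)}-P_{\overline{U}}\|_1$ also fails in this case: the fibers of empty bins contribute $0$ to the left side but $1/M$ each to the right side.

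The gap is repairable, and you should make the repair explicit. With $k$ empty bins one gets $\frac{1}{2}\|P-P'\|_1=\Delta[f]-\frac{k}{2M}$, i.e.\ the sub-distribution sits strictly inside the ball by exactly half the missing mass. One can then add back the deficit $k/M$ by choosing any $\mu\ge 0$ with $\sum_x\mu(x)=k/M$ and $\mu(x)\le \frac{1}{M}-P'(x)$, which is feasible because $\sum_x\bigl(\frac{1}{M}-P'(x)\bigr)=\frac{|{\cal X}|}{M}-1+\frac{k}{M}\ge \frac{k}{M}$ whenever $|{\cal X}|\ge M$ (and when $|{\cal X}|<M$ the constraint set $\{\varepsilon: H_{\min}^\varepsilon(P)\ge\log M\}$ is empty, so that case must be excluded or handled by convention anyway). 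The triangle inequality then gives $\frac{1}{2}\|P-(P'+\mu)\|_1\le \Delta[f]-\frac{k}{2M}+\frac{k}{2M}=\Delta[f]$, and $P'+\mu$ is a genuine distribution with $H_{\min}(P'+\mu)\ge\log M$, completing the argument. Without this step (or an argument that one may restrict attention to $f$ with all bins of positive probability), the proof as written does not cover all $f$ in the infimum defining $\Delta(M)$.
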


\MH{Similar to Lemma \ref{lemma:single-random-number-leftover}, 
the bound in Lemma \ref{lemma:single-random-number-monotonicity-converse}
cannot be directly calculated in the Markovian chain.
To resolve this problem, we slightly loosen Lemma \ref{lemma:single-random-number-monotonicity-converse} as follows.}

\begin{lemma} \Label{lemma:single-random-sphere-packing-converse}
We have
\begin{eqnarray}
\Delta(M) \ge \max_{\gamma \ge 0}
\left[ \MH{P_X}\left\{ \log \frac{1}{\MH{P_X(X)}} < \gamma \right\}\left( 1 - \frac{e^\gamma}{M} \right) \right].
\Label{eq:12-12-3}
\end{eqnarray}
\end{lemma}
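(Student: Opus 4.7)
The plan is to argue directly from the variational representation of total variation distance, using only a simple counting estimate on a ``tail'' set. For any function $f:{\cal X}\to\{1,\ldots,M\}$ and any ${\cal A}\subseteq\{1,\ldots,M\}$, we have
\begin{align}
\Delta[f] = \frac{1}{2}\|P_{f(X)} - P_{\overline{U}}\|_1 \ge P_{f(X)}({\cal A}) - |{\cal A}|/M,
\end{align}
since the total variation distance equals the maximum of the right-hand side over ${\cal A}$. The task reduces to exhibiting, for each $\gamma\ge 0$, a single set ${\cal A}$ that certifies the desired lower bound.

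For each fixed $\gamma\ge 0$, I would introduce the high-probability set
\begin{align}
{\cal T}_\gamma := \{x \in {\cal X} : P_X(x) > e^{-\gamma}\},
\end{align}
so that $P_X({\cal T}_\gamma) = P_X\{\log(1/P_X(X)) < \gamma\}$, and then choose ${\cal A} := f({\cal T}_\gamma)$. The argument rests on two observations: (i) since $f^{-1}({\cal A}) \supseteq {\cal T}_\gamma$, we have $P_{f(X)}({\cal A}) \ge P_X({\cal T}_\gamma)$; and (ii) since every $x \in {\cal T}_\gamma$ contributes more than $e^{-\gamma}$ to $P_X({\cal T}_\gamma) = \sum_{x \in {\cal T}_\gamma} P_X(x)$, we obtain $|{\cal T}_\gamma| \le P_X({\cal T}_\gamma)\,e^{\gamma}$, and therefore $|{\cal A}| \le |{\cal T}_\gamma| \le P_X({\cal T}_\gamma)\,e^{\gamma}$. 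Combining (i) and (ii) gives
\begin{align}
\Delta[f] \ge P_X({\cal T}_\gamma) - \frac{P_X({\cal T}_\gamma)\,e^{\gamma}}{M} = P_X({\cal T}_\gamma)\left(1 - \frac{e^{\gamma}}{M}\right).
\end{align}
Taking the infimum over $f$ and the supremum over $\gamma\ge 0$ then yields the claimed bound \eqref{eq:12-12-3}.

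The dual representation of $\Delta[f]$ and the set containment in (i) are routine. The one conceptually crucial step, and the point where a naive approach would fail, is observation (ii): one must use the sharper estimate $|{\cal T}_\gamma| \le P_X({\cal T}_\gamma)\,e^{\gamma}$ rather than the trivial $|{\cal T}_\gamma| < e^{\gamma}$ obtained by treating each element of ${\cal T}_\gamma$ only through the bound $P_X(x)\le 1$. The loose estimate would only produce $\Delta[f] \ge P_X({\cal T}_\gamma) - e^{\gamma}/M$, which is strictly weaker than $P_X({\cal T}_\gamma)(1-e^{\gamma}/M)$ whenever $P_X({\cal T}_\gamma)<1$, and in fact can even be negative; the point of the lemma is precisely to absorb the $|{\cal A}|/M$ term into a multiplicative factor on $P_X({\cal T}_\gamma)$.
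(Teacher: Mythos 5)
Your proof is correct, and it takes a genuinely different route from the paper's. The paper derives the bound as a corollary of the smooth min-entropy converse (Lemma \ref{lemma:single-random-number-monotonicity-converse}, imported from Renner): it extracts a sub-distribution $P'$ with $\max_x P'(x)\le 1/M$ at variational distance $\Delta(M)$ from $P_X$, evaluates both measures on the tail set $\{x:\log(1/P_X(x))<\gamma\}$, and bounds $P'$ of that set by $|{\cal T}_\gamma|/M$. You instead work directly with an arbitrary $f$, compare $P_{f(X)}$ to the uniform distribution via the variational characterization of total variation, and use the pushed-forward test set ${\cal A}=f({\cal T}_\gamma)$ with $|{\cal A}|\le|{\cal T}_\gamma|$. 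The two arguments share the one quantitatively decisive ingredient --- the cardinality estimate $|{\cal T}_\gamma|\le P_X({\cal T}_\gamma)e^{\gamma}$, which you correctly identify as the step that upgrades the additive loss $e^\gamma/M$ of Lemma \ref{lemma:single-random-number-converse} to the multiplicative factor $(1-e^\gamma/M)$ --- but your version is self-contained and elementary, needing neither the smoothing machinery nor the notion of min-entropy. What the paper's route buys is uniformity of method: the same smoothing lemma is reused verbatim for the conditional version (Lemma \ref{lemma:multi-random-sphere-packing-converse}), whose proof the paper then dismisses as ``exactly the same.'' Your argument would also extend to that case (taking $y$-dependent tail sets ${\cal T}_{\gamma,y}$ and the test set $\bigcup_y f({\cal T}_{\gamma,y})\times\{y\}$), so nothing essential is lost; the choice is largely one of packaging.
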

\begin{proof}
Fix arbitrary $\gamma \ge 0$. Then, from Lemma \ref{lemma:single-random-number-monotonicity-converse}, 
there exists $P^\prime \in {\cal B}^\varepsilon(P )$ such that 
\begin{align} 
\Delta(M) &\ge \frac{1}{2} \| \MH{P_X} - P^\prime \|_1 ,
\Label{eq:12-12} \\
\log \frac{1}{\max_x P^\prime(x)} &\ge \log M.
\Label{eq:proof-single-random-converse-1}
\end{align}
Then, we have
\begin{align}
& \frac{1}{2} \| \MH{P_X} - P^\prime \|_1 
= \max_{S\subset {\cal X}} (\MH{P_X}(S)-P'(S))
\\
\ge& \MH{P_X} 
\left\{ x
: \log \frac{1}{P_X(x)} < \gamma \right\} 
- P^\prime 
\left\{x :
\log \frac{1}{\MH{P_X}(x)} < \gamma  \right\} \\
\ge& \MH{P_X} \left\{ x :
\log \frac{1}{\MH{P_X}(x)} < \gamma \right\} 
-  \frac{1}{M} \left| \left\{x: \log \frac{1}{\MH{P_X}(x)} < \gamma \right\} \right| 
 \Label{eq:proof-single-random-converse-2} \\
\ge& \MH{P_X} \left\{ x :\log \frac{1}{\MH{P_X}(x)} < \gamma \right\} 
- \frac{1}{M} \sum_{x : \log \frac{1}{\MH{P_X}(x)} < \gamma} \MH{P_X}(x) e^{\gamma} \\
=& P_X\left\{ \log \frac{1}{\MH{P_X}(X)} < \gamma \right\}\left( 1 - \frac{e^\gamma}{M} \right),
\Label{eq:12-12-2} 
\end{align}
where \eqref{eq:proof-single-random-converse-2} follows from \eqref{eq:proof-single-random-converse-1}.
\eqref{eq:12-12} and \eqref{eq:12-12-2} yield \eqref{eq:12-12-3}.
\end{proof}

\MH{Although Lemma \ref{lemma:single-random-sphere-packing-converse}
is useful for the large deviation regime and the moderate deviation regime,
it is not useful for the second order regime.
To resolve this problem, 
we loosen Lemma \ref{lemma:single-random-sphere-packing-converse} as follows.}

\begin{lemma}[Lemma 2.1.2 of \cite{han:book}] \Label{lemma:single-random-number-converse}
We have
\begin{eqnarray}
\Delta(M) \ge \max_{\gamma \ge 0} \left[ \MH{P_X}\left\{ \log \frac{1}{\MH{P_X(X)}} < \gamma \right\} - \frac{e^\gamma}{M} \right].
\end{eqnarray}
\end{lemma}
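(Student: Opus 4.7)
The plan is to give a direct distinguishing-set proof. Fix any generator $f:\mathcal{X}\to\{1,\ldots,M\}$ and any $\gamma\ge 0$. Define the ``high probability'' source atoms
\[
T \;:=\; \{x \in \mathcal{X} : P_X(x) > e^{-\gamma}\} \;=\; \Bigl\{x : \log\tfrac{1}{P_X(x)} < \gamma\Bigr\},
\]
and push them through $f$ to obtain the candidate distinguishing set $S := f(T) \subseteq \{1,\ldots,M\}$. The key identity I will use is
\[
\Delta[f] \;=\; \tfrac{1}{2}\|P_{f(X)} - P_{\overline{U}}\|_1 \;=\; \max_{S' \subseteq \{1,\ldots,M\}} \bigl[P_{f(X)}(S') - P_{\overline{U}}(S')\bigr],
\]
so in particular $\Delta[f] \ge P_{f(X)}(S) - P_{\overline{U}}(S)$, and it remains to estimate each term.

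For the first term, since $S = f(T)$ one has $T \subseteq f^{-1}(S)$, so
\[
P_{f(X)}(S) \;=\; P_X\bigl(f^{-1}(S)\bigr) \;\ge\; P_X(T) \;=\; P_X\Bigl\{\log\tfrac{1}{P_X(X)} < \gamma\Bigr\}.
\]
For the second term, each $x\in T$ contributes probability strictly greater than $e^{-\gamma}$, so the normalization $\sum_{x\in T} P_X(x) \le 1$ forces $|T| < e^\gamma$, and therefore
\[
P_{\overline{U}}(S) \;=\; \frac{|S|}{M} \;\le\; \frac{|T|}{M} \;\le\; \frac{e^\gamma}{M}.
\]
Combining the two estimates yields $\Delta[f] \ge P_X\{\log\frac{1}{P_X(X)} < \gamma\} - e^\gamma/M$. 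Taking the infimum over $f$ and then the maximum over $\gamma\ge 0$ gives the claim.

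As a sanity check (and alternative one-line proof), the bound also follows directly from Lemma~\ref{lemma:single-random-sphere-packing-converse} just proved: since $P_X\{\log\tfrac{1}{P_X(X)} < \gamma\} \le 1$,
\[
P_X\Bigl\{\log\tfrac{1}{P_X(X)}<\gamma\Bigr\}\Bigl(1-\tfrac{e^\gamma}{M}\Bigr) \;\ge\; P_X\Bigl\{\log\tfrac{1}{P_X(X)}<\gamma\Bigr\} - \tfrac{e^\gamma}{M}.
\]
There is essentially no technical obstacle here; the only point that needs care is the counting bound $|T| < e^\gamma$, which must be derived from the probability constraint rather than assumed. This looser form is nevertheless worth stating separately because, unlike the tighter Lemma~\ref{lemma:single-random-sphere-packing-converse}, it matches the standard information-spectrum converse and will be the form most directly used later for the second-order analysis.
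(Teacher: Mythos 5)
Your proof is correct. The paper itself gives no proof of this lemma: it cites it as Lemma 2.1.2 of Han's book and motivates it purely as a loosening of Lemma~\ref{lemma:single-random-sphere-packing-converse}, i.e.\ exactly your one-line ``sanity check'' $P\{\cdot\}(1-e^\gamma/M)\ge P\{\cdot\}-e^\gamma/M$ (which, as you note, follows from $P\{\cdot\}\le 1$, and in fact holds even when $e^\gamma/M>1$ since the difference is $\tfrac{e^\gamma}{M}(1-P\{\cdot\})\ge 0$). Your primary argument is a genuinely self-contained alternative: identifying the total variation distance with the maximal gap $\max_{S'}[P_{f(X)}(S')-P_{\overline U}(S')]$, taking $S=f(T)$ for the high-probability set $T=\{x:P_X(x)>e^{-\gamma}\}$, and using the counting bound $|T|<e^\gamma$ forced by normalization. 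This is essentially Han's original information-spectrum proof reconstructed from scratch, and it has the advantage of not depending on the smooth-min-entropy machinery behind Lemma~\ref{lemma:single-random-number-monotonicity-converse} that underlies the paper's Lemma~\ref{lemma:single-random-sphere-packing-converse}; the derivation-from-the-stronger-lemma route is shorter but inherits that dependence. Both steps in your direct argument are sound, including the one place needing care ($|T|<e^\gamma$ from $1\ge\sum_{x\in T}P_X(x)>|T|e^{-\gamma}$), and your closing remark about why the looser form is the one used for the second-order analysis matches the paper's stated motivation.
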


\MH{This fact implies that 
Lemma \ref{lemma:single-random-sphere-packing-converse}
is better than the previous bound given in Lemma \ref{lemma:single-random-number-converse}.}

Furthermore, by using a property of the strong universal hash family
introduced in \cite{hayashi:10b}, 
we can derive the following converse\footnote{%
The paper \cite{hayashi:10b} introduced the strong universal hash family
as a special case of a two-universal hash family.
Theorem 2 of \cite{hayashi:10b} shows that
the strong universal hash family $F$ satisfies 
$ \mathbb{E}[\Delta[F] ]
\ge \left(1-\frac{|\Omega|}{M}\right)^2 P_X(\Omega)$.}.
\begin{lemma}[Theorem 2 of \cite{hayashi:10b}] \Label{lemma:strong-universal-bound}
For any subset $\Omega \subset {\cal X}$ such that $|\Omega| \le M$, we have
\begin{eqnarray}
\overline{\Delta}(M) \ge \left(1-\frac{|\Omega|}{M}\right)^2 \MH{P_X}(\Omega).
\end{eqnarray} 
\end{lemma}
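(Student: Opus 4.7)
The plan is to reduce the claim to Theorem~2 of \cite{hayashi:10b}, which gives the identical lower bound on $\mathbb{E}[\Delta[F]]$ for any strong universal hash family $F$. Since $\overline{\Delta}(M)$ is defined as the supremum of $\mathbb{E}[\Delta[F]]$ over the broader class of two-universal families, it is enough to show that strong universality implies two-universality and then invoke the cited bound for a single strong universal family.

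First, I would verify the inclusion. A strong universal family $F$ satisfies $\Pr[F(x) = y, F(x') = y'] = 1/M^2$ for every $x \neq x'$ and every $y, y'$; summing over $y$ gives the uniform marginal $\Pr[F(x') = y'] = 1/M$, and summing over $y = y'$ gives the collision probability $\Pr[F(x) = F(x')] = 1/M$, which is exactly the two-universal property. Hence for any strong universal $F_{\mathrm{SU}}$,
\begin{equation*}
\overline{\Delta}(M) = \sup_{F \text{ two-universal}} \mathbb{E}[\Delta[F]] \ge \mathbb{E}[\Delta[F_{\mathrm{SU}}]],
\end{equation*}
and applying Theorem~2 of \cite{hayashi:10b} to the right-hand side yields the claim.

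For completeness, I briefly sketch the plausible route for the inner bound in \cite{hayashi:10b}. Using the dual form $\Delta[F] = \max_{A \subset \{1, \ldots, M\}} \bigl( P_{F(X)}(A) - |A|/M \bigr)$, one can lower bound $\Delta[F]$ by $P_X(\Omega) - |F(\Omega)|/M$ via the choice $A = F(\Omega)$, using $P_{F(X)}(F(\Omega)) \ge P_X(\Omega)$. Pairwise independence of $\{F(x)\}_{x \in \Omega}$ then allows one to control $\mathbb{E}[|F(\Omega)|]$ and related second moments by computing, for each $m$, $\Pr[m \in F(\Omega)]$ and its complement via Bonferroni-type inclusion-exclusion. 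An additional averaging step produces the squared factor $(1 - |\Omega|/M)^2$. The main obstacle of such a direct derivation is precisely this last step: the naive choice $A = F(\Omega)$ together with marginal bounds yields only a linear factor $(1 - |\Omega|/M)$, and refining to the quadratic factor requires either a joint second-moment analysis of $P_X(S)$ and $|S|$ for the collision-free subset $S = \{x \in \Omega : F^{-1}(F(x)) \cap \Omega = \{x\}\}$, or an exact computation on a concrete strong universal construction. Since the cited theorem already supplies this bound, I would not re-derive it here.
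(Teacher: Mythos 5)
Your proposal is correct and matches the paper's own justification: the paper's footnote attached to this lemma states exactly that \cite{hayashi:10b} introduced the strong universal hash family as a special case of a two-universal family and that its Theorem~2 gives $\mathbb{E}[\Delta[F]] \ge (1-|\Omega|/M)^2 P_X(\Omega)$ for that family, so the bound on the supremum $\overline{\Delta}(M)$ follows immediately. Your additional verification of the inclusion and the sketch of the inner bound are fine but not needed, since the paper likewise treats Theorem~2 of \cite{hayashi:10b} as a black box.
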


\MH{Similar to Lemmas \ref{lemma:single-random-number-leftover}
and \ref{lemma:single-random-number-monotonicity-converse}, 
the bound in Lemma \ref{lemma:strong-universal-bound}
cannot be directly calculated in the Markovian chain.
To resolve this problem, we modify Lemma \ref{lemma:strong-universal-bound} as follows.}

\begin{lemma} \Label{lemma:strong-universal-bound-tail-probability}
For any $0 < \nu < 1$, we have
\begin{eqnarray} \Label{eq:strong-universal-bound}
\overline{\Delta}(M) \ge (1-\nu)^2 
\MH{P_X} \left\{ \log \frac{1}{\MH{P_X(X)}} \le a(R ) \right\},
\end{eqnarray}
where $R = \log (M\nu)$, and $a(R )$ is the inverse function defined by \eqref{eq:definition-a-inverse-multi-markov}.
\end{lemma}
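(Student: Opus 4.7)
The plan is to apply Lemma~\ref{lemma:strong-universal-bound} with the set $\Omega := \{x \in \mathcal{X} : P_X(x) \ge e^{-a(R)}\}$, for which $P_X(\Omega)$ coincides with the tail probability $P_X\{\log 1/P_X(X) \le a(R)\}$ appearing in the claim. The task then reduces to verifying that $|\Omega|/M \le \nu$, so that the factor $(1-|\Omega|/M)^2$ produced by Lemma~\ref{lemma:strong-universal-bound} is at least $(1-\nu)^2$.

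First I would use an elementary counting bound: every $x \in \Omega$ satisfies $P_X(x) \ge e^{-a(R)}$, so summing over $\Omega$ gives $|\Omega| \le e^{a(R)} P_X(\Omega)$. Second, I would apply a Chernoff-type inequality to $P_X(\Omega)$. Writing $\theta := \theta(a(R))$ and using the pointwise bound $\mathbf{1}\{\log 1/P_X(X) \le a(R)\} \le e^{\theta(a(R) - \log 1/P_X(X))}$, which is valid because $\theta \ge 0$ in the large-deviation converse regime relevant here, taking expectations yields
\begin{equation*}
P_X\{\log 1/P_X(X) \le a(R)\} \le e^{\theta a(R)} \sum_{x} P_X(x)^{1+\theta} = e^{\theta a(R) - \theta H_{1+\theta}(X)}.
\end{equation*}
Rearranging the defining relation $R = (1+\theta) a(R) - \theta H_{1+\theta}(X)$ from \eqref{eq:definition-a-inverse-multi-markov} gives $\theta a(R) - \theta H_{1+\theta}(X) = R - a(R)$, so the Chernoff bound becomes $P_X(\Omega) \le e^{R - a(R)}$. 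Multiplying the two bounds yields $|\Omega| \le e^{a(R)} \cdot e^{R - a(R)} = e^{R} = \nu M$, whence $|\Omega|/M \le \nu$ and Lemma~\ref{lemma:strong-universal-bound} delivers
\begin{equation*}
\overline{\Delta}(M) \ge (1 - |\Omega|/M)^2 P_X(\Omega) \ge (1-\nu)^2 P_X\{\log 1/P_X(X) \le a(R)\},
\end{equation*}
which is the desired inequality.

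The main obstacle is the Chernoff step: the argument needs $\theta(a(R)) \ge 0$, corresponding to the range of $R$ where the large deviation converse is informative and the parametrisation in \eqref{eq:definition-a-inverse-multi-markov} produces a non-negative tilt. A subsidiary but essential point is the exponent collapse $\theta a(R) - \theta H_{1+\theta}(X) = R - a(R)$, which is precisely what makes the counting bound and the Chernoff bound multiply to $\nu M$ with no slack; without this algebraic identity the strategy would fail to produce the sharp constant $(1-\nu)^2$.
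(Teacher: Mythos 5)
Your proposal is correct in substance and follows essentially the same route as the paper's proof: the same set $\Omega = \{x : \log (1/P_X(x)) \le a(R)\}$, the same cardinality bound $|\Omega| \le e^{(1+\theta)a(R) - \theta H_{1+\theta}(X)} = e^{R} = M\nu$ obtained from the defining relation \eqref{eq:definition-a-inverse-multi-markov}, and the same final appeal to Lemma \ref{lemma:strong-universal-bound}. The one difference is your factorization of the exponential weighting into a counting factor $e^{a(R)}$ and a Chernoff factor $e^{\theta(a(R) - \log (1/P_X(x)))}$; as you note, the Chernoff step requires $\theta(a(R)) \ge 0$, i.e.\ $R \le H(X)$. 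The paper instead bounds the indicator of $\Omega$ directly by $e^{(1+\theta)\left(a(R) - \log (1/P_X(x))\right)}$ in a single step, which is valid whenever $1+\theta \ge 0$ and therefore covers the whole domain on which $a(R)$ is defined, including $R > H(X)$ where $\theta(a(R)) \in (-1,0)$; the lemma is stated for any $0<\nu<1$ without restricting $R$ below the entropy, so your version proves slightly less than what is claimed. The fix is one line: merge your two factors into the single exponent $(1+\theta)\left(a(R) - \log (1/P_X(x))\right)$, which is nonnegative on $\Omega$ for all $\theta > -1$, and the rest of your argument goes through unchanged and coincides with the paper's proof.
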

\begin{proof}
See Appendix \ref{appendix:lemma:strong-universal-bound-tail-probability}.
\end{proof}

To derive a converse bound for ${\Delta}(M)$ based on the R\'{e}nyi entropy,
we substitute the formula in Proposition \ref{theorem:one-shot-tail-converse-2}
in Appendix \ref{Appendix:preparation}
into the bound in  Lemma \ref{lemma:single-random-sphere-packing-converse} 
for $a = \gamma = \log (M/2)$.
So, we have the following.

\begin{theorem} \Label{theorem:random-number-exponential-converse}
We have
\begin{align}
&- \log \Delta(M) \nonumber \\
\le & \inf_{s > 0 \atop \tilde{\theta} > \theta(a)}
\frac{1}{s} 
\bigg[ (1+s) \tilde{\theta} \Big( H_{1+\tilde{\theta}}(X) - H_{1+(1+s)\tilde{\theta}}(X) \Big) \nonumber \\
&  - (1+s) \log \left( 1 - e^{(\theta(a) - \tilde{\theta}) a - \theta(a) H_{1+\theta(a)}(X) + \tilde{\theta} H_{1+\tilde{\theta}}(X)} \right)  \bigg]\nonumber \\
& + \log 2, \Label{11-23-10}
\end{align}
where $a = \log (M/2)$ and $\theta(a)$ is the inverse function defined in \eqref{eq:definition-theta-inverse-multi-markov}.
\end{theorem}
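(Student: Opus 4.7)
The plan is to combine Lemma \ref{lemma:single-random-sphere-packing-converse}, which converts $\Delta(M)$ to a tail probability of the self-information, with Proposition \ref{theorem:one-shot-tail-converse-2} from Appendix \ref{Appendix:preparation}, which bounds such a tail probability in terms of the R\'enyi entropies. So the proof is just a matter of choosing a suitable value of $\gamma$ in the sphere-packing converse and then invoking the tail bound in a black-box manner.

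Concretely, in Lemma \ref{lemma:single-random-sphere-packing-converse} I would pick $\gamma = a = \log(M/2)$, for which $1 - e^\gamma/M = 1/2$. This gives
\begin{align}
\Delta(M) \ge \frac{1}{2} P_X\!\left\{ \log \frac{1}{P_X(X)} < a \right\}.
\end{align}
Taking $-\log$ of both sides yields
\begin{align}
-\log \Delta(M) \le \log 2 - \log P_X\!\left\{ \log \frac{1}{P_X(X)} < a \right\},
\end{align}
so the desired inequality \eqref{11-23-10} reduces to an upper bound on $-\log P_X\{\log 1/P_X(X) < a\}$ by the bracketed $\inf_{s,\tilde\theta}$ expression.

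That reduction is exactly what Proposition \ref{theorem:one-shot-tail-converse-2} is meant to supply. The proposition controls the negative log-probability of a self-information tail event by evaluating the log moment generating function of $\log 1/P_X(X)$ at two different tilt parameters, $\theta(a)$ (the tilt matched to the threshold $a$) and a free parameter $\tilde\theta > \theta(a)$, together with a Chebyshev/Markov type scaling factor $1+s$. Substituting the standard identity $H_{1+\theta}(X) = -\frac{1}{\theta}\log \sum_x P_X(x)^{1+\theta}$ rewrites the cumulant generating function in terms of R\'enyi entropies and gives precisely the bracketed expression appearing in \eqref{11-23-10}. Taking the infimum over $s>0$ and $\tilde\theta > \theta(a)$ and adding the $\log 2$ term concludes the proof.

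Since the real work is done inside Proposition \ref{theorem:one-shot-tail-converse-2} (which is stated in the appendix and whose proof we may treat as a given), the hard part here is conceptual rather than computational: namely, recognising that one should (i) pick $\gamma = \log(M/2)$ rather than something closer to the entropy so as to get a clean factor $1/2$ and retain a valid tail probability, and (ii) exploit the freedom in the parameter $\tilde\theta$ to absorb the logarithmic $1 - e^{\cdots}$ correction into the R\'enyi-entropy form. No Markov-chain-specific argument is needed at this stage; the theorem is a single-shot statement, and the subsequent Markovian consequences follow by combining it with the asymptotic characterisations of $H_{1+\theta}^W(X)$ and $\theta(a)$ developed in Section \ref{subsection:multi-terminal-measures-markov}.
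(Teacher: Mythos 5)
Your proposal is correct and follows essentially the same route as the paper: the paper's proof also takes $\gamma = a = \log(M/2)$ in Lemma \ref{lemma:single-random-sphere-packing-converse} so that $1-e^{\gamma}/M = 1/2$, and then applies Proposition \ref{theorem:one-shot-tail-converse-2} to the random variable $\log P_X(X)$, whose CGF equals $-\theta H_{1+\theta}(X)$, with $\rho(-a)=\theta(a)$. The only cosmetic difference is the sign convention (the paper works with $\log P_X(X)$ and the upper-tail branch of the proposition rather than $\log 1/P_X(X)$), which does not affect the argument.
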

\begin{proof}
We evaluate $- \log \Delta(M) $ by using Lemma \ref{lemma:single-random-sphere-packing-converse}.
To evaluate the probability 
$P_X\left\{ \log \frac{1}{P_X(X)} < a \right\}
=P_X\left\{ \log P_X(X) > - a \right\}$,
we apply 
Proposition \ref{theorem:one-shot-tail-converse-2} in Appendix \ref{Appendix:preparation}
to the random variable $\log P_X(X) $
whose cumulant generating function $\phi(\rho)$ is 
$-\theta H_{1+\theta}(X)$.
Then, $\rho(-a)= \theta(a)$.
Hence,
\begin{align}
& - \log P_X\left\{ \log P_X(X) > - a \right\} 
\nonumber \\
\le & \inf_{s > 0 \atop \tilde{\theta} > \theta(a)}\frac{1}{s}
 \bigg[ (1+s) \tilde{\theta} \Big( H_{1+\tilde{\theta}}(X) - H_{1+(1+s)\tilde{\theta}}(X) \Big) \nonumber \\
 &   -  (1+s) \log \left( 1  -  e^{(\theta(a) - \tilde{\theta}) a 
 -  \theta(a) H_{1+\theta(a)}(X) + \tilde{\theta} H_{1+\tilde{\theta}}(X)} \right)  \bigg]  \Label{11-23-12}.
\end{align}
Since $1-\frac{e^\gamma}{M}=\frac{1}{2}$, we obtain \eqref{11-23-10}.
\end{proof}

To derive a converse bound for $\overline{\Delta}(M)$ based on the R\'{e}nyi entropy,
we substitute the formula in Proposition \ref{theorem:one-shot-tail-converse-2}
in Appendix \ref{Appendix:preparation} into the bound in  Lemma \ref{lemma:strong-universal-bound-tail-probability} 
for $\nu = \frac{1}{2}$.
So, we have the following.

\begin{theorem} \Label{theorem:random-number-strong-universal-exponential-converse}
We have
\begin{align}
&- \log \overline{\Delta}(M) \nonumber \\
\le & \inf_{s > 0 \atop \tilde{\theta} > \theta(a(R ))} \frac{1}{s}
 \Bigg[ (1+s) \tilde{\theta} \biggl( H_{1+\tilde{\theta}}(X) - H_{1+(1+s)\tilde{\theta}}(X) \biggr) 
\nonumber \\
 & \hspace{2ex} - (1+s) \log \bigg( 1 
\nonumber \\
 & \hspace{5ex}
- e^{(\theta(a(R )) - \tilde{\theta}) a(R ) - \theta(a(R )) H_{1+\theta(a(R ))}(X) + \tilde{\theta} H_{1+\tilde{\theta}}(X)} 
\biggr)  \Bigg] 
\nonumber \\
 &
+ 2 \log 2,\Label{11-23-14}
\end{align}
where $R = \log (M /2)$, and $\theta(a)$ and $a(R )$ are the inverse functions defined in
\eqref{eq:definition-theta-inverse-multi-markov} and \eqref{eq:definition-a-inverse-multi-markov}.
\end{theorem}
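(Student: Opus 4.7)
The plan is to mirror exactly the proof of Theorem \ref{theorem:random-number-exponential-converse}, replacing the single-shot lower bound of Lemma \ref{lemma:single-random-sphere-packing-converse} by its universal-hash counterpart, Lemma \ref{lemma:strong-universal-bound-tail-probability}, and applying the same tail-to-R\'enyi conversion from Proposition \ref{theorem:one-shot-tail-converse-2}.

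First I would start from Lemma \ref{lemma:strong-universal-bound-tail-probability} with $\nu=\tfrac{1}{2}$. This gives
\begin{align}
\overline{\Delta}(M)\;\ge\;\tfrac{1}{4}\,P_X\!\left\{\log\tfrac{1}{P_X(X)}\le a(R)\right\},
\end{align}
where $R=\log(M/2)$ and $a(R)$ is the inverse function defined in \eqref{eq:definition-a-inverse-multi-markov}. Taking $-\log$ of both sides yields
\begin{align}
-\log\overline{\Delta}(M)\;\le\;2\log 2\;-\;\log P_X\!\left\{\log P_X(X)\ge -a(R)\right\}.
\end{align}

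Next I would bound the remaining $-\log$-probability term via Proposition \ref{theorem:one-shot-tail-converse-2}. Exactly as in the proof of Theorem \ref{theorem:random-number-exponential-converse}, I apply that proposition to the random variable $\log P_X(X)$, whose cumulant generating function is $\phi(\theta)=-\theta H_{1+\theta}(X)$; hence for the argument $-a(R)$ the associated saddle point is $\rho(-a(R))=\theta(a(R))$, where $\theta(\cdot)$ is the inverse function of \eqref{eq:definition-theta-inverse-multi-markov}. Substituting this into the proposition produces the $\inf_{s>0,\,\tilde{\theta}>\theta(a(R))}$ expression appearing inside the brackets of \eqref{11-23-14}.

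Combining the additive $2\log 2$ with the bracketed expression yields the claimed inequality. The only subtlety—really the only place where this proof differs in substance from that of Theorem \ref{theorem:random-number-exponential-converse}—is the bookkeeping of the prefactor: using Lemma \ref{lemma:strong-universal-bound-tail-probability} with $\nu=\tfrac{1}{2}$ gives a constant $(1-\nu)^2=\tfrac{1}{4}$, which contributes $2\log 2$ (instead of the single $\log 2$ that arose from the factor $(1-e^\gamma/M)=\tfrac{1}{2}$ in the proof of Theorem \ref{theorem:random-number-exponential-converse}). The tail-probability-to-R\'enyi conversion step is otherwise identical, so I do not expect any genuine analytic obstacle; the technical work was already absorbed into Proposition \ref{theorem:one-shot-tail-converse-2} and Lemma \ref{lemma:strong-universal-bound-tail-probability}.
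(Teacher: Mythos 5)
Your proposal is correct and matches the paper's own proof essentially verbatim: the paper likewise applies Lemma \ref{lemma:strong-universal-bound-tail-probability} with $\nu=\tfrac{1}{2}$ and then reuses the tail-probability bound \eqref{11-23-12} (i.e., Proposition \ref{theorem:one-shot-tail-converse-2} applied to $\log P_X(X)$ with $\rho(-a(R))=\theta(a(R))$), with the prefactor $(1-\nu)^2=\tfrac{1}{4}$ supplying the $2\log 2$ term. No gaps.
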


\begin{proof}
We evaluate $- \log \overline{\Delta}(M) $ by using Lemma \ref{lemma:strong-universal-bound-tail-probability} with $\nu=\frac{1}{2}$.
The probability 
$P_X\left\{ \log \frac{1}{P_X(X)} < a(R) \right\}
=
P_X\left\{ \log P_X(X) > - a(R) \right\}
$ can be evaluated by \eqref{11-23-12}.
Since $(1-\nu)^2 =\frac{1}{2^2}$, we obtain \eqref{11-23-14}.
\end{proof}

Finally, we address the \MH{relative entropy rate}.
As the direct part, we have the following theorem.

\begin{theorem}\Label{Th11-25-1}
\MH{The relative entropy} $\overline{D}(M) $ is evaluated as
\begin{align}
\overline{D}(M) 
\le \frac{1}{\theta}\log ( 1+ M^\theta e^{-\theta H_{1+\theta}(X)}).
\end{align}
\end{theorem}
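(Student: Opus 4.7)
The plan is to bound the averaged relative entropy by replacing Shannon entropy with R\'enyi entropy of order $1+\theta$, and then use Jensen's inequality together with the two-universal collision estimate. Since $P_{\overline{U}}$ is uniform on $\{1,\dots,M\}$,
\begin{align*}
D(P_{F(X)}\|P_{\overline{U}}) = \log M - H(P_{F(X)}) \le \log M - H_{1+\theta}(P_{F(X)})
\end{align*}
for $\theta>0$ by monotonicity of the R\'enyi entropy in its order. Expanding $H_{1+\theta}$ via its definition and using Jensen's inequality to pull $\mathbb{E}_F$ inside the logarithm,
\begin{align*}
\mathbb{E}_F[D(P_{F(X)}\|P_{\overline{U}})] \le \log M + \tfrac{1}{\theta}\log \mathbb{E}_F\!\left[\sum_{z} P_{F(X)}(z)^{1+\theta}\right],
\end{align*}
so the task reduces to bounding the averaged $(1+\theta)$-th-power sum.

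For this key step, I would restrict $\theta\in(0,1]$ and rewrite
\begin{align*}
\sum_z P_{F(X)}(z)^{1+\theta} = \sum_x P_X(x)\, P_{F(X)}(F(x))^\theta.
\end{align*}
Concavity of $y\mapsto y^\theta$ on $\theta\in(0,1]$ yields $\mathbb{E}_F[P_{F(X)}(F(x))^\theta] \le \mathbb{E}_F[P_{F(X)}(F(x))]^\theta$, and the two-universal property of $F$ gives
\begin{align*}
\mathbb{E}_F[P_{F(X)}(F(x))] = P_X(x) + \sum_{x'\neq x} P_X(x')\,\mathbb{P}_F[F(x')=F(x)] \le P_X(x) + \tfrac{1}{M}.
\end{align*}
Applying subadditivity $(a+b)^\theta \le a^\theta + b^\theta$ (valid again for $\theta\in(0,1]$), summing over $x$ against $P_X$, and using $\sum_x P_X(x)^{1+\theta} = e^{-\theta H_{1+\theta}(X)}$ gives
\begin{align*}
\mathbb{E}_F\!\left[\sum_z P_{F(X)}(z)^{1+\theta}\right] \le e^{-\theta H_{1+\theta}(X)} + M^{-\theta}.
\end{align*}

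Substituting and factoring $M^{-\theta}$ out of the logarithm,
\begin{align*}
\mathbb{E}_F[D(P_{F(X)}\|P_{\overline{U}})] \le \log M + \tfrac{1}{\theta}\log\!\left(M^{-\theta}\bigl(1 + M^\theta e^{-\theta H_{1+\theta}(X)}\bigr)\right) = \tfrac{1}{\theta}\log\!\left(1 + M^\theta e^{-\theta H_{1+\theta}(X)}\right),
\end{align*}
which gives the advertised bound after taking the supremum over two-universal hash families. The main technical hurdle is the tight estimate of $\mathbb{E}_F[\sum_z P_{F(X)}(z)^{1+\theta}]$: one must combine Jensen's inequality with the two-universal collision bound and the subadditivity of $y^\theta$ in a single pass, and the sharpness of the constant $1$ on the right-hand side depends on all three steps being simultaneously available in the regime $\theta\in(0,1]$.
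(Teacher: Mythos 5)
Your proposal is correct and follows essentially the same route as the paper: monotonicity of the R\'enyi entropy in its order, Jensen's inequality to pull $\mathbb{E}_F$ inside $\frac{1}{\theta}\log(\cdot)$, and the collision estimate $\mathbb{E}_F\big[\sum_z P_{F(X)}(z)^{1+\theta}\big] \le e^{-\theta H_{1+\theta}(X)} + M^{-\theta}$ for $\theta\in(0,1]$. The only difference is that the paper imports this last estimate as Lemma 10 of \cite{matsumoto-hayashi:11}, whereas you derive it from scratch via concavity of $y\mapsto y^\theta$, two-universality, and subadditivity of $y^\theta$ --- which is precisely the content of that cited lemma.
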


\begin{proof}
Lemma 10 of \cite{matsumoto-hayashi:11} shows that
any two-universal hash function $F$ satisfies 
the relation
\begin{align}
\mathbb{E}
[M^\theta e^{-\theta H_{1+\theta}(F(X))}]
\le 1+ M^\theta e^{-\theta H_{1+\theta}(X)},
\end{align}
which implies that
$\mathbb{E} [\log M - H(F(X))]
\le 
\mathbb{E} [\log M - H_{1+\theta}(F(X))]
=\mathbb{E}
\frac{1}{\theta} \log 
 (M^\theta e^{-\theta H_{1+\theta}(F(X))})
\le
\frac{1}{\theta} \log 
\mathbb{E}
 (M^\theta e^{-\theta H_{1+\theta}(F(X))})
\le
\frac{1}{\theta}\log ( 1+ M^\theta e^{-\theta H_{1+\theta}(X)})$.
\end{proof}

As the converse part, we have the following theorem.
\begin{proposition}\Label{Th11-25-2}
\begin{align}
D(M) \ge \log M - H(P_X)
\Label{11-25-1}
\end{align}
\end{proposition}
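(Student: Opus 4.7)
The plan is straightforward: rewrite the divergence in terms of entropy and then apply the data-processing inequality for the Shannon entropy.

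First, I would observe that for any function $f : {\cal X} \to \{1,\dots,M\}$, since $P_{\overline{U}}$ is uniform on $\{1,\dots,M\}$,
\begin{align}
D(P_{f(X)} \| P_{\overline{U}})
&= \sum_{z=1}^{M} P_{f(X)}(z) \log \frac{P_{f(X)}(z)}{1/M} \nonumber\\
&= \log M - H(P_{f(X)}).
\end{align}
This identity is already used implicitly in the paragraph introducing $D(M)$ in Subsection \ref{subsection:single-random-problem-formulation}, so it needs only to be written out.

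Next, I would invoke the data-processing inequality $H(f(X)) \le H(X)$, which holds because $f(X)$ is a deterministic function of $X$ and hence $H(f(X)\mid X) = 0$, giving $H(X) = H(X, f(X)) = H(f(X)) + H(X \mid f(X)) \ge H(f(X))$. Combining with the previous identity yields
\begin{align}
D(P_{f(X)} \| P_{\overline{U}}) \ge \log M - H(P_X)
\end{align}
for every admissible $f$. Taking the infimum over $f$ on the left-hand side gives the claimed bound \eqref{11-25-1}.

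There is no real obstacle here; the only thing to watch is that the bound is vacuous when $\log M \le H(P_X)$ (in which case the left-hand side may actually be zero, e.g.\ by randomization into a distribution whose induced $P_{f(X)}$ is uniform). Since Proposition \ref{Th11-25-2} is a lower bound, this vacuity is harmless, but it clarifies why we cannot expect a tighter bound of the same form in the regime where uniformization is easy. Hence the proof is essentially a two-line argument and does not require any of the R\'enyi machinery developed earlier in the section.
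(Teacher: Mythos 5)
Your proof is correct and follows exactly the paper's own argument: the identity $D(P_{f(X)}\|P_{\overline{U}})=\log M-H(P_{f(X)})$ (already noted in the problem formulation) combined with $H(P_{f(X)})\le H(P_X)$, then an infimum over $f$. Nothing further is needed.
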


\begin{proof}
Inequality \eqref{11-25-1} follows from the inequality
$H(P_X) \ge H(P_{f(X)})$.
\end{proof}

\subsection{\MH{Finite-Length} Bounds for Markov Source} \Label{subsection:single-random-finite-markov}
\MH{In this subsection, we derive several finite-length bounds for Markovian source with 
a computable form.
Unfortunately, it is not easy to evaluate how tight these bounds are only with their formula.
Their tightness will be discussed 
by considering the asymptotic limit 
in the remaining subsections of this section.
Since we assume the irreducibility for the transition matrix describing the Markovian chain,
the following bounds hold with any initial distribution.}

To lower bound $- \log \overline{\Delta}(M_n)$ by the R\'{e}nyi entropy of transition matrix,
we substitute the formula 
for the R\'{e}nyi entropy given in Lemma \ref{lemma:mult-terminal-finite-evaluation-down-conditional-renyi}
into the bound in Lemma \ref{lemma:single-random-number-exponential-bound},
we have the following bound.

\begin{theorem} \Label{theorem:single-random-finite-markov-direct-1}
Let $R := \frac{1}{n}\log M_n$. Then
we have
\begin{align}
&- \log \overline{\Delta}(M_n) 
\nonumber \\
\ge & \sup_{0 \le \theta \le 1} \frac{-\theta n R + (n-1) \theta H_{1+\theta}^W(X)  + \underline{\delta}(\theta)}{1+\theta} -\log(3/2).
 \Label{eq:single-random-finite-markov-direct-1-1} 
 \end{align}
\end{theorem}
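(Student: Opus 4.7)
The plan is to obtain this finite-length bound by directly chaining the exponential single-shot achievability bound (Lemma~\ref{lemma:single-random-number-exponential-bound}) with the transition-matrix evaluation of the R\'{e}nyi entropy (Lemma~\ref{lemma:mult-terminal-finite-evaluation-down-conditional-renyi}) specialized to the single-terminal case (i.e.\ ${\cal Y}$ singleton, see Remark~\ref{rem11-14-1}).

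First, I would apply Lemma~\ref{lemma:single-random-number-exponential-bound} to the $n$-fold source $X^n$, which yields
\begin{align}
\overline{\Delta}(M_n) \le \inf_{0\le\theta\le 1} \frac{3}{2}\, M_n^{\frac{\theta}{1+\theta}}\, e^{-\frac{\theta}{1+\theta} H_{1+\theta}(X^n)}.
\end{align}
To make this bound computable in terms of $W$, I need a lower bound on $\theta H_{1+\theta}(X^n)$ for $\theta\in[0,1]$. This is exactly the left-hand inequality of Lemma~\ref{lemma:mult-terminal-finite-evaluation-down-conditional-renyi}, which in the single-terminal case reads
\begin{align}
\theta H_{1+\theta}(X^n) \ge (n-1)\theta H_{1+\theta}^W(X) + \underline{\delta}(\theta).
\end{align}
Substituting this into the exponent gives
\begin{align}
\overline{\Delta}(M_n) \le \inf_{0\le\theta\le 1} \frac{3}{2}\, e^{\frac{\theta n R - (n-1)\theta H_{1+\theta}^W(X) - \underline{\delta}(\theta)}{1+\theta}},
\end{align}
after using $\log M_n = nR$. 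Taking $-\log$ on both sides and turning $\inf$ into $\sup$ immediately delivers \eqref{eq:single-random-finite-markov-direct-1-1}.

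The step that requires the most care is checking the direction of the inequality in Lemma~\ref{lemma:mult-terminal-finite-evaluation-down-conditional-renyi}: because the exponential bound carries $-\frac{\theta}{1+\theta}H_{1+\theta}(X^n)$ in the exponent, I need the lower bound on $\theta H_{1+\theta}(X^n)$, which is valid for $\theta\ge 0$ and thus covers the range $\theta\in[0,1]$ of the optimization in Lemma~\ref{lemma:single-random-number-exponential-bound}. Since the Markov chain is assumed ergodic and irreducible, the finite correction $\underline{\delta}(\theta)$ from Lemma~\ref{lemma:mult-terminal-finite-evaluation-down-conditional-renyi} (which depends on the initial distribution through $w_\theta$ and on the Perron eigenvector $v_\theta$) is well-defined for every initial distribution, so the resulting bound holds uniformly. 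No real obstacle arises; the argument is essentially a substitution, and its value lies in packaging the exponential-type achievability bound into a form whose computational complexity does not grow with $n$.
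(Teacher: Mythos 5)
Your proposal is correct and follows exactly the paper's own route: the paper derives Theorem \ref{theorem:single-random-finite-markov-direct-1} precisely by substituting the lower bound on $\theta H_{1+\theta}(X^n)$ from Lemma \ref{lemma:mult-terminal-finite-evaluation-down-conditional-renyi} (single-terminal case) into the single-shot exponential bound of Lemma \ref{lemma:single-random-number-exponential-bound} applied to $X^n$. Your check that the needed direction of the inequality on $\theta H_{1+\theta}(X^n)$ is the left-hand one, valid over $\theta\in[0,1]$, is the only point requiring care, and you handle it correctly.
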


To upper bound $- \log {\Delta}(M_n)$ by the R\'{e}nyi entropy of transition matrix,
we substitute the formula 
for the tail probability given in 
and Proposition \ref{proposition:general-markov-tail-converse}
with $a=R$
into the bound in 
Lemma \ref{lemma:single-random-sphere-packing-converse} 
with $\gamma=nR$,
we have the following bound.

\begin{theorem} \Label{theorem:single-random-sphere-packing-converse-finite-markov}
Let $R = \frac{1}{n}\log (M_n/2)$. If $\underline{a} < R < H^W(X)$, then we have
\begin{align}
& -\log \Delta(M_n) \nonumber \\
\le & \inf_{s > 0 \atop \tilde{\theta} > \theta(R )}
\frac{1}{s}\Bigg[
  (n-1)(1+s) \tilde{\theta} \Bigl[ H_{1+\tilde{\theta}}^W(X) \!-\! H_{1+(1+s)\tilde{\theta}}^W(X) \Bigr]
\!+\! \delta_1 \nonumber \\
&  - (1+s) \log \biggl(1 
\nonumber \\
&\hspace{6ex}- e^{(n-1)[(\theta(R ) - \tilde{\theta}) R - \theta(R ) H_{1+\theta(R )}^W(X) + \tilde{\theta} H_{1+\tilde{\theta}}^W(X)] + \delta_2} 
\biggr)  
   \Bigg] 
\nonumber \\
&+\log 2, \Label{11-21-7}
\end{align}
where $\theta(a)$ is the inverse function defined in \eqref{eq:definition-theta-inverse-multi-markov}, and 
\begin{eqnarray}
\delta_1 &=& (1+s) \overline{\delta}(\tilde{\theta}) - \underline{\delta}((1+s)\tilde{\theta}), \\
\delta_2 &=& (\theta(R ) - \tilde{\theta}) R + \overline{\delta}(\tilde{\theta}) - \underline{\delta}(\theta(R )).
\end{eqnarray}
\end{theorem}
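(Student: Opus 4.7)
The plan is to mirror the proof of Theorem \ref{theorem:random-number-exponential-converse}, but now applied to the product source $X^n$, and then to translate the finite-sample R\'{e}nyi entropies of $X^n$ into the transition-matrix R\'{e}nyi entropies via Lemma \ref{lemma:mult-terminal-finite-evaluation-down-conditional-renyi}. Concretely, I start from Lemma \ref{lemma:single-random-sphere-packing-converse} with the choice $\gamma = nR = \log(M_n/2)$. This makes the factor $1 - e^{\gamma}/M_n$ equal to $1/2$, so
\begin{align*}
-\log \Delta(M_n) \le \log 2 - \log P_{X^n}\!\left\{\log \tfrac{1}{P_{X^n}(X^n)} < nR \right\}.
\end{align*}

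Next, I invoke Proposition \ref{theorem:one-shot-tail-converse-2} for the random variable $\log P_{X^n}(X^n)$ whose cumulant generating function at parameter $\theta$ is $-\theta H_{1+\theta}(X^n)$. Writing $\theta_n(\cdot)$ for the single-shot inverse function associated to $X^n$, this yields for any $s>0$ and $\tilde\theta > \theta_n(nR)$ a bound of the same form as \eqref{11-23-12}, namely
\begin{align*}
&-\log P_{X^n}\!\left\{\log P_{X^n}(X^n) > -nR\right\} \\
&\le \tfrac{1}{s}\Bigl[(1+s)\tilde\theta\bigl(H_{1+\tilde\theta}(X^n)-H_{1+(1+s)\tilde\theta}(X^n)\bigr) - (1+s)\log\bigl(1-e^{E_n(\tilde\theta)}\bigr)\Bigr],
\end{align*}
where $E_n(\tilde\theta) := (\theta_n(nR)-\tilde\theta)nR - \theta_n(nR)H_{1+\theta_n(nR)}(X^n) + \tilde\theta H_{1+\tilde\theta}(X^n)$.

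The third step is to feed in the two-sided estimates of Lemma \ref{lemma:mult-terminal-finite-evaluation-down-conditional-renyi}. For the difference term outside the logarithm, I upper-bound $\tilde\theta H_{1+\tilde\theta}(X^n)$ by the $\overline{\delta}$ side and lower-bound $(1+s)\tilde\theta H_{1+(1+s)\tilde\theta}(X^n)$ by the $\underline{\delta}$ side; subtracting produces exactly the single-letter difference $(n-1)(1+s)\tilde\theta[H_{1+\tilde\theta}^W(X) - H_{1+(1+s)\tilde\theta}^W(X)]$ plus the error $\delta_1 = (1+s)\overline{\delta}(\tilde\theta) - \underline{\delta}((1+s)\tilde\theta)$. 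For $E_n(\tilde\theta)$, the goal is to replace $\theta_n(nR)$ by $\theta(R)$; here Lemma \ref{L10} gives $\theta_n(nR)H_{1+\theta_n(nR)}(X^n) - \theta_n(nR)nR \ge \theta(R)H_{1+\theta(R)}(X^n) - \theta(R)nR$ (since $\theta(R)$ is only a candidate in the defining supremum for $X^n$), which upper-bounds $E_n(\tilde\theta)$ by the analogous quantity with $\theta_n(nR)$ replaced by $\theta(R)$. Applying the $\underline{\delta}$ estimate to $\theta(R)H_{1+\theta(R)}(X^n)$ and the $\overline{\delta}$ estimate to $\tilde\theta H_{1+\tilde\theta}(X^n)$ then yields $E_n(\tilde\theta) \le (n-1)[(\theta(R)-\tilde\theta)R - \theta(R)H_{1+\theta(R)}^W(X) + \tilde\theta H_{1+\tilde\theta}^W(X)] + \delta_2$ with $\delta_2 = (\theta(R)-\tilde\theta)R + \overline{\delta}(\tilde\theta) - \underline{\delta}(\theta(R))$. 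Substituting both estimates delivers \eqref{11-21-7}, after collecting the additive $\log 2$.

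The main obstacle is bookkeeping of the inequality directions: each of the three entropy terms appearing in the single-shot bound must be estimated either by $\overline{\delta}$ or by $\underline{\delta}$ so that the resulting inequality points the correct way (the difference outside the log must be enlarged, and the exponent inside the log must also be enlarged while staying strictly negative so $-\log(1-e^{(\cdot)})$ remains finite). Using Lemma \ref{L10} to exchange $\theta_n(nR)$ with $\theta(R)$ is the step that couples the $n$-fold quantity to the single-letter one, and the hypothesis $\underline{a} < R < H^W(X)$ is what guarantees that $\theta(R)$ (and hence $\tilde\theta > \theta(R)$) lies in the regime where the exponent inside the logarithm is negative for $n$ large enough, so the infimum is well-defined.
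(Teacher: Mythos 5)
Your argument is sound and lands on exactly the paper's constants, but it follows a different route from the paper's. The paper's proof is a one-line reduction: it repeats the proof of Theorem \ref{theorem:random-number-exponential-converse} (Lemma \ref{lemma:single-random-sphere-packing-converse} with $\gamma=nR$, so that $1-e^\gamma/M_n=1/2$) but replaces the single-shot tail estimate, Proposition \ref{theorem:one-shot-tail-converse-2}, by its Markov-chain counterpart, Proposition \ref{proposition:general-markov-tail-converse}, applied to $S_n=-\log P_{X^n}(X^n)$ with $\rho=-\theta$. That proposition already carries the transition-matrix CGF $(n-1)\phi(\rho)$, the \emph{single-letter} inverse function $\rho(a)$ of \eqref{eq:definition-inverse-general-markov}, and the corrections $\overline{\delta}_\phi,\underline{\delta}_\phi$ in its statement, so $\delta_1$ and $\delta_2$ drop out immediately. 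You instead apply the single-shot proposition to the $n$-fold source and convert by hand: Lemma \ref{lemma:mult-terminal-finite-evaluation-down-conditional-renyi} for the three R\'enyi-entropy terms (your bookkeeping of which term gets $\overline{\delta}$ and which gets $\underline{\delta}$, and the split $nR=(n-1)R+R$ hidden in $\delta_2$, are correct), plus the Legendre/argmax property to trade the $n$-fold optimizer $\theta_n(nR)$ for the single-letter $\theta(R)$. In effect you re-prove Proposition \ref{proposition:general-markov-tail-converse} inline; what the paper's packaging buys is that the swap of inverse functions never has to be made explicitly. (A small citation point: the swap you need is the maximizer property of the concave map $\theta\mapsto\theta H_{1+\theta}(X^n)$, i.e., Statement 1 of Lemma \ref{lemma:multi-terminal-single-shot-property} together with the definition of $\theta_n(nR)$; Lemma \ref{L10} as stated concerns the transition-matrix quantity, not the $n$-fold one.)

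The one step that needs patching is the range of $\tilde{\theta}$. The restricted form of Proposition \ref{theorem:one-shot-tail-converse-2} that you quote is valid only for $\tilde{\theta}>\theta_n(nR)$, the $n$-fold inverse, whereas the theorem's infimum runs over $\tilde{\theta}>\theta(R)$; the two thresholds need not coincide, and $\theta_n(nR)$ may exceed $\theta(R)$. For $\tilde{\theta}\in(\theta(R),\theta_n(nR)]$ you have therefore not justified that the displayed expression upper-bounds $-\log\Delta(M_n)$, and an infimum taken over a set containing unjustified points need not remain a valid upper bound. The fix is to use the unrestricted form \eqref{eq:tail-converse-2-0-opposite-c} of the proposition with the free parameter $\sigma$ set so that the shifted tilting point equals $\theta(R)$ (i.e., $\sigma$ corresponding to $\tilde{\theta}-\theta(R)\ge 0$), which produces the exponent containing $\theta(R)$ directly and requires only $\tilde{\theta}>\theta(R)$; this is exactly how the paper handles the analogous issue in the proof of Theorem \ref{theorem:multi-random-finite-markov-assumption-2-converse}. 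With that adjustment your derivation is complete.
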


\begin{proof}
Theorem \ref{theorem:single-random-sphere-packing-converse-finite-markov}
can be shown by the same way as Theorem 
\ref{theorem:random-number-exponential-converse} 
with replacing the role of Proposition \ref{theorem:one-shot-tail-converse-2} in Appendix \ref{Appendix:preparation}
 by Proposition \ref{proposition:general-markov-tail-converse}.
\end{proof}
\MH{To upper bound $- \log \overline{\Delta}(M_n)$ by the R\'{e}nyi entropy of transition matrix,
we substitute the formula 
for the tail probability given in 
and Proposition \ref{proposition:general-markov-tail-converse}
with $a=R$ into the bound in Lemma \ref{lemma:strong-universal-bound},
we have the following bound.}

\begin{theorem} \Label{theorem:single-random-number-strong-universal-finite-markov-converse}
Let $R$ be such that
\begin{align}
&(n-1) R + \left\{ (1+\theta(a(R ))) a(R ) - \underline{\delta}(\theta(a(R ))) \right\} 
\nonumber \\
=& \log (M_n/2).
\end{align}
If $R(\underline{a}) < R < H^W(X)$, then we have
\begin{align}
\lefteqn{ -\log \overline{\Delta}(M_n) } \nonumber \\
  \le & \inf_{s > 0 \atop \tilde{\theta} > \theta(a(R ))}
\frac{1}{s}
\bigg[
  (n-1)(1+s)\tilde{\theta} \bigg( H_{1+\tilde{\theta}}^W(X) - H_{1+(1+s)\tilde{\theta}}^W(X) \bigg) 
\nonumber \\
 & \hspace{15ex} + \delta_1 
 - (1+s) \log \left(1 - e^{C_{1,n}} \right)  
   \bigg]  +2\log 2, 
   \Label{11-21-9}
\end{align}
where $\theta(a)$ and $a(R )$ are the inverse functions defined in \eqref{eq:definition-theta-inverse-multi-markov}
and \eqref{eq:definition-a-inverse-multi-markov}, and 
\begin{align}
C_{1,n}:=&
(n-1)\bigg[(\theta(a(R ))- \tilde{\theta})a(R )
\nonumber \\
& \hspace{6ex}
- \theta(a(R ) ) H_{1+\theta(a(R ))}^W(X) 
+ \tilde{\theta} H_{1+\tilde{\theta}}^W(X)\bigg] + \delta_2,
\nonumber \\
\delta_1 :=& (1+s) \overline{\delta}(\tilde{\theta}) - \underline{\delta}((1+s)\tilde{\theta}), \nonumber\\
\delta_2 :=& (\theta(a(R )) - \tilde{\theta})a(R ) + \overline{\delta}(\tilde{\theta}) - \underline{\delta}(\theta(a(R ))).
\nonumber
\end{align}
\end{theorem}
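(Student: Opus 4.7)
The plan is to mirror the argument used for Theorem \ref{theorem:random-number-strong-universal-exponential-converse}, but carried out at the level of the $n$-fold source, and to bridge the $n$-fold R\'enyi quantities to their Markov-level counterparts via Lemma \ref{lemma:mult-terminal-finite-evaluation-down-conditional-renyi}. Observe that the relation between the single-shot theorem and this Markov theorem is exactly analogous to the one between Theorem \ref{theorem:random-number-exponential-converse} and Theorem \ref{theorem:single-random-sphere-packing-converse-finite-markov}, whose proof consisted of substituting the Markov tail bound (Proposition \ref{proposition:general-markov-tail-converse}) for the single-shot tail bound (Proposition \ref{theorem:one-shot-tail-converse-2}). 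The same pattern should work here, with Lemma \ref{lemma:strong-universal-bound-tail-probability} playing the role of Lemma \ref{lemma:single-random-sphere-packing-converse}.

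Concretely, I would first apply Lemma \ref{lemma:strong-universal-bound-tail-probability} to $P_{X^n}$ with $\nu=1/2$, obtaining
\begin{align}
\overline{\Delta}(M_n) \ge \frac{1}{4}\, P_{X^n}\!\left\{ \log \frac{1}{P_{X^n}(X^n)} \le a^{(n)} \right\},
\end{align}
where $a^{(n)}$ is the inverse value for the $n$-fold source defined by the $n$-fold analogue of \eqref{eq:definition-a-inverse-multi-markov} at rate $\log(M_n/2)$. Second, I would use Lemma \ref{lemma:mult-terminal-finite-evaluation-down-conditional-renyi} to replace $\theta H_{1+\theta}(X^n)$ by $(n-1)\theta H_{1+\theta}^W(X)+\underline{\delta}(\theta)$ inside the inverse relation. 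The choice of $R$ in the theorem statement, namely $(n-1)R+(1+\theta(a(R)))a(R)-\underline{\delta}(\theta(a(R)))=\log(M_n/2)$, is precisely tuned so that this substitution identifies $a^{(n)}$ with a Markov-level threshold; in particular the tail event in the displayed inequality contains the corresponding Markov-level tail event $\{ \log 1/P_{X^n}(X^n) \le (n-1)a(R)+\text{correction}\}$.

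Third, I would apply Proposition \ref{proposition:general-markov-tail-converse} to lower bound this Markov tail probability by an exponential expression in $H_{1+\tilde{\theta}}^W(X)$ and $H_{1+(1+s)\tilde{\theta}}^W(X)$, in the same way as in the proof of Theorem \ref{theorem:single-random-sphere-packing-converse-finite-markov}. Taking $-\log$ of both sides, the factor $(1-\nu)^2=1/4$ contributes the $2\log 2$ on the right-hand side of \eqref{11-21-9}, and the tail bound supplies the infimum over $s,\tilde{\theta}$.

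The principal technical obstacle, as in Theorem \ref{theorem:single-random-sphere-packing-converse-finite-markov}, is the careful bookkeeping of the correction terms $\underline{\delta}(\theta)$ and $\overline{\delta}(\theta)$ coming from Lemma \ref{lemma:mult-terminal-finite-evaluation-down-conditional-renyi}: one has to verify that, after combining the lower bound on $\theta H_{1+\tilde{\theta}}(X^n)$ (giving $\overline{\delta}(\tilde{\theta})$) with the upper bound on $\theta H_{1+(1+s)\tilde{\theta}}(X^n)$ (giving $\underline{\delta}((1+s)\tilde{\theta})$), together with the term coming from the definition of $R$ (giving $\underline{\delta}(\theta(a(R)))$), the accumulated constants collapse exactly into the stated $\delta_1$ and $\delta_2$ and into the exponent $C_{1,n}$. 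Everything else is the same algebraic manipulation as in Theorem \ref{theorem:single-random-sphere-packing-converse-finite-markov}.
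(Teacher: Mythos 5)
Your plan is correct and is essentially the paper's own proof: the paper likewise defines the set $\Omega=\{x^n:\log 1/P_{X^n}(x^n)\le a(R)n\}$, bounds $|\Omega|$ via Lemma \ref{lemma:mult-terminal-finite-evaluation-down-conditional-renyi} so that the stated choice of $R$ yields $|\Omega|\le M_n/2$, applies Lemma \ref{lemma:strong-universal-bound} to get $\overline{\Delta}(M_n)\ge \frac{1}{4}P_{X^n}(\Omega)$ (the $2\log 2$), and finishes with Proposition \ref{proposition:general-markov-tail-converse}. The only cosmetic difference is that the paper reruns the cardinality argument of Lemma \ref{lemma:strong-universal-bound-tail-probability} directly with the Markov-level threshold $a(R)n$ rather than invoking that lemma verbatim on $P_{X^n}$, which spares you the extra (routine, monotonicity-based) comparison between the $n$-fold inverse value $a^{(n)}$ and $na(R)$ that your step two would require.
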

\begin{proof}
See Appendix \ref{appendix:theorem:single-random-number-strong-universal-finite-markov-converse}.
\end{proof}

\MH{To upper bound $\overline{D}(e^{nR}) 
$ by the R\'{e}nyi entropy of transition matrix,
we substitute the formula 
for the R\'{e}nyi entropy given in Lemma 
\ref{lemma:mult-terminal-finite-evaluation-down-conditional-renyi}
into the bound in Theorem \ref{Th11-25-1},
we have the following bound for the relative entropy rate $\frac{1}{n}\overline{D}(e^{nR})$.}

\begin{theorem}\Label{Th11-25-3}
When $R-H^W_{1+\theta}(X)\ge 0$, 
for $\theta \in [0,1]$, we have 
\begin{align}
\frac{1}{n}\overline{D}(e^{nR}) 
\le 
 R-\frac{n-1}{n}H^W_{1+\theta}(X) + 
\frac{1}{\theta n}(\log 2 - \underline{\delta}(\theta) ).\Label{2-7-1}
\end{align}
\end{theorem}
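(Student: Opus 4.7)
The plan is to apply Theorem \ref{Th11-25-1} to the $n$-fold extension $X^n$ with $M = e^{nR}$, which gives
\begin{align}
\overline{D}(e^{nR}) \le \frac{1}{\theta}\log\left(1 + e^{n\theta R} e^{-\theta H_{1+\theta}(X^n)}\right),
\end{align}
and then to convert this into a bound involving $H_{1+\theta}^W(X)$ by invoking the single-terminal case (${\cal Y}$ singleton) of Lemma \ref{lemma:mult-terminal-finite-evaluation-down-conditional-renyi}. That lemma yields $-\theta H_{1+\theta}(X^n) \le -(n-1)\theta H_{1+\theta}^W(X) - \underline{\delta}(\theta)$, so the exponent inside the logarithm is bounded by
\begin{align}
A := e^{n\theta R - (n-1)\theta H_{1+\theta}^W(X) - \underline{\delta}(\theta)}.
\end{align}

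The key step is then to replace $\log(1+A)$ by $\log(2A)$, which is valid whenever $A \ge 1$. To verify this, I would argue that for $\theta \in [0,1]$ the Rényi entropy $H_{1+\theta}^W(X)$ is nonnegative (since the Perron-Frobenius eigenvalue of the tilted matrix $W(x|x')^{1+\theta}$ is at most $1$ for $\theta \ge 0$, as $W$ is a stochastic matrix). Combined with the hypothesis $R - H_{1+\theta}^W(X) \ge 0$, this gives
\begin{align}
n\theta R - (n-1)\theta H_{1+\theta}^W(X) \ge \theta H_{1+\theta}^W(X) \ge 0,
\end{align}
and since $\underline{\delta}(\theta) < 0$ (as stated in the lemma), we conclude $A \ge 1$.

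The final step is the routine simplification:
\begin{align}
\overline{D}(e^{nR}) &\le \frac{1}{\theta}\log(2A) \nonumber \\
&= nR - (n-1) H_{1+\theta}^W(X) + \frac{1}{\theta}\bigl(\log 2 - \underline{\delta}(\theta)\bigr),
\end{align}
and dividing by $n$ gives \eqref{2-7-1}. The only nonroutine piece is the sign/positivity check used to pass from $\log(1+A)$ to $\log(2A)$; everything else is direct substitution, so I do not anticipate a serious obstacle.
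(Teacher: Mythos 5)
Your proposal is correct and follows essentially the same route as the paper: apply Theorem \ref{Th11-25-1} with $M=e^{nR}$, lower-bound $\theta H_{1+\theta}(X^n)$ via Lemma \ref{lemma:mult-terminal-finite-evaluation-down-conditional-renyi}, and then absorb the ``$1+$'' into a factor of $2$ using the hypothesis $R\ge H^W_{1+\theta}(X)$ together with $H^W_{1+\theta}(X)\ge 0$ and $\underline{\delta}(\theta)<0$. The paper reaches the same $\log(2A)$ endpoint by first factoring out $e^{n\theta(R-H^W_{1+\theta}(X))}$, but this is only a cosmetic rearrangement of your direct check that $A\ge 1$.
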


\begin{proof}
Theorem \ref{Th11-25-1} and Lemma 
\ref{lemma:mult-terminal-finite-evaluation-down-conditional-renyi}
yield $(a)$ and $(b)$, respectively, in the following way.
\begin{align}
& D(e^{nR})  \nonumber \\
\stackrel{(a)}{\le} &
\frac{1}{\theta}\log ( 1+ e^{\theta(n R- H_{1+s}(X^n))}) \nonumber \\
\stackrel{(b)}{\le} &
\frac{1}{\theta}\log 
( 1+ e^{\theta(n R- (n-1) H^W_{1+\theta}(X))- \underline{\delta}(\theta ) }) \nonumber\\
= & n (R-H^W_{1+\theta}(X)) 
\nonumber \\
&
+ \frac{1}{\theta}\log ( e^{n \theta (H^W_{1+\theta}(X)-R)} 
+ e^{\theta H^W_{1+\theta}(X)- \underline{\delta}(\theta) }) \nonumber\\
\le & n (R-H^W_{1+\theta}(X)) + 
\frac{1}{\theta}\log ( 1
+ e^{\theta H^W_{1+\theta}(X)- \underline{\delta}(\theta) }) \nonumber\\
\le & n (R-H^W_{1+\theta}(X)) + 
\frac{1}{\theta}\log ( 2 e^{\theta H^W_{1+\theta}(X)
- \underline{\delta}(\theta) }) \nonumber\\
= & n (R-H^W_{1+\theta}(X)) + 
\frac{1}{\theta}(\log 2 + \theta H^W_{1+\theta}(X)- \underline{\delta}(\theta) ) \nonumber\\
=& n R-(n-1)H^W_{1+\theta}(X) + 
\frac{1}{\theta}(\log 2 - \underline{\delta}(\theta) ).
\end{align}
\end{proof}

\MH{To lower bound $\overline{D}(e^{nR}) $ 
by the R\'{e}nyi entropy of transition matrix,
we substitute the other formula 
for the R\'{e}nyi entropy given in 
Lemma \ref{lemma:mult-terminal-finite-evaluation-down-conditional-renyi} 
into the bound in Proposition \ref{Th11-25-2},
we have the following bound for the relative entropy rate $\frac{1}{n}\overline{D}(e^{nR})$.}

\begin{theorem}\Label{Th11-25-4}
For $\theta \in [0,1]$, we have 
\begin{align}
\frac{1}{n}D(e^{nR}) \ge  R - \frac{n-1}{n} H^W_{1-\theta}(X)
+ \frac{\underline{\delta}(-\theta)}{\theta n}
\Label{11-25-4}
\end{align}
\end{theorem}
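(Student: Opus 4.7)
The plan is to derive this lower bound by combining three ingredients: the trivial single-shot converse of Proposition \ref{Th11-25-2}, the monotonicity of the R\'enyi entropy in its order, and the finite-length lower bound on $\theta H_{1+\theta}(X^n)$ from Lemma \ref{lemma:mult-terminal-finite-evaluation-down-conditional-renyi}. Note that in the single-terminal setting (${\cal Y}$ singleton), Lemma \ref{lemma:mult-terminal-finite-evaluation-down-conditional-renyi} specializes to
\begin{align*}
(n-1)\theta H_{1+\theta}^W(X) + \underline{\delta}(\theta) & \le \theta H_{1+\theta}(X^n) \\
 & \le (n-1)\theta H_{1+\theta}^W(X) + \overline{\delta}(\theta).
\end{align*}

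First, I would apply Proposition \ref{Th11-25-2} to the $n$-fold random variable $X^n$, obtaining $D(e^{nR}) \ge nR - H(X^n)$, so it suffices to produce a good upper bound on $\frac{1}{n}H(X^n)$ in terms of $H_{1-\theta}^W(X)$. Second, since $\theta \in [0,1]$ implies $1-\theta \le 1$, and the R\'enyi entropy is monotonically non-increasing in its order, I would invoke $H(X^n) = H_1(X^n) \le H_{1-\theta}(X^n)$.

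Third, I would apply the lemma with its parameter set equal to $-\theta$ (which is permissible since the lemma covers $\theta \in (-1,0) \cup (0,\infty)$). The left inequality gives
\[
(n-1)(-\theta) H_{1-\theta}^W(X) + \underline{\delta}(-\theta) \le (-\theta) H_{1-\theta}(X^n).
\]
Dividing by $-\theta < 0$ flips the inequality and yields
\[
H_{1-\theta}(X^n) \le (n-1) H_{1-\theta}^W(X) - \frac{\underline{\delta}(-\theta)}{\theta}.
\]
Chaining these two inequalities produces the desired upper bound on $H(X^n)$, and substituting into $\frac{1}{n}D(e^{nR}) \ge R - \frac{1}{n}H(X^n)$ gives exactly \eqref{11-25-4}.

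There is no real obstacle here; the only subtlety is getting the sign right when dividing by a negative quantity, so I would be explicit that the lemma applies at parameter value $-\theta \in [-1,0)$ and that this reverses the direction of the inequality. The boundary case $\theta = 0$ is handled by continuity (both sides reduce to $R - \frac{n-1}{n} H^W(X)$, noting $\underline{\delta}(0)/0$ is interpreted via the $\theta \to 0$ limit and vanishes in the bound).
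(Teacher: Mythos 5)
Your proposal is correct and follows essentially the same route as the paper's own proof: apply Proposition \ref{Th11-25-2} to $X^n$, bound $H(X^n) \le H_{1-\theta}(X^n)$ by monotonicity of the R\'enyi entropy in its order, and then upper-bound $H_{1-\theta}(X^n)$ via the left inequality of Lemma \ref{lemma:mult-terminal-finite-evaluation-down-conditional-renyi} evaluated at parameter $-\theta$ (with the sign flip upon dividing by $-\theta$). Your explicit treatment of the sign reversal and the $\theta = 0$ boundary is a welcome clarification of details the paper leaves implicit.
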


\begin{proof}
Lemma \ref{lemma:mult-terminal-finite-evaluation-down-conditional-renyi} implies that
\begin{align}
H(X^n) \le H_{1-\theta}(X^n)  
\le (n-1) H^W_{1-\theta}(X)
- \frac{\underline{\delta}(-\theta)}{\theta}.
\end{align}
Hence, using Proposition \ref{Th11-25-2}, we obtain \eqref{11-25-4}.
\end{proof}

\subsection{Large Deviation} \Label{subsection:single-random-large-deviation}

\MH{Taking the limit in the formulas in
Theorems \ref{theorem:single-random-finite-markov-direct-1} and \ref{theorem:single-random-number-strong-universal-finite-markov-converse}}, we have the following.
\begin{theorem}\Label{T11-22-1}
For $R < H^W(X)$, we have
\begin{eqnarray} 
\liminf_{n\to\infty} - \frac{1}{n} \log \overline{\Delta}(e^{nR}) \ge \sup_{0 \le \theta \le 1} \frac{-\theta R + \theta H_{1+\theta}^W(X)}{1 +\theta}.
\Label{eq:single-random-ldp-direct}
\end{eqnarray}
On the other hand, for $R(\underline{a}) < R < H^W(X)$, we have
\begin{align}
& \lefteqn{\limsup_{n\to\infty} - \frac{1}{n} \log \overline{\Delta}(e^{nR}) }
\nonumber \\
\le & - \theta(a(R )) a(R ) + \theta(a(R )) H_{1+\theta(a( R))}^W(X)
\Label{11-21-13} \\
=& 
\sup_{0 \le \theta} 
\frac{-\theta R + \theta H_{1+\theta}^W(X)}{1 +\theta}.
\Label{11-27-6}
\end{align}
\end{theorem}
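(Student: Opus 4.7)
The plan is to derive both directions of Theorem~\ref{T11-22-1} by taking the asymptotic limit ($n\to\infty$) in the finite-length bounds Theorem~\ref{theorem:single-random-finite-markov-direct-1} and Theorem~\ref{theorem:single-random-number-strong-universal-finite-markov-converse}, and then to verify the equivalent form \eqref{11-27-6} by a short stationarity calculation.

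For the achievability bound \eqref{eq:single-random-ldp-direct}, I would substitute $M_n = e^{nR}$ into Theorem~\ref{theorem:single-random-finite-markov-direct-1} and divide by $n$. Because $\underline{\delta}(\theta)$ is uniformly bounded on the compact interval $\theta \in [0,1]$ and $\log(3/2)/n \to 0$, and because $(n-1)/n \to 1$, the integrand converges pointwise to $(-\theta R + \theta H_{1+\theta}^W(X))/(1+\theta)$. Using the standard inequality $\liminf_n \sup_\theta \ge \sup_\theta \liminf_n$ then gives the desired lower bound; the hypothesis $R < H^W(X)$ guarantees that some $\theta \in [0,1]$ makes the resulting supremum positive.

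For the converse bound \eqref{11-21-13}, I would apply Theorem~\ref{theorem:single-random-number-strong-universal-finite-markov-converse} with $M_n = e^{nR_*}$ for the target rate $R_* \in (R(\underline{a}), H^W(X))$. The rate $R_n$ implicitly defined there differs from $R_*$ by $O(1/n)$, so by continuity of $a(\cdot)$, $\theta(\cdot)$ and $H_{1+\theta}^W(X)$ all derived quantities converge to their $R=R_*$ counterparts. After dividing by $n$, the $O(1)$ corrections $\delta_1$, $\delta_2$ and the additive $2\log 2$ contribute zero. For any fixed $\tilde\theta > \theta(a(R_*))$ and $s>0$, strict concavity of $\theta \mapsto \theta H_{1+\theta}^W(X)$ (Statement~\ref{item:multi-terminal-markov-property-1} of Lemma~\ref{lemma:multi-terminal-markov-property}) combined with $d[\theta H_{1+\theta}^W(X)]/d\theta|_{\theta(a(R_*))}=a(R_*)$ implies
\begin{equation*}
(\theta(a(R_*))-\tilde\theta)a(R_*) - \theta(a(R_*)) H_{1+\theta(a(R_*))}^W(X) + \tilde\theta H_{1+\tilde\theta}^W(X) < 0,
\end{equation*}
hence $C_{1,n}/n$ tends to a strictly negative constant, $e^{C_{1,n}}\to 0$ exponentially, and $(1+s)\log(1-e^{C_{1,n}})/(sn)\to 0$. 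Only the leading term survives, yielding
\begin{equation*}
\limsup_n -\tfrac{1}{n}\log \overline{\Delta}(e^{nR_*}) \le \frac{(1+s)\tilde\theta}{s}\bigl[H_{1+\tilde\theta}^W(X) - H_{1+(1+s)\tilde\theta}^W(X)\bigr].
\end{equation*}
Sending $s\to 0^+$ and $\tilde\theta\to \theta(a(R_*))^+$, a first-order Taylor expansion together with the identity $d[\theta H_{1+\theta}^W(X)]/d\theta|_{\theta(a(R_*))} = a(R_*)$ collapses this limit to $-\theta(a(R_*))a(R_*) + \theta(a(R_*))H_{1+\theta(a(R_*))}^W(X)$, which is precisely \eqref{11-21-13}.

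The equivalence \eqref{11-27-6} is purely analytic. Setting $\phi(\theta) := (-\theta R + \theta H_{1+\theta}^W(X))/(1+\theta)$, the stationarity condition $\phi'(\theta^*)=0$ rearranges to $(1+\theta^*)\, d[\theta H_{1+\theta}^W(X)]/d\theta|_{\theta^*} - \theta^* H_{1+\theta^*}^W(X) = R$, which by \eqref{eq:definition-theta-inverse-multi-markov} and \eqref{eq:definition-a-inverse-multi-markov} forces $\theta^* = \theta(a(R))$; substituting the defining relation $R = (1+\theta^*)a(R) - \theta^* H_{1+\theta^*}^W(X)$ back into $\phi(\theta^*)$ collapses it to the right-hand side of \eqref{11-21-13}. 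The main obstacle of the whole argument is the converse step: the iterated limit in $(s,\tilde\theta)$ and the vanishing of the $\log(1-e^{C_{1,n}})$ contribution rely essentially on the concavity/monotonicity results collected in Lemma~\ref{lemma:multi-terminal-markov-property}, and it is precisely here that the admissible range $\tilde\theta > \theta(a(R_*))$ baked into Theorem~\ref{theorem:single-random-number-strong-universal-finite-markov-converse} is needed to ensure that $C_{1,n}$ is negative and of order $n$.
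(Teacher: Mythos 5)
Your proposal is correct and follows essentially the same route as the paper: achievability by taking the limit in Theorem~\ref{theorem:single-random-finite-markov-direct-1}, and the converse by letting $n\to\infty$ in Theorem~\ref{theorem:single-random-number-strong-universal-finite-markov-converse} (so that the $\log(1-e^{C_{1,n}})$ term vanishes by strict concavity of $\theta\mapsto\theta H_{1+\theta}^W(X)$) and then sending $s\to 0^+$ and $\tilde\theta\to\theta(a(R))^+$. The only cosmetic difference is that you establish the identity \eqref{11-27-6} by an explicit stationarity computation, whereas the paper simply invokes Lemma~\ref{L9}, which encapsulates the same Legendre-transform argument.
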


Due to Lemma \ref{L9},
the lower bound \eqref{eq:single-random-ldp-direct}
and the upper bound \eqref{11-27-6}
coincide when $R$ is not less than the critical rate $R_{\mathrm{cr}}$
given in \eqref{eq62}.

\begin{proof}
\eqref{eq:single-random-finite-markov-direct-1-1} yields \eqref{eq:single-random-ldp-direct}.
Lemma \ref{L9} guarantees \eqref{11-27-6}.
So, we will prove \eqref{11-21-13} as follows.

We fix $s > 0$ and $\tilde{\theta} > \theta(a(R ))$.
Then, \eqref{11-21-9} implies that
\begin{align}
 \lim_{n \to \infty} -\frac{1}{n}\log \overline{\Delta}(M_n) 
  \le & 
\frac{1+s}{s}
\tilde{\theta} \left( H_{1+\tilde{\theta}}^W(X) - H_{1+(1+s)\tilde{\theta}}^W(X) \right).
\Label{11-21-11}
\end{align}
Taking the limit $s \to 0$ and $\tilde{\theta} \to \theta(a(R ))$, we have
\begin{align}
& \frac{1+s}{s}\tilde{\theta} \left\{ H_{1+\tilde{\theta}}^W(X) - H_{1+(1+s)\tilde{\theta}}^W(X) \right\} 
\nonumber \\
=&
\frac{1}{s}
\left(
\tilde{\theta} H_{1+\tilde{\theta}}^W(X) - (1+s)\tilde{\theta} H_{1+(1+s)\tilde{\theta}}^W(X) 
\right)
+ \tilde{\theta}H_{1+\tilde{\theta}}^W(X) \nonumber \\
\to &
- 
\tilde{\theta}
\frac{d {\theta} H_{1+{\theta}}^W(X) }{d\theta} \biggl|_{\theta=\tilde{\theta}}
+ \tilde{\theta}H_{1+\tilde{\theta}}^W(X) 
\hspace{13ex} \hbox{(as $s \to 0$)} \nonumber \\
\to &
- 
\theta(a(R ))
\frac{d {\theta} H_{1+{\theta}}^W(X) }{d\theta} \biggl|_{\theta=\theta(a(R ))}
+ \theta(a(R )) H_{1+\theta(a(R ))}^{W}(X) 
\nonumber \\
& \hspace{35ex} \quad \hbox{(as $\tilde{\theta} \to \theta(a(R ))$)} \nonumber \\
\stackrel{(a)}{=} & 
\theta(a(R )) a + \theta(a(R )) H_{1+\theta(a(R ))}^{W}(X)
\Label{11-21-10},
\end{align}
where $(a)$ follows from 
\eqref{eq:definition-theta-inverse-markov-optimal-Q}.
Hence, \eqref{11-21-10} and \eqref{11-21-11} imply that
\begin{align}
 \lim_{n \to \infty} -\frac{1}{n}\log \overline{\Delta}(M_n) 
\le 
\theta(a(R )) a + \theta(a(R )) H_{1+\theta(a(R ))}^{W}(X),
\Label{11-21-12}
\end{align}
which implies 
\eqref{11-21-13}.
\end{proof}

For the general class of functions, we can derive the following converse bound from Theorem \ref{theorem:single-random-sphere-packing-converse-finite-markov}.
\begin{theorem}
For $\underline{a} < R < H^W(X)$, we have
\begin{align}
 \limsup_{n\to\infty} -\frac{1}{n} \log \Delta(e^{nR}) 
\le  - \theta(R ) R + \theta(R ) H_{1+\theta(R )}^W(X).
\end{align}
\end{theorem}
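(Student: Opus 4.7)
The plan is to mirror the argument already given for the $\overline{\Delta}$ converse in Theorem \ref{T11-22-1}, but starting from the finite-length bound of Theorem \ref{theorem:single-random-sphere-packing-converse-finite-markov} rather than from Theorem \ref{theorem:single-random-number-strong-universal-finite-markov-converse}. Concretely, I would fix parameters $s > 0$ and $\tilde{\theta} > \theta(R)$, divide both sides of \eqref{11-21-7} by $n$, and then pass to the $\limsup$ in $n$ before finally optimizing in $s$ and $\tilde{\theta}$.

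First I would check that the exponent multiplying $(n-1)$ inside the logarithm, namely
\[
(\theta(R) - \tilde{\theta}) R - \theta(R ) H_{1+\theta(R )}^W(X) + \tilde{\theta} H_{1+\tilde{\theta}}^W(X),
\]
is strictly negative whenever $\tilde{\theta} > \theta(R)$. This follows from the concavity of $\theta \mapsto \theta H_{1+\theta}^W(X)$ (Lemma \ref{lemma:multi-terminal-markov-property}), which says that $\theta(R)$ is precisely the maximizer of $\theta H_{1+\theta}^W(X) - \theta R$. Combined with the fact that $\delta_2$ does not depend on $n$, this guarantees that the argument of the outer logarithm tends to $1$ as $n \to \infty$, so the $\log(1 - e^{\cdots})$ contribution vanishes in the limit. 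The additive constants $\delta_1$ and $\log 2$ likewise disappear after division by $n$.

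What remains is the dominant term $\frac{n-1}{n} \cdot \frac{(1+s)\tilde{\theta}}{s}\bigl(H_{1+\tilde{\theta}}^W(X) - H_{1+(1+s)\tilde{\theta}}^W(X)\bigr)$, which in the limit $n \to \infty$ gives
\[
\limsup_{n\to\infty} -\frac{1}{n}\log \Delta(e^{nR}) \le \frac{(1+s)\tilde{\theta}}{s}\bigl(H_{1+\tilde{\theta}}^W(X) - H_{1+(1+s)\tilde{\theta}}^W(X)\bigr).
\]
Now I would let $s \to 0$ and use the fact that $\tilde{\theta} H_{1+\tilde{\theta}}^W(X)$ is differentiable in $\tilde{\theta}$ to recognize the right-hand side as $-\tilde{\theta}\, \frac{d[\theta H_{1+\theta}^W(X)]}{d\theta}\bigr|_{\theta=\tilde{\theta}} + \tilde{\theta} H_{1+\tilde{\theta}}^W(X)$, exactly as in the chain of identities \eqref{11-21-10} in the proof of Theorem \ref{T11-22-1}. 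Finally, letting $\tilde{\theta} \to \theta(R)$ and invoking the defining relation \eqref{eq:definition-theta-inverse-multi-markov}, which says that $\frac{d[\theta H_{1+\theta}^W(X)]}{d\theta}\bigr|_{\theta=\theta(R)} = R$, collapses the bound to $-\theta(R) R + \theta(R) H_{1+\theta(R)}^W(X)$, as claimed.

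The main obstacle is technical rather than conceptual: one has to justify that taking the two limits $n \to \infty$ and $(s,\tilde{\theta}) \to (0, \theta(R))$ in the indicated order is legitimate and that, in particular, for any $\tilde{\theta}$ in a small right-neighborhood of $\theta(R)$ the exponent above remains strictly negative so that the $\log(1 - e^{\cdots})$ term indeed tends to $0$ uniformly. The hypothesis $\underline{a} < R < H^W(X)$ ensures that $\theta(R)$ lies in the interior of the domain where the Legendre-type derivative is well defined, which makes the continuity/strict-inequality checks routine via Lemma \ref{lemma:multi-terminal-markov-property}.
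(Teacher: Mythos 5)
Your proposal is correct and follows essentially the same route the paper intends: the paper states this theorem without a separate proof, deriving it from Theorem \ref{theorem:single-random-sphere-packing-converse-finite-markov} by exactly the limiting argument used for \eqref{11-21-13} in the proof of Theorem \ref{T11-22-1} (take $n\to\infty$ for fixed $s>0$ and $\tilde{\theta}>\theta(R)$, then let $s\to 0$ and $\tilde{\theta}\to\theta(R)$ and invoke \eqref{eq:definition-theta-inverse-multi-markov}). Your observation that the exponent inside the logarithm is strictly negative because $\theta(R)$ maximizes $\theta H_{1+\theta}^W(X)-\theta R$ is the right justification for discarding the $\log(1-e^{\cdots})$ term, and the order of limits causes no difficulty since the $n\to\infty$ limit is taken at fixed $(s,\tilde{\theta})$ before optimizing.
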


\subsection{Moderate Deviation} \Label{subsection:single-random-mdp}

Taking the limit with $R=H^W(X) - n^{-t}\delta$ in
Theorem \ref{theorem:single-random-finite-markov-direct-1} and 
Theorem \ref{theorem:single-random-sphere-packing-converse-finite-markov} (or 
Theorem \ref{theorem:single-random-number-strong-universal-finite-markov-converse}), we have the following.
\begin{theorem}\Label{T11-22-10}
For arbitrary $t \in (0,1/2)$ and $\delta > 0$, we have
\begin{align}
& \lim_{n\to\infty} - \frac{1}{n^{1-2t}} \log \Delta\left(e^{nH^W(X) - n^{1-t}\delta} \right)
\nonumber \\
=& \lim_{n\to\infty} - \frac{1}{n^{1-2t}} \log \overline{\Delta}\left(e^{nH^W(X) - n^{1-t}\delta} \right) \nonumber \\
=& \frac{\delta^2}{2 \san{V}^W(X)}.
\end{align}
\end{theorem}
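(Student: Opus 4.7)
The plan is to extract the moderate-deviation limit by specialising the finite-length bounds of this subsection with free parameters scaled as $\Theta(n^{-t})$, and then applying the second-order Taylor expansion of $\theta \mapsto \theta H^W_{1+\theta}(X)$ around $\theta = 0$. Because $\Delta(M_n) \le \overline{\Delta}(M_n)$, it suffices to prove a lower bound for $\overline{\Delta}$ together with an upper bound for $\Delta$; both equalities in the statement then follow at once.

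For the direct (lower-bound) part, I would set $R_n := H^W(X) - n^{-t}\delta$ and insert $\theta_n := \delta/(n^t V^W(X))$ into Theorem \ref{theorem:single-random-finite-markov-direct-1}. Using the single-terminal case of the expansion \eqref{11-21-4b}, namely $\theta H^W_{1+\theta}(X) = \theta H^W(X) - \tfrac{1}{2}V^W(X)\theta^2 + o(\theta^2)$, the numerator in \eqref{eq:single-random-finite-markov-direct-1-1} evaluates to
\begin{align*}
&n^{1-t}\delta\,\theta_n - \tfrac{n-1}{2}V^W(X)\theta_n^2 + o(n\theta_n^2) + O(1) \\
&\qquad = \tfrac{n^{1-2t}\delta^2}{2V^W(X)} + o(n^{1-2t}),
\end{align*}
while $1+\theta_n \to 1$ and both $\underline{\delta}(\theta_n)$ and $-\log(3/2)$ stay $O(1)$. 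Dividing by $n^{1-2t}$ yields $\liminf_n -\tfrac{1}{n^{1-2t}}\log \overline{\Delta}(e^{nR_n}) \ge \tfrac{\delta^2}{2V^W(X)}$, whence the same inequality for $\Delta$.

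For the converse (upper bound), I would mimic the limit used to derive \eqref{11-21-13} in Theorem \ref{T11-22-1}, but with an $n$-dependent target rate. Starting from Theorem \ref{theorem:single-random-sphere-packing-converse-finite-markov} at $R = R_n - (\log 2)/n$, pick $\tilde\theta_n = \theta(R_n)(1+\epsilon_n)$ and $s_n = n^{-\eta}$, $\epsilon_n = n^{-\eta}$ for a small $\eta \in (0,\min(t,(1-2t)/2))$. The positivity $\tilde\theta_n > \theta(R_n)$ satisfies the admissibility constraint, while the strict concavity of $\theta \mapsto \theta H^W_{1+\theta}(X)$ forces the argument $C_n$ of the logarithm in \eqref{11-21-7} to satisfy $C_n \le -c\,n^{1-2t-2\eta}$ for some $c > 0$, making $-(1+s_n)\log(1-e^{C_n})$ super-exponentially small. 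Taking $s_n \to 0$ converts the $s_n$-finite-difference of $\theta H^W_{1+\theta}(X)$ into a derivative, and taking $\tilde\theta_n \to \theta(R_n)$ identifies that derivative with $R_n$ via \eqref{eq:definition-theta-inverse-multi-markov}, so the bound collapses to $(n-1)[-\theta(R_n) R_n + \theta(R_n) H^W_{1+\theta(R_n)}(X)] + o(n^{1-2t})$. By the expansion \eqref{11-21-4} specialised to the single-terminal case, this equals $\tfrac{n^{1-2t}\delta^2}{2V^W(X)} + o(n^{1-2t})$, giving $\limsup_n -\tfrac{1}{n^{1-2t}}\log \Delta(e^{nR_n}) \le \tfrac{\delta^2}{2V^W(X)}$. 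The parallel argument based on Theorem \ref{theorem:single-random-number-strong-universal-finite-markov-converse} yields the $\overline{\Delta}$-version, because $a(R_n) - R_n = O(n^{-2t})$ under the present scaling only alters higher-order terms.

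The main technical obstacle is the joint calibration of $s_n$, $\tilde\theta_n$, and the rate gap $H^W(X) - R_n$ in the converse: the prefactor $1/s_n$ simultaneously amplifies the finite-difference error in the derivative of $\theta H^W_{1+\theta}(X)$ at $\tilde\theta_n$, the logarithmic penalty $-(1+s_n)\log(1-e^{C_n})$, and the additive corrections $\delta_1, \delta_2$, while $\tilde\theta_n - \theta(R_n)$ has to be strictly positive yet small enough that the linearisation of $\theta H^W_{1+\theta}(X)$ around $\theta(R_n)$ remains accurate to the required $n^{1-2t}$-precision. The choice $s_n,\epsilon_n \sim n^{-\eta}$ with $\eta$ taken in the intersection $(0,t) \cap (0,(1-2t)/2)$ strikes the right balance for every $t \in (0,1/2)$.
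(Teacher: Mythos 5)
Your proposal is correct and follows essentially the same route as the paper: both parts plug the scaling $R = H^W(X) - n^{-t}\delta$, $\theta \sim n^{-t}\delta/\san{V}^W(X)$ into Theorems \ref{theorem:single-random-finite-markov-direct-1} and \ref{theorem:single-random-sphere-packing-converse-finite-markov} and invoke the second-order expansion \eqref{11-21-4b}--\eqref{11-21-4}. The only (immaterial) difference is in the converse, where the paper keeps $s>0$ fixed, takes $n\to\infty$ to get $(1+s)\delta^2/(2\san{V}^W(X))$, and only then sends $s\to 0$, whereas you run a diagonal sequence $s_n = n^{-\eta}$; your calibration $\eta \in (0,\min(t,(1-2t)/2))$ does make all the error terms $o(n^{1-2t})$, so both versions go through.
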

\begin{proof}
We apply Theorem \ref{theorem:single-random-finite-markov-direct-1} and 
Theorem \ref{theorem:single-random-sphere-packing-converse-finite-markov}
to the case with $R=H^W(X) - n^{-t}\delta$, i.e.,
$\theta(a(R) )=  - n^{-t}\frac{\delta}{\san{V}^W(X)}+ o(n^{-t})$.
Eqs. \eqref{11-21-3} and \eqref{eq:single-random-finite-markov-direct-1-1} 
in Theorem \ref{theorem:single-random-finite-markov-direct-1}
imply that
\begin{align}
& - \log \overline{\Delta}(M_n) \nonumber \\
\ge & \sup_{0 \le \theta \le 1} \frac{-\theta n R + (n-1) \theta H_{1+\theta}^W(X)}{1+\theta}
\nonumber \\
&+\inf_{0 \le \theta \le 1} \frac{\underline{\delta}(\theta)}{1+\theta} -\log(3/2)\nonumber \\
\ge & n^{1-2t} \frac{\delta^2}{2 \san{V}^W(X)} + o(n^{1-2t}).
\end{align}
We fix an arbitrary $s>0$.
Since $\theta(R)=  - n^{-t}\frac{\delta}{\san{V}^W(X)}+ o(n^{-t})$,
we can choose $\tilde{\theta} >\theta(R) $ such that
$\tilde{\theta}=- n^{-t}\frac{\delta}{\san{V}^W(X)}+ o(n^{-t})$.
Then, \eqref{11-21-7} implies that
\begin{align}
& \lim_{n \to \infty}
-\frac{1}{n^{1-2t}}\log \Delta(M_n) \nonumber \\
  \le & 
\lim_{n \to \infty}
n^{2t} \frac{1+s}{s} 
\tilde{\theta} \left\{ H_{1+\tilde{\theta}}^W(X) - H_{1+(1+s)\tilde{\theta}}^W(X) \right\} 
\nonumber\\
=& \lim_{n \to \infty}
n^{2t} \frac{1+s}{s} 
s \tilde{\theta}^2 
\frac{d H_{1+{\theta}}^W(X) }{d\theta}\biggl|_{\theta= \tilde{\theta}}
=
(1+s)\frac{\delta^2}{2 \san{V}^W(X)}.
\end{align}
Taking the limit $s \to 0$, we obtain the desired argument.
\end{proof}

\subsection{Second Order} \Label{subsection:single-random-second-order}

By applying the central limit theorem to
Lemmas 
\ref{lemma:single-random-number-leftover-loose} and \ref{lemma:single-random-number-converse}, 
and by using Theorem \ref{theorem:multi-markov-variance}, we have the following.
\begin{theorem}\Label{T11-22-14}
For arbitrary $\varepsilon \in (0,1)$, we have
\begin{align}
&\lim_{n\to\infty} \frac{\log M(n,\varepsilon) - nH^W(X)}{\sqrt{n}} \nonumber \\
=& \lim_{n\to\infty} \frac{\log \overline{M}(n,\varepsilon) - nH^W(X)}{\sqrt{n}} 
= \sqrt{\san{V}^W(X)} \Phi^{-1}(\varepsilon).\Label{1-22-3}
\end{align}
\end{theorem}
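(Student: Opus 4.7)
The plan is to reduce the problem to a tail bound for $\log \frac{1}{P_{X^n}(X^n)}$ and then apply a central limit theorem for additive functionals of Markov chains. The hint from the theorem statement is the pointer to Lemmas \ref{lemma:single-random-number-leftover-loose} and \ref{lemma:single-random-number-converse}, together with Theorem \ref{theorem:multi-markov-variance}. Both lemmas express $\overline{\Delta}(M_n)$ and $\Delta(M_n)$ via the tail probability $P_{X^n}\{\log \tfrac{1}{P_{X^n}(X^n)} < \gamma_n\}$ plus a correction term depending on $M_n e^{-\gamma_n}$, so the second order rate follows once one knows the Gaussian approximation for that tail.

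For achievability, I will write $\log M_n = n H^W(X) + \sqrt{n \san{V}^W(X)}\,\Phi^{-1}(\varepsilon) - \alpha_n$ with $\alpha_n \to \infty$ very slowly (e.g. $\alpha_n = \log n$), and choose $\gamma_n = \log M_n + \alpha_n/2$. Lemma \ref{lemma:single-random-number-leftover-loose} then gives
\begin{equation*}
\overline{\Delta}(M_n) \le P_{X^n}\Bigl\{ \log \tfrac{1}{P_{X^n}(X^n)} < \gamma_n \Bigr\} + \tfrac{1}{2} e^{-\alpha_n/4},
\end{equation*}
whose second term vanishes. For the first term, Theorem \ref{theorem:multi-markov-variance} combined with the CLT for additive functionals of an ergodic irreducible Markov chain applied to the additive quantity $\log \tfrac{1}{P_{X^n}(X^n)}$ (whose mean is $nH^W(X) + O(1)$ and whose variance is $n\san{V}^W(X) + o(n)$) yields convergence to $\Phi(\Phi^{-1}(\varepsilon)) = \varepsilon$. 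Hence $\liminf_n (\log\overline{M}(n,\varepsilon) - nH^W(X))/\sqrt{n} \ge \sqrt{\san{V}^W(X)}\,\Phi^{-1}(\varepsilon)$.

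For the converse, I will argue by contradiction: if $\log M(n,\varepsilon) \ge nH^W(X) + \sqrt{n\san{V}^W(X)}\,\Phi^{-1}(\varepsilon) + \beta_n$ with $\beta_n \to \infty$, then setting $\gamma_n = \log M(n,\varepsilon) - \beta_n/2$ in Lemma \ref{lemma:single-random-number-converse} gives
\begin{equation*}
\Delta(M(n,\varepsilon)) \ge P_{X^n}\Bigl\{ \log \tfrac{1}{P_{X^n}(X^n)} < \gamma_n \Bigr\} - e^{-\beta_n/2},
\end{equation*}
and the CLT forces the right hand side to exceed $\varepsilon$ eventually, contradicting the definition of $M(n,\varepsilon)$. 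This yields $\limsup_n (\log M(n,\varepsilon) - nH^W(X))/\sqrt{n} \le \sqrt{\san{V}^W(X)}\,\Phi^{-1}(\varepsilon)$. Sandwiching with $M(n,\varepsilon) \le \overline{M}(n,\varepsilon)$ closes both equalities in \eqref{1-22-3}.

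The main technical obstacle is justifying the CLT for $\log \tfrac{1}{P_{X^n}(X^n)}$. Because the chain is ergodic and irreducible on a finite alphabet and the initial distribution is arbitrary, the random variable decomposes as $\log\tfrac{1}{P_{X_1}(X_1)} + \sum_{i=2}^n \log\tfrac{1}{W(X_i|X_{i-1})}$ up to a boundary term; the second piece is a classical additive functional to which standard CLTs for finite Markov chains apply. The asymptotic variance given by this CLT coincides with $\san{V}^W(X)$ by Theorem \ref{theorem:multi-markov-variance}. The initial distribution contributes an $O(1)$ shift to the mean and hence $o(\sqrt{n})$ in the second-order scale, so it does not affect the limit, consistent with the ``any initial distribution'' clause in earlier theorems.
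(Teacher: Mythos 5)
Your proposal is correct and follows essentially the same route as the paper: both reduce the problem to the tail probability of $\log\frac{1}{P_{X^n}(X^n)}$ via Lemmas \ref{lemma:single-random-number-leftover-loose} and \ref{lemma:single-random-number-converse}, apply the central limit theorem for additive functionals of the ergodic Markov chain, and identify the limiting variance as $\san{V}^W(X)$ through Theorem \ref{theorem:multi-markov-variance}. The only cosmetic differences are your choice of gap sequence ($\log n$, $\beta_n$ versus the paper's $n^{1/4}$) and phrasing the converse as a contradiction rather than a direct substitution.
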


\begin{proof}
The central limit theorem for Markovian process \cite{kontoyiannis:03,Jones2004,Meyn1994} \cite[Corollary 6.2.]{hayashi-watanabe:13b} guarantees that
the random variable $(-\log P_{X^n}(X^n) -n H^W (X))/\sqrt{n}$ 
asymptotically obeys the normal distribution with the average $0$ and the variance $\san{V}^W(X)$.
Let $R=\sqrt{\san{V}^W(X)} \Phi^{-1}(\varepsilon)$.
Substituting
$M=e^{nH^W(X)+ \sqrt{n}R }$ and $\gamma= nH^W(X)+ \sqrt{n}R + n^{\frac{1}{4}}$
in Lemma \ref{lemma:single-random-number-leftover-loose}, we have
\begin{align}
\lim_{n \to \infty} \overline{\Delta} (e^{nH^W(X)+ \sqrt{n}R }) 
\le  \epsilon. \Label{1-22-1}
\end{align}
Also, substituting
$M=e^{nH^W(X)+ \sqrt{n}R }$ and $\gamma= nH^W(X)+ \sqrt{n}R - n^{\frac{1}{4}}$
in Lemma \ref{lemma:single-random-number-converse}, we have
\begin{align}
\lim_{n \to \infty} \Delta (e^{nH^W(X)+ \sqrt{n}R }) 
\ge  \epsilon.\Label{1-22-2}
\end{align}
Combining \eqref{1-22-1} and \eqref{1-22-2}, we obtain \eqref{1-22-3}.
\end{proof}


\begin{figure*}[!t]
\includegraphics[width=0.5\linewidth]{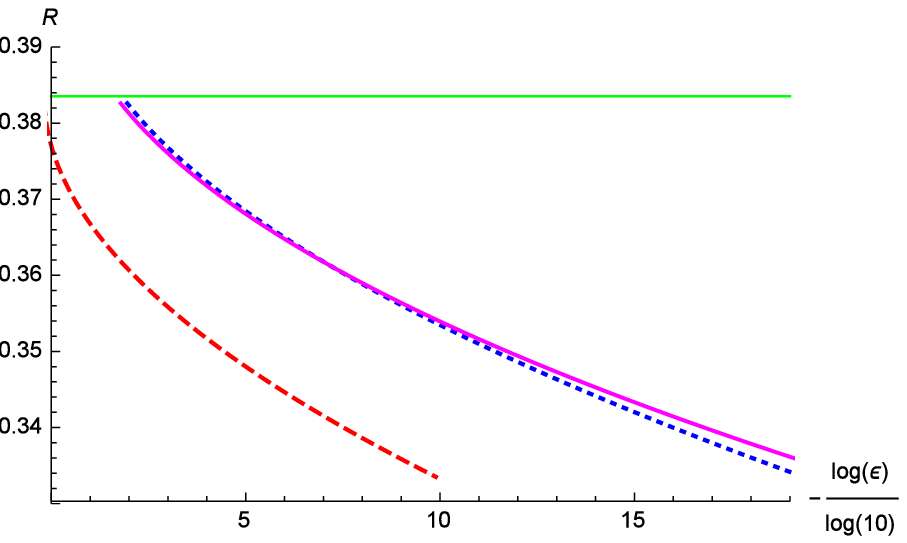}
\includegraphics[width=0.5\linewidth]{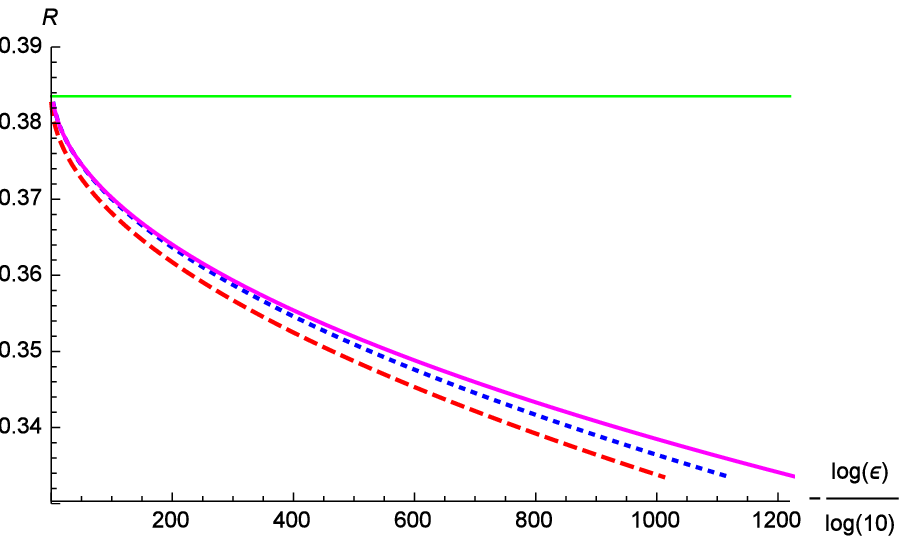}
\caption{Comparisons of the bounds for $p=0.1$ and $q=0.2$.
The left and right graphs express the cases with 
$n=10000$ and $1000000$, respectively.
The horizontal axis is $-\log_{10}(\varepsilon)$, and 
the vertical axis is the rate $R$ (nats).
The red dashed curve is the achievability bound in Theorem \ref{theorem:single-random-finite-markov-direct-1}.
The blue dotted curve is the converse bound in Theorem \ref{theorem:single-random-number-strong-universal-finite-markov-converse}.
The purple thick curve is the converse bound in Theorem \ref{theorem:single-random-sphere-packing-converse-finite-markov}.
The green normal horizontal line is the entropy $H^W(X)$.}
\label{ff2}
\end{figure*}

\subsection{\MH{Relative Entropy Rate (RER)}}\Label{Equivocation Rate} 
\MH{Taking the limit in} Theorems \ref{Th11-25-3} and \ref{Th11-25-4}, 
we have the following.
\begin{theorem}\Label{Th11-25-5}
\MH{The relative entropy rate (RER) is asymptotically calculated as}
\begin{align}
\lim_{n \to \infty} \frac{1}{n}D(e^{nR}) = 
\lim_{n \to \infty} \frac{1}{n}\overline{D}(e^{nR}) = 
[R-H^W(X)]_+  ,\Label{11-25-10}
\end{align}
where $[x]_+:= \max(x,0)$.
\end{theorem}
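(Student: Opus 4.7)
The plan is to exploit the inequality $D(M) \le \overline{D}(M)$, which follows immediately from the definitions (the infimum over all deterministic hash functions cannot exceed the expectation over any two-universal random family), so that it suffices to establish $\limsup_n \frac{1}{n}\overline{D}(e^{nR}) \le [R-H^W(X)]_+$ and $\liminf_n \frac{1}{n} D(e^{nR}) \ge [R-H^W(X)]_+$, which together with the non-negativity of relative entropy pins down the limit. I will split into two regimes according to whether $R \ge H^W(X)$ or $R < H^W(X)$, using throughout the continuity $\lim_{\theta \to 0} H^{W}_{1+\theta}(X) = H^{W}(X)$ that is the $\mathcal{Y}$-singleton specialization of \eqref{eq:lower-conditional-renyi-markov-theta-0B} and Lemma \ref{lemma:properties-upper-conditional-renyi-transition-matrix}.

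In the super-critical regime $R \ge H^{W}(X)$, monotonicity of $H^{W}_{1+\theta}(X)$ in $\theta$ (Statement 2 of Lemma \ref{lemma:multi-terminal-markov-property}) ensures that $R \ge H^{W}(X) \ge H^{W}_{1+\theta}(X)$ for every $\theta \in (0,1]$, so the hypothesis of Theorem \ref{Th11-25-3} holds and, after dividing by $n$ and letting $n \to \infty$ (the terms $\frac{1}{\theta n}(\log 2 - \underline{\delta}(\theta))$ and the factor $\frac{n-1}{n}$ behave as required for fixed $\theta$), one obtains
\[
\limsup_{n \to \infty} \tfrac{1}{n}\overline{D}(e^{nR}) \le R - H^{W}_{1+\theta}(X).
\]
Sending $\theta \to 0^+$ and invoking continuity yields $R - H^{W}(X)$. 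For the matching lower bound, Theorem \ref{Th11-25-4} gives $\liminf_n \tfrac{1}{n} D(e^{nR}) \ge R - H^{W}_{1-\theta}(X)$ for each $\theta \in (0,1]$, and sending $\theta \to 0^+$ again produces $R - H^{W}(X)$.

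In the sub-critical regime $R < H^{W}(X)$, the lower bound $\liminf_n \tfrac{1}{n} D(e^{nR}) \ge 0$ is automatic from non-negativity of divergence. Theorem \ref{Th11-25-3} cannot be applied directly since its hypothesis $R \ge H^{W}_{1+\theta}(X)$ fails for small $\theta$, so I return to Theorem \ref{Th11-25-1} and combine it with the singleton-$\mathcal{Y}$ specialization of Lemma \ref{lemma:mult-terminal-finite-evaluation-down-conditional-renyi} to bound
\[
\overline{D}(e^{nR}) \le \tfrac{1}{\theta}\log\bigl(1 + e^{n\theta(R - H^{W}_{1+\theta}(X)) + \theta H^{W}_{1+\theta}(X) - \underline{\delta}(\theta)}\bigr).
\]
By continuity I may fix $\theta \in (0,1]$ so small that $H^{W}_{1+\theta}(X) > R$; then the exponent in the display tends to $-\infty$ linearly in $n$, so $\overline{D}(e^{nR})$ is bounded by a vanishing constant and $\tfrac{1}{n}\overline{D}(e^{nR}) \to 0$ as required.

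The argument is essentially a double-limit extraction of the Shannon entropy rate from its R\'enyi counterpart through continuity at $\theta = 0$, a program made routine by the finite-length Theorems \ref{Th11-25-3} and \ref{Th11-25-4}. The only genuine wrinkle is the sub-critical case, where those simplified finite-length bounds are inapplicable and one must revert to the raw bound of Theorem \ref{Th11-25-1} to exploit the $\log(1 + \mathrm{e}^{-cn})$ behavior; the need to handle this case is the main technical point to watch.
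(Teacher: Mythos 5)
Your proof is correct, and its overall skeleton coincides with the paper's: sandwich $\frac{1}{n}D$ and $\frac{1}{n}\overline{D}$ between the finite-length bounds of Theorems \ref{Th11-25-3}/\ref{Th11-25-1} and Theorem \ref{Th11-25-4}, let $n\to\infty$ for fixed $\theta$, combine with non-negativity of the divergence, and finally send $\theta\to 0$ using continuity of $H^W_{1+\theta}(X)$ at $\theta=0$. The one place where you genuinely diverge is the sub-critical regime $R<H^W(X)$ of the upper bound. The paper handles it by asserting the monotonicity $\overline{D}(e^{nR})\ge \overline{D}(e^{nR'})$ for $R\ge R'$ and pushing $R$ up to the boundary where Theorem \ref{Th11-25-3} applies; you instead return to the raw bound of Theorem \ref{Th11-25-1}, insert the finite-length R\'enyi evaluation of Lemma \ref{lemma:mult-terminal-finite-evaluation-down-conditional-renyi}, and observe that for a fixed small $\theta$ with $H^W_{1+\theta}(X)>R$ the resulting $\frac{1}{\theta}\log\bigl(1+e^{-cn}\bigr)$ expression vanishes. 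Your route is slightly more self-contained, since the paper's monotonicity claim is stated without justification (and is not entirely immediate for $\overline{D}$, which involves a supremum over hash families), and it yields the stronger conclusion that $\overline{D}(e^{nR})$ itself, not merely its rate, tends to zero below the entropy rate; the paper's route is shorter once the monotonicity is granted. Both arguments are valid.
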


\begin{proof}
When $R \ge H^W_{1+\theta}(X)$,
\eqref{2-7-1} of Theorem \ref{Th11-25-3} implies that
\begin{align}
\lim_{n \to \infty} \frac{1}{n}\overline{D}(e^{nR}) 
\le R-H^W_{1+\theta}(X) 
\Label{11-25-8}
\end{align}
for $\theta \in (0,1)$.
Since $\overline{D}(e^{nR}) \ge \overline{D}(e^{nR'})$
for $R \ge R'$,
\eqref{11-25-8} implies that
\begin{align}
\lim_{n \to \infty} \frac{1}{n}\overline{D}(e^{nR}) 
\le [R-H^W_{1+\theta}(X)]_+ 
\end{align}
for $\theta \in (0,1)$ and any $R$.

Also, \eqref{11-25-4} of Theorem \ref{Th11-25-4} implies that
\begin{align}
\lim_{n \to \infty} \frac{1}{n}D(e^{nR}) 
\ge R-H^W_{1-\theta}(X) 
\Label{11-25-9}
\end{align}
for $\theta \in (0,1)$ and any $R$.
Since $D(e^{nR}) \ge 0$, we have
\begin{align}
\lim_{n \to \infty} \frac{1}{n}D(e^{nR}) 
\ge [R-H^W_{1-\theta}(X)]_+ 
\Label{11-25-9b}
\end{align}
for $\theta \in (0,1)$ and any $R$.
Taking the limit $\theta \to 0$, we have
\eqref{11-25-10}.
\end{proof}

\begin{savenotes}[b]
 \begin{table*}
\begin{center}
\caption{Summary of the bounds for uniform random number generation with side-information.}
\Label{table:summary:multi-random-number}
{\renewcommand{\arraystretch}{1.4}
\begin{tabular}{|c|c|c|c|c|c|c|c|c|} \hline
  \multirow{2}{*}{Ach./Conv.}  
  & \multirow{2}{*}{Markov} & \multirow{2}{*}{Single Shot} &  \multirow{2}{*}{$\Delta$,$\overline{\Delta}$,$D$,$\overline{D}$} &  \multirow{2}{*}{Complexity} 
  & Large & Moderate & Second & \MH{MMIR} \\
  & & & & & Deviation & Deviation & Order & Rate
\\ \hline
\multirow{4}{*}{Achievability} 
 & Theorem \ref{theorem:multi-random-finite-markov-assumption-1-direct} (Ass.~1) & (Lemma \ref{lemma:exponential-bound}) & $\overline{\Delta}$ & $O(1)$ &      & \checkmark &  &
\\ \cline{2-9}
 & Theorem \ref{theorem:multi-random-finite-markov-assumption-2-direct} (Ass.~2) & Lemma \ref{lemma:exponential-bound} & $\overline{\Delta}$ & $O(1)$ & $\checkmark^*$     & \checkmark &  &
\\ \cline{2-9}
 & \multicolumn{2}{|c|}{Lemma \ref{lemma:multi-information-spectrum}}  &  $\overline{\Delta}$ & Tail &  & \checkmark & \checkmark &
\\ \cline{2-9}
& Theorem \ref{Th11-25b-3} (Ass.~1) & Theorem \ref{Th11-25b-1} & $\overline{D}$ & $O(1)$ & & & & \checkmark
\\
\hline 
\multirow{4}{*}{Converse} 
 & Theorem \ref{theorem:multi-random-finite-markov-assumption-1-converse} (Ass.~1) & Theorem \ref{theorem:multi-random-exponential-converse}  & $\Delta$ & $O(1)$ &  & \checkmark &  &
\\ \cline{2-9} 
 & Theorem \ref{theorem:multi-random-finite-markov-assumption-2-converse} (Ass.~2) & Theorem \ref{theorem:multi-random-exponential-converse-stong-universal} & $\overline{\Delta}$ & $O(1)$ &$\checkmark^*$ & \checkmark &  &
\\ \cline{2-9} 
 & \multicolumn{2}{|c|}{Lemma \ref{lemma:multi-random-sphere-packing-converse}} &  $\Delta$ & Tail &  & \checkmark & \checkmark &
\\ \cline{2-9} 
& Theorem \ref{Th11-25b-4} (Ass.~1) & Proposition \ref{Th11-25b-2} & $D$ & $O(1)$ & & & & \checkmark
\\
\hline 
\end{tabular}
}
\end{center}
\end{table*}
\end{savenotes}

\subsection{Numerical Example} \label{subsection:single-source-numerical}


In this section, we numerically evaluate the achievability bound in Theorem \ref{theorem:single-random-finite-markov-direct-1}
and the converse bounds in 
Theorems \ref{theorem:single-random-sphere-packing-converse-finite-markov}
and \ref{theorem:single-random-number-strong-universal-finite-markov-converse}.
As shown in Theorem \ref{T11-22-1}, 
the finite-length bounds in 
Theorems \ref{theorem:single-random-finite-markov-direct-1}
and \ref{theorem:single-random-number-strong-universal-finite-markov-converse}
achieve the optimal rate in the sense of Large deviation 
when $R$ is larger than the critical rate.
Hence, we can expect that 
the converse bounds in 
Theorem \ref{theorem:single-random-number-strong-universal-finite-markov-converse}
is better than that in  
Theorem \ref{theorem:single-random-sphere-packing-converse-finite-markov}.
Now, we numerically demonstrate how 
the converse bounds in 
Theorem \ref{theorem:single-random-number-strong-universal-finite-markov-converse}
is better than that 
in Theorem \ref{theorem:single-random-sphere-packing-converse-finite-markov}.
Note that the single-shot bounds for second order 
in Lemmas \ref{lemma:single-random-number-leftover-loose} and \ref{lemma:single-random-number-converse}
are not given in  a computable form with Markovian case.
So, we compare the bounds given in Theorems 
\ref{theorem:single-random-finite-markov-direct-1}, \ref{theorem:single-random-sphere-packing-converse-finite-markov}
and \ref{theorem:single-random-number-strong-universal-finite-markov-converse}.

\begin{figure}[htbp]
\begin{center}
\includegraphics[scale=0.4]{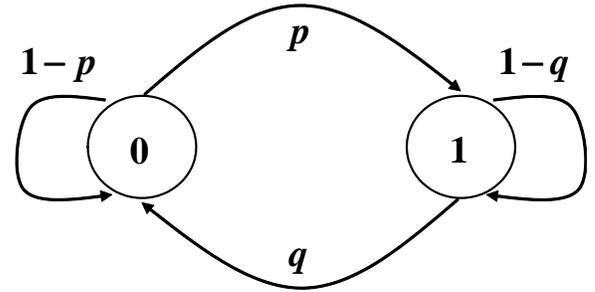}
\end{center}
\caption{The description of the transition matrix.}
\label{Fig:transition}
\end{figure}%

We consider a binary transition matrix $W$ given by Fig.~\ref{Fig:transition}, i.e.,
\begin{eqnarray}
W = \left[
\begin{array}{cc}
1-p & q \\
p & 1-q
\end{array}
\right].
\end{eqnarray}
In this case, the stationary distribution is
\begin{eqnarray}
\tilde{P}(0) &=& \frac{q}{p+q}, \\
\tilde{P}(1) &=& \frac{p}{p+q}.
\end{eqnarray}
The entropy is
\begin{eqnarray}
H^W(X) = \frac{q}{p+q} h(p ) + \frac{p}{p+q} h(q ),
\end{eqnarray}
where $h(\cdot)$ is the binary entropy function.
The tilted transition matrix is 
\begin{eqnarray}
\tilde{W}_\theta = \left[
\begin{array}{cc}
(1-p)^{1+\theta} & q^{1+\theta} \\
p^{1+\theta} & (1-q)^{1+\theta}
\end{array}
\right].
\end{eqnarray}
The Perron-Frobenius eigenvalue is
\begin{align}
\lambda_\theta 
=& \frac{(1-p)^{1+\theta} + (1-q)^{1+\theta} }{2}\nonumber \\
& + \frac{\sqrt{\{(1-p)^{1+\theta} - (1-q)^{1+\theta} \}^2 + 4 p^{1+\theta} q^{1+\theta}} }{2}
\end{align}
and its normalized eigenvector is
\begin{eqnarray}
\tilde{P}_\theta(0) &=& 
\frac{q^{1+\theta}}{\lambda_\theta - (1-p)^{1+\theta} + q^{1+\theta}}, \\
\tilde{P}_\theta(1) &=& 
\frac{\lambda_\theta - (1-p)^{1+\theta}}{\lambda_\theta - (1-p)^{1+\theta} + q^{1+\theta}}.
\end{eqnarray}
The eigenvector of $\tilde{W}_\theta^T$ satisfying \eqref{1-14-1} is also given by
\begin{eqnarray}
v_\theta(0) &=& 
\frac{q^{1+\theta}}
{\min(\lambda_\theta - (1-p)^{1+\theta}, q^{1+\theta})}, \\
v_\theta(1) &=& 
\frac{\lambda_\theta - (1-p)^{1+\theta}}
{\min(\lambda_\theta - (1-p)^{1+\theta}, q^{1+\theta})}.
\end{eqnarray}
From these calculations, we can evaluate the bounds in Theorems 
\ref{theorem:single-random-finite-markov-direct-1},
\ref{theorem:single-random-sphere-packing-converse-finite-markov},
and 
\ref{theorem:single-random-number-strong-universal-finite-markov-converse}.
When the initial distribution is given as $P_X(0)=1$ and  $P_X(1)=0$,
for $p = 0.1$, $q=0.2$, 
we plotted the bounds in Fig.~\ref{ff2} for fixed block length $n=10000$ and $n=1000000$
and varying $\varepsilon= \Delta(M)$ or $\overline{\Delta}(M)$. 
The two bounds in Theorems \ref{theorem:single-random-sphere-packing-converse-finite-markov}
and 
\ref{theorem:single-random-number-strong-universal-finite-markov-converse}
have similar values in the left of Fig.~\ref{ff2}.
However,
the bound in Theorem \ref{theorem:single-random-number-strong-universal-finite-markov-converse}
has a clear advantage in the right of Fig.~\ref{ff2}.
That is, to clarify the advantage of Theorem \ref{theorem:single-random-number-strong-universal-finite-markov-converse},
we need a very huge size $n$ and a very small $\epsilon$.
Although one may consider that $n=1000000$ is too large to realize,
this size is realizable as follows.
A typical two-universal hash family can be realized by using 
Toeplitz matrix.
This kind two-universal hash family with $n=10^8$ 
was realized efficiently by using a typical personal computer \cite[Appendix B]{H-T}\cite{asai:11}.


\section{Secure Uniform Random Number Generation} \Label{section:multi-random-number}

In this section, we investigate the secure random number generation 
with partial information leakage, which is also known as the privacy amplification.
We start this section by showing the problem setting in Section \ref{subsection:multi-random-problem-formulation}.
Then, we review and introduce some single-shot bounds in Section \ref{subsection:multi-random-one-shot}.
We derive non-asymptotic bounds for the Markov chain in Section \ref{subsection:multi-random-finite-markov}.
Then, in Sections \ref{subsection:multi-random-large-deviation} and \ref{subsection:multi-random-mdp},
we show the asymptotic characterization for the large deviation regime and the moderate deviation regime
by using those non-asymptotic bounds. We also derive the second order rate in Section \ref{subsection:multi-random-second-order}.


The results shown in this section are summarized in Table \ref{table:summary:multi-random-number}.
The checkmarks $\checkmark$ indicate that the tight asymptotic bounds (large deviation, moderate deviation, and second order)
 can be obtained from those bounds.
The marks $\checkmark^*$ indicate that the large deviation bound can be derived up to the critical rate. 
The computational complexity "Tail" indicates that the computational complexities of those bounds 
depend on the computational complexities of tail probabilites.
It should be noted that Theorem \ref{theorem:multi-random-finite-markov-assumption-1-direct}
is derived from a special case ($Q_{Y} = P_Y$) of Lemma \ref{lemma:exponential-bound}.
The asymptotically optimal choice is $Q_Y = P_Y^{(1+\theta)}$, which corresponds to \eqref{eq:pa-exponential-bound} of Lemma \ref{lemma:exponential-bound}.
Under Assumption \ref{assumption-Y-marginal-markov}, we can derive the bound of the Markov case only for that
special choice of $Q_Y$, while under Assumption \ref{assumption-memory-through-Y}, we can derive the bound of
the Markov case for the optimal choice of $Q_Y$.
Here, we didn't call several lemmas as theorems although they derive the asymptotic tight bound.
This is because they are not computable form as explained in 
the beginning of Section \ref{section:single-random-number}.

\subsection{Problem Formulation} \Label{subsection:multi-random-problem-formulation}

The privacy amplification is conducted by a function $f:{\cal X} \to \{1,\ldots,M\}$. The security of the generated key is evaluated by
\begin{eqnarray} \Label{eq:definition-security}
\Delta [f] := \frac{1}{2} \| P_{f(X)Y} - P_{\overline{U}} \times P_Y \|_1,
\end{eqnarray}
where $\overline{U}$ is the uniform random variable on $\{1,\ldots,M\}$ and $\| \cdot \|_1$ is the variational distance. 
For notational convenience, we introduce the infimum of the security criterion under the condition that the range size is $M$:
\begin{eqnarray}
\Delta(M) := \inf_{f} \Delta [f].
\end{eqnarray}

When we construct a function for the privacy amplification, 
we often use a two-universal hash family ${\cal F}$ and a random function $F$
on ${\cal F}$. Then, we bound the security criterion averaged over the random function by only using the property of two-universality.
\MH{As explained in Subsection \ref{two-cri},
to take into the practical aspects,
we introduce the worst leaked information:}
\begin{eqnarray} \Label{eq:multi-random-number-worst-u2}
\overline{\Delta}(M) := \sup_F \mathbb{E}[\Delta [F]],
\end{eqnarray}
where the supremum is taken over all two-universal hash families from ${\cal X}$ to $\{1,\ldots,M\}$.
From the definition, we obviously have
\begin{eqnarray}
\Delta(M) \le \overline{\Delta}(M).
\end{eqnarray}

When we consider $n$-fold extension, the security criteria are denoted by
$\Delta(M_n)$ or $\overline{\Delta}(M_n)$. 
As in the single-terminal case, we also introduce the quantities $M(n,\varepsilon)$ and $\overline{M}(n,\varepsilon)$
(cf.~\eqref{eq:definition-single-random-m-epsilon} and \eqref{eq:definition-single-random-barm-epsilon}).

\begin{remark} \Label{remark:alternative-security}
Note that the security definition in \eqref{eq:definition-security} implies the universal composable security criterion \cite{canetti:01, pfitzmann:00}.
A slightly weaker security criterion defined by
\begin{eqnarray}
\inf_{Q_Y} \frac{1}{2} \| P_{f(X)Y} - P_{\overline{U}} \times Q_Y \|_1
\end{eqnarray}
also implies the universal composable security criterion. In fact some literatures employs this kinds
of security criteria \cite{shikata:13, tomamichel:12, koashi:09}. 
Since the triangle inequality 
and the information processing inequality 
$\| Q_Y -  P_Y \|_1 \le
\| P_{\overline{U}} \times Q_Y - P_{f(X)Y} \|_1 $ imply
\begin{align*}
&
\frac{1}{2} \| P_{f(X)Y} - P_{\overline{U}} \times P_Y \|_1 \\
\le &
\frac{1}{2} \| P_{f(X)Y} - P_{\overline{U}} \times Q_Y \|_1 
+
\frac{1}{2} \| P_{\overline{U}} \times Q_Y - P_{\overline{U}} \times P_Y \|_1  \\
=&
\frac{1}{2} \| P_{f(X)Y} - P_{\overline{U}} \times Q_Y \|_1 
+
\frac{1}{2} \| Q_Y -  P_Y \|_1 \\
\le &
\frac{1}{2} \| P_{f(X)Y} - P_{\overline{U}} \times Q_Y \|_1 
+
\frac{1}{2} 
\| P_{\overline{U}} \times Q_Y - P_{f(X)Y} \|_1 ,
\end{align*}
we have
\begin{eqnarray}
\frac{1}{2} \| P_{f(X)Y} - P_{\overline{U}} \times P_Y \|_1 
\le \| P_{f(X)Y} - P_{\overline{U}} \times Q_Y \|_1
\end{eqnarray}
holds for any $Q_Y$. Thus, the two criteria differ only in constant factor, which means that 
the asymptotic behaviors of the large deviation regime and the moderate deviation regime 
are not affected by the choice of the security criteria.

\MH{For the second order regime, the same fact can be shown as follows.}
The achievability part (Lemma \ref{lemma:multi-information-spectrum} \MH{given in Subsection \ref{subsection:multi-random-one-shot}})
can be used without modification since the optimization over $Q_Y$ is already incorporated into the bound.
For the converse part, we need to replace $H_{\min}^\varepsilon(P_{XY}|P_Y)$
with $H_{\min}^\varepsilon(P_{XY}|Q_Y)$ in Lemma \ref{lemma:multi-random-number-monotonicity-converse}
\MH{given in Subsection \ref{subsection:multi-random-one-shot}}.
Then, the converse bound in Lemma \ref{lemma:multi-random-sphere-packing-converse} \MH{given in Subsection \ref{subsection:multi-random-one-shot}}
is modified accordingly, i.e.,
\begin{eqnarray*}
\Delta(M) \ge 
\inf_{Q_Y} \max_{\gamma \ge 0}\left[ P_{XY}\left\{ \log \frac{Q_Y(y)}{P_{XY}(x,y)} < \gamma \right\}\left(1 - \frac{e^\gamma}{M} \right) \right].
\end{eqnarray*}
However, by noting the inequality
\begin{align}
& P_{XY}\left\{ \log \frac{Q_Y(y)}{P_{XY}(x,y)} < \gamma \right\} \nonumber \\
\ge & P_{XY}\left\{ \log \frac{1}{P_{X|Y}(x|y)} < \gamma - \nu \right\} - P_{XY}\left\{ \log \frac{Q_Y(y)}{P_Y(y)} > \nu  \right\} \nonumber \\
\ge & P_{XY}\left\{ \log \frac{1}{P_{X|Y}(x|y)} < \gamma - \nu \right\} - e^{-\nu}
\end{align}
for any $\nu > 0$, 
the choice $Q_Y = P_Y$ turns out to be the optimal choice asymptotically up to $o(\sqrt{n})$.
Thus, the asymptotic behavior of the second order regime is also not affected by the choice of the security criteria.
\end{remark}

When the output size $M$ is too large, 
$\Delta(M)$ is close to $1$ anymore.
In this case, to quantify the performance of the output random number, 
according to Csisz\'{a}r-Narayan \cite{csiszar:04},
we focus on 
the relative entropy between the generated random number and the ideal random number as follows.
\begin{align}
D(P_{f(X)Y} \| P_{\overline{U}}\times P_Y)
=&
\log M - H(f(X) |Y) \nonumber \\
=& I(f(X);Y)+
D(P_{f(X)} \| P_{\overline{U}}).
\end{align}
Since this quantity can be regarded as 
a modification of the mutual information $I(f(X);Y)$,
we call it the modified mutual information.
This quantity is naturally given under axiomatic conditions \cite{hayashi:13}.
Then, we address the following quantities.
\begin{align}
D(M) :=& \inf_f D(P_{f(X)Y} \| P_{\overline{U}}\times P_Y) \\
\overline{D}(M) :=& \sup_F \mathbb{E} [ D(P_{F(X)Y F} \| P_{\overline{U}}\times P_Y)]
\nonumber \\
=& D(P_{F(X)Y F} \| P_{\overline{U}}\times P_Y\times P_F)
\end{align}
where the supremum is taken over all two-universal hash families from ${\cal X}$ to $\{1,\ldots,M\}$.
The reason why we consider such a supremum 
is the same as the case of $\overline{\Delta}(M)$.

\subsection{Single Shot Bounds} \Label{subsection:multi-random-one-shot}

In this section, we review existing single shot bounds, and show a novel converse bound.
For the information measures used below, see Section \ref{section:preparation-multi}.
We also introduce the following information measures. 
For $P_{XY} \in \overline{{\cal P}}({\cal X} \times {\cal Y})$ and $Q_Y \in {\cal P}({\cal Y})$\footnote{Technically, we restrict 
$Q_Y$ to be such that $\rom{supp}(P_Y) \subset \rom{supp}(Q_Y)$.},  let
\begin{eqnarray}
H_{\min}(P_{XY}|Q_Y) := - \log \max_{x,y} \frac{P_{XY}(x,y)}{Q_Y(y)}
\end{eqnarray}
be the conditional $\min$-entropy. Then, for $P_{XY} \in {\cal P}({\cal X}\times{\cal Y})$, let 
\begin{eqnarray}
H_{\min}^\varepsilon(P_{XY}|Q_Y) := \max_{P^\prime_{XY} \in {\cal B}^\varepsilon(P_{XY})} H_{\min}(P^\prime_{XY}|Q_Y)
\end{eqnarray}
and
\begin{eqnarray}
\overline{H}_{\min}^\varepsilon(P_{XY}|Q_Y) := \max_{P^\prime_{XY} \in \overline{{\cal B}}^\varepsilon(P_{XY})} H_{\min}(P^\prime_{XY}|Q_Y)
\end{eqnarray}
be the smooth $\min$-entropy, where 
\begin{align*}
{\cal B}(P_{XY}) :=& \left\{ P^\prime_{XY} \in {\cal P}({\cal X} \times {\cal Y}) : \frac{1}{2} \| P_{XY} - P^\prime_{XY} \|_1 \le \varepsilon \right\}, \\
\overline{{\cal B}}(P_{XY}) :=& \left\{ P^\prime_{XY} \in \overline{{\cal P}}({\cal X} \times {\cal Y}) : \frac{1}{2} \| P_{XY} - P^\prime_{XY} \|_1 \le \varepsilon \right\}.
\end{align*}

By using the two-universal hash family, we can derive the following bound.
\begin{lemma}[\cite{renner:05b}] \Label{lemma:multi-min-entropy}
For any $Q_Y \in {\cal P}({\cal Y})$, we have
\begin{eqnarray*}
\overline{\Delta}(M) \le 2 \varepsilon  + \frac{1}{2} \sqrt{M e^{- \overline{H}_{\min}^\varepsilon(P_{XY}|Q_Y)}}.
\end{eqnarray*}
\end{lemma}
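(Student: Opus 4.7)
The plan is to combine the non-smooth leftover hashing lemma with a smoothing argument via the triangle inequality, which is the standard route used in \cite{renner:05b}.

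First I would establish the non-smooth version: for any sub-normalized $P'_{XY} \in \overline{{\cal P}}({\cal X}\times{\cal Y})$ and any $Q_Y$ with $\rom{supp}(P'_Y)\subset\rom{supp}(Q_Y)$, every two-universal hash family $F$ from ${\cal X}$ to $\{1,\ldots,M\}$ satisfies
\begin{eqnarray*}
\mathbb{E}_F \frac{1}{2}\|P'_{F(X)Y} - P_{\overline{U}} \times Q_Y\|_1 \le \frac{1}{2}\sqrt{M\,e^{-H_{\min}(P'_{XY}|Q_Y)}}.
\end{eqnarray*}
The proof of this is routine: Cauchy--Schwarz gives an $L_2$-type upper bound on the $L_1$ distance per $y$, then the two-universal property produces the collision bound $\mathbb{E}_F[\mathbb{1}\{F(x)=F(x')\}] \le 1/M$ for $x \neq x'$, leading to the expected squared distance $\le \frac{1}{M}\sum_{x,y} P'_{XY}(x,y)^2/Q_Y(y)$. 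One then uses the trivial estimate
\begin{eqnarray*}
\sum_{x,y} \frac{P'_{XY}(x,y)^2}{Q_Y(y)} \le \max_{x,y}\frac{P'_{XY}(x,y)}{Q_Y(y)} \cdot \sum_{x,y} P'_{XY}(x,y) \le e^{-H_{\min}(P'_{XY}|Q_Y)}.
\end{eqnarray*}

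Next I would smooth via the triangle inequality. Pick $P'_{XY} \in \overline{{\cal B}}^\varepsilon(P_{XY})$ achieving $\overline{H}_{\min}^\varepsilon(P_{XY}|Q_Y)$. Inserting both $P'_{F(X)Y}$ and $P_{\overline{U}}\times Q_Y$ as intermediate points gives
\begin{eqnarray*}
\| P_{F(X)Y} - P_{\overline{U}}\times P_Y \|_1
&\le& \| P_{F(X)Y} - P'_{F(X)Y} \|_1 + \| P'_{F(X)Y} - P_{\overline{U}}\times Q_Y \|_1 \\
&& \hspace{2ex} + \| P_{\overline{U}}\times Q_Y - P_{\overline{U}}\times P_Y \|_1.
\end{eqnarray*}
The first term is at most $\|P_{XY}-P'_{XY}\|_1 \le 2\varepsilon$ by data processing. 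The third term equals $\|Q_Y - P_Y\|_1$, which is not directly small, but I would control it by the same triangle trick: $\|Q_Y - P'_Y\|_1 \le \|P'_{F(X)Y} - P_{\overline{U}}\times Q_Y\|_1$ by data processing, and $\|P'_Y - P_Y\|_1 \le 2\varepsilon$ again by data processing, so the third term is absorbed into the other two up to an additional $2\varepsilon$. Taking expectation over $F$ and plugging in the non-smooth bound yields the claimed estimate.

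The main obstacle is bookkeeping for the fact that $P'_{XY}$ is only a sub-normalized distribution and that $Q_Y$ is allowed to be far from the true marginal $P_Y$. Handling both issues simultaneously is what forces the detour through the $Q_Y$-anchored triangle inequality rather than a direct comparison with $P_Y$; the data-processing step that bounds $\|Q_Y - P'_Y\|_1$ by the leftover-hashing quantity is the key device that prevents a divergent $\|Q_Y - P_Y\|_1$ term from appearing in the final bound.
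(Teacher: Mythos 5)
The paper itself does not reprove this lemma (it is imported from \cite{renner:05b}), so your proposal has to be judged against the stated bound, and there it falls short: as written it proves $\overline{\Delta}(M)\le 2\varepsilon+\sqrt{Me^{-\overline{H}_{\min}^\varepsilon(P_{XY}|Q_Y)}}$ at best, not the claimed $2\varepsilon+\frac{1}{2}\sqrt{\cdot}$. The loss comes from your choice of $P_{\overline{U}}\times Q_Y$ as the anchor of the triangle inequality. Because you then have to control $\|Q_Y-P_Y\|_1$ by recycling the leftover-hashing term via $\|Q_Y-P'_Y\|_1\le\|P'_{F(X)Y}-P_{\overline{U}}\times Q_Y\|_1$, that term appears \emph{twice} in your chain, which doubles the square-root contribution. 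There is also a second, more subtle problem in your non-smooth step: when the comparison distribution is $P_{\overline{U}}\times Q_Y$, the cross terms in the $L_2$ expansion do not cancel. One gets
\begin{align*}
\mathbb{E}_F\sum_{z,y}\frac{\left(P'_{F(X)Y}(z,y)-\tfrac{1}{M}Q_Y(y)\right)^2}{Q_Y(y)}
\le \sum_{x,y}\frac{P'_{XY}(x,y)^2}{Q_Y(y)}+\frac{1}{M}\sum_y\frac{\left(P'_Y(y)-Q_Y(y)\right)^2}{Q_Y(y)},
\end{align*}
and the residual $\chi^2$-type term is \emph{not} controlled by $e^{-H_{\min}(P'_{XY}|Q_Y)}$ (only by $|{\cal X}|\,e^{-H_{\min}}$, which can dominate $M e^{-H_{\min}}$). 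So the inequality you take as your starting point is not actually established by the computation you sketch.

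Both problems disappear if you anchor at the \emph{marginal of the smoothed distribution} instead: prove $\mathbb{E}_F\frac{1}{2}\|P'_{F(X)Y}-P_{\overline{U}}\times P'_Y\|_1\le\frac{1}{2}\sqrt{M e^{-H_{\min}(P'_{XY}|Q_Y)}}$, where $Q_Y$ enters only as the weight in the Cauchy--Schwarz step and in the min-entropy, never as a comparison distribution. With $\tfrac{1}{M}P'_Y(y)$ in place of $\tfrac{1}{M}Q_Y(y)$ the cross terms cancel exactly (since $\sum_z P'_{F(X)Y}(z,y)=P'_Y(y)$), and the three-term triangle inequality through $P'_{F(X)Y}$ and $P_{\overline{U}}\times P'_Y$ costs only $\varepsilon+\varepsilon$ from the two data-processing steps $\|P_{F(X)Y}-P'_{F(X)Y}\|_1\le 2\varepsilon$ and $\|P'_Y-P_Y\|_1\le 2\varepsilon$, giving exactly $2\varepsilon+\frac{1}{2}\sqrt{M e^{-\overline{H}_{\min}^\varepsilon(P_{XY}|Q_Y)}}$. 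This is the route Renner takes; your detour through $P_{\overline{U}}\times Q_Y$ is not just extra bookkeeping but the source of a genuine constant-factor (and potentially worse) degradation.
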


\MH{However, the bound in Lemma \ref{lemma:multi-min-entropy}
cannot be directly calculated in the Markovian chain.
To resolve this problem, we slightly loosen Lemma \ref{lemma:multi-min-entropy} as follows.}
(cf.~\cite[Theorem 23]{hayashi:13} or \cite[Lemma 3]{watanabe:13c}).

\begin{lemma} \Label{lemma:multi-information-spectrum}
For any $Q_Y \in {\cal P}({\cal Y})$, we have
\begin{eqnarray*}
\overline{\Delta}(M) \le \inf_{\gamma \ge 0}\left[ P_{XY}\left\{ \log \frac{Q_Y(y)}{P_{XY}(x,y)} < \gamma \right\} +  \frac{1}{2} \sqrt{\frac{M}{e^\gamma}} \right].
\end{eqnarray*} 
\end{lemma}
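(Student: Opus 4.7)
The plan is to reduce this lemma to Lemma \ref{lemma:multi-min-entropy} via a standard truncation-based smoothing argument, following essentially the approach indicated by the citations \cite[Theorem 23]{hayashi:13} and \cite[Lemma 3]{watanabe:13c}. The key idea is that the tail probability on the right-hand side exactly quantifies how much mass we must discard from $P_{XY}$ in order to obtain a sub-normalized measure whose conditional min-entropy relative to $Q_Y$ is at least $\gamma$, so the statement will follow by substituting this into the leftover-hash-type bound.

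More concretely, fix $\gamma \ge 0$ and define the ``typical'' set $\mathcal{T} := \{(x,y) : P_{XY}(x,y) \le e^{-\gamma} Q_Y(y)\}$, so its complement is exactly the event whose probability appears in the statement. Let $P'_{XY}(x,y) := P_{XY}(x,y)\, \mathbf{1}[(x,y) \in \mathcal{T}]$, which is a non-negative sub-normalized function on ${\cal X} \times {\cal Y}$. Writing $\varepsilon := P_{XY}\{\log (Q_Y(y)/P_{XY}(x,y)) < \gamma\}$, one immediately computes $\frac{1}{2}\|P_{XY} - P'_{XY}\|_1 = \varepsilon/2$, so $P'_{XY} \in \overline{\mathcal{B}}^{\varepsilon/2}(P_{XY})$. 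On the other hand, the truncation forces $P'_{XY}(x,y)/Q_Y(y) \le e^{-\gamma}$ pointwise, hence $H_{\min}(P'_{XY}|Q_Y) \ge \gamma$, which by definition of the smoothed quantity yields $\overline{H}_{\min}^{\varepsilon/2}(P_{XY}|Q_Y) \ge \gamma$.

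Applying Lemma \ref{lemma:multi-min-entropy} with smoothing parameter $\varepsilon/2$ then gives $\overline{\Delta}(M) \le 2 \cdot (\varepsilon/2) + \frac{1}{2}\sqrt{M e^{-\overline{H}_{\min}^{\varepsilon/2}(P_{XY}|Q_Y)}} \le \varepsilon + \frac{1}{2}\sqrt{M/e^\gamma}$, which is exactly the claimed bound once we take the infimum over $\gamma \ge 0$. Since the argument is a routine reduction — essentially just tracking $\ell_1$-distances — I do not expect any significant obstacle. The only point needing minor care is the bookkeeping of the factor of $1/2$ in the total-variation convention: we must invoke Lemma \ref{lemma:multi-min-entropy} with parameter $\varepsilon/2$ rather than $\varepsilon$ so that the ``$2\varepsilon$'' term on the right-hand side of that lemma collapses to exactly the tail probability $\varepsilon$ (with coefficient $1$) appearing in the present statement.
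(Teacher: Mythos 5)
Your proposal is correct and follows exactly the route the paper intends: the paper presents this lemma as a loosening of Lemma \ref{lemma:multi-min-entropy} (citing \cite[Theorem 23]{hayashi:13} and \cite[Lemma 3]{watanabe:13c}), obtained by the same truncation-based smoothing you describe. The bookkeeping of the $1/2$ factor — invoking the smooth min-entropy bound with parameter $\varepsilon/2$ so that the $2\varepsilon$ term collapses to the tail probability with coefficient $1$ — is handled correctly.
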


We also have the following exponential bound.
\begin{lemma}[\cite{hayashi:10b}] \Label{lemma:exponential-bound}
We have
\begin{align}
& \overline{\Delta}(M) \nonumber \\
\le & 
\min_{Q_Y \in {\cal P}({\cal Y})} \inf_{0 \le \theta \le 1} \frac{3}{2} M^{\frac{\theta}{1+\theta}} e^{- \frac{\theta}{1+\theta} H_{1+\theta}(P_{XY}|Q_Y)} 
 \Label{eq:pa-exponential-bound-alternative-2} \\
=& \inf_{0 \le \theta \le 1} \frac{3}{2} M^{\frac{\theta}{1+\theta}} e^{- \frac{\theta}{1+\theta} H_{1+\theta}^\uparrow(X|Y)}.
\Label{eq:pa-exponential-bound}
\end{align}
\end{lemma}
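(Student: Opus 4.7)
The plan is to reduce the bound to the single-source version, Lemma~\ref{lemma:single-random-number-exponential-bound}, applied conditionally to each value of $y$, and then to integrate over $y$. First, I decompose the variational distance by conditioning on $Y$:
\[
\|P_{F(X)Y} - P_{\overline{U}} \times P_Y\|_1 = \sum_y P_Y(y)\,\|P_{F(X)|Y=y} - P_{\overline{U}}\|_1,
\]
so that $\overline{\Delta}(M) = \sup_F \tfrac{1}{2} \sum_y P_Y(y)\,\mathbb{E}[\|P_{F(X)|Y=y} - P_{\overline{U}}\|_1]$. Exchanging the supremum over two-universal hash families with the $P_Y$-weighted sum by $\sup_F\sum_y \le \sum_y \sup_F$ gives
\[
\overline{\Delta}(M) \le \sum_y P_Y(y)\,\overline{\Delta}_{P_{X|Y=y}}(M),
\]
where $\overline{\Delta}_{P_{X|Y=y}}(M)$ denotes the single-source worst-case variational distance of Section~\ref{subsection:single-random-problem-formulation} evaluated at the conditional distribution $P_{X|Y=y}$.

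Next, I would apply Lemma~\ref{lemma:single-random-number-exponential-bound} to each $P_{X|Y=y}$, obtaining
\[
\overline{\Delta}_{P_{X|Y=y}}(M) \le \tfrac{3}{2}M^{\theta/(1+\theta)}\,\Bigl[\sum_x P_{X|Y}(x|y)^{1+\theta}\Bigr]^{1/(1+\theta)}
\]
for every $\theta \in [0,1]$, using that $e^{-\theta H_{1+\theta}(X|Y=y)} = \sum_x P_{X|Y}(x|y)^{1+\theta}$. Summing over $y$ with weight $P_Y(y)$ and invoking the Arimoto form \eqref{11-14-6},
\[
\sum_y P_Y(y)\Bigl[\sum_x P_{X|Y}(x|y)^{1+\theta}\Bigr]^{1/(1+\theta)} = e^{-\theta/(1+\theta)\,H_{1+\theta}^\uparrow(X|Y)},
\]
yields $\overline{\Delta}(M) \le \tfrac{3}{2}M^{\theta/(1+\theta)} e^{-\theta/(1+\theta)\,H_{1+\theta}^\uparrow(X|Y)}$, which is the bound \eqref{eq:pa-exponential-bound}.

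For the $\min_{Q_Y}$ form \eqref{eq:pa-exponential-bound-alternative-2}, I would invoke the definition $H_{1+\theta}^\uparrow(X|Y) = \max_{Q_Y\in{\cal P}({\cal Y})} H_{1+\theta}(P_{XY}|Q_Y)$ in \eqref{eq:upper-conditional-renyi}: since the right-hand side of the bound is monotonically decreasing in $H_{1+\theta}(P_{XY}|Q_Y)$, its infimum over $Q_Y$ is attained at the Arimoto maximizer $P_Y^{(1+\theta)}$ from \eqref{eq:single-shot-optimal-conditioning-distribution}, at which value it coincides with the bound already established. Hence the two expressions in the lemma are equal, and both upper-bound $\overline{\Delta}(M)$.

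The main subtle point is the interchange $\sup_F \sum_y P_Y(y) \le \sum_y P_Y(y) \sup_F$: this is valid because we are deriving an upper bound on $\overline{\Delta}(M)$, not an exact value, and for each $y$ the worst-case hash family for the conditional source $P_{X|Y=y}$ need not coincide with the worst-case family for the joint source, so passing to per-$y$ suprema loses some tightness but is essential for invoking the single-source Lemma~\ref{lemma:single-random-number-exponential-bound}. Once this interchange is in place, the proof is essentially mechanical, and no further technical obstacle arises.
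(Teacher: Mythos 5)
Your argument is correct. Note first that the paper itself does not prove Lemma \ref{lemma:exponential-bound}: it is imported from \cite{hayashi:10b}, where the bound is established by working directly on the joint distribution --- for an arbitrary conditioning distribution $Q_Y$ one splits $P_{XY}$ according to whether $P_{XY}(x,y)/Q_Y(y)$ exceeds a threshold, bounds the two pieces via the two-universal property and concavity, and optimizes the threshold to get $\frac{3}{2}M^{\theta/(1+\theta)}e^{-\frac{\theta}{1+\theta}H_{1+\theta}(P_{XY}|Q_Y)}$ for \emph{every} $Q_Y$, after which one minimizes over $Q_Y$. Your route is genuinely different and, within this paper, more economical: you use the exact identity $\|P_{F(X)Y}-P_{\overline{U}}\times P_Y\|_1=\sum_y P_Y(y)\|P_{F(X)|Y=y}-P_{\overline{U}}\|_1$, pass to per-$y$ suprema (valid for an upper bound, since any two-universal family for the joint source is an admissible competitor for each conditional source $P_{X|Y=y}$), invoke the already-stated single-source Lemma \ref{lemma:single-random-number-exponential-bound} with a common $\theta$ for all $y$, and recombine via the Arimoto form \eqref{11-14-6}. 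This directly yields \eqref{eq:pa-exponential-bound}, i.e., the bound at the optimal conditioning distribution $P_Y^{(1+\theta)}$ of \eqref{eq:single-shot-optimal-conditioning-distribution}; the form \eqref{eq:pa-exponential-bound-alternative-2} then follows because for $\theta\ge 0$ the bound is decreasing in $H_{1+\theta}(P_{XY}|Q_Y)$ and $H_{1+\theta}^\uparrow(X|Y)$ is by definition \eqref{eq:upper-conditional-renyi} the maximum over $Q_Y$, so the iterated infima coincide. What your reduction buys is self-containedness --- only the single-source lemma is needed --- at the cost of not re-deriving the per-$Q_Y$ bound from first principles; since the lemma only asserts the minimum over $Q_Y$, nothing is lost.
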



For the converse bound, the following is known\footnote{See also \cite{watanabe:13c} for a proof that is specialized for the classical case.}.
\begin{lemma}[\cite{renner:05b}] 
\Label{lemma:multi-random-number-monotonicity-converse}
We have
\begin{eqnarray}
\Delta(M) \ge \min_{H_{\min}^\varepsilon(P_{XY}|P_Y) \ge \log M} \varepsilon.
\end{eqnarray}
\end{lemma}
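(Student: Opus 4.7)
The plan is to establish a data-processing-style inequality for the smooth conditional $\min$-entropy along the optimal extractor, and then use the fact that the ideal distribution $P_{\overline{U}} \times P_Y$ achieves conditional $\min$-entropy exactly $\log M$ relative to $P_Y$.

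First, fix any $\varepsilon > \Delta(M)$ and pick a random number generator $f:{\cal X}\to\{1,\ldots,M\}$ with $\Delta[f] \le \varepsilon$. Write $P^* := P_{\overline{U}} \times P_Y$ for the ideal target distribution, so that $\tfrac{1}{2}\|P_{f(X)Y} - P^*\|_1 \le \varepsilon$ by assumption. I would then introduce the pullback sub-distribution
\begin{eqnarray*}
P'_{XY}(x,y) := P_{XY}(x,y) \cdot \frac{P^*(f(x),y)}{P_{f(X)Y}(f(x),y)},
\end{eqnarray*}
with the convention that the ratio equals $0$ wherever the denominator vanishes. A direct check that summing over $x \in f^{-1}(m)$ yields $\sum_{x \in f^{-1}(m)} P'_{XY}(x,y) = P^*(m,y)$ shows that $P'_{XY}$ is in fact a probability distribution on ${\cal X}\times{\cal Y}$.

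Next, I would verify the two properties required to witness the bound. The ball membership follows because the pullback exactly preserves variational distance along the partition $\{f^{-1}(m)\}_m$: grouping terms in the $\ell_1$ norm,
\begin{eqnarray*}
\tfrac{1}{2}\|P_{XY} - P'_{XY}\|_1 = \tfrac{1}{2}\|P_{f(X)Y} - P^*\|_1 \le \varepsilon,
\end{eqnarray*}
so $P'_{XY} \in {\cal B}^\varepsilon(P_{XY})$. For the $\min$-entropy condition, I would use the pointwise bound $P_{XY}(x,y) \le P_{f(X)Y}(f(x),y)$, which gives
\begin{eqnarray*}
\frac{P'_{XY}(x,y)}{P_Y(y)} \le \frac{P^*(f(x),y)}{P_Y(y)} = \frac{1}{M}
\end{eqnarray*}
for every $(x,y)$, hence $H_{\min}(P'_{XY}|P_Y) \ge \log M$.

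Combining these two facts, $P'_{XY}$ attests that $H_{\min}^\varepsilon(P_{XY}|P_Y) \ge \log M$, so $\varepsilon$ lies in the set over which the minimum on the right-hand side of the claimed inequality is taken. Letting $\varepsilon \downarrow \Delta(M)$ yields $\Delta(M) \ge \min_{H_{\min}^{\varepsilon'}(P_{XY}|P_Y)\ge \log M}\varepsilon'$, as desired. The main technical obstacle is the careful choice of the pullback: one needs exactly the rescaling factor $P^*(f(x),y)/P_{f(X)Y}(f(x),y)$ to simultaneously (i) produce a bona fide probability distribution, (ii) transfer the variational distance isometrically through $f$, and (iii) inherit the $\min$-entropy bound from the ideal $P^*$. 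Once this construction is in hand, each verification step is a short algebraic manipulation.
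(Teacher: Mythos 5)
The paper does not actually prove this lemma---it is imported from \cite{renner:05b} (with a pointer to \cite{watanabe:13c} for a classical proof)---and your pullback construction is precisely the standard argument behind it: push the ideal $P^*=P_{\overline{U}}\times P_Y$ back through $f$ proportionally to $P_{XY}$ on each fiber, check that the variational distance does not grow, and read off $P'_{XY}(x,y)\le P^*(f(x),y)=P_Y(y)/M$ from $P_{XY}(x,y)\le P_{f(X)Y}(f(x),y)$. The substance is right, but one step does not close as written. On pairs $(m,y)$ with $P_{f(X)Y}(m,y)=0$ and $P_Y(y)>0$ your convention assigns zero pulled-back mass, so $P'_{XY}$ is in general only sub-normalized, and your claimed \emph{equality} of variational distances is really only the inequality $\frac{1}{2}\|P_{XY}-P'_{XY}\|_1\le\frac{1}{2}\|P_{f(X)Y}-P^*\|_1$ (fortunately the direction you need). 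This matters because the paper's ${\cal B}^\varepsilon(P_{XY})$ consists of \emph{normalized} distributions---sub-normalized ones live in $\overline{{\cal B}}^\varepsilon$ and would only certify the weaker statement with $\overline{H}_{\min}^\varepsilon$. The patch is routine: the mass deficiency $\delta=\sum_{(m,y):\,P_{f(X)Y}(m,y)=0}P_Y(y)/M$ buys exactly $\delta/2$ of slack in the variational distance, so redistributing mass $\delta$ back into $P'_{XY}$ costs at most $\delta/2$ and keeps $P'_{XY}\in{\cal B}^\varepsilon(P_{XY})$, and for $M\le|{\cal X}|$ (needed for the right-hand side of the lemma to be non-vacuous at all) there is enough room to do so while keeping every entry at most $P_Y(y)/M$. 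With that repair, and noting that the infimum over the finitely many maps $f$ is attained so the limit $\varepsilon\downarrow\Delta(M)$ is unnecessary, your argument is complete and coincides with the cited proof.
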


\MH{Similar to Lemma \ref{lemma:multi-min-entropy}, 
the bound in Lemma \ref{lemma:multi-random-number-monotonicity-converse}
cannot be directly calculated in the Markovian chain.
To resolve this problem, we slightly loosen Lemma \ref{lemma:multi-random-number-monotonicity-converse} as follows.}

\begin{lemma} \Label{lemma:multi-random-sphere-packing-converse}
We have
\begin{eqnarray}
\Delta(M) \ge \max_{\gamma \ge 0}\left[ P_{XY}\left\{ \log \frac{1}{P_{X|Y}(x|y)} < \gamma \right\}\left(1 - \frac{e^\gamma}{M} \right) \right].
\end{eqnarray}
\end{lemma}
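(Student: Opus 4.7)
The plan is to parallel the proof of Lemma \ref{lemma:single-random-sphere-packing-converse}, simply replacing the unconditional $\min$-entropy by the conditional one. By Lemma \ref{lemma:multi-random-number-monotonicity-converse}, there exists a distribution $P'_{XY} \in {\cal B}^\varepsilon(P_{XY})$ with $\varepsilon \le \Delta(M)$ and satisfying $H_{\min}(P'_{XY}|P_Y) \ge \log M$, which is equivalent to the pointwise bound
\begin{align}
P'_{XY}(x,y) \le \frac{P_Y(y)}{M} \quad \text{for all } (x,y) \in {\cal X} \times {\cal Y}.
\end{align}

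Next, I fix an arbitrary $\gamma \ge 0$ and pick the test event
\begin{align}
S := \Bigl\{ (x,y) : P_{X|Y}(x|y) > e^{-\gamma} \Bigr\} = \Bigl\{ (x,y) : \log \tfrac{1}{P_{X|Y}(x|y)} < \gamma \Bigr\}.
\end{align}
Using $\tfrac{1}{2}\|P_{XY} - P'_{XY}\|_1 = \max_{T}(P_{XY}(T) - P'_{XY}(T))$, I lower-bound this quantity by taking $T = S$. The $P'_{XY}$-mass of $S$ is controlled by the min-entropy bound as $P'_{XY}(S) \le \sum_{(x,y) \in S} P_Y(y)/M$, and for each individual pair $(x,y) \in S$, I rewrite $P_Y(y) = P_{XY}(x,y)/P_{X|Y}(x|y) < e^{\gamma} P_{XY}(x,y)$, hence $P'_{XY}(S) < e^{\gamma} P_{XY}(S)/M$.

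Combining these estimates yields
\begin{align}
\Delta(M) \ge \tfrac{1}{2}\|P_{XY} - P'_{XY}\|_1 \ge P_{XY}(S)\Bigl(1 - \tfrac{e^\gamma}{M}\Bigr),
\end{align}
and taking the maximum over $\gamma \ge 0$ gives the claim. There is no genuine obstacle; the only step that requires care is the conversion of the joint bound $P'_{XY}(x,y) \le P_Y(y)/M$ into a $P_{XY}$-weighted estimate on $S$. This is precisely why the event $S$ must be defined in terms of the conditional $P_{X|Y}$ rather than the joint $P_{XY}$: the factor $e^{\gamma}$ available from the event exactly matches the $P_Y(y)$ appearing in the min-entropy bound when we reinsert $P_{X|Y}(x|y)$ in the denominator.
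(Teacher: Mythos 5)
Your proof is correct and follows essentially the same route as the paper, which simply declares the argument to be ``exactly the same as'' that of Lemma \ref{lemma:single-random-sphere-packing-converse}: invoke the smooth min-entropy converse (here Lemma \ref{lemma:multi-random-number-monotonicity-converse}) to obtain $P'_{XY}$ with $P'_{XY}(x,y)\le P_Y(y)/M$, test against the set $\{(x,y):\log\tfrac{1}{P_{X|Y}(x|y)}<\gamma\}$, and convert $P_Y(y)$ into $e^{\gamma}P_{XY}(x,y)$ on that set. Your write-up correctly supplies the one conditional-case detail the paper leaves implicit, so there is nothing to fix.
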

\begin{proof}
The proof is exactly the same as Lemma \ref{lemma:single-random-sphere-packing-converse}.
\end{proof}

\MH{Although Lemma \ref{lemma:multi-random-sphere-packing-converse}
is useful for the large deviation regime and the moderate deviation regime,
it is not useful for the second order regime.
To resolve this problem, 
we loosen Lemma \ref{lemma:multi-random-sphere-packing-converse} as follows.}

\begin{lemma} \Label{lemma:multi-random-number-converse}
We have
\begin{eqnarray}
\Delta(M) \ge \sup_{\gamma \ge 0}\left[ P_{XY}\left\{ \log \frac{1}{P_{X|Y}(x|y)} < \gamma \right\} - \frac{e^{\gamma}}{M} \right].
\end{eqnarray}
\end{lemma}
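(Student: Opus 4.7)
The plan is to recognize Lemma \ref{lemma:multi-random-number-converse} as an immediate weakening of the already-established Lemma \ref{lemma:multi-random-sphere-packing-converse}. Specifically, for each fixed $\gamma \ge 0$, I would set $p := P_{XY}\{\log(1/P_{X|Y}(x|y)) < \gamma\} \in [0,1]$ and $q := e^\gamma/M \ge 0$, and then observe the elementary inequality $p(1-q) \ge p - q$, which rearranges to $q(1-p) \ge 0$ and therefore holds unconditionally. Inserting this into the bound provided by Lemma \ref{lemma:multi-random-sphere-packing-converse} and then taking the supremum over $\gamma$ on the right-hand side yields the claim in a single step.

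Alternatively, I could give a self-contained argument patterned on the proof of Lemma 2.1.2 of \cite{han:book}. For an arbitrary $f:{\cal X}\to\{1,\ldots,M\}$ and fixed $\gamma \ge 0$, define $T_\gamma := \{(x,y) : \log(1/P_{X|Y}(x|y)) < \gamma\}$ and its image $S := \{(f(x),y) : (x,y) \in T_\gamma\}$. The variational distance lower bound $\Delta[f] \ge P_{f(X)Y}(S) - (P_{\overline{U}}\times P_Y)(S)$ would then be combined with the trivial estimate $P_{f(X)Y}(S) \ge P_{XY}(T_\gamma)$ and the cardinality bound $|S_y| \le |\{x : (x,y)\in T_\gamma\}| \le e^\gamma$ (coming directly from the definition of $T_\gamma$), which after averaging over $y$ gives $(P_{\overline{U}}\times P_Y)(S) \le e^\gamma/M$. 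Taking the infimum over $f$ and the supremum over $\gamma$ then concludes.

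The main obstacle is essentially nil: the statement is strictly weaker than Lemma \ref{lemma:multi-random-sphere-packing-converse}, so the first approach reduces the proof to a one-line algebraic check, and the direct approach simply reprises the argument already used for Lemma \ref{lemma:single-random-sphere-packing-converse}. The only mild subtlety is that when $e^\gamma/M > 1$ both sides of the elementary inequality can be negative; this causes no difficulty because the identity $p(1-q) - (p-q) = q(1-p) \ge 0$ remains valid for all $p \in [0,1]$ and $q \ge 0$, so the inequality requires no restriction on $\gamma$.
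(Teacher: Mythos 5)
Your proposal is correct, and your first argument is exactly the paper's route: the paper presents this lemma as a direct loosening of Lemma \ref{lemma:multi-random-sphere-packing-converse}, which is precisely your observation that $p(1-q) \ge p-q$ holds since $q(1-p)\ge 0$ for $p\in[0,1]$, $q\ge 0$. Your alternative self-contained argument in the style of Lemma 2.1.2 of Han's book is also sound, but unnecessary given the reduction.
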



Furthermore, by using a property of the strong universal hash family, 
we can derive the following converse \MH{as a generalization of Lemma \ref{lemma:strong-universal-bound}.}

\begin{lemma} \Label{lemma:strong-universal-bound-multi}
For $\{ \Omega_y \}_{y\in{\cal Y}}$ such that $|\Omega_y| \le N \le M$ for every $y \in {\cal Y}$, let $\Omega = \cup_{y\in{\cal Y}} \Omega_y \times \{y\}$.
Then, we have
\begin{eqnarray}
\overline{\Delta}(M) \ge \left(1 - \frac{N}{M} \right)^2 P_{XY}(\Omega).
\end{eqnarray}
\end{lemma}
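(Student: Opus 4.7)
The plan is to reduce the side-information case to the no-side-information case (Lemma \ref{lemma:strong-universal-bound}) by conditioning on $Y$. First, I would observe the standard decomposition of the joint variational distance:
\begin{align}
\Delta[F] = \frac{1}{2} \| P_{F(X)Y} - P_{\overline{U}} \times P_Y \|_1
= \sum_y P_Y(y) \cdot \frac{1}{2}\|P_{F(X)|Y=y} - P_{\overline{U}}\|_1,
\end{align}
which comes from splitting the $\ell_1$ sum across $(z,y)$ and factoring $P_Y(y)$ out of the inner sum over $z$. Taking expectation over the random hash function $F$ (chosen independently of $(X,Y)$) and exchanging sums then gives
\begin{align}
\mathbb{E}_F[\Delta[F]] = \sum_y P_Y(y) \, \mathbb{E}_F\!\left[\tfrac{1}{2}\|P_{F(X)|Y=y} - P_{\overline{U}}\|_1\right].
\end{align}

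Next, for each fixed $y \in \mathcal{Y}$, I would interpret the conditional experiment as a single-terminal random number generation applied to the source $P_{X|Y=y}$ on $\mathcal{X}$, with the same hash family $F$. Since $F$ is drawn independently of $(X,Y)$, the restriction of $F$ to this conditional world is still a strong universal hash family from $\mathcal{X}$ to $\{1,\dots,M\}$, so Lemma \ref{lemma:strong-universal-bound} applies directly with the subset $\Omega_y \subset \mathcal{X}$ (which satisfies $|\Omega_y| \le N \le M$ by hypothesis). This yields
\begin{align}
\mathbb{E}_F\!\left[\tfrac{1}{2}\|P_{F(X)|Y=y} - P_{\overline{U}}\|_1\right]
\ge \left(1 - \frac{|\Omega_y|}{M}\right)^2 P_{X|Y=y}(\Omega_y)
\ge \left(1 - \frac{N}{M}\right)^2 P_{X|Y=y}(\Omega_y),
\end{align}
where the uniform bound $|\Omega_y| \le N$ allows us to pull the coefficient out of the $y$-dependence.

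Finally, I would substitute this bound back into the average over $y$ and recognize that $\sum_y P_Y(y) P_{X|Y=y}(\Omega_y) = P_{XY}(\Omega)$ by the very definition of $\Omega = \bigcup_y \Omega_y \times \{y\}$. Taking the supremum over strong universal (hence two-universal) hash families $F$ then delivers
\begin{align}
\overline{\Delta}(M) \ge \left(1 - \frac{N}{M}\right)^2 P_{XY}(\Omega),
\end{align}
which is the desired inequality. There is no real obstacle: the only point worth emphasizing is the independence of $F$ from $(X,Y)$, which is what lets us condition on $Y=y$ without spoiling the strong universal property, and the uniformity $|\Omega_y| \le N$, which is what lets a single factor $(1-N/M)^2$ come out of the sum.
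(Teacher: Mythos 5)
Your proof is correct and follows essentially the same route as the paper's: the paper's own argument is precisely to apply Lemma \ref{lemma:strong-universal-bound} to each conditional distribution $P_{X|Y}(\cdot|y)$ with the subset $\Omega_y$ and average over $y$, using $|\Omega_y|\le N$ to extract the common factor $(1-N/M)^2$. Your write-up simply makes the variational-distance decomposition and the independence of $F$ from $(X,Y)$ explicit.
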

\begin{proof}
We apply Lemma \ref{lemma:strong-universal-bound} to each $P_{X|Y}(\cdot|y)$ and take average over $y$. Then, we can derive the lemma
since $|\Omega_y| \le N$ by the assumption. 
\end{proof}

\MH{Similar to Lemmas \ref{lemma:multi-min-entropy} and \ref{lemma:multi-random-number-monotonicity-converse}, 
the bound in Lemma \ref{lemma:strong-universal-bound-multi}
cannot be directly calculated in the Markovian chain.
To resolve this problem, we slightly loosen Lemma \ref{lemma:strong-universal-bound-multi} as follows.}

\begin{lemma} \Label{lemma:strong-universal-bound-multi-tail}
For any $0 < \nu < 1$, we have
\begin{eqnarray} \Label{eq:pa-strong-hash-one-shot-multi}
\overline{\Delta}(M) \ge (1-\nu)^2 P_{XY}\left\{ \log \frac{P_Y^{(1+\theta(a(R )))}(y)}{P_{XY}(x,y)} \le a(R ) \right\},
\end{eqnarray}
where $R = \log (M\nu)$, and $\theta(a)$ and $a(R )$ 
are the inverse functions $\theta^{\uparrow}(a)$ and $a^{\uparrow}(R )$ defined by \eqref{eq:definition-rho-inverse-Gallager-one-shot} 
and \eqref{eq:definition-a-inverse-Gallager-one-shot} respectively.
\end{lemma}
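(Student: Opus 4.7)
The plan is to lift the proof of Lemma~\ref{lemma:strong-universal-bound-tail-probability} to the conditional setting by applying Lemma~\ref{lemma:strong-universal-bound} conditionally on each $y$ with $y$-dependent subsets $\Omega_y$. Concretely, I would define
\begin{eqnarray*}
\Omega_y := \left\{x : P_{XY}(x,y) \ge P_Y^{(1+\theta(a(R )))}(y)\, e^{-a(R )}\right\}
\end{eqnarray*}
and $\Omega := \bigcup_{y} \Omega_y \times \{y\}$, so that $P_{XY}(\Omega)$ is exactly the probability on the right-hand side of \eqref{eq:pa-strong-hash-one-shot-multi}.

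The central step is to establish the uniform cardinality bound $|\Omega_y| \le M\nu$ for every $y$. Writing $\theta := \theta(a(R )) \ge 0$, raising the defining inequality of $\Omega_y$ to the $(1+\theta)$-th power and summing over $x \in \Omega_y$ yields
\begin{eqnarray*}
|\Omega_y|\, P_Y^{(1+\theta)}(y)^{1+\theta}\, e^{-(1+\theta) a(R )}
 \le \sum_{x \in \Omega_y} P_{XY}(x,y)^{1+\theta}
 \le \sum_{x} P_{XY}(x,y)^{1+\theta}.
\end{eqnarray*}
Inserting the explicit form \eqref{eq:single-shot-optimal-conditioning-distribution} of $P_Y^{(1+\theta)}$ makes the factor $\sum_x P_{XY}(x,y)^{1+\theta}$ cancel, producing
\begin{eqnarray*}
|\Omega_y| \le e^{(1+\theta) a(R )}\, Z^{1+\theta},\quad
Z := \sum_{y'}\Bigl[\sum_x P_{XY}(x,y')^{1+\theta}\Bigr]^{\frac{1}{1+\theta}}.
\end{eqnarray*}
Using $Z^{1+\theta} = e^{-\theta H_{1+\theta}^\uparrow(X|Y)}$, which is immediate from \eqref{11-14-6}, together with the defining equation \eqref{eq:definition-a-inverse-Gallager-one-shot} of $a(R )$, the right-hand side collapses to $e^R = M\nu$, as required.

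With this uniform bound in hand, I would finish by decomposing $\Delta[f] = \sum_y P_Y(y) \cdot \tfrac{1}{2}\|P_{f(X)|Y=y} - P_{\overline{U}}\|_1$ and applying Lemma~\ref{lemma:strong-universal-bound} to each conditional distribution $P_{X|Y}(\cdot|y)$ with the set $\Omega_y$; this is permissible since $|\Omega_y| \le M$. Averaging over $y$ gives
\begin{eqnarray*}
\mathbb{E}[\Delta[F]]
 \ge \sum_y P_Y(y)\left(1-\frac{|\Omega_y|}{M}\right)^2 P_{X|Y}(\Omega_y|y)
 \ge (1-\nu)^2\, P_{XY}(\Omega),
\end{eqnarray*}
which is the claimed inequality. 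This is the same averaging used in the proof of Lemma~\ref{lemma:strong-universal-bound-multi}, but tracking $y$-dependent cardinalities instead of a single uniform $N$.

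The main obstacle is the uniform cardinality bound on $|\Omega_y|$: the naive estimate $|\Omega_y| \le e^{a(R )}\, P_Y(y)/P_Y^{(1+\theta)}(y)$ obtained from $\sum_{x \in \Omega_y} P_{XY}(x,y) \le P_Y(y)$ carries a factor that depends on $y$ and cannot be absorbed into $\nu$. Raising the defining inequality of $\Omega_y$ to the $(1+\theta)$-th power is the step that lets the $y$-dependence cancel against the explicit form of the tilted marginal $P_Y^{(1+\theta)}$; this is precisely why the event in \eqref{eq:pa-strong-hash-one-shot-multi} is framed using $P_Y^{(1+\theta)}$ rather than $P_Y$.
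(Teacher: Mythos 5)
Your proposal is correct and follows essentially the same route as the paper's proof: the same choice of $\Omega_y$, the same cardinality bound obtained by raising the defining inequality to the $(1+\theta)$-th power so that the explicit form \eqref{eq:single-shot-optimal-conditioning-distribution} of $P_Y^{(1+\theta)}$ cancels the $y$-dependence and yields $|\Omega_y|\le e^{(1+\theta)a-\theta H_{1+\theta}^\uparrow(X|Y)}=e^R=M\nu$ via \eqref{eq:definition-a-inverse-Gallager-one-shot}. The only cosmetic difference is that you re-derive the averaging step inline, whereas the paper simply invokes Lemma~\ref{lemma:strong-universal-bound-multi} with $N=M\nu$ (and note that only $1+\theta\ge 0$, not $\theta\ge 0$, is actually needed for your power-raising step).
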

\begin{proof}
See Appendix \ref{appendix:lemma:strong-universal-bound-multi-tail}.
\end{proof}


To derive a converse bound for ${\Delta}(M)$ based on the conditional R\'{e}nyi entropy,
we substitute the formula in Proposition \ref{theorem:one-shot-tail-converse-2}
in Appendix \ref{Appendix:preparation} into the bound in  Lemma \ref{lemma:multi-random-sphere-packing-converse} 
for $a = \gamma = \log (M/2)$.
So, we have the following.
\begin{theorem} \Label{theorem:multi-random-exponential-converse}
We have
\begin{align}
& \lefteqn{ - \log \Delta(M) } \nonumber \\
\le & \inf_{s > 0 \atop \tilde{\theta} > \theta(a)} \frac{1}{s}
\Bigg[
 (1+s) \tilde{\theta} \bigg( H_{1+\tilde{\theta}}^\downarrow(X|Y) - H_{1+(1+s)\tilde{\theta}}^\downarrow(X|Y) \bigg) \nonumber \\
& \hspace{3ex} - (1+s) \log \bigg(1 
\nonumber \\
& \hspace{9ex} - e^{ (\theta(a)-\tilde{\theta}) a - \theta(a) H_{1+\theta(a)}^\downarrow(X|Y) 
  + \tilde{\theta} H_{1+\tilde{\theta}}^\downarrow(X|Y) } \bigg)
\Bigg] 
\nonumber \\
& + \log 2,
\end{align}
where $a = \log (M/2)$, 
and $\theta(a)$ is the inverse function 
$\theta^\downarrow(a )$ defined by \eqref{eq:definition-inverse-theta-multi-one-shot-2}.
\end{theorem}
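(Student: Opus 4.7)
The plan is to mirror the proof of Theorem \ref{theorem:random-number-exponential-converse} in the conditional setting: first invoke the tail-probability converse of Lemma \ref{lemma:multi-random-sphere-packing-converse}, then transform the tail probability into a conditional Rényi entropy expression by applying Proposition \ref{theorem:one-shot-tail-converse-2} to the conditional self-information $\log P_{X|Y}(X|Y)$.

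Concretely, I would first specialize Lemma \ref{lemma:multi-random-sphere-packing-converse} to $\gamma = a = \log(M/2)$, so that $1 - e^{\gamma}/M = 1/2$. Taking $-\log$ of both sides then yields
\begin{align*}
-\log \Delta(M) \le -\log P_{XY}\!\left\{ \log \tfrac{1}{P_{X|Y}(X|Y)} < a \right\} + \log 2 .
\end{align*}
This reduces the task to upper bounding $-\log P_{XY}\{\log P_{X|Y}(X|Y) > -a\}$ in terms of the lower conditional Rényi entropies.

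Second, I would identify the cumulant generating function of the random variable $Z := \log P_{X|Y}(X|Y)$:
\begin{align*}
\phi(\theta) &= \log \sum_{x,y} P_{XY}(x,y)\, P_{X|Y}(x|y)^{\theta} \\
&= \log \sum_{x,y} P_{XY}(x,y)^{1+\theta}\, P_Y(y)^{-\theta} \\
&= -\theta H_{1+\theta}^{\downarrow}(X|Y).
\end{align*}
Differentiating in $\theta$ and comparing with \eqref{eq:definition-inverse-theta-multi-one-shot-2} shows that the inverse function $\rho(-a)$ appearing in Proposition \ref{theorem:one-shot-tail-converse-2} coincides with $\theta^{\downarrow}(a)$. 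Thus Proposition \ref{theorem:one-shot-tail-converse-2} applied to $Z$ at threshold $-a$ directly produces a bound in which every instance of the CGF $\phi(\cdot)$ is replaced by $-\theta H_{1+\theta}^{\downarrow}(X|Y)$, yielding exactly the bracketed expression in the theorem. Combining with the additive $\log 2$ from the first step gives the stated inequality.

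The argument is essentially bookkeeping parallel to Theorem \ref{theorem:random-number-exponential-converse}; the only point that deserves attention is the matching of the relevant Rényi entropy. Since the converse bound in Lemma \ref{lemma:multi-random-sphere-packing-converse} involves $\log 1/P_{X|Y}(X|Y)$ without any free choice of a reference distribution on $\mathcal{Y}$, the marginal $P_Y$ is singled out as the conditioning distribution, which is precisely why the lower conditional Rényi entropy $H_{1+\theta}^{\downarrow}(X|Y) = H_{1+\theta}(P_{XY}|P_Y)$ appears rather than $H_{1+\theta}^{\uparrow}(X|Y)$ or the two-parameter variant. No further technical difficulty arises, since the domain restriction $\tilde{\theta} > \theta(a)$ inherited from Proposition \ref{theorem:one-shot-tail-converse-2} ensures that the logarithm in the bracketed term is well-defined.
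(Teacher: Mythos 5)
Your proposal is correct and follows exactly the route the paper intends: the paper's own proof is a one-line reference stating that the theorem follows in the same way as Theorem \ref{theorem:random-number-exponential-converse}, with Lemma \ref{lemma:single-random-sphere-packing-converse} replaced by Lemma \ref{lemma:multi-random-sphere-packing-converse}, which is precisely your combination of the conditional sphere-packing converse at $\gamma=\log(M/2)$ with Proposition \ref{theorem:one-shot-tail-converse-2} applied to $\log P_{X|Y}(X|Y)$, whose CGF is $-\theta H_{1+\theta}^{\downarrow}(X|Y)$. Your identification of $\rho(-a)=\theta^{\downarrow}(a)$ and your remark on why $P_Y$ (hence the lower conditional R\'enyi entropy) is the forced conditioning distribution are exactly the bookkeeping the paper leaves implicit.
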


\begin{proof}
Theorem \ref{theorem:multi-random-exponential-converse} can be shown by the same way as Theorem \ref{theorem:single-random-sphere-packing-converse-finite-markov}
with replacing the role of Lemma \ref{lemma:single-random-sphere-packing-converse}
 by Lemma \ref{theorem:multi-random-exponential-converse}.
\end{proof}

To derive a converse bound for $\overline{\Delta}(M)$ based on the conditional R\'{e}nyi entropy,
we substitute the formula in Proposition \ref{theorem:one-shot-tail-converse-2}
in Appendix \ref{Appendix:preparation}
into the bound in  Lemma \ref{theorem:multi-random-exponential-converse-stong-universal} 
for $\nu = \frac{1}{2}$.
So, we have the following.

\begin{theorem} \Label{theorem:multi-random-exponential-converse-stong-universal}
We have
\begin{align}
\lefteqn{ - \log \overline{\Delta}(M) } \nonumber \\
\le& \inf_{s > 0 \atop \tilde{\theta} >\theta(a(R ))} \frac{1}{s}
\bigg[
 (1+s) \tilde{\theta} \bigg(
H_{1+\tilde{\theta},1+\theta(a(R ))}(X|Y) \nonumber \\
 & - H_{1+(1+s)\tilde{\theta},1+\theta(a(R ))}(X|Y) \bigg) 
 - (1+s) \log \left(1 - e^{ C_{2,n}} \right)
\bigg]  \nonumber \\
&+ 2\log 2,
\end{align}
where $R = \log (M/2)$,
\begin{align*}
C_{2,n}:=&
[\theta(a(R )) - \tilde{\theta}] a(R ) - \theta(a(R )) H_{1+\theta(a(R ))}^\uparrow(X|Y) 
\\
&   + \tilde{\theta} H_{1+\tilde{\theta},1+\theta(a(R ))}(X|Y)  ,
\end{align*}
and
$\theta(a)$ and $a(R )$ are the inverse functions 
$\theta^\uparrow(a )$ and $a^\uparrow(R )$
defined by \eqref{eq:definition-rho-inverse-Gallager-one-shot} 
and \eqref{eq:definition-a-inverse-Gallager-one-shot} respectively. 
\end{theorem}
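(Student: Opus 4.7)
The plan is to follow the pattern of Theorem \ref{theorem:random-number-strong-universal-exponential-converse} in the single-terminal case, with Lemma \ref{lemma:strong-universal-bound-tail-probability} replaced by its conditional analog Lemma \ref{lemma:strong-universal-bound-multi-tail}. First I would set $\nu = 1/2$ in Lemma \ref{lemma:strong-universal-bound-multi-tail}, which yields
\begin{align*}
\overline{\Delta}(M) \ge \frac{1}{4}\, P_{XY}\left\{ \log \frac{P_Y^{(1+\theta(a(R)))}(Y)}{P_{XY}(X,Y)} \le a(R) \right\}
\end{align*}
with $R = \log(M/2)$. Taking $-\log$ of both sides reduces the task to upper-bounding $-\log P_{XY}\{\cdots \le a(R)\}$ by the two-parameter conditional R\'enyi entropy expression stated in the theorem, modulo the additive $2\log 2$ contributed by the prefactor $1/4$.

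To bound that tail probability, I would apply Proposition \ref{theorem:one-shot-tail-converse-2} (used analogously in \eqref{11-23-12}) to the random variable $Z := \log \frac{P_{XY}(X,Y)}{P_Y^{(1+\theta(a(R)))}(Y)}$. Writing $\theta_* := \theta(a(R))$ and invoking the definition \eqref{eq:two-parameter-conditional-renyi}, the cumulant generating function of $Z$ is
\begin{align*}
\phi(\theta) = \log \sum_{x,y} P_{XY}(x,y)^{1+\theta} P_Y^{(1+\theta_*)}(y)^{-\theta} = -\theta\, H_{1+\theta,1+\theta_*}(X|Y).
\end{align*}
Since the target event is $\{Z \ge -a(R)\}$, the associated saddle point must satisfy $\phi'(\theta) = -a(R)$, and I would check that this saddle point is exactly $\theta_*$. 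This follows by combining Items \ref{item:multi-terminal-single-shot-property-7} and \ref{item:multi-terminal-single-shot-property-8} of Lemma \ref{lemma:multi-terminal-single-shot-property}: the latter ensures that the partial derivative in $\theta'$ of $H_{1+\theta,1+\theta'}(X|Y)$ vanishes at $\theta' = \theta = \theta_*$, while the former identifies $H_{1+\theta_*,1+\theta_*}(X|Y) = H_{1+\theta_*}^\uparrow(X|Y)$; so by the chain rule and the definition \eqref{eq:definition-rho-inverse-Gallager-one-shot} of $\theta^\uparrow$, we get $\frac{d[\theta H_{1+\theta,1+\theta_*}(X|Y)]}{d\theta}\big|_{\theta=\theta_*} = \frac{d[\theta H_{1+\theta}^\uparrow(X|Y)]}{d\theta}\big|_{\theta=\theta_*} = a(R)$, confirming $\phi'(\theta_*) = -a(R)$.

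Once the saddle point is correctly identified, substituting the CGF into Proposition \ref{theorem:one-shot-tail-converse-2} produces a bound whose leading term is $(1+s)\tilde\theta\bigl[H_{1+\tilde\theta,1+\theta_*}(X|Y) - H_{1+(1+s)\tilde\theta,1+\theta_*}(X|Y)\bigr]$, and whose $\log(1-e^{(\cdot)})$ factor, after using $H_{1+\theta_*,1+\theta_*}(X|Y) = H_{1+\theta_*}^\uparrow(X|Y)$ to rewrite the $\theta_* H_{1+\theta_*,1+\theta_*}$ contribution as $\theta_* H_{1+\theta_*}^\uparrow$, collects exactly the constant $C_{2,n}$ displayed in the statement. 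Dividing by $s$, taking infimum over $s > 0$ and $\tilde\theta > \theta_*$, and adding the $2\log 2$ from the first step yields the claimed bound. The main obstacle I foresee is precisely this saddle point identification: it is the only place where the two-parameter structure genuinely intervenes, and it hinges on the maximality property (Item \ref{item:multi-terminal-single-shot-property-8}) rather than any direct computation; the rest is a mechanical parallel of the single-terminal derivation in Theorem \ref{theorem:random-number-strong-universal-exponential-converse}.
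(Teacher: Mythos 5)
Your proposal is correct and follows exactly the paper's route: Lemma \ref{lemma:strong-universal-bound-multi-tail} with $\nu=1/2$ (contributing the $2\log 2$), followed by Proposition \ref{theorem:one-shot-tail-converse-2} applied to the log-likelihood ratio against $P_Y^{(1+\theta(a(R)))}$, whose CGF is $-\theta H_{1+\theta,1+\theta(a(R))}(X|Y)$. Your saddle-point identification via Items \ref{item:multi-terminal-single-shot-property-7} and \ref{item:multi-terminal-single-shot-property-8} of Lemma \ref{lemma:multi-terminal-single-shot-property} is precisely the single-shot analogue of the argument the paper uses (cf.~\eqref{11-27-1} and \eqref{11-27-3}), so the proposal matches the paper's proof in substance.
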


\begin{proof}
Theorem \ref{theorem:multi-random-exponential-converse-stong-universal}
 can be shown by the same way as Theorem \ref{theorem:random-number-strong-universal-exponential-converse}
with replacing the role of Lemma \ref{lemma:strong-universal-bound-tail-probability}
 by Lemma \ref{lemma:strong-universal-bound-multi-tail}.
\end{proof}

Finally, we address \MH{the modified mutual information rate (MMIR).}
As the direct part, we have the following theorem.

\begin{theorem}\Label{Th11-25b-1}
The maximum modified mutual information $\overline{D}(M) $ 
among two-universal hash family
is bounded as
\begin{align}
\overline{D}(M) 
\le \frac{1}{\theta}\log ( 1+ M^\theta e^{-\theta H_{1+\theta}(X|Y)}).
\end{align}
\end{theorem}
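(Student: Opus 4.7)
The plan is to mirror the proof of Theorem \ref{Th11-25-1}, replacing the single-variable R\'enyi entropy by the (lower) conditional one throughout. The three ingredients are: (i) monotonicity of the conditional R\'enyi entropy in its order, (ii) Jensen's inequality applied to the concave function $\log$, and (iii) a conditional counterpart of Lemma 10 of \cite{matsumoto-hayashi:11}.

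First I would fix $\theta \in (0,1]$. For each realization $F=f$, item 4b) of Lemma \ref{lemma:multi-terminal-single-shot-property} combined with \eqref{eq:down-conditional-renyi-theta-0} gives $H(f(X)|Y) \ge H_{1+\theta}(f(X)|Y)$, so
\begin{align*}
\log M - H(f(X)|Y)
&\le \log M - H_{1+\theta}(f(X)|Y) \\
&= \tfrac{1}{\theta}\log\bigl(M^\theta e^{-\theta H_{1+\theta}(f(X)|Y)}\bigr).
\end{align*}
Averaging over $F$ and using Jensen's inequality on the outer $\log$ then yields
\begin{align*}
\overline{D}(M)
= \mathbb{E}_F\bigl[\log M - H(F(X)|Y)\bigr]
\le \tfrac{1}{\theta}\log \mathbb{E}_F\bigl[M^\theta e^{-\theta H_{1+\theta}(F(X)|Y)}\bigr].
\end{align*}

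Next I would invoke the conditional counterpart of Lemma 10 of \cite{matsumoto-hayashi:11}, namely
\begin{align*}
\mathbb{E}_F\bigl[M^\theta e^{-\theta H_{1+\theta}(F(X)|Y)}\bigr] \le 1 + M^\theta e^{-\theta H_{1+\theta}(X|Y)},
\end{align*}
and substitute it into the previous display to obtain the theorem.

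Proving the conditional hashing inequality is the one step requiring a little care, but it reduces cleanly to the unconditional case: for each fixed $y$, applying the unconditional lemma to the conditional distribution $P_{X|Y=y}$ gives
\begin{align*}
\mathbb{E}_F\Bigl[\sum_z P_{F(X)|Y}(z|y)^{1+\theta}\Bigr] \le M^{-\theta} + \sum_x P_{X|Y}(x|y)^{1+\theta}.
\end{align*}
Weighting by $P_Y(y)$, summing over $y$, recognizing $\sum_y P_Y(y) \sum_x P_{X|Y}(x|y)^{1+\theta} = e^{-\theta H_{1+\theta}(X|Y)}$ (in the lower conditional R\'enyi sense), and multiplying both sides by $M^\theta$ yields exactly the desired inequality. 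Thus the main hurdle is essentially a clean reduction of the conditional hashing bound to its unconditional ancestor; once this is in hand, (i)--(iii) assemble immediately into the claim.
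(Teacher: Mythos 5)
Your proposal is correct and follows essentially the same route as the paper: monotonicity of the conditional R\'enyi entropy in its order, Jensen's inequality on the outer $\log$, and the two-universal hashing bound $\mathbb{E}_F[M^\theta e^{-\theta H_{1+\theta}(F(X)|Y)}] \le 1 + M^\theta e^{-\theta H_{1+\theta}(X|Y)}$. The only difference is that the paper cites this conditional hashing inequality directly from Lemma 10 of the Matsumoto--Hayashi reference, whereas you re-derive it by applying the unconditional version to each $P_{X|Y=y}$ and averaging, which is a valid (and correctly executed) reduction to the lower conditional R\'enyi entropy.
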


\begin{proof}
Lemma 10 of \cite{matsumoto-hayashi:11} shows that
any two-universal hash function $F$ satisfies 
the relation
\begin{align}
\mathbb{E}
 (M^\theta e^{-\theta H_{1+\theta}(F(X|Y))})
\le 1+ M^\theta e^{-\theta H_{1+\theta}(X|Y)},
\end{align}
which implies that
$
\mathbb{E} [\log M - H(F(X|Y))]
\le 
\mathbb{E} [\log M - H_{1+s}(F(X)|Y)]
\le
\frac{1}{s}\log 
\mathbb{E}
 (M^s e^{-s H_{1+s}(F(X)|Y)})
\le
\frac{1}{s}\log ( 1+ M^s e^{-s H_{1+s}(X|Y)})$.
\end{proof}

As the converse part, we have the following theorem.
\begin{proposition}\Label{Th11-25b-2}
\begin{align}
D(M) \ge \log M - H(P_X)
\Label{11-25b-1}
\end{align}
\end{proposition}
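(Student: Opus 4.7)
The plan is to exploit the identity already recorded in the paragraph just above the proposition, namely
\begin{align*}
D(P_{f(X)Y}\|P_{\overline{U}}\times P_Y) = \log M - H(f(X)\mid Y),
\end{align*}
which holds for every function $f:\mathcal{X}\to\{1,\ldots,M\}$. Given this identity, the inequality $D(M)\ge \log M - H(P_X)$ is equivalent to the uniform upper bound $H(f(X)\mid Y)\le H(P_X)$ over all such $f$.

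To establish the latter, I would chain two elementary inequalities for Shannon entropy. First, ``conditioning reduces entropy'' gives $H(f(X)\mid Y)\le H(f(X))$. Second, because $f$ is a deterministic function of $X$, the data processing inequality for entropy yields $H(f(X))\le H(X)=H(P_X)$. Combining the two gives $H(f(X)\mid Y)\le H(P_X)$, hence $D(P_{f(X)Y}\|P_{\overline{U}}\times P_Y)\ge \log M - H(P_X)$ for every $f$, and taking the infimum over $f$ produces the stated bound.

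This is essentially identical to the proof of Proposition \ref{Th11-25-2} in the single-terminal case: the presence of the side information $Y$ only makes the bound easier, since the extra conditioning can only shrink $H(f(X))$ further. There is no real obstacle; the argument uses only the definition of divergence from uniform and two standard entropy inequalities, so no finer information-measure machinery from Section \ref{section:preparation-multi} is needed.
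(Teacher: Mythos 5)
Your proof is correct and essentially matches the paper's: the paper's one-line argument invokes $H(f(X)|Y)\le H(X|Y)$ (and implicitly $H(X|Y)\le H(P_X)$), while you chain $H(f(X)|Y)\le H(f(X))\le H(P_X)$ — the same two elementary entropy inequalities applied in the opposite order, yielding the same bound via the identity $D(P_{f(X)Y}\|P_{\overline{U}}\times P_Y)=\log M-H(f(X)|Y)$.
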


\begin{proof}
Inequality \eqref{11-25b-1} follows from the inequality
$H(X|Y) \ge H(f(X)|Y)$.
\end{proof}

\subsection{\MH{Finite-Length} Bounds for Markov Source} \Label{subsection:multi-random-finite-markov}

\MH{Since we assume the irreducibility for the transition matrix describing the Markovian chain,
the following bounds hold with any initial distribution.}
\MH{To lower bound $- \log \overline{\Delta}(M_n)$ by the lower conditional R\'{e}nyi entropy of transition matrix,
we substitute the formula 
for the lower conditional R\'{e}nyi entropy given in Lemma \ref{lemma:mult-terminal-finite-evaluation-down-conditional-renyi}
into the bound in 
Lemma \ref{lemma:exponential-bound} for $Q_{Y^n} = P_{Y^n}$,
we have the following achievability bound.}

\begin{theorem} \Label{theorem:multi-random-finite-markov-assumption-1-direct}
Suppose that a transition matrix $W$ satisfies Assumption \ref{assumption-Y-marginal-markov}.
Let $R := \frac{1}{n}\log M_n$. Then 
we have
\begin{align}
& - \log \overline{\Delta}(M_n) \nonumber \\
\ge & \sup_{0 \le \theta \le 1} \frac{-\theta n R + (n-1) \theta H_{1+\theta}^{\downarrow,W}(X|Y) + \underline{\delta}(\theta)}{1+\theta} - \log (3/2).
 \Label{eq:multi-random-number-finite-markov-direct-assumption-1} 
\end{align}
\end{theorem}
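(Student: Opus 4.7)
The plan is to lift the single-shot exponential bound of Lemma \ref{lemma:exponential-bound} to the $n$-fold Markov setting by substituting in the finite-length estimate of the lower conditional R\'enyi entropy provided by Lemma \ref{lemma:mult-terminal-finite-evaluation-down-conditional-renyi}. Specifically, I would first apply the form \eqref{eq:pa-exponential-bound-alternative-2} of Lemma \ref{lemma:exponential-bound} to the $n$-fold source $(X^n,Y^n)$ with the deliberately suboptimal choice $Q_{Y^n}=P_{Y^n}$, which yields
\begin{align*}
\overline{\Delta}(M_n) \le \inf_{0 \le \theta \le 1} \frac{3}{2}\, M_n^{\frac{\theta}{1+\theta}} e^{-\frac{\theta}{1+\theta}\, H_{1+\theta}^{\downarrow}(X^n|Y^n)}.
\end{align*}
Fixing $Q_{Y^n}=P_{Y^n}$ is exactly what turns the generic conditional R\'enyi term in the single-shot bound into the lower conditional R\'enyi entropy $H_{1+\theta}^{\downarrow}$, which is the object for which a transition-matrix evaluation is available under Assumption \ref{assumption-Y-marginal-markov}.

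The second step is to invoke the lower inequality of Lemma \ref{lemma:mult-terminal-finite-evaluation-down-conditional-renyi}, namely
\begin{align*}
\theta H_{1+\theta}^{\downarrow}(X^n|Y^n) \ge (n-1)\theta H_{1+\theta}^{\downarrow,W}(X|Y) + \underline{\delta}(\theta),
\end{align*}
which is available precisely because the non-hidden assumption holds. Plugging this into the exponent, substituting $M_n = e^{nR}$, taking $-\log$ on both sides, and finally converting the infimum over $\theta$ into a supremum transforms the upper bound on $\overline{\Delta}(M_n)$ into the advertised lower bound \eqref{eq:multi-random-number-finite-markov-direct-assumption-1} on $-\log \overline{\Delta}(M_n)$; the additive $-\log(3/2)$ comes directly from the $3/2$ prefactor in Lemma \ref{lemma:exponential-bound}.

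In this sense there is no truly hard step: the theorem is essentially the composition of two previously established lemmas. The only substantive design decision is the restriction to $Q_{Y^n}=P_{Y^n}$. The asymptotically tight choice would instead be $Q_{Y^n}=P_{Y^n}^{(1+\theta)}$, producing the upper conditional R\'enyi entropy $H_{1+\theta}^{\uparrow}$ and a sharper bound; but a transition-matrix evaluation of $H_{1+\theta}^{\uparrow}$ is only guaranteed by Lemma \ref{lemma:multi-terminal-finite-evaluation-upper-conditional-renyi}, which requires the strictly stronger Assumption \ref{assumption-memory-through-Y}. Hence, under Assumption \ref{assumption-Y-marginal-markov} alone, restricting to $P_{Y^n}$ is essentially forced, and the corresponding loss in tightness is exactly what will be recovered later by Theorem \ref{theorem:multi-random-finite-markov-assumption-2-direct} under the stronger hypothesis.
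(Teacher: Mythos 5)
Your proposal is correct and follows exactly the paper's own derivation: the paper obtains this theorem by applying Lemma \ref{lemma:exponential-bound} with the special choice $Q_{Y^n}=P_{Y^n}$ and then substituting the lower inequality of Lemma \ref{lemma:mult-terminal-finite-evaluation-down-conditional-renyi}, which is precisely your two-step argument including the $-\log(3/2)$ bookkeeping. Your remark on why $Q_{Y^n}=P_{Y^n}$ is forced under Assumption \ref{assumption-Y-marginal-markov} alone also matches the paper's own discussion preceding Table \ref{table:summary:multi-random-number}.
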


\MH{To upper bound $- \log {\Delta}(M_n)$ by the lower conditional R\'{e}nyi entropy of transition matrix,
we substitute the formula 
for the tail probability given in 
and Proposition \ref{proposition:general-markov-tail-converse}
with $a=R$
into the bound in 
Lemma \ref{lemma:multi-random-sphere-packing-converse} 
with $\gamma=nR$,
we have the following converse bound.}

\begin{theorem} \Label{theorem:multi-random-finite-markov-assumption-1-converse}
Suppose that a transition matrix $W$ satisfies Assumption \ref{assumption-Y-marginal-markov}.
Let $R := \frac{1}{n}\log (M_n/2) $. For any $\underline{a} < R < H^W(X|Y)$, we have
\begin{align}
\lefteqn{ - \log \Delta(M_n) } \nonumber \\
\le &
\inf_{s > 0 \atop \tilde{\theta} > \theta(a)} \frac{1}{s}\Bigg[
 (n-1) (1+s) \tilde{\theta} 
\bigg( H_{1+\tilde{\theta}}^{\downarrow,W}(X|Y) 
\nonumber \\
 &
- H_{1+(1+s)\tilde{\theta}}^{\downarrow,W}(X|Y) \bigg) + \delta_1 
- (1+s) \log \left( 1 - e^{C_{3,n}} \right)
\Bigg]  
\nonumber \\
 &
+ \log 2,\Label{11-22-17}
\end{align}
where  $\theta(a)$ is the inverse function 
$\theta^\downarrow(a )$
defined by \eqref{eq:definition-theta-inverse-multi-markov}, 
and
\begin{align}
C_{3,n}:=&
(n-1) \bigg( (\theta(R ) - \tilde{\theta}) R - \theta(R ) H_{1+\theta(R )}^{\downarrow,W}(X|Y)
\nonumber \\   
&\hspace{10ex}+ \tilde{\theta} H_{1+\tilde{\theta}}^{\downarrow,W}(X|Y)  \bigg) + \delta_2, \\
\delta_1 :=& (1+s) \overline{\delta}(\tilde{\theta}) - \underline{\delta}((1+s)\tilde{\theta}), \\
\delta_2 :=& (\theta(R ) - \tilde{\theta}) R - \underline{\delta}(\theta(R )) + \overline{\delta}(\tilde{\theta}).
\end{align}
\end{theorem}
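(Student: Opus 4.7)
The plan is to follow the same strategy as in Theorem \ref{theorem:single-random-sphere-packing-converse-finite-markov}, but replacing the (unconditional) R\'enyi entropy by the lower conditional R\'enyi entropy. Concretely, I would combine the single-shot converse in Lemma \ref{lemma:multi-random-sphere-packing-converse} with a Chernoff-type tail bound expressed through the CGF of $\log P_{X^n\mid Y^n}(X^n\mid Y^n)$, and then substitute Lemma \ref{lemma:mult-terminal-finite-evaluation-down-conditional-renyi} to convert $n$-fold quantities into transition-matrix quantities.

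First, I would apply Lemma \ref{lemma:multi-random-sphere-packing-converse} to the $n$-fold distribution $P_{X^nY^n}$ with the choice $\gamma = nR = \log(M_n/2)$, so that the prefactor $1 - e^{\gamma}/M_n$ equals $1/2$. This yields
\begin{align}
\Delta(M_n) \ge \frac{1}{2}\, P_{X^nY^n}\!\left\{ \log \frac{1}{P_{X^n\mid Y^n}(X^n\mid Y^n)} < nR \right\},
\end{align}
so that $-\log \Delta(M_n) \le \log 2 + \bigl[-\log P_{X^nY^n}\{\cdots\}\bigr]$.

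Next, I would upper-bound the negative-log tail probability by the same Chernoff-type bound used for Theorem \ref{theorem:random-number-exponential-converse}, namely Proposition \ref{theorem:one-shot-tail-converse-2} applied to the random variable $\log P_{X^n\mid Y^n}(X^n\mid Y^n)$, whose CGF in a parameter $\theta$ equals $-\theta H_{1+\theta}^{\downarrow}(X^n\mid Y^n)$. Parameterizing the bound by $s>0$ and $\tilde\theta > \theta(R)$, this produces an expression involving the three quantities $\tilde\theta H_{1+\tilde\theta}^{\downarrow}(X^n\mid Y^n)$, $(1+s)\tilde\theta H_{1+(1+s)\tilde\theta}^{\downarrow}(X^n\mid Y^n)$, and $\theta(R)H_{1+\theta(R)}^{\downarrow}(X^n\mid Y^n)$ (where $\theta(R) = \theta^{\downarrow}(R)$ is the inverse function for the $n$-fold distribution, which, up to lower-order terms from Lemma \ref{lemma:mult-terminal-finite-evaluation-down-conditional-renyi}, agrees with $\theta^{\downarrow}(R)$ for $W$).

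Finally, I would invoke Lemma \ref{lemma:mult-terminal-finite-evaluation-down-conditional-renyi}, which holds under Assumption \ref{assumption-Y-marginal-markov}, to replace each of the three $n$-fold lower conditional R\'enyi entropies by $(n-1)\theta H_{1+\theta}^{\downarrow,W}(X\mid Y)$ plus a correction in $\underline{\delta}(\theta)$ or $\overline{\delta}(\theta)$. The direction (upper vs.\ lower bound on $\theta H_{1+\theta}^{\downarrow}(X^n\mid Y^n)$) must be chosen to keep each substitution in the loosening direction of the Chernoff estimate: in the coefficient of $1/s$ outside the logarithm, the $\tilde\theta$-terms need an upper bound giving $\overline{\delta}(\tilde\theta)$, and the $(1+s)\tilde\theta$-term needs a lower bound giving $\underline{\delta}((1+s)\tilde\theta)$, producing $\delta_1$; inside the logarithm $\log(1-e^{(\cdot)})$, the exponent must be upper-bounded, which forces an $\overline{\delta}(\tilde\theta)$ and an $\underline{\delta}(\theta(R))$, producing $\delta_2$.

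The main obstacle is the bookkeeping of signs and inequality directions in the previous step: each occurrence of a $H^{\downarrow}$ appears with a different sign and a different coefficient inside a nested expression (a log of one minus an exponential), so I must verify that the monotonicity of each piece of the bound selects consistently between $\overline{\delta}$ and $\underline{\delta}$ so as to combine into exactly the stated $\delta_1,\delta_2$. The range hypothesis $\underline{a} < R < H^W(X\mid Y)$ guarantees that the inverse function $\theta(R) = \theta^{\downarrow}(R)$ is well-defined and strictly positive, so the choice $\tilde\theta > \theta(R)$ is nonvacuous and the tail event has the right orientation for the Chernoff step.
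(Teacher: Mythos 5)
Your proposal is essentially the paper's own route: the paper combines Lemma \ref{lemma:multi-random-sphere-packing-converse} (with $\gamma = nR$ so that the prefactor equals $1/2$) with Proposition \ref{proposition:general-markov-tail-converse}, and that proposition is precisely the one-shot tail bound of Proposition \ref{theorem:one-shot-tail-converse-2} already fused with the finite CGF evaluation (Lemma \ref{lemma:finite-evaluation-of-cgf}, i.e.\ Lemma \ref{lemma:mult-terminal-finite-evaluation-down-conditional-renyi} in R\'enyi-entropy language), which is exactly what you propose to carry out by hand. One detail to adjust: to land on the stated $C_{3,n}$ and $\delta_2$ you should not take the inverse function of the $n$-fold CGF and then argue it agrees with $\theta^{\downarrow}(R)$ for $W$ up to lower order; rather, use the free-parameter form \eqref{eq:tail-converse-2-0-opposite-c} of Proposition \ref{theorem:one-shot-tail-converse-2} and choose $\sigma = \tilde{\theta} - \theta(R)$ with $\theta(R)$ the transition-matrix inverse function of \eqref{eq:definition-theta-inverse-multi-markov}, which is how Proposition \ref{proposition:general-markov-tail-converse} anchors the exponent and yields exactly the stated correction terms without any approximation of inverse functions.
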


\begin{proof}
Theorem \ref{theorem:multi-random-finite-markov-assumption-1-converse} can be shown by the same way as Theorem \ref{theorem:single-random-sphere-packing-converse-finite-markov}
with replacing the roles of Lemma \ref{lemma:single-random-sphere-packing-converse}
and Proposition \ref{theorem:one-shot-tail-converse-2} in Appendix \ref{Appendix:preparation}
 by Lemma \ref{theorem:multi-random-exponential-converse}
and Proposition \ref{proposition:general-markov-tail-converse}.
\end{proof}

Next, we derive tighter bounds under Assumption \ref{assumption-memory-through-Y}. 
To lower bound $- \log \overline{\Delta}(M_n)$ by the upper conditional R\'{e}nyi entropy of transition matrix,
we substitute the formula 
for the upper conditional R\'{e}nyi entropy given in Lemma \ref{lemma:multi-terminal-finite-evaluation-upper-conditional-renyi}
into the bound in 
Lemma \ref{lemma:exponential-bound},
we have the following achievability bound.

\begin{theorem} \Label{theorem:multi-random-finite-markov-assumption-2-direct}
Suppose that a transition matrix $W$ satisfies Assumption \ref{assumption-memory-through-Y}. 
Let $R := \frac{1}{n} \log M_n$. Then we have
\begin{align}
& - \log \overline{\Delta}(M_n) \nonumber \\
\ge & 
\sup_{0 \le \theta \le 1} \frac{-\theta n R + (n-1) \theta H_{1+\theta}^{\uparrow,W}(X|Y) }{1+\theta} + \underline{\xi}(\theta) - \log (3/2). 
\Label{11-22-5}
\end{align}
\end{theorem}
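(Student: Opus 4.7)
The proof will proceed by directly combining two previously established ingredients: the single-shot exponential upper bound on $\overline{\Delta}$ in Lemma \ref{lemma:exponential-bound} and the finite-block evaluation of the upper conditional R\'enyi entropy for the Markov chain in Lemma \ref{lemma:multi-terminal-finite-evaluation-upper-conditional-renyi}. The plan is to apply Lemma \ref{lemma:exponential-bound} to the $n$-fold source $(X^n,Y^n)$, take $-\log$ of both sides to convert the exponential bound into an exponent inequality, and then substitute the single-letterized lower bound on $\frac{\theta}{1+\theta}H_{1+\theta}^{\uparrow}(X^n|Y^n)$ provided by Lemma \ref{lemma:multi-terminal-finite-evaluation-upper-conditional-renyi} (which is precisely where Assumption \ref{assumption-memory-through-Y} enters).

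Concretely, the first step is to use form \eqref{eq:pa-exponential-bound} of Lemma \ref{lemma:exponential-bound} for the block length $n$:
\begin{align}
\overline{\Delta}(M_n) \le \inf_{0 \le \theta \le 1} \frac{3}{2}\, M_n^{\frac{\theta}{1+\theta}}\, e^{-\frac{\theta}{1+\theta} H_{1+\theta}^{\uparrow}(X^n|Y^n)}.
\end{align}
Taking $-\log$ and exchanging the infimum with the supremum, this yields
\begin{align}
-\log \overline{\Delta}(M_n) \ge \sup_{0 \le \theta \le 1} \left[\frac{\theta}{1+\theta} H_{1+\theta}^{\uparrow}(X^n|Y^n) - \frac{\theta}{1+\theta}\log M_n \right] - \log(3/2).
\end{align}

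The second step invokes the lower bound of Lemma \ref{lemma:multi-terminal-finite-evaluation-upper-conditional-renyi}, which is valid under Assumption \ref{assumption-memory-through-Y}, namely
\begin{align}
\frac{\theta}{1+\theta} H_{1+\theta}^{\uparrow}(X^n|Y^n) \ge (n-1)\frac{\theta}{1+\theta} H_{1+\theta}^{\uparrow,W}(X|Y) + \underline{\xi}(\theta).
\end{align}
Substituting this into the previous inequality together with $\log M_n = nR$ gives exactly \eqref{11-22-5}. Note that $\underline{\xi}(\theta)$ sits outside the fraction $\tfrac{1}{1+\theta}$ precisely because Lemma \ref{lemma:multi-terminal-finite-evaluation-upper-conditional-renyi} is stated in terms of $\tfrac{\theta}{1+\theta}H_{1+\theta}^{\uparrow}$ rather than $\theta H_{1+\theta}^{\uparrow}$, which mirrors the way the Gallager exponent appears in Lemma \ref{lemma:exponential-bound}; this explains the structural difference between \eqref{11-22-5} and the corresponding expression \eqref{eq:multi-random-number-finite-markov-direct-assumption-1} derived under the weaker Assumption \ref{assumption-Y-marginal-markov}.

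There is no serious obstacle: the entire argument is a two-line chain, and no additional convex analysis or change-of-measure is needed. The only point to check is that Assumption \ref{assumption-memory-through-Y} is genuinely required, because the single-letter evaluation in Lemma \ref{lemma:multi-terminal-finite-evaluation-upper-conditional-renyi} uses the matrix $K_\theta$ built from $W_{Y,\theta}$, whose well-definedness is exactly the content of Assumption \ref{assumption-memory-through-Y}. Under only Assumption \ref{assumption-Y-marginal-markov}, one cannot single-letterize $H_{1+\theta}^{\uparrow}(X^n|Y^n)$ in a block-length-independent way, which is why the weaker result of Theorem \ref{theorem:multi-random-finite-markov-assumption-1-direct} must instead be derived from the special instance $Q_{Y^n}=P_{Y^n}$ of Lemma \ref{lemma:exponential-bound} via the lower conditional R\'enyi entropy. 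Here, by contrast, we may use the optimal choice $Q_{Y^n}=P_{Y^n}^{(1+\theta)}$ implicit in \eqref{eq:pa-exponential-bound}, and the tighter bound \eqref{11-22-5} follows.
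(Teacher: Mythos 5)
Your proposal is correct and follows exactly the route the paper takes: apply the single-shot exponential bound \eqref{eq:pa-exponential-bound} of Lemma \ref{lemma:exponential-bound} to the $n$-fold source, take $-\log$, and substitute the lower bound of Lemma \ref{lemma:multi-terminal-finite-evaluation-upper-conditional-renyi} (which is where Assumption \ref{assumption-memory-through-Y} enters), using $\log M_n = nR$. Your observation about why $\underline{\xi}(\theta)$ sits outside the factor $\tfrac{1}{1+\theta}$, unlike $\underline{\delta}(\theta)$ in the Assumption~\ref{assumption-Y-marginal-markov} version, is also accurate.
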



To upper bound $- \log {\Delta}(M_n)$ by the upper conditional R\'{e}nyi entropy of transition matrix,
we substitute the formula 
for the tail probability given in 
and Proposition \ref{theorem:one-shot-tail-converse-2} in Appendix \ref{Appendix:preparation}
into the bound in 
Lemma \ref{lemma:strong-universal-bound-multi}\footnote{We cannot apply
Proposition \ref{proposition:general-markov-tail-converse} here since we cannot apply Lemma \ref{lemma:finite-evaluation-of-cgf}
for $\phi(\tilde{\rho};P_{X^nY^n}|Q_{Y^n}^{(1-\rho)})$. Instead, we need to apply 
Lemma \ref{lemma:multi-terminal-finite-evaluation-two-parameter-conditional-renyi}.}, 
we have the following converse bound.

\begin{theorem} \Label{theorem:multi-random-finite-markov-assumption-2-converse}
Suppose that a transition matrix $W$ satisfies Assumption \ref{assumption-memory-through-Y}. 
Let $R$ be such that
\begin{align}
&(n-1) R + \Bigg( (1+\theta(a(R ))) (a(R ) -  \underline{\xi}(\theta(a(R )))) \Bigg)\nonumber \\
=& \log(M_n/2).
\end{align}
If $R(\underline{a}) < R < H^W(X|Y)$, then we have
\begin{align}
\lefteqn{ - \log \overline{\Delta}(M_n) } \nonumber \\
\le & 
 \inf_{s > 0 \atop \tilde{\theta} > \theta(a(R ))}\frac{1}{s}\Bigg[
 (n-1) 
(1+s) \tilde{\theta} \bigg( H_{1+\tilde{\theta},1+\theta(a(R ))}^W(X|Y) 
\nonumber \\
&\hspace{11ex}- H_{1+(1+s)\tilde{\theta},1+\theta(a(R ))}^W(X|Y) \bigg) 
+ \delta_1 \nonumber \\
 &\hspace{11ex}   - (1+s) \log \left(1 - e^{C_{4,n}} \right)
 \Bigg]  + 2\log 2, \Label{11-22-9}
\end{align}
where 
$\theta(a)$ and $a(R )$ are the inverse functions 
$\theta^\uparrow(a )$ and $a^\uparrow(R )$
defined by \eqref{eq:definition-theta-inverse-markov-optimal-Q} 
and \eqref{eq:definition-a-inverse-markov-optimal-Q} respectively, 
\begin{align}
C_{4,n}:=&
(n-1) \Big[ (\theta (a (R ) ) - \tilde{\theta}) (a(R )) 
\nonumber \\
&\hspace{8ex} - \theta (a (R  ) ) H_{1+\theta(a(R ))}^{\uparrow,W}(X|Y) 
\nonumber \\
&\hspace{11ex}+ \tilde{\theta} H_{1+\tilde{\theta},1+\theta(a(R ))}^W(X|Y)  
\Big] + \delta_2
\\
\delta_1 :=& (1+s) \overline{\zeta}(\tilde{\theta},\theta(a(R ))) - \underline{\zeta}((1+s)\tilde{\theta}, \theta(a(R ))) , \\
\delta_2 :=& (\theta(a(R )) - \tilde{\theta}) (a(R ) ) - \underline{\zeta}(\theta(a(R )), \theta(a(R ))) 
\nonumber \\ &
+ \overline{\zeta}(\tilde{\theta},\theta(a(R ))).
\end{align}
\end{theorem}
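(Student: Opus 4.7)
The plan is to mirror the structure of the proof of Theorem \ref{theorem:single-random-number-strong-universal-finite-markov-converse}, but adapted to the conditional setting under Assumption \ref{assumption-memory-through-Y}, where the relevant information measure is the upper conditional R\'enyi entropy together with the two-parameter conditional R\'enyi entropy. First, I will apply Lemma \ref{lemma:strong-universal-bound-multi-tail} to the $n$-fold source $P_{X^nY^n}$ with $\nu = 1/2$, which yields
\begin{align}
\overline{\Delta}(M_n) \ge \frac{1}{4} P_{X^nY^n}\left\{ \log \frac{P_{Y^n}^{(1+\theta^\star_n)}(y^n)}{P_{X^nY^n}(x^n,y^n)} \le a^\star_n \right\},
\end{align}
where $a^\star_n = a^{\uparrow}_n(\log(M_n/2))$ and $\theta^\star_n = \theta^\uparrow_n(a^\star_n)$ are the single-shot inverse functions computed from $P_{X^nY^n}$.

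Second, I translate the single-shot parameters $(a^\star_n,\theta^\star_n)$ to their Markov counterparts $(a(R),\theta(a(R)))$ by invoking Lemma \ref{lemma:multi-terminal-finite-evaluation-upper-conditional-renyi}. Writing $R^\uparrow(a) = (1+\theta(a))a - \theta(a) H_{1+\theta(a)}^{\uparrow,W}(X|Y)$ in the Markov sense, the defining equation for $R$ in the statement, namely $(n-1)R + (1+\theta(a(R)))(a(R) - \underline{\xi}(\theta(a(R)))) = \log(M_n/2)$, is exactly chosen so that $\theta^\star_n = \theta(a(R))$ becomes admissible; the error term $\underline{\xi}(\theta)$ absorbs the gap between $\frac{\theta}{1+\theta}H_{1+\theta}^{\uparrow}(X^n|Y^n)$ and $(n-1)\frac{\theta}{1+\theta}H_{1+\theta}^{\uparrow,W}(X|Y)$ predicted by Lemma \ref{lemma:multi-terminal-finite-evaluation-upper-conditional-renyi}.

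Third, I evaluate the tail probability with Proposition \ref{theorem:one-shot-tail-converse-2}. The cumulant generating function of the random variable $-\log\bigl(P_{X^nY^n}(X^n,Y^n) / P_{Y^n}^{(1+\theta(a(R)))}(Y^n)\bigr)$ at parameter $\theta$ is
\begin{align}
\log \sum_{x^n,y^n} P_{X^nY^n}(x^n,y^n)^{1+\theta} P_{Y^n}^{(1+\theta(a(R)))}(y^n)^{-\theta} = -\theta H_{1+\theta,1+\theta(a(R))}(X^n|Y^n),
\end{align}
by the definition \eqref{eq:two-parameter-conditional-renyi}. Then Lemma \ref{lemma:multi-terminal-finite-evaluation-two-parameter-conditional-renyi} bounds this $n$-shot two-parameter conditional R\'enyi entropy in terms of its Markov transition-matrix counterpart $H_{1+\theta,1+\theta(a(R))}^{W}(X|Y)$, with correction terms $\underline{\zeta},\overline{\zeta}$ that compose into $\delta_1$ and $\delta_2$. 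Substituting these evaluations into Proposition \ref{theorem:one-shot-tail-converse-2} (with $\rho = \theta(a(R))$ and auxiliary parameters $s > 0$, $\tilde{\theta} > \theta(a(R))$), and finally collecting the constant $2\log 2$ coming from $(1-\nu)^2 = 1/4$ in Lemma \ref{lemma:strong-universal-bound-multi-tail}, yields the claimed inequality \eqref{11-22-9}.

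The main obstacle is the careful bookkeeping of two distinct conversions: (a) converting the single-shot inverse functions $(\theta^\star_n,a^\star_n)$ attached to $P_{X^nY^n}$ into the Markov inverse functions $(\theta(a(R)),a(R))$ attached to $W$, which is precisely what the awkward-looking defining equation for $R$ accomplishes; and (b) converting the $n$-shot two-parameter conditional R\'enyi entropies $H_{1+\theta,1+\theta(a(R))}(X^n|Y^n)$ to their transition-matrix counterparts via Lemma \ref{lemma:multi-terminal-finite-evaluation-two-parameter-conditional-renyi}. Note that we cannot appeal to Proposition \ref{proposition:general-markov-tail-converse} here, because the relevant CGF $\phi(\tilde{\rho};P_{X^nY^n}|Q_{Y^n}^{(1-\rho)})$ has an inner conditioning distribution that itself depends on the rate parameter, so the standard tilted-matrix machinery of Lemma \ref{lemma:finite-evaluation-of-cgf} does not directly apply; this is exactly why Lemma \ref{lemma:multi-terminal-finite-evaluation-two-parameter-conditional-renyi} is needed as a replacement, producing the error terms $\delta_1,\delta_2$ displayed in the statement.
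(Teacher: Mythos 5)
Your proposal is correct and follows essentially the same route as the paper's proof: the strong-universal-hash converse (Lemma \ref{lemma:strong-universal-bound-multi}, i.e.\ the argument underlying Lemma \ref{lemma:strong-universal-bound-multi-tail} re-run with the Markov inverse functions $\theta(a(R))$, $a(R)$ so that the defining equation for $R$ together with Lemma \ref{lemma:multi-terminal-finite-evaluation-upper-conditional-renyi} yields $|\Omega_{y^n}|\le M_n/2$), then Proposition \ref{theorem:one-shot-tail-converse-2} applied to $Z=\log\bigl(P_{Y^n}^{(1+\theta(a(R)))}(Y^n)/P_{X^nY^n}(X^n,Y^n)\bigr)$ whose CGF is the two-parameter conditional R\'enyi entropy, converted to its transition-matrix counterpart via Lemma \ref{lemma:multi-terminal-finite-evaluation-two-parameter-conditional-renyi}. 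You also correctly identify why Proposition \ref{proposition:general-markov-tail-converse} is unavailable here, which matches the paper's own remark.
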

\begin{proof}
See Appendix \ref{appendix:theorem:multi-random-finite-markov-assumption-2-converse}.
\end{proof}

\MH{We derive finite-length bounds for 
modified mutual information rate under Assumption 
\ref{assumption-Y-marginal-markov} 
by substituting 
the formula for the lower conditional R\'{e}nyi entropy given in Lemma 
\ref{lemma:mult-terminal-finite-evaluation-down-conditional-renyi}
into the bound in Theorem \ref{Th11-25b-1}.}

\begin{theorem}\Label{Th11-25b-3}
When $R-H_{1+\theta}^{\downarrow,W}(X|Y)\ge 0$, 
for $\theta \in [0,1]$, we have 
\begin{align}
\overline{D}(e^{nR}) 
\le 
n R-(n-1)H^{\downarrow,W}_{1+\theta}(X|Y)) + 
\frac{1}{\theta}(\log 2 - \underline{\delta}(\theta) )).
\end{align}
\end{theorem}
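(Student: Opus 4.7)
The plan is to mirror the proof of Theorem \ref{Th11-25-3} verbatim, replacing the unconditional R\'enyi entropy by the lower conditional R\'enyi entropy throughout, and invoking the conditional counterparts of the two ingredients that were used in the URNG case.

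First, I would apply Theorem \ref{Th11-25b-1} to the $n$-block pair $(X^n, Y^n)$ with $M = e^{nR}$, obtaining
\[
\overline{D}(e^{nR}) \le \frac{1}{\theta}\log\bigl(1 + e^{\theta n R - \theta H_{1+\theta}^\downarrow(X^n|Y^n)}\bigr),
\]
where the conditional R\'enyi entropy on the right is the lower one, since Theorem \ref{Th11-25b-1} is stated in terms of $H_{1+\theta}(X|Y)$ that corresponds to $P_Y$ in the denominator. Next, under Assumption \ref{assumption-Y-marginal-markov} I would invoke the lower inequality in Lemma \ref{lemma:mult-terminal-finite-evaluation-down-conditional-renyi},
\[
\theta H_{1+\theta}^\downarrow(X^n|Y^n) \ge (n-1)\theta H_{1+\theta}^{\downarrow,W}(X|Y) + \underline{\delta}(\theta),
\]
which is the correct direction because $\overline{D}(e^{nR})$ is decreasing in $H_{1+\theta}^\downarrow(X^n|Y^n)$ for $\theta \in (0,1]$. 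Substituting gives
\[
\overline{D}(e^{nR}) \le \frac{1}{\theta}\log\bigl(1 + e^{\theta n R - (n-1)\theta H_{1+\theta}^{\downarrow,W}(X|Y) - \underline{\delta}(\theta)}\bigr).
\]

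The remaining step is the same chain of elementary inequalities as in the proof of Theorem \ref{Th11-25-3}. I would write the exponent as $n\theta(R - H_{1+\theta}^{\downarrow,W}(X|Y)) + \theta H_{1+\theta}^{\downarrow,W}(X|Y) - \underline{\delta}(\theta)$, factor the dominant term $e^{n\theta(R - H_{1+\theta}^{\downarrow,W}(X|Y))}$ out of the $\log$, and use $R - H_{1+\theta}^{\downarrow,W}(X|Y) \ge 0$ (the hypothesis) together with the fact that $\underline{\delta}(\theta) < 0$ to bound $e^{-n\theta(R-H_{1+\theta}^{\downarrow,W}(X|Y))} \le 1 \le e^{\theta H_{1+\theta}^{\downarrow,W}(X|Y) - \underline{\delta}(\theta)}$, so that $1 + e^{\theta H_{1+\theta}^{\downarrow,W}(X|Y) - \underline{\delta}(\theta)} \le 2 e^{\theta H_{1+\theta}^{\downarrow,W}(X|Y) - \underline{\delta}(\theta)}$. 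Taking $\frac{1}{\theta}\log(\cdot)$ yields
\[
\overline{D}(e^{nR}) \le n\bigl(R - H_{1+\theta}^{\downarrow,W}(X|Y)\bigr) + H_{1+\theta}^{\downarrow,W}(X|Y) + \frac{1}{\theta}\bigl(\log 2 - \underline{\delta}(\theta)\bigr),
\]
which after collecting terms is precisely the claimed inequality.

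I do not anticipate any real obstacle: the argument is a line-by-line adaptation of Theorem \ref{Th11-25-3}, and both ingredients (the single-shot leftover-hash-type upper bound via R\'enyi entropies and the transition-matrix-to-Markov chain bound) are already available in the excerpt under Assumption \ref{assumption-Y-marginal-markov}. The only mild care required is to verify that the $H_{1+\theta}(X|Y)$ appearing in Theorem \ref{Th11-25b-1} must be read as the lower conditional R\'enyi entropy (this is forced by the proof via $\mathbb{E}[M^\theta e^{-\theta H_{1+\theta}(F(X)|Y)}]$, which is the quantity controlled by two-universal hashing), so that Lemma \ref{lemma:mult-terminal-finite-evaluation-down-conditional-renyi} applies without needing Assumption \ref{assumption-memory-through-Y}.
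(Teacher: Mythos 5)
Your proposal is correct and is essentially identical to the paper's own argument: the paper proves Theorem \ref{Th11-25b-3} precisely by repeating the chain of inequalities from Theorem \ref{Th11-25-3} with Theorem \ref{Th11-25b-1} in place of Theorem \ref{Th11-25-1} and with $H^{\downarrow,W}_{1+\theta}(X|Y)$ (via the lower bound in Lemma \ref{lemma:mult-terminal-finite-evaluation-down-conditional-renyi}) in place of $H^{W}_{1+\theta}(X)$. Your reading of the entropy in Theorem \ref{Th11-25b-1} as the lower conditional R\'enyi entropy, and your use of $R-H^{\downarrow,W}_{1+\theta}(X|Y)\ge 0$ and $\underline{\delta}(\theta)<0$ to absorb the constant into $\log 2$, match the paper's steps exactly.
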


\begin{proof}
Theorem \ref{Th11-25b-3} can be shown as the same way as
Theorem \ref{Th11-25-3} by replacing 
$H^{W}_{1+\theta}(X)$ and Theorem \ref{Th11-25-1}
by $H^{\downarrow,W}_{1+\theta}(X|Y)$ and Theorem \ref{Th11-25b-1},
respectively.
\end{proof}

\MH{To lower bound $\overline{D}(e^{nR}) $ 
by the lower conditional R\'{e}nyi entropy of transition matrix,
we substitute the other formula 
for the lower conditional R\'{e}nyi entropy given in 
Lemma \ref{lemma:mult-terminal-finite-evaluation-down-conditional-renyi} 
into the bound in Proposition \ref{Th11-25b-2},
we have the following bound.}

\begin{theorem}\Label{Th11-25b-4}
For $\theta \in [0,1]$, we have 
\begin{align}
D(e^{nR}) \ge n R - (n-1) H^{\downarrow,W}_{1-\theta}(X)
+ \frac{\underline{\delta}(-\theta)}{\theta}
\Label{11-25b-4}
\end{align}
\end{theorem}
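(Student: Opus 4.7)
The plan is to mimic the proof of Theorem \ref{Th11-25-4} (the unconditional counterpart), replacing each unconditional quantity by its conditional $X|Y$ version. The structure is a three-step chain: bound the modified mutual information by a conditional entropy via the converse bound (Proposition \ref{Th11-25b-2}), then use monotonicity of the lower conditional R\'enyi entropy to relax the Shannon conditional entropy to the lower conditional R\'enyi entropy of order $1-\theta$, and finally invoke Lemma \ref{lemma:mult-terminal-finite-evaluation-down-conditional-renyi} to bound the block-length-$n$ quantity by $(n-1)$ times the transition-matrix quantity plus boundary terms.

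More concretely, first I would observe that Proposition \ref{Th11-25b-2}, applied to the length-$n$ pair $(X^n,Y^n)$, gives
\begin{align}
D(e^{nR}) \ge nR - H(X^n|Y^n).
\end{align}
Since $\theta \in [0,1]$ implies $1-\theta \le 1$, Statement 4 of Lemma \ref{lemma:multi-terminal-single-shot-property} (monotone decreasingness of $H_{1+\theta}^\downarrow(X|Y)$ in $\theta$) yields $H(X^n|Y^n) \le H_{1-\theta}^\downarrow(X^n|Y^n)$.

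Next, I would apply Lemma \ref{lemma:mult-terminal-finite-evaluation-down-conditional-renyi} with the parameter $-\theta$ in place of $\theta$. The lemma says
\begin{align}
(-\theta)\, H_{1-\theta}^\downarrow(X^n|Y^n) \ge (n-1)(-\theta)\, H_{1-\theta}^{\downarrow,W}(X|Y) + \underline{\delta}(-\theta).
\end{align}
Dividing by the negative quantity $-\theta$ flips the inequality and produces
\begin{align}
H_{1-\theta}^\downarrow(X^n|Y^n) \le (n-1)\, H_{1-\theta}^{\downarrow,W}(X|Y) - \frac{\underline{\delta}(-\theta)}{\theta}.
\end{align}
Chaining the three inequalities yields exactly the claimed bound
\begin{align}
D(e^{nR}) \ge nR - (n-1)\, H_{1-\theta}^{\downarrow,W}(X|Y) + \frac{\underline{\delta}(-\theta)}{\theta}.
\end{align}

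There is essentially no technical obstacle beyond bookkeeping: the only place requiring care is the sign flip when dividing by $-\theta < 0$, which is why $\underline{\delta}(-\theta)$ (which is negative by the definition in Lemma \ref{lemma:mult-terminal-finite-evaluation-down-conditional-renyi}) appears divided by $\theta$ rather than by $-\theta$, and the overall contribution $\underline{\delta}(-\theta)/\theta$ is nonpositive, consistent with the upper-bound direction being used. One should also note the (apparent) typographical issue that the right-hand side of Theorem \ref{Th11-25b-4} is written $H^{\downarrow,W}_{1-\theta}(X)$ while the argument naturally produces $H^{\downarrow,W}_{1-\theta}(X|Y)$; the proof proceeds with the conditional quantity. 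The proof can therefore be stated in one line by analogy with Theorem \ref{Th11-25-4}, with $H^W_{1-\theta}(X)$ and Proposition \ref{Th11-25-2} replaced by $H^{\downarrow,W}_{1-\theta}(X|Y)$ and Proposition \ref{Th11-25b-2} respectively.
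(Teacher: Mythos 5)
Your proof is correct and follows essentially the same route as the paper: the paper's own proof simply says to repeat the argument of Theorem \ref{Th11-25-4} (monotonicity of the lower conditional R\'enyi entropy, Lemma \ref{lemma:mult-terminal-finite-evaluation-down-conditional-renyi} at parameter $-\theta$ with the sign flip, then Proposition \ref{Th11-25b-2}), which is exactly the chain you wrote out. Your observation that the right-hand side should read $H^{\downarrow,W}_{1-\theta}(X|Y)$ rather than $H^{\downarrow,W}_{1-\theta}(X)$ is also correct.
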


\begin{proof}
Theorem \ref{Th11-25b-4} can be shown as the same way as
Theorem \ref{Th11-25-4} by replacing 
$H^{W}_{1-\theta}(X)$ and Proposition \ref{Th11-25-2}
by $H^{\downarrow,W}_{1-\theta}(X|Y)$ and Proposition \ref{Th11-25b-2},
respectively.
\end{proof}

\subsection{Large Deviation} \Label{subsection:multi-random-large-deviation}

\MH{We can show the following theorem in the same way as Theorem \ref{T11-22-1}
by taking the limit in
Theorems \ref{theorem:multi-random-finite-markov-assumption-1-direct} and 
\ref{theorem:multi-random-finite-markov-assumption-1-converse}
with use of Lemma \ref{L10}.}

\begin{theorem}
Suppose that a transition matrix $W$ satisfies Assumption \ref{assumption-Y-marginal-markov}.
For $R < H^W(X|Y)$, we have
\begin{eqnarray}
\liminf_{n\to\infty} - \frac{1}{n} \log \overline{\Delta}\left(e^{nR} \right) 
 \ge \sup_{0 \le \theta \le 1} \frac{- \theta R + \theta H_{1+\theta}^{\downarrow,W}(X|Y)}{1 + \theta}.
\end{eqnarray}
On the other hand, for $\underline{a} < R < H^W(X|Y)$, we have
\begin{align}
\lefteqn{\limsup_{n\to\infty} - \frac{1}{n} \log \Delta\left(e^{nR} \right) }
\nonumber \\
\le 
& - \theta(R ) R + \theta(R ) H_{1+\theta(R )}^{\downarrow,W}(X|Y)
\Label{11-15-1} \\
=&
\sup_{0 \le \theta } 
- \theta R + \theta H_{1+\theta}^{\downarrow,W}(X|Y),
\Label{11-27-9}
\end{align}
where  $\theta(a)$ is the inverse function 
$\theta^\downarrow(a )$
defined by \eqref{eq:definition-theta-inverse-multi-markov}.
\end{theorem}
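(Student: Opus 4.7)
The plan is to follow the same template as the proof of Theorem \ref{T11-22-1}, but with the single-terminal R\'enyi entropy $H^W_{1+\theta}(X)$ replaced by the lower conditional R\'enyi entropy $H^{\downarrow,W}_{1+\theta}(X|Y)$ of the transition matrix $W$, and using Lemma \ref{L10} at the end for the alternative form \eqref{11-27-9}.

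For the achievability part, I would start from the finite-length bound \eqref{eq:multi-random-number-finite-markov-direct-assumption-1} in Theorem \ref{theorem:multi-random-finite-markov-assumption-1-direct}. Setting $M_n = e^{nR}$, dividing both sides by $n$, and letting $n \to \infty$, the remainder terms $\underline{\delta}(\theta)/(n(1+\theta))$ and $\log(3/2)/n$ vanish (note that $\underline{\delta}(\theta)$ does not depend on $n$). The factor $(n-1)/n \to 1$ in front of $\theta H^{\downarrow,W}_{1+\theta}(X|Y)$. Swapping limit and supremum (which is justified since the supremum is taken over the compact interval $[0,1]$ and the integrand converges uniformly on $[0,1]$) yields the desired inequality.

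For the converse part, I would start from the finite-length bound \eqref{11-22-17} in Theorem \ref{theorem:multi-random-finite-markov-assumption-1-converse}. Fix $s > 0$ and $\tilde{\theta} > \theta(R)$ in the infimum. Dividing by $n$ and letting $n \to \infty$, the constants $\delta_1$, $\delta_2$ and the $\log 2$ term contribute $o(1)$. The term $C_{3,n}$ grows linearly in $n$, and its leading coefficient equals $(\theta(R)-\tilde{\theta})R - \theta(R)H^{\downarrow,W}_{1+\theta(R)}(X|Y) + \tilde{\theta}H^{\downarrow,W}_{1+\tilde{\theta}}(X|Y)$. By the definition of $\theta(R)$ in \eqref{eq:definition-theta-inverse-multi-markov} and the strict concavity of $\theta \mapsto \theta H^{\downarrow,W}_{1+\theta}(X|Y)$ (Lemma \ref{lemma:multi-terminal-markov-property}), this coefficient is strictly negative for $\tilde{\theta}\neq \theta(R)$, so $C_{3,n} \to -\infty$ and thus $(1+s)\log(1-e^{C_{3,n}}) \to 0$. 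We are left with
\begin{align*}
\limsup_{n\to\infty} -\frac{1}{n}\log \Delta(e^{nR})
\le \frac{1+s}{s}\tilde{\theta}\bigl[H^{\downarrow,W}_{1+\tilde{\theta}}(X|Y) - H^{\downarrow,W}_{1+(1+s)\tilde{\theta}}(X|Y)\bigr].
\end{align*}
Now take $s \to 0$: the bracket converges to $-s\frac{d[\theta H^{\downarrow,W}_{1+\theta}(X|Y)]}{d\theta}|_{\theta=\tilde{\theta}}/\tilde{\theta}$ up to $o(s)$, so the limit becomes $-\tilde{\theta}\frac{d[\theta H^{\downarrow,W}_{1+\theta}(X|Y)]}{d\theta}|_{\theta=\tilde{\theta}} + \tilde{\theta}H^{\downarrow,W}_{1+\tilde{\theta}}(X|Y)$. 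Finally, sending $\tilde{\theta} \to \theta(R)$ and applying \eqref{eq:definition-theta-inverse-multi-markov} yields exactly $-\theta(R)R + \theta(R)H^{\downarrow,W}_{1+\theta(R)}(X|Y)$, which is \eqref{11-15-1}. The equality with the variational form \eqref{11-27-9} is an immediate application of Lemma \ref{L10}.

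The main technical obstacle is justifying that $C_{3,n} \to -\infty$ as $n \to \infty$; this requires verifying that the linear-in-$n$ coefficient is negative whenever $\tilde{\theta} > \theta(R)$, which follows from strict concavity of $\theta \mapsto \theta H^{\downarrow,W}_{1+\theta}(X|Y)$ together with the first-order optimality condition defining $\theta(R)$. The rest consists of taking iterated limits in $n$, $s$, and $\tilde{\theta}$, and each step is routine given the concavity and monotonicity properties established in Lemma \ref{lemma:multi-terminal-markov-property}. No new ingredient beyond those already used for Theorem \ref{T11-22-1} is required, since Assumption \ref{assumption-Y-marginal-markov} already guarantees the finite-length evaluation of $H^\downarrow_{1+\theta}(X^n|Y^n)$ via Lemma \ref{lemma:mult-terminal-finite-evaluation-down-conditional-renyi}.
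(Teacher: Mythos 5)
Your proposal is correct and follows essentially the same route as the paper, which proves this theorem in one line by ``taking the limit'' in Theorems \ref{theorem:multi-random-finite-markov-assumption-1-direct} and \ref{theorem:multi-random-finite-markov-assumption-1-converse} exactly as in the proof of Theorem \ref{T11-22-1}, and invoking Lemma \ref{L10} for the variational form \eqref{11-27-9}. Your filled-in details (the iterated limits in $n$, $s$, $\tilde{\theta}$, and the strict-concavity argument showing $C_{3,n}\to-\infty$) match the template of the paper's proof of \eqref{11-21-13}.
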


Under Assumption \ref{assumption-memory-through-Y}, 
\MH{taking the limit in} Theorems \ref{theorem:multi-random-finite-markov-assumption-2-direct}
and \ref{theorem:multi-random-finite-markov-assumption-2-converse}, we have the following tighter bound.
\begin{theorem}\Label{th30}
Suppose that a transition matrix $W$ satisfies Assumption \ref{assumption-memory-through-Y}. 
For $R < H^W(X|Y)$, we have
\begin{eqnarray} \Label{eq:multi-random-assumption-2-ldp-direct}
\liminf_{n\to\infty} - \frac{1}{n} \log \overline{\Delta}\left( e^{nR}\right) 
 \ge \sup_{0 \le \theta \le 1} \frac{- \theta R + \theta H_{1+\theta}^{\uparrow,W}(X|Y)}{1+\theta}.\Label{11-27-11}
\end{eqnarray}
On the other hand, for $R(\underline{a}) < R < H^W(X|Y)$, we have
\begin{align}
\lefteqn{\limsup_{n\to\infty} - \frac{1}{n} \log \overline{\Delta}\left( e^{nR}\right) }
\nonumber \\
\le & - \theta(a(R )) a(R ) + \theta(a(R )) H_{1+\theta(a(R ))}^{\uparrow,W}(X|Y) 
\Label{11-15-2} \\
=&
\sup_{0 \le \theta } \frac{- \theta R + \theta H_{1+\theta}^{\uparrow,W}(X|Y)}{1+\theta},\Label{11-27-10}
\end{align}
where 
$\theta(a)$ and $a(R )$ are the inverse functions 
$\theta^\uparrow(a )$ and $a^\uparrow(R )$
defined by \eqref{eq:definition-theta-inverse-markov-optimal-Q} 
and \eqref{eq:definition-a-inverse-markov-optimal-Q} respectively.
\end{theorem}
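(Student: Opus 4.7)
The plan is to derive Theorem \ref{th30} by taking the $n \to \infty$ limit in the finite-length bounds of Theorems \ref{theorem:multi-random-finite-markov-assumption-2-direct} and \ref{theorem:multi-random-finite-markov-assumption-2-converse}, mimicking the structure of the proof of Theorem \ref{T11-22-1}, but accounting for the two-parameter conditional R\'enyi entropy that enters the converse under Assumption \ref{assumption-memory-through-Y}.

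For the achievability bound \eqref{11-27-11}, the proof is essentially immediate. Starting from \eqref{11-22-5}, divide both sides by $n$ to obtain
\begin{align*}
-\frac{1}{n}\log \overline{\Delta}(M_n) \ge& \sup_{0 \le \theta \le 1} \frac{-\theta R + \tfrac{n-1}{n} \theta H_{1+\theta}^{\uparrow,W}(X|Y)}{1+\theta} \\
& + \frac{\underline{\xi}(\theta) - \log(3/2)}{n}.
\end{align*}
Since $\underline{\xi}(\theta)$ and $\log(3/2)$ are constants independent of $n$, taking $\liminf_{n\to\infty}$ recovers \eqref{11-27-11} directly.

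For the converse bound \eqref{11-15-2}, fix arbitrary $s > 0$ and $\tilde{\theta} > \theta(a(R))$. Dividing \eqref{11-22-9} by $n$ and letting $n \to \infty$, the $\delta_1$, $\delta_2$, $C_{4,n}$, and $2\log 2$ terms either vanish or yield finite constants after division by $n$ (with $C_{4,n}/n$ converging to a strictly negative quantity by the choice $\tilde{\theta} > \theta(a(R))$ and the monotonicity properties of Lemma \ref{lemma:multi-terminal-markov-property}), so the $\log(1 - e^{C_{4,n}})$ factor contributes negligibly. This yields
\begin{align*}
&\limsup_{n\to\infty} -\frac{1}{n}\log \overline{\Delta}(M_n)\\
\le& \frac{1+s}{s}\tilde{\theta}\Bigl(H_{1+\tilde{\theta},1+\theta(a(R))}^W(X|Y) \\
& \hspace{6ex} - H_{1+(1+s)\tilde{\theta},1+\theta(a(R))}^W(X|Y)\Bigr).
\end{align*}
Letting $s \to 0$ converts the difference quotient into a derivative, producing $-\tilde{\theta}\frac{d[\theta H_{1+\theta,1+\theta(a(R))}^W(X|Y)]}{d\theta}\bigr|_{\theta=\tilde{\theta}} + \tilde{\theta} H_{1+\tilde{\theta},1+\theta(a(R))}^W(X|Y)$. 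Then letting $\tilde{\theta} \to \theta(a(R))$ and invoking Statement \ref{item:multi-terminal-markov-property-6} of Lemma \ref{lemma:multi-terminal-markov-property} to identify $H_{1+\theta(a(R)),1+\theta(a(R))}^W(X|Y) = H_{1+\theta(a(R))}^{\uparrow,W}(X|Y)$, and the defining relation \eqref{eq:definition-theta-inverse-markov-optimal-Q} to identify $\frac{d[\theta H_{1+\theta,1+\theta(a(R))}^W(X|Y)]}{d\theta}\bigr|_{\theta=\theta(a(R))} = a(R)$, we arrive at \eqref{11-15-2}. The equality with the supremum expression \eqref{11-27-10} is then an immediate consequence of Lemma \ref{L9}.

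The main obstacle I anticipate is the careful justification of the double limit $s \to 0$ and $\tilde{\theta} \to \theta(a(R))$: we need to make sure $C_{4,n} \to -\infty$ (or at least stays bounded away from $0$) uniformly enough that $(1+s)\log(1 - e^{C_{4,n}})$ stays negligible after division by $n$ and the subsequent limits. Here the key input is that along the contour $\tilde{\theta} > \theta(a(R))$, the concavity guaranteed by Statement \ref{item:multi-terminal-markov-property-4} of Lemma \ref{lemma:multi-terminal-markov-property} combined with Statement \ref{item:multi-terminal-markov-property-7} (maximality at $\theta' = \theta$) forces the leading coefficient of $n$ in $C_{4,n}$ to be strictly negative, so the log term is bounded as $n \to \infty$. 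This is the same mechanism as in the single-terminal proof leading to \eqref{11-21-13}, but must now be verified with the two-parameter entropy $H_{1+\theta,1+\theta(a(R))}^W(X|Y)$ in place of $H_{1+\theta}^W(X)$.
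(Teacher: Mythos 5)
Your proposal is correct and follows essentially the same route as the paper's own proof: the achievability part is read off from \eqref{11-22-5}, the converse is obtained by fixing $s>0$ and $\tilde{\theta}>\theta(a(R))$ in \eqref{11-22-9}, passing to the limit in $n$, then taking $s\to 0$ and $\tilde{\theta}\to\theta(a(R))$ with the identification via \eqref{eq:definition-theta-inverse-markov-optimal-Q} and $H_{1+\theta,1+\theta}^W(X|Y)=H_{1+\theta}^{\uparrow,W}(X|Y)$, and the equality \eqref{11-27-10} comes from Lemma \ref{L9}. Your added remark on verifying that the leading coefficient of $n$ in $C_{4,n}$ is strictly negative is a point the paper leaves implicit, and your justification of it is sound.
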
   

Due to Lemma \ref{L9},
the lower bound \eqref{11-27-11}
and the upper bound \eqref{11-15-2}
coincide when 
$R$ is not less than the critical rate $R_{\mathrm{cr}}$.

\begin{proof}
\eqref{11-22-5} in Theorem \ref{theorem:multi-random-finite-markov-assumption-2-direct} yields \eqref{eq:multi-random-assumption-2-ldp-direct}.
Lemma \ref{L9} guarantees \eqref{11-27-10}.
So, we will prove \eqref{11-15-2}.

We fix $s > 0$ and $\tilde{\theta} > \theta(a(R ))$.
Then, \eqref{11-22-9} implies that
\begin{align}
\lefteqn{ \lim_{n \to \infty} -\frac{1}{n}\log \overline{\Delta}(M_n) }
\nonumber \\
  \le & 
\!\frac{1\!+\!s}{s}\!
\tilde{\theta} \Biggl( H_{1+\tilde{\theta},1+\theta(a(R ))}^W(X|Y) 
\!- \! H_{1+(1+s)\tilde{\theta},1+\theta(a(R ))}^W(X|Y) \Biggr)
\Label{11-22-11}
\end{align}
Similar to \eqref{11-21-10},
taking the limits $s \to 0$ and $\tilde{\theta} \to \theta(a(R ))$, we have
\begin{align}
& \frac{1+s}{s} 
\tilde{\theta} \Bigg( H_{1+\tilde{\theta},1+\theta(a(R ))}^W(X|Y) 
\nonumber \\
&\hspace{19ex} - H_{1+(1+s)\tilde{\theta},1+\theta(a(R ))}^W(X|Y) \Bigg) \nonumber \\
\to &
- 
\tilde{\theta}
\frac{d {\theta} H_{1+{\theta},1+\theta(a(R ))}^W(X|Y) }{d\theta} \biggl|_{\theta=\tilde{\theta}}
\nonumber \\
&\hspace{15ex}+ \tilde{\theta}H_{1+\tilde{\theta},1+\theta(a(R ))}^W(X|Y) 
\quad \hbox{(as $s \to 0$)} \nonumber\\
\to &
- 
\theta(a(R ))
\frac{d {\theta} H_{1+{\theta},1+\theta(a(R ))}^W(X|Y) }{d\theta} \biggl|_{\theta=\theta(a(R ))}
\nonumber \\
&\hspace{8ex}+ \theta(a(R )) H_{1+\theta(a(R ))}^{\uparrow,W}(X|Y) 
\quad \hbox{(as $\tilde{\theta} \to \theta(a(R ))$)} \nonumber\\
\stackrel{(a)}{=} & 
\theta(a(R )) a + \theta(a(R )) H_{1+\theta(a(R ))}^{\uparrow,W}(X|Y)
\Label{11-22-10}.
\end{align}
where $(a)$ follows from \eqref{eq:definition-theta-inverse-markov-optimal-Q}.
Hence, \eqref{11-22-10} and \eqref{11-22-11} imply that
\begin{align}
 \lim_{n \to \infty} -\frac{1}{n}\log \overline{\Delta}(M_n) 
\le 
\theta(a(R )) a + \theta(a(R )) H_{1+\theta(a(R ))}^{\uparrow,W}(X|Y),
\Label{11-22-12}
\end{align}
which implies \eqref{11-15-2}.
\end{proof}

\subsection{Moderate Deviation} \Label{subsection:multi-random-mdp}

\MH{Taking the limit with $R=H^W(X|Y) - n^{-t}\delta$ in}
Theorem \ref{theorem:multi-random-finite-markov-assumption-1-direct} and 
Theorem \ref{theorem:multi-random-finite-markov-assumption-1-converse}, we have the following.
\begin{theorem}
Suppose that a transition matrix $W$ satisfies Assumption \ref{assumption-Y-marginal-markov}.
For arbitrary $t \in (0,1/2)$ and $\delta > 0$, we have
\begin{align}
\lefteqn{\lim_{n\to\infty} - \frac{1}{n^{1-2t}} \log \Delta\left(e^{nH^W(X|Y) - n^{1-t}\delta} \right)}\nonumber \\
=& \lim_{n\to\infty} - \frac{1}{n^{1-2t}} \log \overline{\Delta}\left(e^{nH^W(X|Y) - n^{1-t}\delta} \right) 
= \frac{\delta^2}{2 \san{V}^W(X|Y)}.
\end{align}
\end{theorem}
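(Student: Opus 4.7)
The plan is to mirror the proof of Theorem \ref{T11-22-10} by replacing the single-terminal finite-length bounds with their conditional counterparts from Theorems \ref{theorem:multi-random-finite-markov-assumption-1-direct} and \ref{theorem:multi-random-finite-markov-assumption-1-converse}, and then exploiting the Taylor expansion of $\theta H_{1+\theta}^{\downarrow,W}(X|Y)$ around $\theta=0$ as given in \eqref{11-21-4b}. Since $\overline{\Delta}(M_n) \ge \Delta(M_n)$, it suffices to establish the lower bound $\delta^2/(2\san{V}^W(X|Y))$ for $-\tfrac{1}{n^{1-2t}}\log\overline{\Delta}(\cdot)$ and the matching upper bound for $-\tfrac{1}{n^{1-2t}}\log\Delta(\cdot)$.

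For the achievability direction, I would apply Theorem \ref{theorem:multi-random-finite-markov-assumption-1-direct} with $R_n := H^W(X|Y) - n^{-t}\delta$ and choose the free parameter as $\theta_n := n^{-t}\delta/\san{V}^W(X|Y)$, which lies in $[0,1]$ for sufficiently large $n$. Using \eqref{11-21-4b} and $H_{1+\theta_n}^{\downarrow,W}(X|Y) = H^W(X|Y) - \tfrac12 \san{V}^W(X|Y)\theta_n + o(\theta_n)$, the main term
\begin{align*}
-\theta_n n R_n + (n-1)\theta_n H_{1+\theta_n}^{\downarrow,W}(X|Y)
\end{align*}
equals $n^{1-2t}\delta^2/\san{V}^W(X|Y) - \tfrac{1}{2}n^{1-2t}\delta^2/\san{V}^W(X|Y) + o(n^{1-2t})$, i.e.\ $n^{1-2t}\delta^2/(2\san{V}^W(X|Y)) + o(n^{1-2t})$. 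Since $\underline{\delta}(\theta_n)=O(1)$, $1+\theta_n\to 1$, and $\log(3/2)=O(1)$, dividing by $n^{1-2t}$ gives the claimed lower bound.

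For the converse, I would use Theorem \ref{theorem:multi-random-finite-markov-assumption-1-converse} with the same $R_n$, for which \eqref{11-21-3} yields $\theta(R_n) = n^{-t}\delta/\san{V}^W(X|Y) + o(n^{-t})$. Fix $s>0$ and choose $\tilde{\theta}_n := (1+\epsilon)\theta(R_n)$ for a small $\epsilon>0$; the quantity $(n-1)(1+s)\tilde{\theta}_n(H_{1+\tilde{\theta}_n}^{\downarrow,W}-H_{1+(1+s)\tilde{\theta}_n}^{\downarrow,W})$ equals $(n-1)(1+s)s\tilde{\theta}_n^2 \frac{d H_{1+\theta}^{\downarrow,W}}{d\theta}|_{\theta=\tilde{\theta}_n} + o(n^{1-2t})$, which by \eqref{11-21-4b} is $-(1+s)s\tilde{\theta}_n^2 \san{V}^W(X|Y)(n-1) + o(n^{1-2t})$. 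Meanwhile $C_{3,n}$ of Theorem \ref{theorem:multi-random-finite-markov-assumption-1-converse} is asymptotically $-n^{1-2t}\cdot(\text{positive constant in }\tilde{\theta}_n - \theta(R_n))$, so $\log(1-e^{C_{3,n}}) \to 0$, and $\delta_1,\delta_2 = O(1)$. Dividing \eqref{11-22-17} by $n^{1-2t}$ and taking $n\to\infty$ gives $(1+s)(1+\epsilon)^2 \delta^2/(2\san{V}^W(X|Y))$ after using \eqref{11-21-3}; letting $\epsilon\downarrow 0$ and $s\downarrow 0$ yields the matching upper bound.

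The main technical obstacle is controlling the approximation errors: one must verify that the $O(1)$ residues $\underline{\delta}(\cdot)$, $\overline{\delta}(\cdot)$, $\delta_1$, $\delta_2$ from the finite-length bounds of Lemma \ref{lemma:mult-terminal-finite-evaluation-down-conditional-renyi}, together with the $\log(1-e^{C_{3,n}})$ perturbation, are uniformly negligible on the scale $n^{1-2t}$ for $t\in(0,1/2)$. This is ensured because $\theta_n,\tilde{\theta}_n \to 0$ (so the eigenvector/norm quantities controlling $\underline{\delta},\overline{\delta}$ stay bounded by continuity) and because the gap $\tilde{\theta}_n - \theta(R_n) = \Theta(n^{-t})$ forces $n(\theta(R_n)-\tilde{\theta}_n)R_n \to -\infty$, driving $e^{C_{3,n}}\to 0$. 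Combined with Theorem \ref{theorem:multi-markov-variance} (which guarantees that $\san{V}^W(X|Y)$ is the correct variance for the conditional information rate even on the $n$-block), this establishes matching moderate-deviation exponents.
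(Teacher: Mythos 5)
Your proposal is correct and follows essentially the same route as the paper: the paper proves this result by repeating the argument of Theorem \ref{T11-22-10} with the substitution $R=H^W(X|Y)-n^{-t}\delta$, replacing the single-terminal finite-length bounds \eqref{eq:single-random-finite-markov-direct-1-1} and \eqref{11-21-7} by \eqref{eq:multi-random-number-finite-markov-direct-assumption-1} and \eqref{11-22-17}, exactly as you do. Your choice $\theta_n=n^{-t}\delta/\san{V}^W(X|Y)$, the expansion \eqref{11-21-4b}, the treatment of the $O(1)$ residues and of $\log(1-e^{C_{3,n}})$, and the final limits $s\downarrow 0$ all match the paper's computation.
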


\begin{proof}
This theorem can be shown by the same way as Theorem \ref{T11-22-10} by
replacing \eqref{eq:single-random-finite-markov-direct-1-1} 
and \eqref{11-21-7}
by \eqref{eq:multi-random-number-finite-markov-direct-assumption-1}
and \eqref{11-22-17}, respectively.
\end{proof}

\subsection{Second Order} \Label{subsection:multi-random-second-order}

By applying the central limit theorem 
to Lemmas \ref{lemma:multi-information-spectrum} and \ref{lemma:multi-random-number-converse}, 
and by using Theorem \ref{theorem:multi-markov-variance}, we have the following.
\begin{theorem}
Suppose that a transition matrix $W$ satisfies Assumption \ref{assumption-Y-marginal-markov}.
For arbitrary $\varepsilon \in (0,1)$, we have
\begin{align}
&\lim_{n\to\infty} \frac{\log M(n,\varepsilon) - nH^W(X|Y)}{\sqrt{n}} \nonumber \\
=& \lim_{n\to\infty} \frac{\log \overline{M}(n,\varepsilon) - nH^W(X|Y)}{\sqrt{n}} \nonumber \\
=& \sqrt{\san{V}^W(X|Y)} \Phi^{-1}(\varepsilon).
\end{align}
\end{theorem}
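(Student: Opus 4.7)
The plan is to follow the blueprint of Theorem \ref{T11-22-14} in the single-terminal case, but with the conditional version of each ingredient. The target rate is $R_n = nH^W(X|Y) + \sqrt{n}\, R$ with $R = \sqrt{\san{V}^W(X|Y)}\,\Phi^{-1}(\varepsilon)$; I would show that $\overline{\Delta}(e^{R_n}) \to \varepsilon$ as an upper bound and $\Delta(e^{R_n}) \to \varepsilon$ as a lower bound, which, since $\Delta(M) \le \overline{\Delta}(M)$, would pinch both $M(n,\varepsilon)$ and $\overline{M}(n,\varepsilon)$ to the stated asymptotic value.

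For the achievability (upper bound on $\overline{M}(n,\varepsilon)$), I would invoke Lemma \ref{lemma:multi-information-spectrum} with $Q_{Y^n} = P_{Y^n}$, so that the log-likelihood ratio becomes $\log \frac{1}{P_{X^n|Y^n}(X^n|Y^n)}$. Setting $M_n = e^{nH^W(X|Y) + \sqrt{n}R}$ and $\gamma_n = nH^W(X|Y) + \sqrt{n}R + n^{1/4}$, the Gaussian correction term $\frac{1}{2}\sqrt{M_n/e^{\gamma_n}}$ decays like $e^{-n^{1/4}/4}$, so it vanishes, and the dominant term is the tail probability $P_{X^nY^n}\{\log 1/P_{X^n|Y^n}(X^n|Y^n) < \gamma_n\}$, which converges to $\varepsilon$ by the central limit theorem applied to that random variable.

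For the converse (lower bound on $M(n,\varepsilon)$), I would use Lemma \ref{lemma:multi-random-number-converse} with the symmetric choice $\gamma_n = nH^W(X|Y) + \sqrt{n}R - n^{1/4}$, so that $e^{\gamma_n}/M_n \to 0$ and the bound is asymptotically equal to the same tail probability, which again converges to $\varepsilon$ by the CLT.

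The main obstacle is to justify that the CLT applies to the sequence $\log \frac{1}{P_{X^n|Y^n}(X^n|Y^n)}$ with mean $nH^W(X|Y)$ and variance $n\san{V}^W(X|Y)$. Under Assumption \ref{assumption-Y-marginal-markov}, one can write
\begin{align*}
\log \frac{1}{P_{X^n|Y^n}(X^n|Y^n)} = \sum_{i=2}^{n} \log \frac{W_Y(Y_i|Y_{i-1})}{W(X_i,Y_i|X_{i-1},Y_{i-1})} + O(1),
\end{align*}
which is an additive functional of the ergodic Markov chain $(X_i,Y_i)$, and the mean and variance rates are $H^W(X|Y)$ and $\san{V}^W(X|Y)$ by \eqref{eq:lower-conditional-renyi-markov-theta-0B} and Theorem \ref{theorem:multi-markov-variance}, respectively. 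The CLT for such functionals is standard (see e.g.\ \cite{kontoyiannis:03,Jones2004,Meyn1994} as already cited in the proof of Theorem \ref{T11-22-14}), so one can directly invoke it to conclude that both tail probabilities tend to $\Phi(-R/\sqrt{\san{V}^W(X|Y)}) = 1 - \varepsilon$ when the threshold is written as $nH^W(X|Y) + \sqrt{n}R \pm o(\sqrt{n})$, yielding the desired equality.
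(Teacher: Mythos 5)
Your proposal is correct and follows essentially the same route as the paper: the paper likewise proves this by substituting $\gamma = nH^W(X|Y)+\sqrt{n}R \pm n^{1/4}$ with $M_n=e^{nH^W(X|Y)+\sqrt{n}R}$ into Lemma \ref{lemma:multi-information-spectrum} (with $Q_{Y^n}=P_{Y^n}$) and Lemma \ref{lemma:multi-random-number-converse}, and then invokes the central limit theorem for Markov chains together with Theorem \ref{theorem:multi-markov-variance}, exactly as you do. The only blemish is your final parenthetical $\Phi(-R/\sqrt{\san{V}^W(X|Y)})=1-\varepsilon$: for the lower-tail probability $P\{\log 1/P_{X^n|Y^n}<\gamma_n\}$ the limit is $\Phi(R/\sqrt{\san{V}^W(X|Y)})=\varepsilon$, consistent with what you correctly state earlier in the argument.
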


\begin{proof}
The central limit theorem for Markovian process \cite{kontoyiannis:03,Jones2004,Meyn1994} \cite[Corollary 6.2.]{hayashi-watanabe:13b} guarantees that
the random variable $(\log P_{X^n| Y^n}(X^n|Y^n) -n H^W (X|Y))/\sqrt{n}$ 
asymptotically obeys the normal distribution with the average $0$ and the variance $\san{V}^W(X|Y)$.
This theorem can be shown by the same way as Theorem \ref{T11-22-14} by
replacing the roles of Lemmas \ref{lemma:single-random-number-leftover-loose} and \ref{lemma:single-random-number-converse}
by those of 
Lemmas \ref{lemma:multi-information-spectrum} and \ref{lemma:multi-random-number-converse} with $Q_Y=P_Y$, 
respectively.
\end{proof}

\subsection{\MH{Modified Mutual Information Rate (MMIR)}}\Label{Equivocation Rate-2} 
Taking the limit in Theorems \ref{Th11-25b-3} and \ref{Th11-25b-4}, 
we have the following.
\begin{theorem}\Label{Th11-25b-5}
Suppose that a transition matrix $W$ satisfies Assumption \ref{assumption-Y-marginal-markov}.
\MH{The modified mutual information rate (MMIR) is asymptotically calculated as}
\begin{align}
\lim_{n \to \infty} \frac{1}{n}D(e^{nR}) = 
\lim_{n \to \infty} \frac{1}{n}\overline{D}(e^{nR}) = 
[R-H^{W}(X|Y)]_+  .
\end{align}
\end{theorem}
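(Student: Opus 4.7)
The plan is to adapt the proof of Theorem \ref{Th11-25-5} essentially verbatim to the conditional setting, replacing the R\'enyi entropy rate $H^W_{1\pm\theta}(X)$ by its conditional counterpart $H^{\downarrow,W}_{1\pm\theta}(X|Y)$ for the transition matrix, and invoking Theorems \ref{Th11-25b-3} and \ref{Th11-25b-4} in place of Theorems \ref{Th11-25-3} and \ref{Th11-25-4}. The two non-trivial ingredients needed are the comparison $D(e^{nR})\le\overline{D}(e^{nR})$ (which holds because $\mathbb{E}_F D[F]\ge\min_f D[f]$ for any random $F$) and the continuity relation $\lim_{\theta\to 0}H^{\downarrow,W}_{1\pm\theta}(X|Y)=H^W(X|Y)$, which is part of \eqref{eq:lower-conditional-renyi-markov-theta-0B} together with the concavity properties of Lemma \ref{lemma:multi-terminal-markov-property}.

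For the upper bound, I fix $\theta\in(0,1)$. When $R\ge H^{\downarrow,W}_{1+\theta}(X|Y)$, dividing the finite-length bound of Theorem \ref{Th11-25b-3} by $n$ and letting $n\to\infty$ yields $\limsup_{n\to\infty}\tfrac{1}{n}\overline{D}(e^{nR})\le R-H^{\downarrow,W}_{1+\theta}(X|Y)$, since the remainder term $\tfrac{1}{\theta n}(\log 2-\underline{\delta}(\theta))$ and the shift $\tfrac{1}{n}H^{\downarrow,W}_{1+\theta}(X|Y)$ both vanish. For $R<H^{\downarrow,W}_{1+\theta}(X|Y)$, I invoke monotonicity of $\overline{D}(M)$ in $M$ (the same step used in the proof of Theorem \ref{Th11-25-5}) to replace $R$ by $R'=H^{\downarrow,W}_{1+\theta}(X|Y)\ge R$, which in the limit forces $\limsup_{n\to\infty}\tfrac{1}{n}\overline{D}(e^{nR})\le 0$. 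Combining the two regimes gives $\limsup_{n\to\infty}\tfrac{1}{n}\overline{D}(e^{nR})\le[R-H^{\downarrow,W}_{1+\theta}(X|Y)]_+$; sending $\theta\to 0^+$ then yields $[R-H^W(X|Y)]_+$. Because $D\le\overline{D}$, the same upper bound applies to $\tfrac{1}{n}D(e^{nR})$.

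For the matching lower bound, Theorem \ref{Th11-25b-4} with $\theta\in(0,1)$ gives $\tfrac{1}{n}D(e^{nR})\ge R-\tfrac{n-1}{n}H^{\downarrow,W}_{1-\theta}(X|Y)+\tfrac{\underline{\delta}(-\theta)}{\theta n}$, hence $\liminf_{n\to\infty}\tfrac{1}{n}D(e^{nR})\ge R-H^{\downarrow,W}_{1-\theta}(X|Y)$. Coupling this with the trivial inequality $D(e^{nR})\ge 0$ yields $\liminf_{n\to\infty}\tfrac{1}{n}D(e^{nR})\ge[R-H^{\downarrow,W}_{1-\theta}(X|Y)]_+$, and letting $\theta\to 0^+$ produces $[R-H^W(X|Y)]_+$. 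Since $\overline{D}\ge D$, the same lower bound applies to $\tfrac{1}{n}\overline{D}(e^{nR})$, closing the sandwich.

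No serious obstacle arises, because the argument is structurally identical to that of Theorem \ref{Th11-25-5}; the only point worth verifying is that Assumption \ref{assumption-Y-marginal-markov} is precisely what makes the transition-matrix quantity $H^{\downarrow,W}_{1+\theta}(X|Y)$ well-defined via \eqref{eq:definition-lower-conditional-renyi-markov} and what licenses the finite-length bounds of Theorems \ref{Th11-25b-3} and \ref{Th11-25b-4}. This requirement is already built into the hypotheses of those theorems, so nothing further needs to be checked.
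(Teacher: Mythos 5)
Your proposal is correct and follows essentially the same route as the paper: the paper's proof of Theorem \ref{Th11-25b-5} consists precisely of repeating the argument of Theorem \ref{Th11-25-5} with Theorems \ref{Th11-25b-3} and \ref{Th11-25b-4} in place of Theorems \ref{Th11-25-3} and \ref{Th11-25-4}, using the monotonicity of $\overline{D}$ in $M$, the nonnegativity of $D$, the ordering $D\le\overline{D}$, and the limit $\theta\to 0$ via \eqref{eq:lower-conditional-renyi-markov-theta-0B}. Your explicit checks of the continuity of $H^{\downarrow,W}_{1\pm\theta}(X|Y)$ at $\theta=0$ and of the role of Assumption \ref{assumption-Y-marginal-markov} are exactly the points the paper leaves implicit.
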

\begin{proof}
Theorem \ref{Th11-25b-5}
can be shown as the same way as Theorem \ref{Th11-25-5}.
\end{proof}

\section{Discussion and Conclusion}

In this paper, we have derived the non-asymptotic bounds on
the uniform random number generation with/without information leakage
for the Markovian case.
\MH{In these bounds,
the difference between 
${\Delta}(M)$ and $\overline{\Delta}(M)$
is asymptotically negligible 
at least in the moderate deviation regime and the second order regime.
The same relation holds between $D(M)$ and $\overline{D}(M)$. 
Hence, we can conclude that it is enough to employ 
any two-universal hash function even for the Markovian case.}

Here, 
\MH{to discuss the practical importance of non-asymptotic results, 
we shall remark a difference 
of the uniform random number generation from channel and source coding.}
When we construct a practical system, we need to consider two issues:
\begin{itemize}
\item How to {\em quantitatively} guarantee the performance,
\item How to implement the system efficiently.
\end{itemize}
The uniform random number generation do not have to care about decoding complexity
although the coding problems requires decoding, which requires huge amount of calculation complexity.
Furthermore, it is also known that universal$_2$ hash functions can be 
constructed by combination of Toeplitz matrix and the identity matrix.
This construction has small amount of complexity and was implemented in a real demonstration \cite{asai:11}.
Hence, our non-asymptotic results can be directly used as a performance guarantee of a practical system \MH{even when the source distribution has a memory.}

Recently, Tsurumaru et al \cite{tsurumaru:11} proposed a new class of hash functions, so called 
$\varepsilon$-almost dual universal hash functions.
Then, the recent paper \cite{H-T} invented more efficient hash functions with less random seeds, which belong to $\varepsilon$-almost dual universal hash functions.
Hence, it is needed to extend our result to $\varepsilon$-almost dual universal hash functions.
Fortunately, another recent paper \cite{hayashi:13} has already shown similar results with 
$\varepsilon$-almost dual universal hash functions in the i.i.d. case.
So, it is not so difficult to extend the results in \cite{hayashi:13} to the Markovian case.

\MH{In this paper, we have assumed that the transition matrix describing the Markovian chain is irreducible.
When the transition matrix has several irreducible components,
we need to consider the mixture distribution among the possible irreducible components,
which is defined by the initial distribution.
As discussed in \cite[Theorem 1]{RAL:01}, 
in the finite state space,
the asymptotic behavior of the (conditional) R\'{e}nyi entropy
is characterized by the maximum (conditional) R\'{e}nyi entropy
among the possible irreducible components, which depend on the initial distribution.
Hence, 
for large deviation and moderate deviation, 
the exponential decreasing rate of the leaked information
can be evaluated by the minimum rate among the possible irreducible components.
On the other hand, in the case of the mixture of the i.i.d. case, 
when we fix the first and second orders of the coding rate,
the limit of the decoding error probability 
is given by the stochastic mixture of the Gaussian distributions corresponding to the i.i.d. sources \cite{nomura:13}.
So, for the second order analysis for the Markovian case,
we can expect the similar characterization 
by using the stochastic mixture of the Gaussian distributions corresponding to the irreducible components.
Such an analysis is remained for a future study.}

\appendices
\section{Tail probability} \Label{Appendix:preparation}

In converse proofs, we use some techniques to bound tail probabilities in \cite{hayashi-watanabe:13,hayashi-watanabe:13b}.
For this purpose, we need to translate some terminologies in statistics into terminologies in
information theory.
In this appendix, we introduce some terminologies and bounds from \cite{hayashi-watanabe:13,hayashi-watanabe:13b}.
For proofs, see \cite{hayashi-watanabe:13,hayashi-watanabe:13b}.

\subsection{Single-Shot Setting}

Let $Z$ be a real valued random variable with distribution $P$. Let 
\begin{eqnarray} \Label{eq:definition-single-shot-cgf}
\phi(\rho) := \log \mathsf{E}\left[ e^{\rho Z}\right] 
= \log \sum_z P(z) e^{\rho \MH{z}}
\end{eqnarray}
be the cumulant generating function (CGF). 
Let us introduce an exponential family
\begin{eqnarray} \Label{eq:one-shot-tail-exponential-family}
P_\rho(z) := P(z)e^{\rho z - \phi(\rho)}.
\end{eqnarray}
By differentiating the CGF, we find that 
\begin{eqnarray}
\phi^\prime(\rho) = \mathsf{E}_\rho[Z] 
:= \sum_z P_\rho(z) z. 
\end{eqnarray}
We also find that 
\begin{eqnarray} \Label{eq:one-shot-tail-second-derivative}
\phi^{\prime\prime}(\rho) = \sum_z P_\rho(z) \left( z - \mathsf{E}_\rho[Z] \right)^2.
\end{eqnarray}
We assume that $Z$ is not constant. Then, \eqref{eq:one-shot-tail-second-derivative} implies that $\phi(\rho)$ is a strict convex function
and $\phi^\prime(\rho)$ is monotonically increasing. Thus, we can define the inverse function $\rho(a)$ of $\phi^\prime(\rho)$
by 
\begin{eqnarray} \Label{eq:definition-cgf-single-inverse-function}
\phi^\prime(\rho(a)) = a.
\end{eqnarray}

Let 
\begin{eqnarray}
D_{1+s}(P\|Q) := \frac{1}{s} \log \sum_z P(z)^{1+s} Q(z)^{-s}
\end{eqnarray}
be the R\'enyi divergence. Then, we have the following relation:
\begin{align} \Label{eq:relation-renyi-divergence-cgf}
s D_{1+s}(P_{\tilde{\rho}}\|P_{\rho}) =  \phi((1+s) \tilde{\rho} - s \rho) - (1+s) \phi(\tilde{\rho}) + s \phi(\rho).
\end{align}
The following bounds on tail probabilities will be used later.
\begin{proposition}[\protect{\cite[Theorem A.2]{hayashi-watanabe:13b}}] \Label{theorem:one-shot-tail-converse-2}
For any $a > \mathsf{E}[Z]$, we have
\begin{align}
\lefteqn{ - \log P\{ Z \ge a \} } \nonumber \\
\le & \inf_{s > 0 \atop \tilde{\rho} \in \mathbb{R}, \sigma \ge 0 }
\frac{1}{s}
 \Bigg[ \phi((1+s)\tilde{\rho}) - (1+s)\phi(\tilde{\rho}) 
 \nonumber \\
& \hspace{5ex} - (1+s) \log \bigg(1- e^{ - [\sigma a - \phi(\tilde{\rho}+\sigma) + \phi(\tilde{\rho}) ] } \bigg) \Bigg] 
 \Label{eq:tail-converse-2-0c} \\
\le & \inf_{s > 0 \atop \tilde{\rho} > \rho(a) }
 \frac{1}{s} \Bigg[ \phi((1+s)\tilde{\rho}) - (1+s)\phi(\tilde{\rho}) 
 \nonumber \\
&  - (1+s) \log \bigg(1- e^{ - [
(\tilde{\rho} -\rho(a))
 a - \phi(\tilde{\rho}+\sigma) + \phi(\tilde{\rho}) ] } \bigg) \Bigg] 
 \Label{eq:tail-converse-2-0} .
 \end{align}
Similarly, for any $a < \mathsf{E}[Z]$, we  have
\begin{align}
\lefteqn{ - \log P\{ Z \le a \} } \nonumber \\
\le& \inf_{s > 0 \atop \tilde{\rho} \in \mathbb{R}, \sigma \ge 0 }
\frac{1}{s} \Bigg[ \phi((1+s)\tilde{\rho}) - (1+s)\phi(\tilde{\rho}) 
 \nonumber \\
& \hspace{5ex} - (1+s) \log \bigg(1- e^{ - [\sigma a - \phi(\tilde{\rho}+\sigma) + \phi(\tilde{\rho}) ] } \bigg) \Bigg] 
 \Label{eq:tail-converse-2-0-opposite-c} \\
\le & \inf_{s > 0 \atop \tilde{\rho} < \rho(a) }
\frac{1}{s} \Bigg[ \phi((1+s)\tilde{\rho}) - (1+s)\phi(\tilde{\rho}) 
 \nonumber \\
&  - (1+s) \log \bigg(1- e^{ - [(\rho(a)-\tilde{\rho})
a - \phi(\tilde{\rho}+\sigma) + \phi(\tilde{\rho}) ] } \bigg) \Bigg] 
 \Label{eq:tail-converse-2-0-opposite} .
 \end{align}
\end{proposition}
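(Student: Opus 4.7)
The plan is to combine an exponential change of measure with a generalized Paley--Zygmund (reverse Markov) inequality. The parameters $s > 0$ and $\sigma \ge 0$ in \eqref{eq:tail-converse-2-0c} enter at these two steps respectively, and the term $\phi((1+s)\tilde{\rho}) - (1+s)\phi(\tilde{\rho})$ --- equal to $s D_{1+s}(P_{\tilde{\rho}} \| P)$ by \eqref{eq:relation-renyi-divergence-cgf} with reference $P_0 = P$ --- will arise as a R\'enyi-divergence factor, while $\sigma a - \phi(\tilde{\rho}+\sigma) + \phi(\tilde{\rho})$ will appear as a Chernoff exponent under the tilted CGF $\phi_{\tilde{\rho}}(\sigma) = \phi(\tilde{\rho}+\sigma) - \phi(\tilde{\rho})$.

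First, I would rewrite the tail probability via the exponential family \eqref{eq:one-shot-tail-exponential-family}:
\begin{align*}
P\{Z \ge a\} = e^{\phi(\tilde{\rho})}\, \mathbb{E}_{\tilde{\rho}}\bigl[\mathbf{1}\{Z \ge a\}\, e^{-\tilde{\rho} Z}\bigr],
\end{align*}
reducing the problem to a weighted tail under the tilted measure $P_{\tilde{\rho}}$. Second, I would apply a generalized Paley--Zygmund inequality obtained from H\"older's inequality with conjugate exponents $1+s$ and $(1+s)/s$; this gives a lower bound on $\mathbb{E}_{\tilde{\rho}}[V \mathbf{1}_A]$ of the form $(\mathbb{E}_{\tilde{\rho}}[V] - \mathbb{E}_{\tilde{\rho}}[V \mathbf{1}_{A^c}])^{(1+s)/s} / \mathbb{E}_{\tilde{\rho}}[V^{1+s}]^{1/s}$ for a suitable non-negative weight $V$ and event $A = \{Z \ge a\}$. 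Choosing $V$ to absorb the factor $e^{-\tilde{\rho} Z}$ and combining with the change-of-measure factor $e^{\phi(\tilde{\rho})}$ produces the R\'enyi-divergence term. Third, the complementary expectation $\mathbb{E}_{\tilde{\rho}}[V \mathbf{1}_{A^c}]$ is controlled using the Chernoff bound $P_{\tilde{\rho}}\{Z \ge a\} \le e^{-[\sigma a - \phi(\tilde{\rho}+\sigma) + \phi(\tilde{\rho})]}$ (with the auxiliary $\sigma$), producing the factor $1 - e^{-[\sigma a - \phi(\tilde{\rho}+\sigma) + \phi(\tilde{\rho})]}$. Assembling the three steps and taking $-\log$ yields \eqref{eq:tail-converse-2-0c}; \eqref{eq:tail-converse-2-0} then follows by tying $\sigma$ to $\tilde{\rho} - \rho(a)$ via \eqref{eq:definition-cgf-single-inverse-function}. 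The symmetric lower-tail bounds \eqref{eq:tail-converse-2-0-opposite-c} and \eqref{eq:tail-converse-2-0-opposite} follow by applying the whole argument to $-Z$, which negates both $\tilde{\rho}$ and $a$.

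The main technical obstacle is maintaining feasibility of the Chernoff step: for the logarithmic factor to be well-defined we need $\sigma a - \phi(\tilde{\rho}+\sigma) + \phi(\tilde{\rho}) > 0$, which constrains $\sigma$ to the range where the tilted mean $\phi'(\tilde{\rho}+\sigma)$ lies on the correct side of $a$. Balancing this constraint against the Paley--Zygmund exponent $s$ in the joint infimum --- and cleanly eliminating $\sigma$ in \eqref{eq:tail-converse-2-0} via the derivative identity $\phi'(\rho(a)) = a$ --- is the delicate bookkeeping of the argument.
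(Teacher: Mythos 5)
The paper does not actually prove this proposition; it is imported verbatim from \cite[Theorem A.2]{hayashi-watanabe:13b}, so I am judging your proposal on its own merits. Your overall architecture --- exponential tilting, a H\"older-type step that produces the R\'enyi-divergence term $\phi((1+s)\tilde{\rho})-(1+s)\phi(\tilde{\rho})=sD_{1+s}(P_{\tilde{\rho}}\|P)$, and a Chernoff bound that produces the factor $1-e^{-[\sigma a-\phi(\tilde{\rho}+\sigma)+\phi(\tilde{\rho})]}$ --- is indeed the right one, and it is the standard route to bounds of this shape. But two of your three steps are pointed in the wrong direction as written. First, the H\"older step: the inequality you display, with conjugate exponents $1+s$ and $(1+s)/s$, is a lower bound on $P_{\tilde{\rho}}(A)$ (this is the generalized Paley--Zygmund inequality), not a lower bound on $\mathbb{E}_{\tilde{\rho}}[V\mathbf{1}_A]$; since your Step 1 reduces the target to $\mathbb{E}_{\tilde{\rho}}[V\mathbf{1}_A]$ with $V=e^{-\tilde{\rho}Z}$, the two steps do not compose. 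The clean version dispenses with the weight $V$ entirely: apply the data-processing inequality for $D_{1+s}$ to the partition $\{A,A^c\}$ to get $e^{sD_{1+s}(P_{\tilde{\rho}}\|P)}\ge P_{\tilde{\rho}}(A)^{1+s}P(A)^{-s}$, i.e.\ $P(A)\ge P_{\tilde{\rho}}(A)^{(1+s)/s}e^{-D_{1+s}(P_{\tilde{\rho}}\|P)}$, which already yields the exact $\frac{1}{s}[\cdots]$ prefactor (your weighted-H\"older route gives an exponent involving $\phi(\tfrac{(1+s)\tilde{\rho}}{s})$ and would need a further reparametrization to match the stated formula).

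Second, and more seriously, the Chernoff step is applied to the wrong tail. With $A=\{Z\ge a\}$ you must show $P_{\tilde{\rho}}(A)$ is \emph{close to one}, i.e.\ you need an \emph{upper} bound on the complement $P_{\tilde{\rho}}(A^c)=P_{\tilde{\rho}}\{Z<a\}$. That is a lower-tail Chernoff bound under $P_{\tilde{\rho}}$, which is only available when the tilted mean satisfies $\phi'(\tilde{\rho})>a$ (equivalently $\tilde{\rho}>\rho(a)$, exactly the constraint appearing in \eqref{eq:tail-converse-2-0}) and uses a \emph{nonpositive} exponent parameter; the passage from \eqref{eq:tail-converse-2-0c} to \eqref{eq:tail-converse-2-0} is precisely the substitution $\sigma=\rho(a)-\tilde{\rho}\le 0$, which turns $\phi(\tilde{\rho}+\sigma)$ into $\phi(\rho(a))$ and reproduces the exponent $(\tilde{\rho}-\rho(a))a+\phi(\rho(a))-\phi(\tilde{\rho})$ of Proposition \ref{proposition:general-markov-tail-converse}. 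You instead invoke $P_{\tilde{\rho}}\{Z\ge a\}\le e^{-[\sigma a-\phi(\tilde{\rho}+\sigma)+\phi(\tilde{\rho})]}$ with $\sigma\ge 0$, which is an upper bound on $P_{\tilde{\rho}}(A)$ itself; feeding that into the change-of-measure step produces a lower bound on $P(A^c)$, not on $P(A)$, and the argument collapses. Likewise your final remark ``tying $\sigma$ to $\tilde{\rho}-\rho(a)$'' has the sign reversed. Once the tail direction and the sign of $\sigma$ are fixed, the three ingredients do assemble into \eqref{eq:tail-converse-2-0c}--\eqref{eq:tail-converse-2-0}, and the reflection $Z\mapsto -Z$ gives the lower-tail pair as you say.
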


\subsection{Transition Matrix}
The discussion in this and the next subsections is a generalization of that for the lower conditional R\'{e}nyi entropy $H_{1+\theta}^{\downarrow,W}(X|Y)$
in the following sense.
In these subsections, 
the set ${\cal Z}$, and the functions $g$, $\tilde{g}$, and $\phi(\rho)$
are addressed.
The set ${\cal Z}$ is the generalization of ${\cal X} \times {\cal Y}$,
and the functions $g$, $\tilde{g}$, and $\phi(\rho)$
are the generalizations of $\log W -\log W_Y$,
$\log P_{X_1Y_1}-\log P_{Y_1}$,
and $-\theta H_{1+\theta}^{\downarrow,W}(X|Y)$, respectively.
Under this generalization, the same notation has the same meaning as for the lower conditional R\'{e}nyi entropy $H_{1+\theta}^{\downarrow,W}(X|Y)$.

Let $\{ W(z|z^\prime) \}_{(z,z^\prime) \in {\cal Z}^2}$ be 
an ergodic and irreducible transition matrix, and let $\tilde{P}$ be its
stationary distribution.
For a function $g:{\cal Z} \times {\cal Z} \to \mathbb{R}$,  let
\begin{eqnarray}
\mathsf{E}[g] := \sum_{z,z^\prime} \tilde{P}(z^\prime) W(z|z^\prime) g(z,z^\prime).
\end{eqnarray}
We also introduce the following tilted matrix:
\begin{eqnarray}
\tilde{W}_\rho(z|z^\prime) := W(z|z^\prime) e^{\rho g(z,z^\prime)}.
\end{eqnarray}
Let $\lambda_\rho$ be the Perron-Frobenius eigenvalue of $W_\rho$. 
Then, the CGF for $W$ with generator $g$ is defined by
\begin{eqnarray} \Label{eq:definition-transition-cgf}
\phi(\rho) := \log \lambda_\rho.
\end{eqnarray}

\begin{lemma} \Label{lemma:general-markov-cgf-strict-convexity}
The function $\phi(\rho)$ is a convex function of $\rho$, and it is strict convex iff. $\phi^{\prime\prime}(0) > 0$.
\end{lemma}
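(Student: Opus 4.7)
The plan is to prove convexity by exhibiting $\phi$ as a pointwise limit of cumulant generating functions, and then to upgrade this to the strict convexity characterization using real-analyticity of $\phi$ together with a cohomological identification of the zeros of $\phi''$.

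First I would realize $\phi$ as a normalized CGF. Since $W$ is irreducible, $\tilde W_\rho$ is a non-negative irreducible matrix for every $\rho$, so by Perron--Frobenius one has, for any initial distribution $\mu$ with full support on $\mathcal Z$,
\begin{eqnarray*}
\sum_{z,z_0}\mu(z_0)\tilde W_\rho^{\,n}(z|z_0)
= c_\rho(\mu)\,\lambda_\rho^n\bigl(1+o(1)\bigr),
\end{eqnarray*}
with $c_\rho(\mu)>0$. Expanding the matrix power, the left-hand side equals $\mathsf{E}_\mu[e^{\rho S_n}]$ with $S_n:=\sum_{i=1}^n g(Z_i,Z_{i-1})$ taken along the Markov chain generated by $W$ from $Z_0\sim\mu$. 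Therefore
\begin{eqnarray*}
\phi(\rho)=\log\lambda_\rho=\lim_{n\to\infty}\tfrac{1}{n}\log\mathsf{E}_\mu[e^{\rho S_n}].
\end{eqnarray*}
For each fixed $n$, $\rho\mapsto\log\mathsf{E}_\mu[e^{\rho S_n}]$ is the CGF of the real random variable $S_n$ and is therefore convex (Hölder). Convexity passes to the pointwise limit, yielding convexity of $\phi$.

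Next I would establish real-analyticity of $\phi$. Because $W$ is irreducible, $\lambda_\rho$ is a \emph{simple} eigenvalue of $\tilde W_\rho$, and the family $\rho\mapsto\tilde W_\rho$ depends real-analytically on $\rho$. By Kato's analytic perturbation theory, $\lambda_\rho$ is real-analytic and strictly positive in $\rho$, so $\phi(\rho)=\log\lambda_\rho$ is real-analytic on $\mathbb R$. Differentiating the eigenvalue identity $\tilde W_\rho r_\rho=\lambda_\rho r_\rho$ twice and contracting against the left Perron eigenvector identifies $\phi''(\rho)$ with the asymptotic variance of $n^{-1/2}S_n$ under the normalized tilted Markov chain $\hat W_\rho(z|z'):=\tilde W_\rho(z|z')r_\rho(z)/(\lambda_\rho r_\rho(z'))$; in particular $\phi''(\rho)\ge 0$, consistently with convexity.

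The final ingredient is a cohomology lemma characterizing when this asymptotic variance vanishes. A standard result for additive functionals of finite irreducible Markov chains states: the asymptotic variance of $S_n$ under $\hat W_\rho$ equals zero iff there exist a constant $c$ and a function $h:\mathcal Z\to\mathbb R$ such that
\begin{eqnarray*}
g(z,z')=c+h(z')-h(z)\qquad\text{for every }(z,z')\text{ with }W(z|z')>0.
\end{eqnarray*}
The key observation is that this condition depends only on the support of $W$, not on $\rho$: tilting does not alter which entries of $\tilde W_\rho$ are nonzero. Consequently, $\phi''(0)=0$ iff $g$ is a coboundary plus constant, iff $\phi''(\rho)=0$ for every $\rho$, iff $S_n-nc$ is bounded and $\phi(\rho)=\rho c$ is affine. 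Combining: if $\phi''(0)>0$, then $\phi''(\rho)>0$ for all $\rho$, and $\phi$ is strictly convex; if $\phi''(0)=0$, then $\phi$ is affine and not strictly convex. This proves the equivalence.

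The main obstacle is the cohomology step: identifying $\phi''(\rho)$ with the Markov asymptotic variance via perturbation theory, and then invoking the characterization of its vanishing in a form that is manifestly independent of $\rho$. Both facts are classical (Kato perturbation theory together with the Kipnis--Varadhan / Green--Kubo representation and the Poisson equation criterion), and once they are in hand the rest of the argument is purely formal.
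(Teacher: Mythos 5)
Your proof is correct, but there is nothing in this paper to compare it against: the appendix explicitly states that the lemmas of this section are imported from the companion papers \cite{hayashi-watanabe:13,hayashi-watanabe:13b} (``For proofs, see \ldots''), so Lemma~\ref{lemma:general-markov-cgf-strict-convexity} is quoted without proof here. Judged on its own, your argument is a valid self-contained derivation and follows essentially the same circle of ideas as the cited references, which treat $\{\tilde W_\rho\}$ as a one-parameter exponential family of (non-normalized) transition matrices, identify $\phi''(\rho)$ with a Fisher-information/asymptotic-variance quantity of the normalized tilted chain, and characterize its vanishing by the generator $g$ being equivalent to a constant --- precisely your coboundary condition $g(z,z')=c+h(z')-h(z)$ on $\{W(z|z')>0\}$. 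The two genuinely load-bearing observations in your write-up are both sound: (i) convexity follows because $\phi$ is the pointwise limit of the convex functions $\frac1n\log\mathsf{E}_\mu[e^{\rho S_n}]$, and the two-sided Perron--Frobenius bounds $c_1\lambda_\rho^n\le\mathsf{E}_\mu[e^{\rho S_n}]\le c_2\lambda_\rho^n$ (obtained from the strictly positive right eigenvector) give this limit without even needing aperiodicity; (ii) the degeneracy criterion for the asymptotic variance depends only on the support of the transition matrix, which tilting does not change, so $\phi''$ either vanishes identically (and $S_n-nc$ telescopes to a bounded quantity, making $\phi$ affine) or is everywhere positive --- this dichotomy is exactly what turns the pointwise condition $\phi''(0)>0$ into strict convexity on all of $\mathbb{R}$. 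The only cosmetic simplification available is that you do not really need the full strength of Kato's perturbation theory: since the Perron--Frobenius eigenvalue is a simple root of the characteristic polynomial of $\tilde W_\rho$, whose coefficients are entire in $\rho$, real-analyticity of $\lambda_\rho$ follows from the implicit function theorem.
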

From Lemma \ref{lemma:general-markov-cgf-strict-convexity}, $\phi^\prime(\rho)$ is monotone increasing function.
Thus, we can define the inverse function $\rho(a)$ of $\phi^\prime(\rho)$ by
\begin{eqnarray} \Label{eq:definition-inverse-general-markov}
\phi^\prime(\rho(a)) = a.
\end{eqnarray}

\subsection{Markov Chain}

Let $\mathbf{Z} = \{ Z^n \}_{n=1}^\infty$ be the Markov chain induced by $W(z|z^\prime)$ and 
an initial distribution $P_{Z_1}$. For functions $g:{\cal Z} \times {\cal Z} \to \mathbb{R}$ 
and $\tilde{g}:{\cal Z} \to \mathbb{R}$, let $S_n := \sum_{i=2}^n g(Z_i,Z_{i-1}) + \tilde{g}(Z_1)$. Then,
the CGF for $S_n$ is given by
\begin{eqnarray}
\phi_n(\rho) := \log \mathsf{E}\left[ e^{\rho S_n} \right].
\end{eqnarray}
We will use the following finite evaluation for $\phi_n(\rho)$. 
\begin{lemma}[\protect{\cite[Lemma 5.1]{hayashi-watanabe:13b}}] \Label{lemma:finite-evaluation-of-cgf}
Let $v_\rho$ be the eigenvector of $\tilde{W}_\rho^T$ 
with respect to the Perron-Frobenius eigenvalue $\lambda_\rho$
such that $\min_{z} v_\rho(z) =1$. Let $w_\rho(z) := P_{Z_1}(z) e^{\rho \tilde{g}(z)}$. Then, we have
\begin{eqnarray}
(n-1) \phi(\rho) + \underline{\delta}_\phi(\rho) 
\le \phi_n(\rho) \le (n-1) \phi(\rho) + \overline{\delta}_\phi(\rho),
\end{eqnarray}
where 
\begin{eqnarray}
\overline{\delta}_\phi(\rho) &:=& \log \langle v_\rho | w_\rho \rangle, \\
\underline{\delta}_\phi(\rho) &:=& \log \langle v_\rho | w_\rho \rangle - \log \max_z v_\rho(z).
\end{eqnarray}
\end{lemma}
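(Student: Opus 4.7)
The plan is to reduce the moment generating function of $S_n$ to a quadratic-form expression in the tilted matrix $\tilde{W}_\rho$ and then sandwich the all-ones vector between scaled copies of the left Perron--Frobenius eigenvector $v_\rho$.

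First, I would unfold the definition of $S_n$ inside the expectation. Since the joint law of $Z^n$ factorises as $P_{Z_1}(z_1)\prod_{i=2}^n W(z_i\mid z_{i-1})$, and $e^{\rho S_n}$ factorises along the same edges, we obtain
\begin{align}
\mathsf{E}[e^{\rho S_n}] = \sum_{z_1,\dots,z_n} w_\rho(z_1)\prod_{i=2}^n \tilde{W}_\rho(z_i\mid z_{i-1}).
\end{align}
In vector/matrix notation, letting $\mathbf{1}$ denote the all-ones vector on ${\cal Z}$, this is precisely $\mathbf{1}^T \tilde{W}_\rho^{\,n-1} w_\rho$. So $\phi_n(\rho)=\log\bigl(\mathbf{1}^T \tilde{W}_\rho^{\,n-1} w_\rho\bigr)$.

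Next, I would exploit that $v_\rho$ is, by definition, a right eigenvector of $\tilde{W}_\rho^T$ with eigenvalue $\lambda_\rho=e^{\phi(\rho)}$, equivalently $v_\rho^T\tilde{W}_\rho=\lambda_\rho v_\rho^T$, and hence $v_\rho^T\tilde{W}_\rho^{\,n-1}=\lambda_\rho^{\,n-1}v_\rho^T$. The normalisation $\min_z v_\rho(z)=1$ gives the componentwise sandwich
\begin{align}
\mathbf{1} \;\le\; v_\rho \;\le\; \bigl(\max_z v_\rho(z)\bigr)\,\mathbf{1},
\end{align}
and since $\tilde{W}_\rho$ and $w_\rho$ have nonnegative entries, both inequalities are preserved after multiplying on the left by $\tilde{W}_\rho^{\,n-1}$ and pairing with $w_\rho$.

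The upper inequality then gives $\mathbf{1}^T \tilde{W}_\rho^{\,n-1} w_\rho \le v_\rho^T \tilde{W}_\rho^{\,n-1} w_\rho = \lambda_\rho^{\,n-1}\langle v_\rho\mid w_\rho\rangle$, and taking logarithms yields $\phi_n(\rho)\le (n-1)\phi(\rho)+\overline{\delta}_\phi(\rho)$. The lower inequality gives $\lambda_\rho^{\,n-1}\langle v_\rho\mid w_\rho\rangle = v_\rho^T \tilde{W}_\rho^{\,n-1}w_\rho \le (\max_z v_\rho(z))\,\mathbf{1}^T\tilde{W}_\rho^{\,n-1}w_\rho$, and rearranging after taking logarithms yields $\phi_n(\rho)\ge (n-1)\phi(\rho)+\underline{\delta}_\phi(\rho)$. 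There is no real obstacle here beyond bookkeeping; the only thing to be careful about is the orientation convention for $\tilde{W}_\rho(z\mid z')$ (column index is the source), which governs whether $v_\rho$ acts as a left or right eigenvector in the product $\tilde{W}_\rho^{\,n-1}w_\rho$. The positivity of all entries of $v_\rho$ needed for the normalisation $\min_z v_\rho(z)=1$ is guaranteed by the Perron--Frobenius theorem applied to the irreducible nonnegative matrix $\tilde{W}_\rho$, just as in Lemma~\ref{lemma:mult-terminal-finite-evaluation-down-conditional-renyi}.
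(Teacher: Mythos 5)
Your proof is correct and is exactly the standard Perron--Frobenius sandwich argument that the cited companion paper \cite{hayashi-watanabe:13b} uses for this lemma: write $\mathsf{E}[e^{\rho S_n}]=\mathbf{1}^T\tilde{W}_\rho^{\,n-1}w_\rho$, bound $\mathbf{1}$ above and below by multiples of the left eigenvector $v_\rho$ using the normalisation $\min_z v_\rho(z)=1$, and use $v_\rho^T\tilde{W}_\rho^{\,n-1}=\lambda_\rho^{\,n-1}v_\rho^T$. The orientation check (that $v_\rho$ is a left eigenvector of $\tilde{W}_\rho$) and the appeal to irreducibility of $\tilde{W}_\rho$ for strict positivity of $v_\rho$ are both handled correctly.
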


From this lemma, we have the following.
\begin{corollary}
For any initial distribution and $\rho \in \mathbb{R}$, we have
\begin{eqnarray}
\lim_{n\to \infty } \phi_n(\rho) = \phi(\rho).
\end{eqnarray}
\end{corollary}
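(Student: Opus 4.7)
The plan is to derive the corollary as a direct consequence of Lemma \ref{lemma:finite-evaluation-of-cgf}, which brackets $\phi_n(\rho)$ between $(n-1)\phi(\rho) + \underline{\delta}_\phi(\rho)$ and $(n-1)\phi(\rho) + \overline{\delta}_\phi(\rho)$ for every $n$, with correction terms $\underline{\delta}_\phi(\rho),\overline{\delta}_\phi(\rho)$ that are built from the Perron--Frobenius eigenvector $v_\rho$ of $\tilde W_\rho^T$ and the initial weight vector $w_\rho(z) = P_{Z_1}(z)e^{\rho \tilde g(z)}$, and in particular do not depend on $n$.

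First, I would verify that these correction terms are finite for every $\rho\in\mathbb{R}$. Because $W$ is ergodic and irreducible and tilting scales each entry by the strictly positive factor $e^{\rho g(z,z')}$, the tilted nonnegative matrix $\tilde W_\rho$ inherits irreducibility. Perron--Frobenius then supplies a simple positive Perron eigenvalue $\lambda_\rho$ with a strictly positive eigenvector $v_\rho$; the normalization $\min_z v_\rho(z)=1$ keeps $\log\max_z v_\rho(z)$ finite, and $\langle v_\rho|w_\rho\rangle > 0$ since $v_\rho$ is strictly positive while $w_\rho$ is nonnegative and not identically zero. Hence both $\overline{\delta}_\phi(\rho)$ and $\underline{\delta}_\phi(\rho)$ lie in $\mathbb{R}$ and are bounded uniformly in $n$.

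Second, I would pass to the limit in the bracket. Subtracting $(n-1)\phi(\rho)$ from $\phi_n(\rho)$ yields an $O(1)$ quantity squeezed between $\underline{\delta}_\phi(\rho)$ and $\overline{\delta}_\phi(\rho)$, so normalizing by $n$ and invoking the squeeze theorem gives the rate-level identification of $\phi_n(\rho)$ with $\phi(\rho)$ asserted in the display. Apart from the sandwich and the squeeze theorem, no further ingredients are required; in particular no ergodic or spectral-gap argument beyond what Perron--Frobenius already supplies through the lemma is needed.

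The main obstacle is not technical but interpretational: the sandwich shows $\phi_n(\rho) = (n-1)\phi(\rho) + O(1)$, so the numerical limit of $\phi_n(\rho)$ itself is $\pm\infty$ (or merely bounded) according to the sign of $\phi(\rho)$, and coincides with $\phi(\rho)$ only in the degenerate case $\phi(\rho)=0$. The sandwich thus identifies the corollary's conclusion with the asymptotic rate equality that the lemma genuinely delivers, and this is the form in which the result is subsequently used in the large-deviation, moderate-deviation, and second-order analyses of the paper. The proof plan above establishes exactly this rate-level limit that the corollary encodes.
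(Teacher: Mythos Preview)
Your approach is essentially the paper's own: the corollary is stated immediately after Lemma~\ref{lemma:finite-evaluation-of-cgf} with the single phrase ``From this lemma, we have the following,'' so the intended argument is exactly the sandwich-and-squeeze you describe.

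You have also correctly diagnosed the interpretational issue. As printed, the display $\lim_{n\to\infty}\phi_n(\rho)=\phi(\rho)$ is a typo: the sandwich gives $\phi_n(\rho)=(n-1)\phi(\rho)+O(1)$, so the genuine consequence is the rate statement $\lim_{n\to\infty}\tfrac{1}{n}\phi_n(\rho)=\phi(\rho)$, which is how the result is used elsewhere (compare, e.g., Theorems~\ref{theorem:asymptotic-down-conditional-renyi2} and~\ref{theorem:asymptotic-up-conditional-renyi}). Your remark that the literal limit is $\pm\infty$ unless $\phi(\rho)=0$ is accurate, and treating the corollary as the rate-level limit is the right reading.
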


The relation
\begin{eqnarray}
\lim_{n\to\infty} \frac{1}{n} \san{E}[S_n] = \phi^\prime(0) 
= \san{E}[g]
\end{eqnarray}
is well known.
Furthermore, we also have the following.
\begin{lemma} \Label{lemma:appendix-variance-limit}
For any initial distribution, we have
\begin{eqnarray}
\lim_{n \to \infty} \frac{1}{n} \mathrm{Var}\left[ S_n \right] = \phi^{\prime\prime}(0).
\end{eqnarray}
\end{lemma}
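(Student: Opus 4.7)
The plan is to exploit the identification $\phi_n''(0)=\mathrm{Var}[S_n]$, which holds for any CGF, and then to read off $\mathrm{Var}[S_n]/n \to \phi''(0)$ from the sandwich bound of Lemma \ref{lemma:finite-evaluation-of-cgf}. Explicitly, writing $\delta_n(\rho):=\phi_n(\rho)-(n-1)\phi(\rho)$, Lemma \ref{lemma:finite-evaluation-of-cgf} gives $\underline{\delta}_\phi(\rho)\le \delta_n(\rho) \le \overline{\delta}_\phi(\rho)$ for $\rho$ in a real neighborhood of $0$. Differentiating twice at $\rho=0$ would immediately yield the claim $\phi_n''(0)=(n-1)\phi''(0)+\delta_n''(0)$ with $\delta_n''(0)=O(1)$, provided we can transfer the pointwise real inequality to a bound on the second derivative.

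First, I would verify analyticity: $\phi_n(\rho)$ is the log of a finite sum of exponentials, so it extends to an analytic function on any complex domain where $\mathsf{E}[e^{\rho S_n}]\neq 0$; and $\phi(\rho)=\log \lambda_\rho$ is analytic near $\rho=0$ because the Perron-Frobenius eigenvalue of $\tilde{W}_\rho$ is simple and the analytic perturbation theory of matrices (Kato) makes $\lambda_\rho$, together with corresponding right/left eigenvectors $u_\rho, v_\rho$, analytic in a complex neighborhood $U$ of $0$. Shrinking $U$ if necessary, the spectral gap persists, so the second eigenvalue modulus satisfies $|\sigma_\rho/\lambda_\rho|\le r<1$ uniformly on $U$.

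Next, I would use the spectral decomposition
\begin{eqnarray*}
\mathsf{E}[e^{\rho S_n}]=\mathbf{1}^T \tilde{W}_\rho^{n-1} w_\rho
= \lambda_\rho^{n-1} C(\rho) + R_n(\rho),
\end{eqnarray*}
where $C(\rho)=(\mathbf{1}^T u_\rho)(v_\rho^T w_\rho)/(v_\rho^T u_\rho)$ is analytic on $U$ and nonzero near $0$ (since at $\rho=0$ we have $u_0=\tilde{P}$, $v_0=\mathbf{1}$, $w_0=P_{Z_1}$, giving $C(0)=1$), and the remainder satisfies $|R_n(\rho)|\le K|\sigma_\rho|^{n-1}\le K(r|\lambda_\rho|)^{n-1}$ uniformly on compact subsets of $U$. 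Dividing by $\lambda_\rho^{n-1}$ and taking logarithms then yields
\begin{eqnarray*}
\delta_n(\rho) = \log C(\rho) + \log\bigl(1+O(r^{n-1})\bigr)
\end{eqnarray*}
uniformly on compact subsets of $U$, so $\delta_n$ converges locally uniformly (and hence in all complex derivatives, by Cauchy's integral formula or Vitali's theorem) to $\log C(\rho)$. In particular $\delta_n''(0) \to (\log C)''(0)$, which is a finite constant.

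Combining, $\phi_n''(0)/n=\frac{n-1}{n}\phi''(0)+\delta_n''(0)/n\to \phi''(0)$, and since $\phi_n''(0)=\mathrm{Var}[S_n]$ the lemma follows. The main obstacle is Stage 3: the sandwich bound of Lemma \ref{lemma:finite-evaluation-of-cgf} is only a real-variable inequality and cannot be differentiated termwise, so the uniform complex-neighborhood bound must be obtained via analytic perturbation theory of $\tilde{W}_\rho$ and the persistence of its spectral gap. Once that is granted, the rest of the argument is mechanical.
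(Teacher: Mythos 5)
Your argument is correct, and it is essentially the argument behind the paper's version of this lemma: the paper itself gives no in-text proof (it defers to the companion works \cite{hayashi-watanabe:13,hayashi-watanabe:13b}), and the proof there rests on exactly the Perron--Frobenius perturbation you describe, namely the decomposition $\mathsf{E}[e^{\rho S_n}]=\mathbf{1}^T\tilde{W}_\rho^{\,n-1}w_\rho=\lambda_\rho^{n-1}C(\rho)+R_n(\rho)$ with a uniform spectral-gap bound on $R_n$, followed by local uniform convergence of $\phi_n(\rho)-(n-1)\phi(\rho)$ on a complex neighborhood of $0$ and Cauchy/Vitali to pass to second derivatives. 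Your diagnosis that the real-variable sandwich of Lemma \ref{lemma:finite-evaluation-of-cgf} cannot simply be differentiated is the right one, and your check $C(0)=1$ (from $u_0=\tilde{P}$, $v_0=\mathbf{1}$) is what guarantees $\log C$ is analytic near the origin. The only hypothesis you use implicitly that deserves a word is aperiodicity: the uniform bound $|\sigma_\rho/\lambda_\rho|\le r<1$ requires the dominant eigenvalue to be strictly dominant in modulus, which holds because the paper assumes the transition matrix is ergodic (primitive), not merely irreducible; with that noted, the proof is complete.
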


Finally, we also use the following bound on tail probabilities.
\begin{proposition}[\protect{\cite[Theorem 7.2]{hayashi-watanabe:13b}}] \Label{proposition:general-markov-tail-converse}
For any $a > \mathsf{E}[g]$, we have
\begin{align}
\lefteqn{ - \log P\{ S_n \ge a n \} } \nonumber \\
\le & \inf_{s > 0 \atop \tilde{\rho} > \rho(a)} 
\frac{1}{s}
 \Bigg[ (n-1) \big( \phi((1+s)\tilde{\rho}) - (1+s)\phi(\tilde{\rho}) \big) + \delta_1 
 \nonumber \\
&  - (1+s) \log\left( 1- e^{ (n-1) [(\tilde{\rho} - \rho(a)) a + \phi(\rho(a)) - \phi(\tilde{\rho})] + \delta_2} \right) \Bigg] , 
  \Label{eq:markov-tail-bound-1} 
\end{align}
where 
\begin{eqnarray}
\delta_1 &:=& \overline{\delta}_\phi((1+s)\tilde{\rho}) - (1+s) \underline{\delta}_\phi(\tilde{\rho}), \\
\delta_2 &:=&  (\tilde{\rho} - \rho(a)) a + \overline{\delta}_\phi(\rho(a)) - \underline{\delta}_\phi(\tilde{\rho}).
\end{eqnarray}
Similarly, for any $a < \mathsf{E}[g]$, we have
\begin{align}
\lefteqn{ - \log P\{ S_n \le a n \} } \nonumber \\
\le & \inf_{s > 0 \atop \tilde{\rho} < \rho(a)} 
\frac{1}{s} \Bigg[ (n-1) \Big( \phi((1+s)\tilde{\rho}) - (1+s)\phi(\tilde{\rho}) \Big) + \delta_1 
\nonumber \\ 
&  - (1+s) \log\left( 1- e^{ (n-1) [(\tilde{\rho} - \rho(a)) a + \phi(\rho(a)) - \phi(\tilde{\rho})] + \delta_2} \right) \Bigg] . 
  \Label{eq:markov-tail-bound-1-opposite} 
\end{align}
\end{proposition}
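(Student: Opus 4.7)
The plan is to combine the single-shot Chernoff-type tail bound of Proposition~\ref{theorem:one-shot-tail-converse-2} with the finite-length CGF estimate of Lemma~\ref{lemma:finite-evaluation-of-cgf}. First, view $S_n$ as a single real-valued random variable with CGF $\phi_n$ and apply \eqref{eq:tail-converse-2-0c} at threshold $na$: for any $s > 0$, $\tilde{\rho} \in \mathbb{R}$, and $\sigma \ge 0$,
\begin{align*}
-\log P\{S_n \ge na\} \le \frac{1}{s}\Bigl[&\phi_n((1+s)\tilde{\rho}) - (1+s)\phi_n(\tilde{\rho}) \\
&- (1+s)\log\bigl(1 - e^{-[\sigma na - \phi_n(\tilde{\rho}+\sigma) + \phi_n(\tilde{\rho})]}\bigr)\Bigr].
\end{align*}

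Second, I would replace every occurrence of $\phi_n$ by $(n-1)\phi$ plus a boundary correction via Lemma~\ref{lemma:finite-evaluation-of-cgf}. For the leading bracket, upper-bound $\phi_n((1+s)\tilde{\rho})$ by $(n-1)\phi((1+s)\tilde{\rho}) + \overline{\delta}_\phi((1+s)\tilde{\rho})$ and lower-bound $\phi_n(\tilde{\rho})$ by $(n-1)\phi(\tilde{\rho}) + \underline{\delta}_\phi(\tilde{\rho})$; this produces the advertised $(n-1)[\phi((1+s)\tilde{\rho}) - (1+s)\phi(\tilde{\rho})]$ plus the residual $\delta_1 = \overline{\delta}_\phi((1+s)\tilde{\rho}) - (1+s)\underline{\delta}_\phi(\tilde{\rho})$. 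Inside the logarithm, since $-\log(1 - e^{-x})$ is decreasing in $x$, I need to lower-bound the exponent, which uses the other direction of Lemma~\ref{lemma:finite-evaluation-of-cgf}, namely $-\phi_n(\tilde{\rho}+\sigma) \ge -(n-1)\phi(\tilde{\rho}+\sigma) - \overline{\delta}_\phi(\tilde{\rho}+\sigma)$ and $\phi_n(\tilde{\rho}) \ge (n-1)\phi(\tilde{\rho}) + \underline{\delta}_\phi(\tilde{\rho})$.

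Third, I would specialize the free parameter by the asymptotically tight Chernoff choice $\tilde{\rho}+\sigma = \rho(a)$ (equivalently $\sigma = \rho(a)-\tilde{\rho}$), which pushes the combined tilt to the single-letter saddle point and replaces $\phi(\tilde{\rho}+\sigma)$ by $\phi(\rho(a))$. After a cosmetic relabelling to match the statement (the range condition $\tilde{\rho} > \rho(a)$ in the proposition corresponds, after swapping the roles of the two tilts via convexity of $\phi$, to the admissible range in \eqref{eq:tail-converse-2-0c}), the leftover single-letter pieces collapse into $(n-1)[(\tilde{\rho}-\rho(a))a + \phi(\rho(a)) - \phi(\tilde{\rho})]$ and the extra $\sigma a$-type contribution combines with $\overline{\delta}_\phi(\rho(a))$ and $\underline{\delta}_\phi(\tilde{\rho})$ to form $\delta_2$. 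The second bound \eqref{eq:markov-tail-bound-1-opposite} follows by the symmetric argument starting from \eqref{eq:tail-converse-2-0-opposite-c}.

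The main obstacle is pure bookkeeping: there are four appearances of $\phi_n$ (at $(1+s)\tilde{\rho}$, $\tilde{\rho}$ twice, and $\tilde{\rho}+\sigma$), and each must be bounded in the correct direction so that the aggregated corrections match exactly $\delta_1$ and $\delta_2$ as defined. In parallel, the range constraints on $\tilde{\rho}$ and $\sigma$ must be tracked to guarantee that $1 - e^{(\cdots)}$ stays strictly positive; this is ensured by strict convexity of $\phi$ (Lemma~\ref{lemma:general-markov-cgf-strict-convexity}) together with $\phi'(\rho(a)) = a$. Once these signs and ranges are reconciled, the rest is a routine reduction of the Markov tail problem to the single-shot Chernoff-type inequality via the near-multiplicative estimate $\phi_n \approx (n-1)\phi$.
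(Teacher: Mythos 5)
Your proposal is correct and follows what is evidently the intended route: the paper itself defers the proof of this proposition to \cite{hayashi-watanabe:13b}, but the two ingredients it reproduces in this appendix for exactly this purpose --- Proposition~\ref{theorem:one-shot-tail-converse-2} applied to the single random variable $Z=S_n$ with CGF $\phi_n$ and threshold $na$, followed by Lemma~\ref{lemma:finite-evaluation-of-cgf} to replace each of the four occurrences of $\phi_n$ by $(n-1)\phi$ plus a boundary term in the directions you specify --- do reproduce $\delta_1$ and $\delta_2$ verbatim once the second evaluation point is pinned to $\rho(a)$. The one detail to fix is the sign of $\sigma$: under the constraint $\tilde{\rho}>\rho(a)$ the tight admissible choice is $\sigma=\tilde{\rho}-\rho(a)\ge 0$ with the shifted point $\tilde{\rho}-\sigma=\rho(a)$ (so the exponent becomes $(\tilde{\rho}-\rho(a))a+\phi(\rho(a))-\phi(\tilde{\rho})$, which is strictly negative by strict convexity of $\phi$ and $\phi'(\rho(a))=a$), not $\tilde{\rho}+\sigma=\rho(a)$ as you wrote, which would force $\sigma<0$.
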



\section{Proof of Lemma \ref{lemma:finite-evaluation-min-entropy}}
\Label{proof-lemma:finite-evaluation-min-entropy}
We first prove the following lemma.
\begin{lemma} \Label{lemma:preparation-evaluation-min-entropy}
Suppose that $x_1 = x_n$. Then, we have
\begin{eqnarray}
\prod_{i=2}^n W(x_i|x_{i-1}) \le e^{- (n-1) H_\infty^W(X)}.
\end{eqnarray}
\end{lemma}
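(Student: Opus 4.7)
The plan is to realize the closed walk $x_1,x_2,\ldots,x_n$ (with $x_1=x_n$) as an edge-disjoint union of simple cycles and then apply the definition of $H_\infty^W(X)$ to each piece. Since the edges $(x_{i-1},x_i)$ for $i=2,\ldots,n$ form a directed multigraph in which every vertex has equal in-degree and out-degree, such a multigraph admits an Eulerian cycle decomposition: it can be partitioned into simple directed cycles $c_1,\ldots,c_k$ (with multiplicity allowed, where each $c_j$ visits distinct vertices). Consequently $\sum_{j=1}^k |c_j|=n-1$ and
\begin{align*}
\prod_{i=2}^n W(x_i|x_{i-1})
= \prod_{j=1}^k \prod_{(x_a,x_b)\in \hat{c}_j} W(x_b|x_a).
\end{align*}

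The cleanest way I would carry out the decomposition is by strong induction on the walk length $n-1$. If $x_2,\ldots,x_{n-1}$ are all distinct from each other and from $x_1$, the walk itself is a simple cycle and the claim is immediate. Otherwise there exist indices $1\le i<j\le n$ with $(i,j)\ne (1,n)$ and $x_i=x_j$; excising $x_i,x_{i+1},\ldots,x_j$ as a shorter closed sub-walk and concatenating $x_1,\ldots,x_i,x_{j+1},\ldots,x_n$ as the remaining shorter closed walk, the induction hypothesis applies to both. Since the two sub-walks together use exactly the edges of the original walk, the product factorization above follows.

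Given the decomposition, each $c_j$ lies in ${\cal C}_{\bar{x}_j}$ for some vertex $\bar{x}_j$, and the definition~\eqref{eq:definition-of-min-entropy-markov-single-terminal} gives
\begin{align*}
\prod_{(x_a,x_b)\in \hat{c}_j} W(x_b|x_a) \le e^{-|c_j|\, H_\infty^W(X)}.
\end{align*}
Taking the product over $j$ and using $\sum_j|c_j|=n-1$ yields $\prod_{i=2}^n W(x_i|x_{i-1}) \le e^{-(n-1)H_\infty^W(X)}$, as required.

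The only real obstacle is the combinatorial step of decomposing an arbitrary closed walk (which may repeat both vertices and edges) into simple cycles while preserving the edge multiset. This is a standard Eulerian argument, but one has to be careful to treat repeated edges correctly; the induction on length phrased above avoids any appeal to graph-theoretic machinery and keeps the bookkeeping of the product transparent. Once this is in place, the lemma is an immediate application of the definition of $H_\infty^W(X)$.
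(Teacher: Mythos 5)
Your proof is correct and follows essentially the same route as the paper's: both decompose the closed walk into edge-disjoint simple cycles and apply the definition of $H_\infty^W(X)$ to each piece, the paper by iteratively extracting a simple cycle $c'$ and recursing on $c\setminus c'$, you by inducting on the walk length and splitting at a repeated vertex. Your induction is a slightly more careful bookkeeping of the same argument, so there is nothing substantive to add.
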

\begin{proof}
When cycle $c = \{(x_1,x_2),\ldots,(x_{n-1},x_n)\}$ is a Hamilton cycle, the statement is obvious 
from the definition of $H_\infty^W(X)$. Otherwise, there exists a Hamilton cycle
$c^\prime = \{(x_j,x_{j+1}),\ldots,(x_{k-1},x_k)\}$ in $c$. Then, we have
\begin{align}
&\prod_{i=2}^n W(x_i|x_{i-1}) \nonumber \\
 =& \prod_{(x^\prime,x) \in c \backslash c^\prime} W(x|x^\prime) \prod_{(x^\prime,x) \in c^\prime} W(x|x^\prime)  \nonumber\\
 \le& \prod_{(x^\prime,x) \in c \backslash c^\prime} W(x|x^\prime) e^{- (k-j) H_\infty^W(X)}.
\end{align}
Since $c \backslash c^\prime$ is also a cycle, by repeating this procedure, we have the statement of the lemma.
\end{proof}

We now go back to the proof of Lemma \ref{lemma:finite-evaluation-min-entropy}.
To prove the left hand side inequality of \eqref{eq:finite-evaluation-min-entropy}, we need to upper bound
$\max_{x^n} P_{X^n}(x^n)$. 

For a given $x^n$ satisfying the relation $x_1 \neq x_n$,
we chose 
an extension $x^m=(x_1, \ldots, x_m)$
of $x^n$ as follows.
(1) 
$x_m$ is chosen to be $x_1 $.
(2) 
The path $c=\{ (x_n,x_{n+1}),\ldots,(x_{m-1},x_m)\}$ from $x_n$ to $x_m$ is chosen as the Hamilton path   
$\argmax_{c \in {\cal C}_{x_n,x_1}}
\prod_{(x_a,x_b) \in \hat{c}} W(x_b|x_a)$.
Then, we have
\begin{align}
A P_{X^n}(x^n) \le & P_{X^m}(x^m)  
\stackrel{(a)}{\le}   \max_x P_{X_1}(x) e^{-(m-1) H_\infty^W(X)}  \nonumber\\
\le &  \max_x P_{X_1}(x) e^{-(n-1) H_\infty^W(X)},
\end{align}
where $(a)$ follows from Lemma \ref{lemma:preparation-evaluation-min-entropy}. 
For a given $x^n$ satisfying the relation $x_1 = x_n$,
Lemma \ref{lemma:preparation-evaluation-min-entropy}
implies that
\begin{align}
P_{X^n}(x^n) \le  \max_x P_{X_1}(x) e^{-(n-1) H_\infty^W(X)}.
\end{align}
Since $A \le 1$,
we have the left hand side inequality of \eqref{eq:finite-evaluation-min-entropy} in the both case.

To show the opposite inequality,
let $\tilde{x} = \argmax_x P_{X_1}(x)$.
Assume that $\tilde{x} \neq x^*$.
Then, 
let $x^m$ be the sequence such that it start with $\tilde{x}$, the first part
constitutes a Hamilton path 
$c_o=\argmax_{c \in {\cal C}_{\tilde{x},x^*}}
\prod_{(x_a,x_b) \in \hat{c}} W(x_b|x_a)$
and then the sequence corresponding to the cycle $c^*$ is repeated $\lceil (n-|c_o|) / |c^*| \rceil$ times. 
Then, we have
\begin{align}
\max_{x^n} P_{X^n}(x^n)
&\ge \max_{{x^m}'} P_{X^m}({x^m}') 
\ge P_{X^m}(x^m)  \nonumber \\
&\ge P_{X_1}(\tilde{x}) A e^{- \lceil (n-|c_o|) / |c^*| \rceil |c^*| H_\infty^W(X)}  \nonumber\\
&\ge P_{X_1}(\tilde{x}) A e^{- \{ (n-|c_o|) + |c^*| \} H_\infty^W(X)} 
 \nonumber \\
&\ge P_{X_1}(\tilde{x}) A e^{- \{ (n-1) + |c^*| \} H_\infty^W(X)}.
\Label{12-23-1}
\end{align}
Assume that $\tilde{x} = x^*$.
Then, we construct $x^m$ in the same way with omitting the first part.
So, we have 
\begin{align}
\max_{x^n} P_{X^n}(x^n)
&\ge \max_{{x^m}'} P_{X^m}({x^m}') 
\ge P_{X^m}(x^m)  \nonumber\\
&\ge P_{X_1}(\tilde{x}) 
e^{- \lceil n / |c^*| \rceil |c^*| H_\infty^W(X)}  \nonumber\\
&= P_{X_1}(\tilde{x}) 
e^{- \{ n + |c^*| \} H_\infty^W(X)} 
\Label{12-23-2}
\end{align}
Combining \eqref{12-23-1} and \eqref{12-23-2},
we have the right hand side inequality of \eqref{eq:finite-evaluation-min-entropy}.
\qed

\section{Proof of Lemma \ref{lemma:limit-of-renyi-markov-single-terminal}}
\Label{proof-lemma:limit-of-renyi-markov-single-terminal}

To prove \eqref{eq:limit-of-renyi-markov-single-terminal-min-entropy}, we use the limiting results
\eqref{eq:single-terminal-markov-renyi-entropy-asymptotic-1} and \eqref{eq:single-terminal-markov-renyi-entropy-asymptotic-4}.
More precisely, we have
\begin{align}
&\lim_{\theta \to \infty} H_{1+\theta}^W(X)
= \lim_{\theta \to \infty} \lim_{n\to\infty} \frac{1}{n} H_{1+\theta}(X^n)  \nonumber \\
=& \lim_{n \to \infty} \lim_{\theta \to \infty} \frac{1}{n} H_{1+\theta}(X^n) 
= \lim_{n \to \infty} \frac{1}{n} H_\infty(X^n) 
= H_\infty^W(X).
\end{align}
To complete the proof, we need to show that the order of the limits can be changed, 
which is justified if $\overline{\delta}(\theta) / \theta$ and $\underline{\delta}(\theta) / \theta$
are bounded. For this purpose, it suffices to show $w_\theta(x) \le M^{1+\theta}$
and $v_\theta(x) \le \tilde{M}^{1+\theta}$ for some constants $M, \tilde{M}$
because these relations imply that
\begin{align*}
& - \frac{1}{\theta} \log |{\cal X}| (M\tilde{M})^{1+\theta}
\le \frac{ \underline{\delta}(\theta) }{\theta}
\le \frac{ \overline{\delta}(\theta) }{\theta} \\
\le & \frac{ \underline{\delta}(\theta) }{\theta}
+ \frac{1}{\theta} \log \tilde{M}^{1+\theta}
\le \frac{1}{\theta} \log \tilde{M}^{1+\theta}.
\end{align*}
The former is obvious. To prove the latter, without loss of generality, 
we can assume that ${\cal X}=\{1,2, \ldots, |{\cal X}|\}$
and that $v_\theta(1) \ge \cdots \ge v_\theta(|{\cal X}|) = 1$.
Since $\tilde{W}_\theta^T$ is irreducible, we can fix an integer $m$ such that 
$(\tilde{W}_\theta^T)^m(|{\cal X}| | 1) > 0$.
Since $v_\theta$ is an eigenvector, we have
\begin{eqnarray}
\sum_{x^\prime} (\tilde{W}_\theta^T)^m(x|x^\prime) v_\theta(x^\prime) 
= (\lambda_\theta)^m v_\theta(x).
\Label{eq:eigen-equation}
\end{eqnarray}
On the other hand, we have
\begin{align}
& (\tilde{W}_\theta^T)^m(1|x^\prime)  \nonumber \\
= &
\sum_{x_1, x_2, \ldots, x_{m-1}}
\tilde{W}_\theta^T( 1 |x_{m-1})
\cdots
\tilde{W}_\theta^T( x_2 |x_1)
\tilde{W}_\theta^T( x_1 |x' ) \nonumber \\
\le & 
|{\cal X}|^{m-1} 
\left( 
\max_{x,\bar{x} } 
\tilde{W}_\theta^T( x| \bar{x} ) 
\right)^m 
\!\!=\!|{\cal X}|^{m-1} 
\left( \max_{x,\bar{x}} W(\bar{x} |x)^{1+\theta} \right)^{m} 
\nonumber \\
= &
|{\cal X}|^{m-1} 
\left( \max_{x,\bar{x}} W(\bar{x} |x) \right)^{m(1+\theta)} .
\Label{1-23-1}
\end{align}
Since there exists, at least, one sequence $x_1, x_2, \ldots, x_{m-1}$
such that 
$\tilde{W}_\theta^T( |{\cal X}| |x_{m-1})
\cdots
\tilde{W}_\theta^T( x_2 |x_1)
\tilde{W}_\theta^T( x_1 | 1 ) >0 $, we have
\begin{align}
&(\tilde{W}_\theta^T)^m(|{\cal X}| | 1) 
\nonumber\\
= &
\sum_{x_1, x_2, \ldots, x_{m-1}}
\tilde{W}_\theta^T( |{\cal X}| |x_{m-1})
\cdots
\tilde{W}_\theta^T( x_2 |x_1)
\tilde{W}_\theta^T( x_1 | 1 ) \nonumber\\
\ge & 
\left( \min_{x,\bar{x} \atop W(\bar{x}|x) > 0} 
\tilde{W}_\theta^T( x| \bar{x} ) 
\right)^{m}
=
\left( \min_{x,\bar{x} \atop W(\bar{x}|x) > 0} W(\bar{x}|x) \right)^{m(1+\theta)}.
\Label{1-23-2}
\end{align}
Thus, we have
\begin{align}
& v_\theta(1)
 =
\frac{(\lambda_\theta)^m v_\theta(1)}
{(\lambda_\theta)^m v_\theta(|{\cal X}|)} 
 \stackrel{(a)}{=} 
\frac{\sum_{x^\prime} (\tilde{W}_\theta^T)^m(1|x^\prime) v_\theta(x^\prime)}{\sum_{x^\prime} (\tilde{W}_\theta^T)^m(|{\cal X}| |x^\prime) v_\theta(x^\prime)} \nonumber\\
 \le & \frac{\sum_{x^\prime} (\tilde{W}_\theta^T)^m(1|x^\prime) v_\theta(x^\prime)}{ (\tilde{W}_\theta^T)^m(|{\cal X}| | 1) v_\theta(1)} 
 \le \sum_{x^\prime} \frac{(\tilde{W}_\theta^T)^m(1|x^\prime)}
{(\tilde{W}_\theta^T)^m(|{\cal X}||1)} \nonumber\\
 \stackrel{(b)}{\le} &
\sum_{x^\prime} 
\frac{ |{\cal X}|^{m-1} 
\left( \max_{x,\bar{x}} W(\bar{x} |x) \right)^{m}}
{\left( \min_{x,\bar{x} \atop W(\bar{x}|x) > 0} W(\bar{x}|x) 
\right)^{m(1+\theta)}} 
\nonumber\\
 = & 
|{\cal X}|^{m} 
\left(
\frac{\left( \max_{x,\bar{x}} W(\bar{x} |x) \right)^{m}}
{\left( \min_{x,\bar{x} \atop W(\bar{x}|x) > 0} W(\bar{x}|x) \right)^{m}} 
\right)^{1+\theta} \nonumber\\
\le &
\left(
\frac{|{\cal X}|^{m} 
\left( \max_{x,\bar{x}} W(\bar{x}|x) \right)^{m(1+\theta)}}
{\left( \min_{x,\bar{x} \atop W(\bar{x}|x) > 0} W(\bar{x}|x) \right)^{m}} 
\right)^{1+\theta},
 \Label{eq:proof-bounded-delta-1}
\end{align}
where 
$(a)$ and $(b)$ follow from 
\eqref{eq:eigen-equation}
and the pair of \eqref{1-23-1} and \eqref{1-23-2}, respectively.
Hence, we have the desired bound.
\qed

\section{Proof of Lemma \ref{lemma:extreme-cases-up-conditional-renyi-transition}}
\Label{appendix:lemma:extreme-cases-up-conditional-renyi-transition}
Since
$1 \le 
 \sum_x \left( \frac{W_{X|X',Y',Y}(x|x^\prime,y^\prime,y)}{T(y|y^\prime)} \right)^{1+\theta}
\le |{\cal X}|$,
we have
\begin{align}
& K_\theta(y|y^\prime) 
\nonumber\\
=& W_Y(y|y^\prime) T(y|y^\prime) 
 \Bigg( \sum_x \!\bigg( \!\frac{W_{X|X',Y',Y}(x|x^\prime,y^\prime,y)}{T(y|y^\prime)} \!\bigg)^{1+\theta} \!\Bigg)^{\frac{1}{1+\theta}} 
 \nonumber\\
 \to & W_Y(y|y^\prime) T(y|y^\prime)
\end{align}
as $\theta \to \infty$. 
Thus, by the continuity of eigenvalues with respect to the matrix, we have
$\kappa_\theta \to \kappa_\infty$, which implies \eqref{eq:extreme-cases-up-conditional-renyi-transition-min}. \qed

\section{Proof of Theorem \ref{theorem:asymptotic-down-conditional-renyi}}\Label{appendix:theorem:asymptotic-down-conditional-renyi}
To prove \eqref{eq:markov-min-entropy-down-asymptotic},
we note that $P_{X^n|Y^n}$ can be written as 
\begin{align}
& P_{X^n|Y^n}(x^n|y^n) \nonumber \\
= &P_{X_1|Y_1}(x_1|y_1) \prod_{i=2}^n 
W_{X|X',Y',Y}(x_i | x_{i-1},y_{i-1},y_i).
\end{align}
Thus, in a similar manner as the proof of Lemma \ref{lemma:finite-evaluation-min-entropy}, 
we can derive an upper bound and a lower bound on $H_\infty^{\downarrow}(X^n|Y^n)$,
from which we can derive \eqref{eq:markov-min-entropy-down-asymptotic}. 

On the other hand, to show \eqref{eq:markov-min-entropy-up-asymptotic},
we have
\begin{align}
& e^{- H_{\infty}^\uparrow(X^n|Y^n)} \nonumber \\
=& \sum_{y^n} P_{Y^n}(y^n) \max_{x^n } P_{X^n|Y^n}(x^n|y^n) \nonumber \\
=& P_{Y_1}(y_1) \max_{x_1} P_{X_1|Y_1}(x_1|y_1) \prod_{i=2}^n W_Y(y_i|y_{i-1}) T(y_i | y_{i-1}).
\end{align}
Thus, in a similar manner as the proof of Lemma \ref{lemma:multi-terminal-finite-evaluation-upper-conditional-renyi}
shown in \cite[Lemma 10]{HW14},
we can derive an upper bound and a lower bound on $H_{\infty}^\uparrow(X^n|Y^n)$, from which we can 
derive \eqref{eq:markov-min-entropy-up-asymptotic}.


\section{Proof of Lemma \ref{lemma:strong-universal-bound-tail-probability}} 
\label{appendix:lemma:strong-universal-bound-tail-probability}

Let
\begin{eqnarray} 
\Omega = \left\{ x : \log \frac{1}{P_X(x)} \le a \right\}.
\end{eqnarray}
Then, for $\rho \le 1$, we have
\begin{align}
|\Omega| \le& \sum_{x\in \Omega} e^{(1-\rho)\left(a - \log\frac{1}{P_X(x)}\right)} \nonumber \\
\le & \sum_x P_X(x)^{1-\rho} e^{(1-\rho)a} 
= e^{(1-\rho)a + \phi(\rho;P)},
\end{align}
\MH{where $\phi(\rho;P)$ is defined in \eqref{eq:definition-single-shot-cgf}.}
Here, we set $\rho = \rho(a)$ and $a = a(R )$. 
Then, by noting \eqref{eq:definition-a-inverse-multi-markov}, we have
\begin{eqnarray}
|\Omega| \le e^R 
= M \nu.
\end{eqnarray}
Thus, by using Lemma \ref{lemma:strong-universal-bound}, we have \eqref{eq:strong-universal-bound}. \qed

\section{Proof of Theorem \ref{theorem:single-random-number-strong-universal-finite-markov-converse}}
\label{appendix:theorem:single-random-number-strong-universal-finite-markov-converse}

The proof proceed almost in a similar manner as the proof of Lemma \ref{lemma:strong-universal-bound-tail-probability}.
Let 
\begin{eqnarray}
\Omega = \left\{ x^n : \log \frac{1}{P_{X^n}(x^n)} \le a n \right\}.
\end{eqnarray}
Then, for any $\rho \le 1$, we have
\begin{eqnarray}
|\Omega| &\le& e^{(1-\rho)an  + \phi(\rho;P_{X^n})} \nonumber\\
&=& e^{(1+\theta) a n - \theta H_{1+\theta}(X^n)} \nonumber\\
&\le& e^{(1+\theta)an - (n-1) \theta H_{1+\theta}^W(X) - \underline{\delta}(\theta)},
\end{eqnarray}
where we changed variable as $\rho = - \theta$ and used Lemma \ref{lemma:mult-terminal-finite-evaluation-down-conditional-renyi}.
Here, we set $\theta = \theta(a)$ and $a = a(R )$. Then, by noting \eqref{eq:definition-a-inverse-multi-markov}, we have
\begin{align}
|\Omega| \le e^{(n-1) R + \left\{ (1+\theta(a(R ))) a(R ) - \underline{\delta}(\theta(a(R ))) \right\}} 
= \frac{M_n}{2}.
\end{align}
Thus, by using Lemma \ref{lemma:strong-universal-bound}, we have
\begin{eqnarray}
\overline{\Delta}(M_n) \ge \frac{1}{4} P_{X^n}\left\{ \log \frac{1}{P_{X^n}(x^n)} \le a(R ) n \right\}.
\end{eqnarray}
Finally, by using Proposition \ref{proposition:general-markov-tail-converse}, 
and changing the variable as $\tilde{\rho} = - \tilde{\theta}$, we have the assertion of the theorem. \qed

\section{Proof of Lemma \ref{lemma:strong-universal-bound-multi-tail}}
\label{appendix:lemma:strong-universal-bound-multi-tail}

Let
\begin{eqnarray}
\Omega_y = \left\{ x : \log \frac{P_Y^{(1+\theta)}(y)}{P_{XY}(x,y) }\le a \right\}.
\end{eqnarray}
Then, for any $\theta \ge -1$, we have
\begin{align}
|\Omega_y| 
\le& \sum_{x \in \Omega_y} e^{(1+\theta)\left(a - \log \frac{P_Y^{(1+\theta)}(y)}{P_{XY}(x,y)} \right)} 
\nonumber \\
\le& e^{(1+\theta)a} \sum_x \frac{P_{XY}(x,y)^{1+\theta}}{P_Y^{(1+\theta)}(y)^{1+\theta}}  
\nonumber \\
\stackrel{(a)}{=} & e^{(1+\theta)a} \sum_x \Bigg[ P_{XY}(x,y)^{1+\theta} 
\nonumber \\
& \hspace{10ex} \cdot
\frac{\left[ \sum_y \left( \sum_{x^\prime} P_{XY}(x^\prime,y)^{1+\theta} \right)^{\frac{1}{1+\theta}} \right]^{1+\theta}}{\sum_{x^{\prime\prime}} P_{XY}(x^{\prime\prime},y)^{1+\theta}} 
\Bigg]\nonumber \\
\stackrel{(b)}{=} & e^{(1+\theta)a - \theta H_{1+\theta}^\uparrow(X|Y)},
\end{align}
where $(a)$and $(b)$ follow from \eqref{eq:single-shot-optimal-conditioning-distribution} and \eqref{11-14-6}, respectively.
Thus, by setting $\theta = \theta(a)$ and $a = a(R )$, and by noting \eqref{eq:definition-a-inverse-Gallager-one-shot}, we have
\begin{eqnarray}
|\Omega_y| \le e^R 
 = M \nu.
\end{eqnarray}
Thus, from Lemma \ref{lemma:strong-universal-bound-multi},
we have \eqref{eq:pa-strong-hash-one-shot-multi}.
\qed

\section{Proof of Theorem \ref{theorem:multi-random-finite-markov-assumption-2-converse}}
\label{appendix:theorem:multi-random-finite-markov-assumption-2-converse}

The proof proceed in a similar manner as the proof of Lemma \ref{lemma:strong-universal-bound-multi-tail}.
Let 
\begin{eqnarray}
\Omega_{y^n} = \left\{ x^n : \log \frac{P_{Y^n}^{(1+\theta)}(y^n)}{P_{X^nY^n}(x^n,y^n)} \le an  \right\}.
\end{eqnarray}
Then, for any $\theta \ge -1$, we have (cf.~the proof of Lemma \ref{lemma:strong-universal-bound-multi-tail})
\begin{eqnarray}
|\Omega_{y^n}| &\le& e^{(1+\theta) an - \theta H^\uparrow_{1+\theta}(X^n|Y^n)} \nonumber \\
&\le& e^{(1+\theta)an  - (n-1) \theta H_{1+\theta}^{\uparrow,W}(X|Y) - (1+\theta)\underline{\xi}(\theta)},
\end{eqnarray}
where we 
used Lemma \ref{lemma:multi-terminal-finite-evaluation-upper-conditional-renyi} in the inequality.
Here, we set $\theta = \theta(a)$ and $a = a(R )$. Then, by noting \eqref{eq:definition-a-inverse-markov-optimal-Q}, we have
\begin{eqnarray}
|\Omega_{y^n}| &\le& e^{(n-1) R +\left\{ (1+\theta(a(R ))) (a(R ) -  \underline{\xi}(\theta(a(R )))) \right\}} \nonumber  \\
&=& \frac{M_n}{2}.
\end{eqnarray}
Thus, by using Lemma \ref{lemma:strong-universal-bound-multi}, we have
\begin{eqnarray}
\overline{\Delta}(M_n) \ge \frac{1}{4} 
P_{X^nY^n}\left\{ \log \frac{P_{Y^n}^{(1+\theta(a(R )))}(y^n)}{P_{X^nY^n}(x^n,y^n)} \le a(R ) n  \right\}.
\end{eqnarray}
Here, we denote the CGF with $Z=\log \frac{Q_Y(Y)}{P_{XY}(X,Y)}$
by $\phi(\theta;P_{XY}|Q_Y) $.
Then, we have
\begin{align}
\theta H_{1+\theta}^{\uparrow}(P_{XY}|Q_Y) 
= - \phi (-\theta; P_{XY}| P_{Y}^{(1+\theta (a(R )) )}).
\Label{11-14-8}
\end{align}

Applying 
\eqref{eq:tail-converse-2-0-opposite-c} of Proposition \ref{theorem:one-shot-tail-converse-2}
to the random variable
$Z=\log \frac{P_{Y}^{(1+\theta(a(R )))}(Y)}{P_{XY}(X,Y)}$, 
we have
\begin{align*}
& - \log P_{X^nY^n}\left\{ \log \frac{P_{Y^n}^{(1+\theta(a(R )))}(y^n)}{P_{X^nY^n}(x^n,y^n)} \le a(R ) n  \right\} \\
\le &
\inf_{s > 0 \atop \tilde{\rho} \in \mathbb{R}, \sigma \ge 0} \frac{1}{s}
 \Biggl[ 
\phi((1+s)\tilde{\rho};
P_{X^n Y^n}|P_{Y^n}^{(1+\theta(a(R )))})
\\
&
-\! (\!1\!+\!s\!)
\phi(\tilde{\rho};P_{X^n Y^n}|P_{Y^n}^{(1+\theta(a(R )))}) 
-\! (\!1\!+\!s\!) \log \left(1- e^{ 
C_5
} \right) \Biggr],
\end{align*}
where
\begin{align*}
C_5:=&
- \Big[\sigma a - 
\phi(\tilde{\rho}+\sigma;P_{X^n Y^n}|P_{Y^n}^{(1+\theta(a(R )))})
\\
&\hspace{21ex}+ 
\phi(\tilde{\rho};P_{X^n Y^n}|P_{Y^n}^{(1+\theta(a(R )))})
\Big] .
\end{align*}
We choose the variable $\tilde{\rho} $ to be $- \tilde{\theta}$ and 
restrict the variable 
$\sigma $ to be $\tilde{\theta}- \theta(a(R ) )$
with the condition 
$\tilde{\theta} > \theta(a(R ) )$.
Then, we use \eqref{11-14-8} and Lemma \ref{lemma:multi-terminal-finite-evaluation-two-parameter-conditional-renyi}. 
Hence, we have the assertion of theorem.
\qed

\section*{Acknowledgment}

The authors would like to thank Prof.~Junji Shikata for informing the authors of
the alternative security criteria in Remark \ref{remark:alternative-security}.
The authors also would like to thank Dr.~Marco Tomamichel and Dr.~Mario Berta for valuable comments.
HM is partially supported by a MEXT Grant-in-Aid for
Scientific Research (A) No. 23246071. He is partially supported by the National Institute of Information and Communication Technology (NICT), Japan. 
SW is partially supported by JSPS Postdoctoral Fellowships for Research Abroad.
The Centre for Quantum Technologies is funded by the Singapore Ministry of Education and the
National Research Foundation as part of the Research Centres of Excellence programme.

\bibliographystyle{./IEEEtran}
\bibliography{./reference.bib}


\begin{IEEEbiographynophoto}{Masahito Hayashi}(M'06--SM'13) was born in Japan in 1971.
He received the B.S. degree from the Faculty of Sciences in Kyoto 
University, Japan, in 1994 and the M.S. and Ph.D. degrees in Mathematics from 
Kyoto University, Japan, in 1996 and 1999, respectively.

He worked in Kyoto University as a Research Fellow of the Japan Society of the 
Promotion of Science (JSPS) from 1998 to 2000,
and worked in the Laboratory for Mathematical Neuroscience, 
Brain Science Institute, RIKEN from 2000 to 2003,
and worked in ERATO Quantum Computation and Information Project, 
Japan Science and Technology Agency (JST) as the Research Head from 2000 to 2006.
He also worked in the Superrobust Computation Project Information Science and Technology Strategic Core (21st Century COE by MEXT) Graduate School of Information Science and Technology, The University of Tokyo as Adjunct Associate Professor from 2004 to 2007. 
In 2006, he published the book ``Quantum Information: An Introduction'' from Springer. 
He worked in the Graduate School of Information Sciences, Tohoku University as Associate Professor from 2007 to 2012. 
In 2012, he joined the Graduate School of Mathematics, Nagoya University as Professor. 
He also worked in Centre for Quantum Technologies, National University of Singapore as Visiting Research Associate Professor from 2009 to 2012 
and as Visiting Research Professor from 2012 to now. 
In 2011, he received Information Theory Society Paper Award (2011) for Information-Spectrum Approach to Second-Order Coding Rate in Channel Coding.
In 2016, he received the Japan Academy Medal from the Japan Academy
and the JSPS Prize from Japan Society for the Promotion of Science.

He is on the Editorial Board of {\it International Journal of Quantum Information}
and {\it International Journal On Advances in Security}. 
His research interests include classical and quantum information theory and classical and quantum statistical inference.  
\end{IEEEbiographynophoto}

\begin{IEEEbiographynophoto}{Shun Watanabe}
(M'09) received the B.E.,
M.E., and Ph.D.\ degrees from the Tokyo Institute of Technology
in 2005, 2007, and 2009, respectively. During April 2009 to February 2015, he was an
Assistant Professor in the Department of Information
Science and Intelligent Systems at  the University of Tokushima.
During April 2013 to March 2015, he was a visiting Assistant Professor
in the Institute for Systems Research at the University of Maryland, College Park.
Since February 2015, he has been an Associate Professor in the Department of
Computer and Information Sciences at Tokyo University of Agriculture and Technology.
His current research interests are in the areas of
information theory, quantum information theory,
cryptography, and computer science.
\end{IEEEbiographynophoto}

\end{document}